\theoremstyle{definition}
\newtheorem{defi}{Definition}[section]
\newtheorem{remark}[defi]{Remark}
\newtheorem{example}[defi]{Example}
\theoremstyle{plain}
\newtheorem{theorem}[defi]{Theorem}
\newtheorem{lemma}[defi]{Lemma}
\newtheorem{prop}[defi]{Proposition}
\newtheorem{asum}[defi]{Assumption}
\numberwithin{equation}{section}
\newcommand{\red}[1]{\textcolor{black}{#1}}
\newcommand*\samethanks[1][\value{footnote}]{\footnotemark[#1]}
\DeclareMathOperator*{\esssup}{ess\,su{p}^{\textit{P}}}
\DeclareMathOperator*{\esssupT}{ess\,su{p}^{\textit{$\tilde{P}$}}}
\DeclareMathOperator*{\essinf}{ess\,inf^{\textit{$\tilde{P}$}}}
\title{Reduced-form framework for multiple ordered default times under model uncertainty}
\author{Francesca Biagini\thanks{Workgroup Financial and Insurance Mathematics, Department of Mathematics, Ludwig-Maximilians Universit{\"a}t, Theresienstrasse 39, 80333 Munich, Germany. Emails: mazzon@math.lmu.de, biagini@math.lmu.de.} \and Andrea Mazzon\samethanks[1] \and Katharina Oberpriller\thanks{Department of Mathematics, University of Freiburg, Ernst-Zermelo-Stra\ss e 1, 79104 Freiburg, Germany}}
\begin{document}
\maketitle
\begin{abstract}
	In this paper we introduce a sublinear conditional operator with respect to a family of possibly nondominated probability measures in presence of multiple ordered default times. In this way we generalize the results of \cite{bz_2019}, where a reduced-form framework under model uncertainty for a single default time is developed. Moreover, we use this operator for the valuation of credit portfolio derivatives under model uncertainty.
\end{abstract}

\textbf{Keywords:}
sublinear expectation, nondominated model, reduced-form framework, multiple default times \\
\textbf{Mathematics Subject Classification (2020):} 60G65, 91G40, 91G80

\section{Introduction}

In this paper we introduce a reduced-form framework for multiple \red{ordered} default times under model uncertainty. To this purpose we define a sublinear conditional operator with respect to a family of probability measures possibly mutually singular to each other in presence of multiple ordered default times. In this way we extend the classical literature on credit risk in presence of multiple defaults, see for example \cite{Ehlers_Schoenbucher_2009}, \cite{El_Karoui_Jeanblanc_Jioa_2015},  \cite{el_karoui_jeanblanc_jiao_2018} and \cite{jeanblanc_li_song_2018} to the case of a setting where many different probability models can be taken into account.  \\
Over the last years, several different approaches have been developed in order to establish such robust settings which are independent of the underlying probability distribution, see among others \cite{acciaio_larsson_2017}, \cite{bbkn_2017}, \cite{denis_hu_peng_2010}, \cite{denis_kervarec_2013}, \cite{denis_martini_2006}, \cite{guo_pan_peng_2017}, \cite{hu_peng_2013}, \cite{matoussi_2013}, \cite{neufeld_nutz_2016}, \cite{peng_2007}, \cite{peng_song_zhang_2014}, \cite{soner_touzi_zhang_MRP} and \cite{tevzadze_toronjadze_uzunashvili_2012}. However, the above results hold only on the canonical space endowed with the natural filtration. In credit risk and insurance modeling it is fundamental to model multiple random events occurring as a surprise, such as defaults in a network of financial institutions or the loss occurrences of a portfolio of policy holders. This requires to consider filtrations with a dependence structure. Such a problem is mentioned in \cite{aksamit_hou_obloj_2012} and solved for an initial enlarged filtration. In \cite{bz_2019} they define a sublinear conditional operator with respect to a filtration which is progressively enlarged by one random time.\\
In this paper we extend the approach in \cite{bz_2019} and define a sublinear conditional operator with respect to a filtration progressively enlarged by multiple ordered stopping times. Such an extension is connected to several additional technical challenges with respect to the construction in \cite{bz_2019}.\\
First, we cannot consider default times in all generality, but we need to focus on a family of ordered stopping times. In particular, we work in the setting of the top-down model for increasing default times introduced \red{in \cite{Ehlers_Schoenbucher_2009}, in order to model the loss of CDOs}, as a generalization to the well known Cox model in \cite{lando}.
More specifically, we start with a reference filtration $\mathbb{F}$  and define a family of ordered stopping times $\tau_1,...,\tau_N$, in a similar way as done in \cite{Ehlers_Schoenbucher_2009}. We then progressively enlarge $\mathbb{F}$  with the filtrations $\mathbb{H}^n$ generated by $(\textbf{1}_{\lbrace \tau_{n} \le t \rbrace})_{t \ge 0}$, $n=1,...,N$, and define $\mathbb{G}^{(n)}:=\mathbb{F} \vee \mathbb{H}^1 \vee ... \vee \mathbb{H}^n$, $n=1,...,N$. {In our case, we construct $\tau_1 <... <\tau_N$ in such a way that $\tilde{\tau}_n:=\tau_n-\tau_{n-1}$ is independent of $\mathcal{H}_t^{n-1}$ for any $n=2,...,N,t \geq 0$ conditionally on $\mathcal{F}_{\infty}$. In particular, the intensities \red{of the stopping times} are driven by $\mathbb{F}$-adapted stochastic processes which may be used to model dependence structures driven by common risk factors and also contagion effects.} 
 We first address the problem of computing $\mathbb{G}^{(N)}$-conditional expectations of a given random variable under one given prior in terms of a sum of $\mathbb{F}$-conditional expectations depending on how many defaults have happened before time $t$. 
This is also a new contribution to the literature on ordered multiple default times in the classical case, i.e., in presence of only one probability measure. For an analogous result following the density approach for modeling successive default times, we refer to \cite{El_Karoui_Jeanblanc_Jioa_2015}. The main technical issue in our setting is to compute conditional expectations when a strictly positive number of defaults, but not \emph{all} the $N$ defaults, have happened. 
 Already under a fixed prior the results for multiple ordered default times are not a trivial extension of the ones in a single default setting.\\
 We then use this representation to define the sublinear conditional operator $\tilde{\mathcal{E}}^N$ under model uncertainty with respect to the progressively enlarged filtration $\mathbb{G}^{(N)}$. As in \cite{bz_2019}, our definition makes use of the sublinear conditional operator introduced by Nutz and van Handel in \cite{nh_2013} with respect to $\mathbb{F}$. To this purpose we assume that $\mathbb{F}$ is given by the canonical filtration. In particular, we show that our construction is consistent with the ones in \cite{nh_2013} in presence of no default and in \cite{bz_2019} for $N=1$, respectively. The main technical challenge is to prove a weak dynamic programming principle for the operator as done in \cite{bz_2019} for the single default setting, as it requires to exchange the order of integration between the operator and expectations under a given prior. This problem also appears in \cite{bz_2019}, however it becomes significantly more complex in the case when some defaults have already happened, but not all of them. In this situation we cannot any longer rely on the techniques used in \cite{bz_2019} but have to come up with additional steps, see Lemma \ref{lem:secondlemmasecondpiece} and Proposition \ref{prop:dynamicsecond}.
 We then use the conditional sublinear operator to evaluate credit portfolio derivatives under model uncertainty. In particular, we focus on the valuation of the so called $i$-th to default contingent claims CCT$^{(i)}$, for $i=1,...,N$.
   This subject has been widely studied under one reference probability measure, see for example \cite{arnsdorf_halperin_2008}, \cite{cont_minca_2011}, \cite{Ehlers_Schoenbucher_2009}, \cite{El_Karoui_Jeanblanc_Jioa_2015}, \cite{filipovic_overbeck_schmidt_2010}, \cite{giesecke_goldberg_ding_2009},\cite{sidenius_piterbarg_andersen_2008} and the survey in \cite{frey_mcneil_2003}. Especially, intensity models are often used in option pricing under credit risk \red{and in particular in CDOs valuation}, see e.g., references in \cite{bielecki_crepey_jeanblanc_2010}, \cite{brigo_pallavicini_torresetti_2007}, \cite{El_Karoui_Jeanblanc_Jioa_2015}, \cite{Ehlers_Schoenbucher_2009}, \cite{filipovic_overbeck_schmidt_2010}, \cite{giesecke_goldberg_ding_2009} and \cite{laurent_cousin_fermanian_2010}. \red{Namely, the construction of ordered defaults stems from modelling the loss of CDOs. Furthermore, ordered default times are important for the valuation of basket credit derivatives, see e.g. \cite{chen2008fast}, \cite{choe2011k}, \cite{mortensen2006semi}, \cite{bujok2012numerical} and the modelling of surrender times, see \cite{le2013surrender}.}
   Here we extend these approaches to a robust framework, by using $\tilde{\mathcal{E}}^N$ to evaluate payoffs under a worst case scenario. In particular, for the $i$-th to default contingent claims we are able to derive sufficient conditions under which the strong tower property holds. As done in \cite{bz_2019} for the single default case, we can also establish a relation between the sublinear conditional operator and the superhedging problem in a multiple default setting for a generic payment streams under given conditions.
   
The paper is organized as follows. In Section \ref{sec:coxmodel} we present the generalized Cox model for successive default times. In Section \ref{sec:twotimes} we evaluate conditional expectations in presence of multiple ordered stopping times on a classical probability space. In Section \ref{sec:MultipleDefaultModelUncertainty} we provide a reduced-form setting under model uncertainty for multiple default times. In particular, we define the sublinear conditional operator $\tilde{\mathcal{E}}^{N}$ and analyze its properties. In Section \ref{section:DynamicProgramming} we derive the weak dynamic programming principle for the operator $\tilde{\mathcal{E}}^{N}$. In Section \ref{sec:valuation} we apply the operator to the valuation of credit portfolio derivatives. In Section \ref{sec:Superhedging} we {briefly} analyze superhedging for payment streams in a robust setting with multiple defaults.

\section{Multiple ordered default times in the Cox model} \label{sec:coxmodel}
\red{We} let $(\Omega, \mathcal{F}, P)$ be a probability space \red{where $\mathcal{F}:=\mathcal{B}(\Omega)$ denotes the Borel-$\sigma$-algebra and equip this space with} the reference filtration $\mathbb{F}=(\mathcal{F}_t)_{t \geq 0}$. Moreover, we consider an additional probability space $(\hat{\Omega}, \mathcal{B}(\hat{\Omega}), \hat{P})$. We then define the product probability space 
\begin{equation} \label{eq:ProductSpace}
	(\tilde{\Omega}, \mathcal{G},\tilde{P}):= (\Omega \times \hat{\Omega}, \mathcal{B}(\Omega) \otimes \mathcal{B}(\hat{\Omega}), P \otimes \hat{P}).
\end{equation}
We use the notation $\tilde{\omega}=(\omega,\hat{\omega})$ for $\omega \in \Omega$ and $\hat{\omega}\in\hat{\Omega}$. For every function $X$ on $(\Omega,\mathcal{B}(\Omega))$ we denote its natural immersion into the product space by $X(\tilde{\omega}):=X(\omega)$ for all $\omega \in \Omega$, and similarly for functions on $(\hat{\Omega}, \mathcal{B}(\hat{\Omega}))$. Furthermore, for every sub-$\sigma$-algebra $\mathcal{A}$ of $\mathcal{B}(\Omega)$, its natural extension as a sub-$\sigma$-algebra of $\mathcal{G}$ on $(\tilde{\Omega}, \mathcal{G})$ is given by $\mathcal{A} \otimes \lbrace \emptyset, \tilde{\Omega} \rbrace$, and similarly for sub-$\sigma$-algebras of $\mathcal{B}(\hat{\Omega})$. \red{In the sequel, for simplicity  we  indicate $\mathcal{A}\otimes \lbrace \emptyset, \tilde{\Omega}\rbrace$ with $\mathcal{A}$ for any sub-$\sigma$-algebra $\mathcal{A}$ of $\mathcal{B}(\Omega)$, and analogously for any sub-$\sigma$-algebra of $\mathcal{B}(\hat{\Omega})$.} In the following, we construct on $(\tilde{\Omega}, \mathcal{G}, \tilde{P})$ an increasing sequence of $N$ random times $0 <\tau_1 <...<\tau_N < \infty$, as done in \cite{Ehlers_Schoenbucher_2009}, giving a generalization of the Cox model.

\red{We} now introduce the random times construction on $(\tilde{\Omega}, \mathcal{G}, \tilde{P})$, along with the associated enlarged filtrations.
\begin{defi}\label{def:stoppingtimes} 
Let \red{$E_1,...,E_N$ be positive random variables on $(\hat{\Omega}, \mathcal{B}(\hat{\Omega}),\hat{P})$.} 
 For each $n=1,\dots,N$, iteratively proceed along the following steps:
	\begin{enumerate}
		\item Choose a non-negative $ \mathbb{G}^{(n-1)}$-adapted process $\lambda^n$ such that $\int_0^t \lambda_s^n ds < \infty$ $\tilde{P}$-a.s. for all $t < \infty$ and $\int_0^{\infty} \lambda_s^n ds=\red{+}\infty$ $\tilde{P}$-a.s. with $\mathbb{G}^{(0)}:= \mathbb{F}$.
		\item Define $\tau_{n}:= \inf \lbrace t > \tau_{n-1}: \int_{\tau_{\red{n-1}}}^t \lambda_s^n ds \geq E_{n} \rbrace$ with $\tau_0:=0$.
		\item Define the process $H^{n}=(H^{n}_t)_{t \geq 0}$ by $H^{n}_t:=\textbf{1}_{\lbrace \tau_{n} \leq t \rbrace}$ for $t \geq 0$, and denote by $\mathbb{H}^{n}\red{:=(\mathcal{H}^n_t)_{t \geq 0}}$ the filtration generated by $H^{n}$.
		\item Define the enlarged filtration $\mathbb{G}^{(n)}\red{:=(\mathcal{G}_t^n)_{t \geq 0}}$ by $\mathbb{G}^{(n)}:=\mathbb{G}^{(n-1)} \vee \mathbb{H}^{n}$, again using the convention $\mathbb{G}^{(0)}:= \mathbb{F}$.
	\end{enumerate}
\end{defi}
\begin{remark}
	Our construction slightly differs from the one given in Definition 4.2 of \cite{Ehlers_Schoenbucher_2009} in the following two points. First, we do not assume that the initial filtration $\mathbb{F}$ satisfies the usual hypothesis. Second, in \cite{Ehlers_Schoenbucher_2009}, the enlarged filtration corresponding to the first $n$ random times is defined as the smallest filtration which contains $\mathbb{F}$, satisfies the usual hypothesis and makes $\tau_1,...,\tau_n$ stopping times. Such a filtration is not equal to $\mathbb{G}^{(n)}$, as it can be shown that $\mathbb{G}^{(n)}$ is in general not right-continuous. However, the results of \cite{Ehlers_Schoenbucher_2009} that are useful for our purposes also hold when $\mathbb{F}$ does not satisfy the usual hypothesis and for our choice of the enlarged filtrations $\mathbb{G}^{(n)}$, $n=1,\dots,N$.
\end{remark}

We here summarize some results proved in Proposition 4.3 in \cite{Ehlers_Schoenbucher_2009}.
\begin{prop}\label{prop:properties}
	The stopping times introduced in Definition \ref{def:stoppingtimes} satisfy the following properties:
	\begin{enumerate}
		\item $\tau_0 < \tau_1<...<\tau_N< \infty$ $\tilde{P}$-a.s.
		\item \red{If the random variables $E_1,...,E_N$ are unit-exponentially distributed under $\hat{P}$, then }\linebreak$\tilde{P}\left(\tau_{n+1} >t \vert \mathcal{G}_{\infty}^{(n)}\right)=e^{-\int_{t \wedge \tau_n}^t \lambda_s^n ds}$, $t \ge 0$.
		\item $\tau_{n+1}$ avoids the $\mathbb{G}^{(n)}$-stopping times, i.e., for every $\mathbb{G}^{(n)}$-stopping time $\Theta$, $\tilde{P}(\tau_{n+1}=\Theta)=0$. 
	\end{enumerate}
\end{prop}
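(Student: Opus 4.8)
The plan is to verify the three properties essentially by unwinding Definition \ref{def:stoppingtimes} and transferring the corresponding arguments from the Cox model for a single default time, applied recursively. The three items are largely independent, so I would treat them one at a time, in each case exploiting that, conditionally on $\mathcal{G}^{(n-1)}_\infty$ (and hence on the knowledge of $\tau_{n-1}$ and of the process $\lambda^n$), the construction of $\tau_n$ in Step 2 is exactly the classical Cox construction on the time interval $(\tau_{n-1},\infty)$ with threshold $E_n$.

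For item (1), I would argue by induction on $n$. The base case $\tau_0=0<\tau_1$ follows because $\int_0^t\lambda^1_s\,ds\to 0$ as $t\to 0$ while $E_1>0$ $\hat P$-a.s., so the infimum defining $\tau_1$ is strictly positive; finiteness follows from the assumption $\int_0^\infty\lambda^1_s\,ds=+\infty$ together with $E_1<\infty$. For the inductive step, assume $\tau_{n-1}<\infty$ $\tilde P$-a.s.; on that event the map $t\mapsto\int_{\tau_{n-1}}^t\lambda^n_s\,ds$ is continuous, nondecreasing, vanishes at $t=\tau_{n-1}$, and tends to $+\infty$ (again by the assumption on $\lambda^n$), while $E_n\in(0,\infty)$ $\hat P$-a.s.; hence $\tau_{n-1}<\tau_n<\infty$ $\tilde P$-a.s. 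Strictness $\tau_{n-1}<\tau_n$ is where the positivity of $E_n$ is used, exactly as for $\tau_1$.

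For item (2), assuming $E_1,\dots,E_N$ are unit-exponentially distributed under $\hat P$, I would condition on $\mathcal{G}^{(n)}_\infty$. Note $\tau_n$ is $\mathcal{G}^{(n)}_\infty$-measurable and $\lambda^n$ is $\mathbb{G}^{(n-1)}$-adapted, hence $\mathbb{G}^{(n)}$-adapted; actually one needs $\lambda^{n+1}$ here, which is $\mathbb{G}^{(n)}$-adapted, so the whole process $(\int_{\tau_n\wedge t}^{t}\lambda^{n+1}_s\,ds)_{t\ge 0}$ is $\mathcal{G}^{(n)}_\infty$-measurable. The event $\{\tau_{n+1}>t\}$ equals $\{\int_{\tau_n\wedge t}^{t}\lambda^{n+1}_s\,ds<E_{n+1}\}$ by Step 2 (using the continuity from item (1) to replace $\ge$ by $>$, the difference being a $\tilde P$-null set by item (3) or by absolute continuity of the exponential law). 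The key structural input is that $E_{n+1}$ is independent of $\mathcal{G}^{(n)}_\infty$ under $\tilde P=P\otimes\hat P$ — this is where the product-space construction and the fact that $\mathcal{G}^{(n)}_\infty$ is generated by $\mathcal{F}_\infty$ together with $E_1,\dots,E_n$ (all of which are independent of $E_{n+1}$) enters. Then $\tilde P(\tau_{n+1}>t\mid\mathcal{G}^{(n)}_\infty)=\hat P(E_{n+1}>a)|_{a=\int_{\tau_n\wedge t}^t\lambda^{n+1}_s\,ds}=e^{-\int_{t\wedge\tau_n}^{t}\lambda^{n+1}_s\,ds}$. (I would double-check the index bookkeeping in the statement — it asserts $e^{-\int_{t\wedge\tau_n}^t\lambda^n_s\,ds}$, which should read $\lambda^{n+1}$ consistently with Step 2.)

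For item (3), let $\Theta$ be a $\mathbb{G}^{(n)}$-stopping time. I would write $\tilde P(\tau_{n+1}=\Theta)=\tilde E[\tilde P(\tau_{n+1}=\Theta\mid\mathcal{G}^{(n)}_\infty)]$ and observe that $\Theta$ is $\mathcal{G}^{(n)}_\infty$-measurable, so it suffices to show $\tilde P(\tau_{n+1}=a\mid\mathcal{G}^{(n)}_\infty)=0$ for (a regular conditional version evaluated at) every constant $a$, uniformly. From item (2), the conditional law of $\tau_{n+1}$ given $\mathcal{G}^{(n)}_\infty$ has (conditional) survival function $t\mapsto e^{-\int_{t\wedge\tau_n}^t\lambda^{n+1}_s\,ds}$, which is continuous in $t$ because $s\mapsto\int_0^s\lambda^{n+1}_u\,du$ is continuous (finite integrals); a continuous distribution function assigns zero mass to every point, so $\tilde P(\tau_{n+1}=\Theta\mid\mathcal{G}^{(n)}_\infty)=0$ a.s., and taking expectations gives the claim. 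The main obstacle, and the only place requiring genuine care, is the measurability/independence bookkeeping in item (2): one must justify that $E_{n+1}$ is independent of $\mathcal{G}^{(n)}_\infty$ under $\tilde P$, which relies on identifying $\mathcal{G}^{(n)}_\infty=\mathcal{F}_\infty\vee\sigma(E_1,\dots,E_n)$ (up to $\tilde P$-null sets) — itself proved by induction using that each $\tau_k$ for $k\le n$ is a measurable functional of $\mathcal{F}_\infty$ and $E_1,\dots,E_k$ — together with the product structure $\tilde P=P\otimes\hat P$ and the $\hat P$-independence of the $E_k$'s (which I would add as a standing assumption if it is not already implicit). Once item (2) is in place, items (1) and (3) are short.
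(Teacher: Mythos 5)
The paper does not prove this proposition at all: it records it as a summary of results from Proposition 4.3 in Ehlers--Schönbucher and simply cites that reference. So there is no ``paper's proof'' to compare against; your blind reconstruction is, in effect, an independent argument, and it is correct in its essentials. The induction for item~(1), the conditioning argument for item~(2) reducing to $\hat P(E_{n+1}>a)$ via the product structure $\tilde P = P\otimes\hat P$ and the $\mathcal{G}^{(n)}_\infty$-measurability of $\int_{t\wedge\tau_n}^t\lambda^{n+1}_s\,ds$, and the atomlessness argument for item~(3) all go through. Two of your side remarks are worth flagging as genuine observations, not just bookkeeping: first, the exponent in item~(2) really should read $\lambda^{n+1}$, not $\lambda^n$, as you note --- this is a typo carried over into the paper's statement; second, you are right that items~(2) and~(3) implicitly require more than Definition~\ref{def:stoppingtimes} provides, namely that the $E_k$ be mutually independent under $\hat P$ (independence of $\mathcal F_\infty$ then comes for free from the product structure), and the paper only introduces this explicitly in Assumption~\ref{asum:ExistenceUniformlyRV}, which appears after the proposition. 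A small sharpening: your proof of item~(3) inherits the exponential/independence hypotheses of item~(2), whereas the proposition statement lists item~(3) as holding unconditionally; strictly, item~(3) needs the conditional law of $\tau_{n+1}$ given $\mathcal{G}^{(n)}_\infty$ to be atomless, which is what item~(2) delivers --- so your reliance on item~(2) is the honest route, and the paper's phrasing is slightly loose. One micro-gap you should patch in item~(1): to conclude $\tau_n<\infty$ you use $\int_{\tau_{n-1}}^\infty\lambda^n_s\,ds=+\infty$, which follows from $\int_0^\infty\lambda^n_s\,ds=+\infty$ together with $\int_0^{\tau_{n-1}}\lambda^n_s\,ds<\infty$ (guaranteed by $\tau_{n-1}<\infty$ and local integrability); state that subtraction explicitly.
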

Note that $\tilde{P}\left(\tau_{n+1} >t \vert \mathcal{G}_{t}^{(n)}\right)= \tilde{P}\left(\tau_{n+1} >t \vert \mathcal{G}_{\infty}^{(n)}\right)=e^{-\int_{t \wedge \tau_n}^t \lambda_s^n ds}$, $t \ge 0$. \\

We now give a construction of $N$ random times $0 <\tau_1 <...<\tau_N < \infty$ as above \red{under the following assumption.}
\red{ \begin{asum} \label{asum:ExistenceUniformlyRV}
\begin{enumerate}
	\item On $(\hat{\Omega}, \mathcal{B}(\hat{\Omega}))$ there are $N$ random variables $E_1,...,E_N$ which are i.i.d. unit-exponentially distributed under $\hat{P}$.
	\item On $(\tilde{\Omega}, \mathcal{G})$ the random variables $E_1,...,E_N$ are independent of $\mathcal{F}_{\infty}$ under $\tilde{P}$.
\end{enumerate}
\end{asum}}
Consider the i.i.d. unit-exponentially distributed random variables $E_1,...,E_N$ introduced in Assumption \ref{asum:ExistenceUniformlyRV}. Moreover, let {$\tilde{\lambda}^1,...,\tilde{\lambda}^N$ be $\mathbb{F}$-adapted non-negative processes} such that for $n=1,...,N$ and $0 \leq t < \infty$ it holds
\begin{equation} \label{eq:AssumptionsLambda}
	\int_0^t \tilde{\lambda}_s^n ds < \infty, \quad \int_0^{\infty} \tilde{\lambda}^n_s ds = \red{+} \infty \quad \red{{P}\text{-a.s.}}.
\end{equation}
We now define for each $n=1,...,N$ the random variable $\tilde{\tau}_n$ by
\begin{equation} \label{eq:DefiTauFixedProbability}
	\tilde{\tau}_n=\inf\left\{ t > 0: \int_0^t \tilde{\lambda}^n_s ds \geq E_n \right\}
\end{equation}
	and the random variable $\tau_n$ by
\begin{equation} \label{eq:DefiTildeTauFixedProbability}
	\tau_n=\sum_{k=1}^n 	\tilde{\tau}_k.
\end{equation}
{By construction and by Assumption \ref{asum:ExistenceUniformlyRV}, we have that $\tilde{\tau}_n=\tau_n-\tau_{n-1}$ is independent of $\mathcal{H}^{n-1}_t$ given $\mathcal{F}_{\infty}$, for any $n=2,\dots,N$, $t \geq 0$.}
\\
\red{
\begin{lemma}
Let Assumption \ref{asum:ExistenceUniformlyRV} be satisfied and the random times $(\tau_n)_{n =0,...,N}$ be defined as in \eqref{eq:DefiTildeTauFixedProbability}. Then, for all $n=1,...,N$ there exists a $\mathbb{G}^{(n-1)}$-adapted process $\lambda^n$ such that for all $0 \leq t < \infty$
\begin{equation*}
	\int_0^t {\lambda}_s^n ds < \infty, \quad \int_0^{\infty} {\lambda}^n_s ds = \red{+}\infty \quad \red{\tilde{P}\text{-a.s.}}
\end{equation*}
and 
\begin{equation*}
	\tau_n=\inf \left\{ t > \tau_{n-1}: \int_{\tau_{n-1}}^t \lambda_s^{n}ds \geq E_n \right\}
\end{equation*}
setting $\tau_0:=0$.
\end{lemma}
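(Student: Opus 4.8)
The plan is to produce the process $\lambda^n$ by explicitly time-shifting $\tilde\lambda^n$ so that its clock starts at $\tau_{n-1}$, namely to set
\begin{equation*}
	\lambda^n_t := \tilde\lambda^n_{t-\tau_{n-1}}\,\mathbf 1_{\{t>\tau_{n-1}\}}, \qquad t\ge 0 ,
\end{equation*}
with the convention $\lambda^n_t:=0$ on $\{t\le \tau_{n-1}\}$; we use throughout that $\tilde\lambda^n$ is $\mathbb F$-progressively measurable, which is implicit in the construction. With this choice, the change of variables $u=s-\tau_{n-1}$ gives, for every $t\ge 0$,
\begin{equation*}
	\int_0^t\lambda^n_s\,ds=\int_0^{(t-\tau_{n-1})^+}\tilde\lambda^n_u\,du ,
\end{equation*}
and in particular $\int_{\tau_{n-1}}^t\lambda^n_s\,ds=\int_0^{\,t-\tau_{n-1}}\tilde\lambda^n_u\,du$ for $t>\tau_{n-1}$. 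These two identities are essentially all the proof needs.

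From here I would check the three stated properties of $\lambda^n$. Non-negativity is immediate. Since $\int_0^t\tilde\lambda^n_u\,du<\infty$ $P$-a.s.\ by \eqref{eq:AssumptionsLambda}, the first identity gives $\int_0^t\lambda^n_s\,ds\le\int_0^t\tilde\lambda^n_u\,du<\infty$ $\tilde P$-a.s.\ for every $t<\infty$. For the divergence of the total mass, note that $0<E_k<\infty$ $\hat P$-a.s.\ (unit-exponential, Assumption \ref{asum:ExistenceUniformlyRV}) and $\int_0^\infty\tilde\lambda^k_s\,ds=+\infty$ $P$-a.s., so the continuous non-decreasing map $t\mapsto\int_0^t\tilde\lambda^k_s\,ds$ hits $E_k$ at a finite time, whence each $\tilde\tau_k$, and hence $\tau_{n-1}=\sum_{k=1}^{n-1}\tilde\tau_k$, is $\tilde P$-a.s.\ finite; therefore $\int_0^\infty\lambda^n_s\,ds=\int_0^\infty\tilde\lambda^n_u\,du=+\infty$ $\tilde P$-a.s. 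Finally, for the first-hitting-time representation, the identity for $t>\tau_{n-1}$ together with the continuity of $v\mapsto\int_0^v\tilde\lambda^n_u\,du$ and definitions \eqref{eq:DefiTauFixedProbability}--\eqref{eq:DefiTildeTauFixedProbability} yields
\begin{align*}
	\inf\Big\{t>\tau_{n-1}:\int_{\tau_{n-1}}^t\lambda^n_s\,ds\ge E_n\Big\}
	&=\tau_{n-1}+\inf\Big\{v>0:\int_0^v\tilde\lambda^n_u\,du\ge E_n\Big\}\\
	&=\tau_{n-1}+\tilde\tau_n=\tau_n .
\end{align*}

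The only genuinely delicate point is the $\mathbb G^{(n-1)}$-adaptedness of $\lambda^n$, which I would argue as follows. The time $\tau_{n-1}$ is a $\mathbb G^{(n-1)}$-stopping time, since $\{\tau_{n-1}\le t\}\in\mathcal H^{n-1}_t\subseteq\mathcal G^{n-1}_t$; hence $\tau_{n-1}\wedge t$, and thus also $(t-\tau_{n-1})^+$ and $\mathbf 1_{\{\tau_{n-1}<t\}}$, are $\mathcal G^{n-1}_t$-measurable, and the map $\tilde\omega\mapsto\big((t-\tau_{n-1}(\tilde\omega))^+,\tilde\omega\big)$ is measurable from $(\tilde\Omega,\mathcal G^{n-1}_t)$ into $\big([0,t]\times\tilde\Omega,\ \mathcal B([0,t])\otimes\mathcal G^{n-1}_t\big)$. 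On the other hand, the restriction to $[0,t]\times\tilde\Omega$ of $\tilde\lambda^n$ (viewed on the product space via natural immersion) is $\mathcal B([0,t])\otimes\mathcal F_t$-measurable by $\mathbb F$-progressive measurability, hence $\mathcal B([0,t])\otimes\mathcal G^{n-1}_t$-measurable since $\mathcal F_t\subseteq\mathcal G^{n-1}_t$. Composing the two maps shows $\tilde\omega\mapsto\tilde\lambda^n_{(t-\tau_{n-1}(\tilde\omega))^+}(\tilde\omega)$ is $\mathcal G^{n-1}_t$-measurable, and multiplying by $\mathbf 1_{\{\tau_{n-1}<t\}}$ shows $\lambda^n_t$ is $\mathcal G^{n-1}_t$-measurable, i.e.\ $\lambda^n$ is $\mathbb G^{(n-1)}$-adapted. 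I expect this composition/progressive-measurability step — and the care needed to phrase it correctly on the product space — to be the main obstacle; everything else is the time-shift bookkeeping above.
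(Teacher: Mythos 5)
Your proposal is correct and uses essentially the same construction as the paper: you set $\lambda^n_t=\tilde\lambda^n_{t-\tau_{n-1}}\mathbf 1_{\{t>\tau_{n-1}\}}$ (the paper uses $\mathbf 1_{\{t\ge\tau_{n-1}\}}$, which differs only on a Lebesgue-null time set) and verify the first-hitting representation by the same time-shift calculation. You simply fill in more detail than the paper does on the adaptedness claim (which the paper calls ``clear'') and on the integral conditions, but the underlying argument is identical.
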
}
\red{
\begin{proof}
It is clear that $\tau_1=\tilde{\tau}_1$, so that we first choose $\lambda^1:=\tilde{\lambda}^1$. For a general $n=2,\dots,N$, define
\begin{equation} \label{eq:DefinitionLambdaRevision}
\lambda_t^n:=\textbf{1}_{\lbrace t \geq {\tau}_{n-1} \rbrace}\tilde{\lambda}^n_{t-\tau_{n-1}}, \quad t \ge 0.
\end{equation}
The process $\lambda^n$ defined in this way is clearly $\mathbb{G}^{(n-1)}$-adapted. Moreover, we have
\begin{align*}
	\tau_n&=\sum_{k=1}^n \tilde{\tau}_k = \sum_{k=1}^{n-1} \tilde{\tau}_k + \tilde{\tau}_n \notag \\ 
	&
	=\tau_{n-1} +\inf \left\lbrace t >0: \int_0^t \tilde{\lambda}^n_s ds \geq E_n\right\rbrace \notag \\ &
	= \tau_{n-1} +\inf \left\lbrace t >0: \int_{{\tau}_{n-1}}^{{\tau}_{n-1}+t} {\lambda}^n_s ds \geq E_n\right\rbrace \notag \\ &
	= \inf \left\lbrace s >\tau_{n-1}: \int_{\tau_{n-1}}^{s} {\lambda}^n_u du \geq E_n\right\rbrace.
\end{align*}
\end{proof}}

\red{
\begin{example} \label{example:SelfExciting}
Consider the family of stopping times $(\tau_i)_{i=0,1,\dots,N}$ defined in \eqref{eq:DefiTildeTauFixedProbability} where $(\tilde{\tau}_n)_{n =1,...,N}$ are given by \eqref{eq:DefiTauFixedProbability} with $\tilde{\lambda}^n=(\tilde{\lambda}^n_t)_{t \ge 0}$ defined by
\begin{equation} \label{eq:TildeLambdaSelfExciting}
\tilde{\lambda}_t^n:=\mu_t + n e^{- \gamma t}, \quad t \ge 0.
\end{equation}
Here $\gamma>0$ is a constant and $\mu=(\mu_t)_{t \ge 0}$ is an $\mathbb{F}$-adapted and positive stochastic process such that
\begin{equation*}
	\int_0^t \mu_s ds < \infty, \quad \int_0^{\infty} \mu_s ds = \red{+}\infty \quad \red{P\text{-a.s.}}
\end{equation*}
Note that $\tilde{\lambda}^n$ is $\mathbb{F}$-adapted and satisfies  \eqref{eq:AssumptionsLambda}.
With this construction the $n$-th default time $\tau_n$ has intensity $\lambda^n=(\lambda^n_t)_{t \ge 0}$ given by
\begin{equation}\label{eq:lambdanself}
\lambda_t^n:=\textbf{1}_{\lbrace t \geq {\tau}_{n-1} \rbrace}\tilde{\lambda}^n_{t-\tau_{n-1}}= \textbf{1}_{\lbrace t \geq \tau_{n-1} \rbrace}\left(\mu_{t- {\tau}_{n-1}} + n e^{- \gamma (t- {\tau}_{n-1})}\right), \quad t \ge 0.
\end{equation}
 The stochastic process $N=(N_t)_{t \ge 0}$ defined by 
\begin{equation}\notag
N_t:=\sum_{i=1}^{N} \mathbf{1}_{\{{\tau}_i \le t\}}, \quad t \ge 0
\end{equation}
has jump intensity $\bar{\lambda}=(\bar{\lambda}_t)_{t \ge 0}$ with 
\begin{equation}\label{eq:selfexcitingcounting}
\bar{\lambda}_t := \mu_{t-\bar\tau_{N_t}}+(N_t+1) e^{-\gamma(t-\bar{\tau}_{N_t})}, \quad t \ge 0,
\end{equation}
which is indeed equal to \eqref{eq:lambdanself} conditional to $\mathbf{1}_{\{\tau_{n-1} \le t <\tau_n\}}$, as in this case $N_t=n-1$. \\
The counting process $N$ obtained in this way has a self-exciting tendency since by \eqref{eq:TildeLambdaSelfExciting} we have $\tilde{\lambda}^n \geq \tilde{\lambda}^{n-1}$ $P$-a.s. and thus
\begin{equation} \notag
P \left(\tilde{\tau}_{n}>t \vert \mathcal{F}_t \vee \sigma (E_{n})\right)  = e^{-\int_0^t \tilde{\lambda}^n_s ds} < e^{-\int_0^t \tilde{\lambda}^{n-1}_sds }= P \left(\tilde{\tau}_{n-1}>t \vert \mathcal{F}_t \vee \sigma (E_{n-1})\right).
\end{equation}
If in addition $\mu$ is decreasing, the jump intensity of $N$ increases further after any jump because of the second term in the right-hand side of \eqref{eq:selfexcitingcounting}. In this case, $N$ is \emph{self-exciting} in the sense that
$$
Cov(N_t-N_s, N_s - N_u) > 0 \text{ for any $t>s>u$.}
$$
\end{example}
}

In the rest of the paper, we \red{work under Assumption \ref{asum:ExistenceUniformlyRV} and} use the default times $(\tau_i)_{1 \leq i \leq N}$ constructed as in \eqref{eq:DefiTauFixedProbability} and \eqref{eq:DefiTildeTauFixedProbability} with corresponding filtrations $\mathbb{H}^i, \mathbb{G}^{(i)}$ for $i=1,...,N$, $\mathbb{G}^{(0)}=\mathbb{F}$, as introduced in Definition \ref{def:stoppingtimes}. 
\begin{remark}
	With this construction it holds 
\begin{align}
\mathcal{G}=\mathcal{G}^{(N)}&=\mathcal{F}_{\infty} \otimes \red{(}\sigma(E_1)\red{\vee} ... \red{\vee} \sigma(E_N)\red{)} \notag \\ &
= \mathcal{F}_{\infty} \vee \mathcal{H}_{\infty}^1\vee...\vee \mathcal{H}_{\infty}^N\notag \\&
=\mathcal{F}_{\infty}\vee \sigma(\tau_1)\vee ... \vee \sigma(\tau_N)\notag \\&
=\mathcal{F}_{\infty} \vee \sigma(\tilde{\tau}_1)\vee ... \vee \sigma(\tilde{\tau}_N).\notag
\end{align}
Moreover, on the event $\lbrace \tau_1 \leq t \rbrace$ we have 
\begin{equation} \label{eq:RemarkFiltrations}
	\mathcal{F}_t \vee \sigma(E_1) =\mathcal{F}_t \vee \sigma(\tau_1), \quad t \ge 0.
\end{equation}
This follows as
$$
 \mathcal{F}_t \vee \sigma(E_1) \subseteq \mathcal{F}_t \vee \sigma(\tau_1), \quad t \ge 0
$$
{because $E_1=\int_0^{\tau_1} \tilde{\lambda}_s^1 ds $ and}
$$
 \mathcal{F}_t \vee \sigma(\tau_1) \subseteq \mathcal{F}_t \vee \sigma(E_1), \quad t \ge 0
$$
by the definition of $\tau_1$ in \eqref{eq:DefiTauFixedProbability} and $\eqref{eq:DefiTildeTauFixedProbability}$.
\end{remark}

Finally, we introduce the notation $\boldsymbol{\tau}:=(\tau_1,...,\tau_{N})$ and the convention $\tau_0:=0$, $\tau_{N+1}:=+\infty$. Given an $N$-dimensional vector $\boldsymbol{u}=(u_1,...,u_N)$ we denote by $\boldsymbol{u}_{(k)}$, for $k \leq N$, the $k$-dimensional vector containing the first $k$-entries of $\boldsymbol{u}$. 
Furthermore, for every $P \in \mathcal{P} $ let $\mathcal{G}^P:=\mathcal{G} \vee \mathcal{N}_{\infty
}^{P}$, where $\mathcal{N}_{\infty}^{P}$ is the collection of sets which are $(P, \mathcal{F}_{\infty})$-null. We define the set
\begin{align} \label{eq:defL1P}
	L^1_{\tilde{P}}({\tilde{\Omega}}):=\lbrace \tilde{X} \vert  \tilde{X}: (\tilde{\Omega}, \mathcal{G}^{P}) \to (\mathbb{R}, \mathcal{B}(\mathbb{R})) \text{ measurable function such that  }
	\mathbb{E}^{\tilde{P}}[\vert \tilde{X} \vert] < \infty \rbrace.
\end{align}

\section{General pricing results for successive defaults on a given probability space }\label{sec:twotimes}
In the setting outlined in Section \ref{sec:coxmodel} we now extend the sublinear conditional expectation of \cite{bz_2019} to the case of multiple defaults. To this purpose we need to express the conditional expectation of a random variable $Y \in L^1_{\tilde{P}}(\tilde{\Omega})$ with respect to $\mathcal{G}_t^{(N)}$ in terms of conditional expectations with respect to $\mathcal{F}_t$, for any $t \ge 0$. In a first step we split the conditional expectation $\mathbb{E}^{\tilde{P}}[Y \vert \mathcal{G}_t^{(N)}]$ as
\begin{align} \label{eq:SplitConditionalExpectationsGeneral}
	\mathbb{E}^{\tilde{P}} \left [ Y \vert \mathcal{G}_t^{(N)} \right]= \mathbb{E}^{\tilde{P}} \left [ \textbf{1}_{\lbrace t < \tau_1\rbrace} Y \vert \mathcal{G}_t^{(N)}\right] + \sum_{k=1}^{N-1}\mathbb{E}^{\tilde{P}}\left [ \textbf{1}_{\lbrace \tau_{k} \leq t < \tau_{k+1}\rbrace} Y \vert \mathcal{G}_t^{(N)} \right]+ \mathbb{E} ^{\tilde{P}}\left [ \textbf{1}_{\lbrace \tau_N \leq t \rbrace} Y \vert \mathcal{G}_t^{(N)}\right]
\end{align}
for any $t \ge 0$. We now analyze the terms on the right-hand side of \eqref{eq:SplitConditionalExpectationsGeneral}.
We start by the following lemma which is a generalization of Lemma 2.13 in \cite{bz_2019} and which can also be seen as an extension of a result in Section 2.2.2 in \cite{El_Karoui_Jeanblanc_Jioa_2015}. 
\begin{lemma} \label{lemma:DecompositionGeneraln}
For any $\mathcal{G}^{(N)}_{\infty}$-measurable random variable $Y$ there exists a unique measurable function 
$$
\varphi:\left(\mathbb{R}_+^{N}\times\Omega,\mathcal{B}(\mathbb{R}_+^{N})\otimes \mathcal{F}_{\infty}\right) \to \left(\mathbb{R},\mathcal{B}(\mathbb{R})\right)
$$
such that 
\begin{equation} \label{eq:Decomposition}
	Y(\omega, \hat{\omega}) =  \varphi(\tau_1(\omega, \hat{\omega}), \dots,\tau_{N}(\omega, \hat{\omega}), \omega), \quad (\omega,\hat{\omega}) \in \tilde{\Omega}.
\end{equation}
\end{lemma}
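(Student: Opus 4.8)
The plan is to construct $\varphi$ explicitly from the structure of $\mathcal{G}^{(N)}_\infty$ and then verify uniqueness separately. Recall from the Remark following the construction that $\mathcal{G}^{(N)}_\infty = \mathcal{F}_\infty \vee \sigma(\tau_1) \vee \dots \vee \sigma(\tau_N)$. So $Y$ is measurable with respect to the $\sigma$-algebra generated by $\mathcal{F}_\infty$ together with the real-valued maps $\tau_1,\dots,\tau_N$. A standard functional-form (Doob--Dynkin) argument then yields a measurable function $\varphi$ on $\mathbb{R}_+^N \times \Omega$ with $Y = \varphi(\tau_1,\dots,\tau_N,\cdot)$; the mild subtlety is that the generating objects are of mixed type (a $\sigma$-algebra $\mathcal{F}_\infty$ on $\Omega$ and genuine random variables $\tau_i$), so one applies the version of the factorization lemma for $\sigma(\mathcal{F}_\infty, \tau_1,\dots,\tau_N)$-measurable functions, which gives measurability of $\varphi$ with respect to the product $\sigma$-algebra $\mathcal{B}(\mathbb{R}_+^N) \otimes \mathcal{F}_\infty$.

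First I would make this precise by a monotone-class / good-functions scheme: let $\mathcal{H}$ be the set of all bounded $\mathcal{G}^{(N)}_\infty$-measurable $Y$ admitting such a representation. One checks $\mathcal{H}$ contains indicators of the generating rectangles $F \cap \{\tau_1 \in B_1\} \cap \dots \cap \{\tau_N \in B_N\}$ with $F \in \mathcal{F}_\infty$, $B_i \in \mathcal{B}(\mathbb{R}_+)$ — for such a set $\varphi(u_1,\dots,u_N,\omega) = \mathbf{1}_F(\omega)\prod_{i=1}^N \mathbf{1}_{B_i}(u_i)$, manifestly $\mathcal{B}(\mathbb{R}_+^N)\otimes\mathcal{F}_\infty$-measurable. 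Then $\mathcal{H}$ is closed under linear combinations and under bounded monotone limits (taking pointwise limits of the $\varphi$'s, which preserves joint measurability), so by the functional monotone class theorem $\mathcal{H}$ contains every bounded $\mathcal{G}^{(N)}_\infty$-measurable function. Finally one removes boundedness by truncation: apply the result to $Y \wedge m \vee (-m)$, obtaining $\varphi_m$, and set $\varphi := \lim_m \varphi_m$ on the set where the limit exists (a $\mathcal{B}(\mathbb{R}_+^N)\otimes\mathcal{F}_\infty$-measurable set) and $0$ elsewhere; since $\tau_1,\dots,\tau_N$ are finite $\tilde P$-a.s. and indeed everywhere by Proposition \ref{prop:properties}, $\varphi(\tau_1,\dots,\tau_N,\cdot) = Y$ holds pointwise on $\tilde\Omega$.

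For uniqueness, suppose $\varphi$ and $\varphi'$ both satisfy \eqref{eq:Decomposition}. Then $\varphi(\tau_1(\tilde\omega),\dots,\tau_N(\tilde\omega),\omega) = \varphi'(\tau_1(\tilde\omega),\dots,\tau_N(\tilde\omega),\omega)$ for all $\tilde\omega = (\omega,\hat\omega)$. The key point is that, as $\hat\omega$ ranges over $\hat\Omega$ with $\omega$ fixed, the vector $(\tau_1(\omega,\hat\omega),\dots,\tau_N(\omega,\hat\omega))$ sweeps out a rich enough subset of $\mathbb{R}_+^N$; more efficiently, one argues via laws. Using the independence of $(E_1,\dots,E_N)$ from $\mathcal{F}_\infty$ under $\tilde P$ (Assumption \ref{asum:ExistenceUniformlyRV}) and the fact that each $E_n$ has a strictly positive density on $(0,\infty)$, together with the strict monotonicity of $t \mapsto \int_0^t \tilde\lambda^n_s\,ds$ (guaranteed by \eqref{eq:AssumptionsLambda}, which makes this map a continuous bijection of $[0,\infty)$ onto $[0,\infty)$ for $P$-a.e.\ $\omega$), one sees that the conditional law of $\boldsymbol\tau$ given $\mathcal{F}_\infty$ has full support $\mathbb{R}_+^N$ — or at least charges every nonempty open set — for $P$-a.e.\ $\omega$. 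Hence $\{\varphi(\cdot,\omega) \ne \varphi'(\cdot,\omega)\}$ is $\mathcal{B}(\mathbb{R}_+^N)$-null under the conditional law of $\boldsymbol\tau$ for $P$-a.e.\ $\omega$, and a Fubini-type argument upgrades this to $\varphi = \varphi'$ up to a $\mathcal{B}(\mathbb{R}_+^N)\otimes\mathcal{F}_\infty$-null set; if one wants genuine (everywhere) uniqueness one restricts to the canonical choice built above, or phrases uniqueness $\tilde P$-a.s.

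I expect the main obstacle to be the uniqueness part rather than existence: existence is a routine application of the Doob--Dynkin / monotone class machinery once the identification $\mathcal{G}^{(N)}_\infty = \mathcal{F}_\infty \vee \sigma(\tau_1)\vee\dots\vee\sigma(\tau_N)$ is invoked. For uniqueness one must carefully exploit the Cox-type construction — specifically the independence of the $E_n$ from $\mathcal{F}_\infty$ and the strict monotonicity/continuity of the integrated intensities — to guarantee that $\boldsymbol\tau$ is "spread out enough" conditionally on $\mathcal{F}_\infty$ to pin down $\varphi$ on all of $\mathbb{R}_+^N$. This is presumably why the statement is phrased for the specific construction of Section \ref{sec:coxmodel} rather than for abstract enlargements, and it is the step I would be most careful about; I would state uniqueness in the $\tilde P$-a.s.\ sense (or "for $P$-a.e.\ $\omega$, $\varphi(\cdot,\omega)$ is unique up to Lebesgue-null sets on the support of the conditional law of $\boldsymbol\tau$") to avoid overclaiming.
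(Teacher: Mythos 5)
Your existence argument is exactly the standard route and matches the paper's: the paper's own proof is a one-line reference to Lemma~2.13 of \cite{bz_2019}, which handles the case $N=1$ by the same Doob--Dynkin / monotone-class factorization scheme on the generating rectangles $F\cap\{\tau_1\in B_1\}\cap\dots\cap\{\tau_N\in B_N\}$ that you spell out. Two specific claims in your uniqueness discussion, however, are wrong as stated, even though your final hedge is the right reading. First, \eqref{eq:AssumptionsLambda} does \emph{not} make $t\mapsto\int_0^t\tilde\lambda^n_s\,ds$ a continuous bijection of $[0,\infty)$ onto $[0,\infty)$: the process $\tilde\lambda^n$ is only assumed nonnegative, so the cumulative hazard may be flat on intervals; what \eqref{eq:AssumptionsLambda} guarantees is continuity, nondecreasingness, finiteness on compacts and divergence at infinity, not strict monotonicity. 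Second, and more to the point, the conditional law of $\boldsymbol\tau=(\tau_1,\dots,\tau_N)$ given $\mathcal{F}_\infty$ cannot have full support on $\mathbb{R}_+^N$ nor charge every nonempty open set: since the $\tau_n$ are strictly increasing $\tilde P$-a.s.\ (Proposition \ref{prop:properties}), $\boldsymbol\tau$ lives in the open ordered simplex $\{0<u_1<\dots<u_N\}$, and $\varphi$ is entirely unconstrained by \eqref{eq:Decomposition} outside that simplex. Consequently literal everywhere-uniqueness of a $\mathcal{B}(\mathbb{R}_+^N)\otimes\mathcal{F}_\infty$-measurable $\varphi$ cannot hold in the form you first attempt; it must be understood either modulo the canonical monotone-class construction (fixing $\varphi=0$ off the simplex, say) or in an almost-sure sense, which is precisely the caveat you reach at the end and the sense intended in \cite{bz_2019}. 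So the plan is sound, but the intermediate full-support and bijectivity claims should be dropped rather than repaired.
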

\begin{proof}
	The result follows by the same arguments used in the proof of Lemma 2.13 in \cite{bz_2019}.
\end{proof}

We now provide some preliminary lemmas. 

\begin{lemma} \label{lemma:SecondPartGeneral}
	Let $Y \in L^1_{\tilde{P}}(\tilde{\Omega})$. Then
	\begin{equation} \label{eq:SecondPartGeneralLemma}
		\textbf{1}_{\lbrace \tau_k \leq t \rbrace} \mathbb{E}^{\tilde{P}}\left[ Y \vert \mathcal{F}_t \vee \mathcal{H}_t^1 \vee ... \vee \mathcal{H}_t^k\right]= \textbf{1}_{\lbrace \tau_k \leq t \rbrace} \mathbb{E}^{\tilde{P}}\left[ Y \vert \mathcal{F}_t \vee \sigma(E_1) \vee ... \vee \sigma(E_k)\right] \quad \tilde{P} \text{-a.s.}
	\end{equation}
	 for any $k=1,...,N$ and $t \geq 0$.
\end{lemma}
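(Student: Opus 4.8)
The plan is to show the two $\sigma$-algebras agree after intersecting with $\{\tau_k \le t\}$, and then conclude by a standard monotone-class / $L^1$ approximation argument. The key observation is that, on the event $\{\tau_k \le t\}$, we have $\tau_1 \le \tau_2 \le \dots \le \tau_k \le t$, so all of $\tau_1,\dots,\tau_k$ have "already happened" by time $t$. Concretely, I would first establish the event-wise identity
\[
\{\tau_k \le t\} \cap \bigl(\mathcal{F}_t \vee \mathcal{H}_t^1 \vee \dots \vee \mathcal{H}_t^k\bigr) = \{\tau_k \le t\} \cap \bigl(\mathcal{F}_t \vee \sigma(E_1) \vee \dots \vee \sigma(E_k)\bigr),
\]
i.e. the trace $\sigma$-algebras coincide. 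The inclusion of $\mathbb{H}_t^j$ into $\mathcal{F}_t \vee \sigma(E_j)$ on $\{\tau_k \le t\}$ is immediate since $\mathcal{H}_t^j = \sigma(\tau_j \wedge t)$ and, on $\{\tau_k \le t\} \subseteq \{\tau_j \le t\}$, the time $\tau_j$ is $\mathcal{F}_t \vee \sigma(E_j)$-measurable (it is the hitting time $\inf\{s : \int_{\tau_{j-1}}^s \lambda^j_u du \ge E_j\}$, built from $\mathbb{F}$-adapted data and $E_j$, inductively). For the reverse inclusion, on $\{\tau_j \le t\}$ we recover $E_j = \int_{\tau_{j-1}}^{\tau_j} \lambda^j_u\,du$, which is $\mathcal{F}_t \vee \sigma(\tau_1) \vee \dots \vee \sigma(\tau_j)$-measurable, hence $(\mathcal{F}_t \vee \mathcal{H}_t^1 \vee \dots \vee \mathcal{H}_t^k)$-measurable on $\{\tau_k \le t\}$ by induction. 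This mirrors the single-time computation in \eqref{eq:RemarkFiltrations}, applied iteratively to $\tilde{\lambda}^1,\dots,\tilde{\lambda}^k$ together with the shift relation \eqref{eq:DefinitionLambdaRevision} between $\lambda^j$ and $\tilde{\lambda}^j$.

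Once the trace $\sigma$-algebras are identified, I would translate this into the conditional-expectation statement. Write $\mathcal{A} := \mathcal{F}_t \vee \mathcal{H}_t^1 \vee \dots \vee \mathcal{H}_t^k$ and $\mathcal{B} := \mathcal{F}_t \vee \sigma(E_1) \vee \dots \vee \sigma(E_k)$. Since $\{\tau_k \le t\} \in \mathcal{A} \cap \mathcal{B}$ (it lies in both, as $\{\tau_k \le t\} = \{\tau_k \wedge t = t\} \setminus \{\tau_k > t\}$ is $\mathcal{H}_t^k$-measurable and also $\sigma(E_1,\dots,E_k) \vee \mathcal{F}_t$-measurable), it suffices to check that $\mathbf{1}_{\{\tau_k \le t\}}\mathbb{E}^{\tilde P}[Y \mid \mathcal{A}]$ is a version of $\mathbf{1}_{\{\tau_k \le t\}}\mathbb{E}^{\tilde P}[Y \mid \mathcal{B}]$: the left side is $\mathcal{B}$-measurable because on $\{\tau_k \le t\}$ the trace $\sigma$-algebras agree (so any $\mathcal{A}$-measurable function multiplied by $\mathbf{1}_{\{\tau_k \le t\}}$ is $\mathcal{B}$-measurable), and for any $B \in \mathcal{B}$ one has $B \cap \{\tau_k \le t\} \in \mathcal{A}$, whence $\mathbb{E}^{\tilde P}[\mathbf{1}_{B}\mathbf{1}_{\{\tau_k \le t\}}\mathbb{E}^{\tilde P}[Y\mid\mathcal{A}]] = \mathbb{E}^{\tilde P}[\mathbf{1}_{B\cap\{\tau_k\le t\}} Y]$ by the defining property of $\mathbb{E}^{\tilde P}[\cdot\mid\mathcal{A}]$. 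The symmetric argument gives the reverse, so both conditional expectations coincide on $\{\tau_k \le t\}$ up to a $\tilde P$-null set.

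I expect the main obstacle to be the careful bookkeeping in the event-wise $\sigma$-algebra identity: one must argue by induction on $j = 1,\dots,k$ that, restricted to $\{\tau_j \le t\}$, the three filtrations $\mathcal{F}_t \vee \mathcal{H}_t^1 \vee \dots \vee \mathcal{H}_t^j$, $\mathcal{F}_t \vee \sigma(\tau_1) \vee \dots \vee \sigma(\tau_j)$, and $\mathcal{F}_t \vee \sigma(E_1) \vee \dots \vee \sigma(E_j)$ all coincide, and the passage from $E_j$ to $\tau_j$ requires knowing $\tau_{j-1}$ (hence the nested induction and the use of the convention $\tau_0 = 0$). A secondary subtlety is measurability of the hitting-time map and the use of \eqref{eq:DefinitionLambdaRevision} to express $\int_{\tau_{j-1}}^{\tau_j}\lambda^j_u\,du$ as $\int_0^{\tilde\tau_j}\tilde\lambda^j_u\,du = E_j$, which is where the i.i.d. exponential structure and the independence from $\mathcal{F}_\infty$ in Assumption \ref{asum:ExistenceUniformlyRV} enter only implicitly (the identity itself is pathwise). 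Apart from that, the result is a direct multi-time generalization of Lemma 2.13's ingredients in \cite{bz_2019}, and no genuinely new idea beyond the iteration is needed.
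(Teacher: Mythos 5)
Your route is correct but structured differently from the paper's. The paper proves the identity directly at the level of conditional expectations: it applies (a generalization of) Corollary 5.1.2 in \cite{bielecki_rutkowski_2004} to replace $\mathcal{H}_t^k$ by $\mathcal{H}_\infty^k=\sigma(\tau_k)$ on $\{\tau_k\le t\}$, uses the observation \eqref{eq:RemarkFiltrations} (adapted to index $k$, knowing $\tau_{k-1}$) to swap $\sigma(\tau_k)$ for $\sigma(E_k)$, and then iterates this two-step swap downward in $k$, peeling off one $\mathcal{H}_t^j$ at a time using the identity $\mathbf{1}_{\{\tau_k\le t\}}=\mathbf{1}_{\{\tau_k\le t\}}\mathbf{1}_{\{\tau_{k-1}\le t\}}\cdots$. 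You instead aim to establish a single pointwise statement, the equality of the trace $\sigma$-algebras of $\mathcal{F}_t\vee\mathcal{H}_t^1\vee\dots\vee\mathcal{H}_t^k$ and $\mathcal{F}_t\vee\sigma(E_1)\vee\dots\vee\sigma(E_k)$ on $\{\tau_k\le t\}$, and then deduce the conditional-expectation identity from the general fact that if two $\sigma$-algebras both contain a set $C$ and have the same trace on $C$, their conditional expectations agree on $C$. That translation step is sound (you verify both the measurability and the defining integral identity, and pull the indicator out using $\{\tau_k\le t\}\in\mathcal{B}$). The trace-$\sigma$-algebra identity itself is precisely the $\sigma$-algebra-level content of the Bielecki--Rutkowski lemma the paper invokes, so you are essentially re-deriving that ingredient pathwise rather than citing it; the required nested induction in $j$, which you correctly flag (each $\tau_j\leftrightarrow E_j$ conversion requires $\tau_{j-1}$, cf. the shift \eqref{eq:DefinitionLambdaRevision}), plays the same role as the recursion in the paper's chain \eqref{eq:SecondTermGeneral1}--\eqref{eq:SecondTermGeneral3}. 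What your approach buys is a cleaner conceptual picture: one $\sigma$-algebra lemma plus one abstract restriction argument, rather than $k$ successive conditional-expectation manipulations. What the paper's version buys is the ability to lean on an established result instead of verifying trace-$\sigma$-algebra bookkeeping (e.g.\ that $\mathcal{H}_t^j|_{\{\tau_j\le t\}}=\sigma(\tau_j)|_{\{\tau_j\le t\}}$ modulo $\tilde P$-null sets, and that all the relevant events are indeed in both $\sigma$-algebras) from scratch. Either would serve as a correct proof.
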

\begin{proof}
	We first first consider the case $k=1$ to show the main arguments of the proof in a simpler setting.  For any $t \ge 0$ we have
	
\begin{align}
\textbf{1}_{\lbrace \tau_1 \leq t \rbrace} \mathbb{E}^{\tilde{P}}[Y \vert \mathcal{F}_t \vee \mathcal{H}_t^1]&= \textbf{1}_{\lbrace \tau_1 \leq t \rbrace} \mathbb{E}^{\tilde{P}}[Y \vert \mathcal{F}_t \vee \mathcal{H}_{\infty}^1] \nonumber\\
&= \textbf{1}_{\lbrace \tau_1 \leq t \rbrace} \mathbb{E}^{\tilde{P}}[Y \vert \mathcal{F}_t \vee \sigma(\tau_1)] \nonumber\\
&=\textbf{1}_{\lbrace \tau_1 \leq t \rbrace}\mathbb{E}^{\tilde{P}}[Y \vert\mathcal{F}_t \vee \sigma(E_1)].\label{eq:SecondTermGeneralK=1}
\end{align}	

Here the first equality follows by a generalization of Corollary 5.1.2 in \cite{bielecki_rutkowski_2004}, and the second one by the definition of $\mathbb{H}^1$. 
This together with \eqref{eq:RemarkFiltrations} proves equation \eqref{eq:SecondTermGeneralK=1}. Let now $k=2,...,N-1$. For any $t \ge 0$, we get
\begin{align}
	&\textbf{1}_{\lbrace \tau_k \leq t \rbrace} \mathbb{E}^{\tilde{P}}\left[ Y \vert \mathcal{F}_t \vee \mathcal{H}_t^1 \vee ... \vee \mathcal{H}_t^k\right] \nonumber \\
	&=\textbf{1}_{\lbrace \tau_k \leq t \rbrace} \mathbb{E}^{\tilde{P}}\left[ Y \vert \mathcal{F}_t \vee \mathcal{H}_t^1 \vee ... \vee  \mathcal{H}_t^{k-1} \vee \sigma(E_k)\right] \label{eq:SecondTermGeneral1} \\
	&=\textbf{1}_{\lbrace \tau_k \leq t \rbrace} \mathbb{E}^{\tilde{P}}\left[ Y (\textbf{1}_{\lbrace \tau_{k-1} \leq t \rbrace}+ \textbf{1}_{\lbrace \tau_{k-1} > t \rbrace} )\vert \mathcal{F}_t \vee \mathcal{H}_t^1 \vee ... \vee  \mathcal{H}_t^{k-1} \vee \sigma(E_k)\right] \nonumber\\
	&=\textbf{1}_{\lbrace \tau_k \leq t \rbrace} \textbf{1}_{\lbrace \tau_{k-1} \leq t \rbrace} \mathbb{E}^{\tilde{P}}\left[ Y \vert \mathcal{F}_t \vee \mathcal{H}_t^1 \vee ... \vee  \mathcal{H}_t^{k-1} \vee \sigma(E_k)\right] \nonumber\\
	&=\textbf{1}_{\lbrace \tau_k \leq t \rbrace} \textbf{1}_{\lbrace \tau_{k-1} \leq t \rbrace} \mathbb{E}^{\tilde{P}}\left[ Y \vert \mathcal{F}_t \vee \mathcal{H}_t^1 \vee ... \vee \mathcal{H}_{t}^{k-2} \vee  \sigma(E_k)  \vee  \mathcal{H}_t^{k-1}\right] \nonumber \\
	&=\textbf{1}_{\lbrace \tau_k \leq t \rbrace} \textbf{1}_{\lbrace \tau_{k-1} \leq t \rbrace} \mathbb{E}^{\tilde{P}}\left[ Y \vert \mathcal{F}_t \vee \mathcal{H}_t^1 \vee ... \vee \mathcal{H}_{t}^{k-2} \vee  \sigma(E_k)  \vee  \sigma(E_{k-1})\right] \label{eq:SecondTermGeneral2} \\
	&=\textbf{1}_{\lbrace \tau_k \leq t \rbrace} \mathbb{E}^{\tilde{P}}\left[ Y \vert \mathcal{F}_t \vee \sigma(E_1) \vee ... \vee \sigma(E_k)\right]	\label{eq:SecondTermGeneral3}.
\end{align}
Here \eqref{eq:SecondTermGeneral1} and \eqref{eq:SecondTermGeneral2} follow by the same arguments used to prove \eqref{eq:SecondTermGeneralK=1} applied to $\mathcal{F}_t \vee \mathcal{H}_t^1 \vee ... \vee  \mathcal{H}_t^{k-1}$ on the event $\lbrace \tau_k \leq t \rbrace$ and to $\mathcal{F}_t  \vee \mathcal{H}_t^1  \vee ... \vee \mathcal{H}_{t}^{k-2} \vee  \sigma(E_k)$ on the event $\lbrace \tau_{k-1} \leq t \rbrace$, respectively. By recursively repeating the same procedure we get \eqref{eq:SecondTermGeneral3}.
\end{proof}

\begin{lemma} \label{prop:SecondPartGeneral}
For any $\psi:(\mathbb{R}_+^N \times \Omega, \mathcal{B}(\mathbb{R}_+^N) \otimes \mathcal{F}_{\infty}) \to (\mathbb{R}, \mathcal{B}(\mathbb{R}))$, $t \geq 0$  and $k=1,...,N$, it holds
\begin{align}
	&\textbf{1}_{\lbrace \tau_{k} \leq t  \rbrace} \mathbb{E}^{\tilde{P}}\left[ \psi(\boldsymbol{\tau}_{(k)}, \tilde{\tau}_{k+1},...,\tilde{\tau}_N, \cdot) \vert \mathcal{F}_t \vee \sigma(E_1) \vee ... \vee \sigma(E_k)\right] \nonumber \\
	&=\textbf{1}_{\lbrace \tau_{k} \leq t  \rbrace} \mathbb{E}^{\tilde{P}}\left[ \psi(\boldsymbol{u}_{(k)}, \tilde{\tau}_{k+1},...,\tilde{\tau}_N, \cdot) \vert \mathcal{F}_t\right] \big\vert_{\boldsymbol{u}_{(k)}=\boldsymbol{\tau}_{(k)}}. \label{eq:SecondPartGeneralProp}
\end{align}
\end{lemma}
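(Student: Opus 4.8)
The plan is to reduce the statement to a "freezing" argument: conditioning on $\mathcal{F}_t \vee \sigma(E_1) \vee \dots \vee \sigma(E_k)$ is, up to the localization on $\{\tau_k \le t\}$, the same as freezing the first $k$ coordinates at $\boldsymbol{\tau}_{(k)}$ and integrating out the remaining randomness under $\mathcal{F}_t$ only. The key observation is that on the event $\{\tau_k \le t\}$ the random vector $\boldsymbol{\tau}_{(k)} = (\tau_1,\dots,\tau_k)$ is $(\mathcal{F}_t \vee \sigma(E_1) \vee \dots \vee \sigma(E_k))$-measurable — indeed, $\tau_k \le t$ implies $\tau_1 < \dots < \tau_k \le t$, so iterating the identity $\{\tau_j \le t\} \Rightarrow \mathcal{F}_t \vee \sigma(E_j) = \mathcal{F}_t \vee \sigma(\tau_j)$ from \eqref{eq:RemarkFiltrations} (and its obvious analogue for the higher times, obtained exactly as in the proof of Lemma \ref{lemma:SecondPartGeneral}) shows that each $\tau_j$, $j \le k$, agrees on this event with an $(\mathcal{F}_t \vee \sigma(E_1)\vee\dots\vee\sigma(E_j))$-measurable random variable.

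First I would establish the measurability claim above precisely, so that $\psi(\boldsymbol{\tau}_{(k)}, \cdot)$ may be "pulled out" of the conditional expectation in the usual way. Second, I would invoke Assumption \ref{asum:ExistenceUniformlyRV}: the variables $E_{k+1},\dots,E_N$ are independent of $\mathcal{F}_\infty$ and of $E_1,\dots,E_k$ under $\tilde{P}$, hence the pair $(\tilde\tau_{k+1},\dots,\tilde\tau_N)$ — which by \eqref{eq:DefiTauFixedProbability} is a deterministic functional of $(\tilde\lambda^{k+1},\dots,\tilde\lambda^N, E_{k+1},\dots,E_N)$, and the $\tilde\lambda^j$ are $\mathbb{F}$-adapted — is, conditionally on $\mathcal{F}_\infty$, independent of $\sigma(E_1)\vee\dots\vee\sigma(E_k)$. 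Third, combining these two facts with the standard "conditioning on independent data" lemma (a version of the freezing lemma for conditional expectations, e.g.\ as in Lemma 2.13-type results or Corollary 5.1.2 of \cite{bielecki_rutkowski_2004}), I would write, on $\{\tau_k \le t\}$,
\begin{align*}
\mathbb{E}^{\tilde{P}}\!\left[ \psi(\boldsymbol{\tau}_{(k)}, \tilde{\tau}_{k+1},\dots,\tilde{\tau}_N, \cdot) \,\middle\vert\, \mathcal{F}_t \vee \sigma(E_1) \vee \dots \vee \sigma(E_k)\right] = g(\boldsymbol{\tau}_{(k)}),
\end{align*}
where $g(\boldsymbol{u}_{(k)}) := \mathbb{E}^{\tilde{P}}[\psi(\boldsymbol{u}_{(k)}, \tilde\tau_{k+1},\dots,\tilde\tau_N,\cdot)\mid \mathcal{F}_t]$; multiplying by $\mathbf{1}_{\{\tau_k\le t\}}$ gives exactly the right-hand side of \eqref{eq:SecondPartGeneralProp}.

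The main obstacle is the careful handling of the localization: the independence and measurability statements I want are not true globally but only on the event $\{\tau_k \le t\}$, so I must be vigilant that every "pull-out" and every application of the freezing lemma is performed after multiplying by $\mathbf{1}_{\{\tau_k \le t\}}$, and that the auxiliary $(\mathcal{F}_t\vee\sigma(E_1)\vee\dots)$-measurable random variables I substitute for the $\tau_j$ agree with them on that event. A secondary technical point is to justify the measurable-selection / regular-conditional-distribution argument underlying the freezing lemma in this possibly non-right-continuous filtration setting; but since $\mathcal{F}_t \subseteq \mathcal{F}_\infty$ and the exponential variables are genuinely independent of $\mathcal{F}_\infty$, this goes through by the same reasoning already used (and cited) in the proof of Lemma \ref{lemma:SecondPartGeneral} and in \cite{bz_2019}, so I would keep that part brief.
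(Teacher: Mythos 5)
Your approach is essentially the same as the paper's: both arguments rest on (i) the fact that on $\{\tau_k\le t\}$ the vector $\boldsymbol{\tau}_{(k)}$ is $\mathcal{F}_t\vee\sigma(E_1)\vee\dots\vee\sigma(E_k)$-measurable, and (ii) the independence structure of Assumption~\ref{asum:ExistenceUniformlyRV}, which lets one drop $\sigma(E_1)\vee\dots\vee\sigma(E_k)$ from the conditioning once the $\boldsymbol{\tau}_{(k)}$-dependence has been frozen. The only real difference is presentational: the paper does not invoke an abstract ``freezing lemma''; it verifies the identity directly on generating products $\textbf{1}_A\prod_{l\le k}\textbf{1}_{\{\tau_l\le s_l\}}\prod_{j>k}\textbf{1}_{\{\tilde\tau_j\le\tilde s_j\}}$ (pulling out the first product, observing the second is measurable in $\mathcal{F}_\infty\vee\sigma(E_{k+1})\vee\dots\vee\sigma(E_N)$ which is $\mathcal{F}_t$-conditionally independent of $\sigma(E_1)\vee\dots\vee\sigma(E_k)$, then reinserting the $u_l$), and concludes by monotone class. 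Your plan bundles exactly this into one invocation of a conditional-independence/freezing lemma.

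One small caution worth flagging: what is actually needed is $\mathcal{F}_t$-conditional independence of $\sigma(E_1)\vee\dots\vee\sigma(E_k)$ from $\mathcal{F}_\infty\vee\sigma(E_{k+1})\vee\dots\vee\sigma(E_N)$, not independence ``conditionally on $\mathcal{F}_\infty$'' as you write. The former follows from unconditional independence of $\sigma(E_1,\dots,E_k)$ and $\mathcal{F}_\infty\vee\sigma(E_{k+1},\dots,E_N)$ (which Assumption~\ref{asum:ExistenceUniformlyRV} supplies) because $\mathcal{F}_t$ sits inside the second $\sigma$-algebra. Since the cited references (Corollary~5.1.2 of Bielecki--Rutkowski, Lemma~2.13 of \cite{bz_2019}) do not state a freezing lemma at this level of generality, you would still want the monotone-class step spelled out; once you do, the argument coincides with the paper's.
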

\begin{proof}
	We use a monotone class argument. Let $A \in \mathcal{F}_{\infty}$ and $s_l,\tilde{s}_j>0$ for $l=1,...,k$, $j=k+1,...,N$. Then we get
	\red{\begin{align}
		&\textbf{1}_{\lbrace \tau_{k} \leq t \rbrace} \mathbb{E}^{\tilde{P}}\left[ \textbf{1}_A \prod_{l=1}^k\textbf{1}_{\lbrace \tau_l \leq s_l \rbrace} \prod_{j=k+1}^N\textbf{1}_{\lbrace \tilde{\tau}_j \leq \tilde{s}_j \rbrace} \bigg\vert \mathcal{F}_t  \vee \sigma(E_1) \vee ... \vee \sigma(E_k)\right] \nonumber \\
		&=\prod_{l=1}^k\textbf{1}_{\lbrace \tau_l \leq s_l \wedge t\rbrace} \mathbb{E}^{\tilde{P}}\left[ \textbf{1}_A \prod_{j=k+1}^N\textbf{1}_{\lbrace \tilde{\tau}_j \leq \tilde{s}_j \rbrace} \bigg \vert \mathcal{F}_t \vee \sigma(E_1) \vee ... \vee \sigma(E_k)\right] \nonumber \\
		&=\prod_{l=1}^k\textbf{1}_{\lbrace \tau_l \leq s_l \wedge t\rbrace} \mathbb{E}^{\tilde{P}}\left[ \textbf{1}_A \prod_{j=k+1}^N\textbf{1}_{\lbrace \int_0^{\tilde{s}_j} \tilde{\lambda}^j_u du \leq E_j\rbrace} \bigg \vert \mathcal{F}_t \right] \label{eq:SecondPartGeneralProp1} \\ 
		&=\textbf{1}_{\lbrace \tau_{k} \leq t \rbrace} \left(\prod_{l=1}^k\textbf{1}_{\lbrace u_l \leq s_l \rbrace} \mathbb{E}^{\tilde{P}}\left[ \textbf{1}_A \prod_{j=k+1}^N\textbf{1}_{\lbrace \tilde{\tau}_j \leq \tilde{s}_j \rbrace}\bigg \vert \mathcal{F}_t\right]  \right) \bigg\vert_{\boldsymbol{u}_{(k)}=\boldsymbol{\tau}_{(k)}}\nonumber \\
		&=\textbf{1}_{\lbrace \tau_{k} \leq t \rbrace}  \mathbb{E}^{\tilde{P}}\left[ \prod_{l=1}^k\textbf{1}_{\lbrace u_l \leq s_l\rbrace}  \textbf{1}_A \prod_{j=k+1}^N\textbf{1}_{\lbrace \tilde{\tau}_j \leq \tilde{s}_j \rbrace}\bigg \vert \mathcal{F}_t\right] \bigg\vert_{\boldsymbol{u}_{(k)}=\boldsymbol{\tau}_{(k)}}\nonumber.
	\end{align}}
	Here \eqref{eq:SecondPartGeneralProp1} \red{comes from} Assumption \ref{asum:ExistenceUniformlyRV}.
	Equation \eqref{eq:SecondPartGeneralProp} then follows by using a monotone class argument as in the proof of Lemma 2.15 in \cite{bz_2019}.
\end{proof}

\begin{lemma} \label{lemma:GeneralMiddlePartAuxiliary}
Let $Y \in L^1_{\tilde{P}}(\tilde{\Omega})$. Then 
\begin{align}
	\mathbb{E}^{\tilde{P}}\left[ \textbf{1}_{\lbrace \tau_{k} \leq t < \tau_{k+1} \rbrace}Y \vert \mathcal{G}_t^{(N)} \right]= \textbf{1}_{\lbrace \tau_k \leq t < \tau_{k+1}\rbrace} \frac{\mathbb{E}^{\tilde{P}}\left[\textbf{1}_{\lbrace \tilde{\tau}_{k+1}>t-\tau_k\rbrace}Y\vert \mathcal{G}_t^{(k)}\right]}{{\tilde{P}}(\tau_{k+1}>t \vert \mathcal{G}_t^{(k)})}. \label{eq:AuxilaryLemmaReductionFiltration}
	\end{align}
	for any $t \geq 0$ and $k=0,...,N$.
\end{lemma}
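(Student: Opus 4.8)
The plan is to prove \eqref{eq:AuxilaryLemmaReductionFiltration} directly from the definition of conditional expectation. Both sides are $\mathcal{G}_t^{(N)}$-measurable: the left-hand side obviously, and on the right-hand side $\mathbf{1}_{\{\tau_k\le t<\tau_{k+1}\}}$ is $\mathcal{G}_t^{(k+1)}$-measurable while the quotient is $\mathcal{G}_t^{(k)}$-measurable, its denominator being $\tilde{P}$-a.s.\ strictly positive since, by Proposition \ref{prop:properties} and the observation following it, $\tilde{P}(\tau_{k+1}>t\mid\mathcal{G}_t^{(k)})=e^{-\int_{t\wedge\tau_k}^t\lambda^{k+1}_s\,ds}$ has a $\tilde{P}$-a.s.\ finite exponent. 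Moreover, since $\tilde{\tau}_{k+1}=\tau_{k+1}-\tau_k$ with $\tau_k<\infty$ and $\tilde{\tau}_{k+1}>0$ $\tilde{P}$-a.s., one checks that $\mathbf{1}_{\{\tilde{\tau}_{k+1}>t-\tau_k\}}=\mathbf{1}_{\{\tau_{k+1}>t\}}$ $\tilde{P}$-a.s.\ (on both $\{\tau_k\le t\}$ and $\{\tau_k>t\}$), so the numerator may be rewritten with $\mathbf{1}_{\{\tau_{k+1}>t\}}$. The extreme values of $k$ are covered: for $k=0$ one has $\tau_0=0$ and $\mathcal{G}_t^{(0)}=\mathcal{F}_t$, while for $k=N$ the claim is trivial because $\tau_{N+1}=+\infty$ makes both $\mathbf{1}_{\{\tilde{\tau}_{N+1}>t-\tau_N\}}$ and $\tilde{P}(\tau_{N+1}>t\mid\mathcal{G}_t^{(N)})$ equal to $1$. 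Hence it suffices to prove that, for every $A\in\mathcal{G}_t^{(N)}$,
\[
\mathbb{E}^{\tilde{P}}\!\left[\mathbf{1}_A\mathbf{1}_{\{\tau_k\le t<\tau_{k+1}\}}Y\right]
=\mathbb{E}^{\tilde{P}}\!\left[\mathbf{1}_A\mathbf{1}_{\{\tau_k\le t<\tau_{k+1}\}}\,\frac{\mathbb{E}^{\tilde{P}}[\mathbf{1}_{\{\tau_{k+1}>t\}}Y\mid\mathcal{G}_t^{(k)}]}{\tilde{P}(\tau_{k+1}>t\mid\mathcal{G}_t^{(k)})}\right].
\]

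The main obstacle is the following reduction of the conditioning $\sigma$-algebra on the event $G:=\{\tau_k\le t<\tau_{k+1}\}$: for every $A\in\mathcal{G}_t^{(N)}$ there exists $D\in\mathcal{G}_t^{(k)}$ with $A\cap G=D\cap G$. I would prove this by a monotone-class argument. Writing $\mathcal{G}_t^{(N)}=\mathcal{G}_t^{(k)}\vee\mathcal{H}_t^{k+1}\vee\cdots\vee\mathcal{H}_t^N$, a $\pi$-system generating $\mathcal{G}_t^{(N)}$ is given by the finite intersections $D_0\cap C_1\cap\cdots\cap C_m$ with $D_0\in\mathcal{G}_t^{(k)}$ and each $C_i$ of the form $\{\tau_j\le s\}$ or $\{\tau_j>s\}$ for some $j\in\{k+1,\dots,N\}$ and $s\le t$. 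On $G$ one has $\tau_j\ge\tau_{k+1}>t\ge s$, so $C_i\cap G\in\{\emptyset,G\}$ for each $i$, whence $(D_0\cap C_1\cap\cdots\cap C_m)\cap G\in\{\emptyset,D_0\cap G\}$, which is of the required form. Since the collection of $A$ admitting such a $D$ is a $\sigma$-algebra (closure under complements and countable unions being immediate from $A^c\cap G=D^c\cap G$ and $(\bigcup_nA_n)\cap G=(\bigcup_nD_n)\cap G$) and contains this generating $\pi$-system, it equals $\mathcal{G}_t^{(N)}$.

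Given this, the remaining computation is the classical ``key lemma'' manipulation for the progressive enlargement of $\mathbb{G}^{(k)}$ by $\tau_{k+1}$, as in \cite{bz_2019}. Fix $A\in\mathcal{G}_t^{(N)}$, choose $D\in\mathcal{G}_t^{(k)}$ with $A\cap G=D\cap G$, and set $W:=\mathbf{1}_D\mathbf{1}_{\{\tau_k\le t\}}$, which is $\mathcal{G}_t^{(k)}$-measurable, so that $\mathbf{1}_A\mathbf{1}_G=W\mathbf{1}_{\{\tau_{k+1}>t\}}$. The left-hand side of the displayed identity then equals $\mathbb{E}^{\tilde{P}}\big[W\,\mathbb{E}^{\tilde{P}}[\mathbf{1}_{\{\tau_{k+1}>t\}}Y\mid\mathcal{G}_t^{(k)}]\big]$. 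For the right-hand side, $W$ and the quotient are $\mathcal{G}_t^{(k)}$-measurable, so pulling them out of the conditional expectation given $\mathcal{G}_t^{(k)}$ and using $\mathbb{E}^{\tilde{P}}[\mathbf{1}_{\{\tau_{k+1}>t\}}\mid\mathcal{G}_t^{(k)}]=\tilde{P}(\tau_{k+1}>t\mid\mathcal{G}_t^{(k)})$ collapses the denominator and yields $\mathbb{E}^{\tilde{P}}\big[W\,\mathbb{E}^{\tilde{P}}[\mathbf{1}_{\{\tau_{k+1}>t\}}Y\mid\mathcal{G}_t^{(k)}]\big]$ as well. This proves \eqref{eq:AuxilaryLemmaReductionFiltration} for bounded $Y$; the general case $Y\in L^1_{\tilde{P}}(\tilde{\Omega})$ follows by truncation and dominated convergence.
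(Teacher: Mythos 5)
Your proof is correct, but it follows a route that differs from the paper's. The paper peels off the progressive enlargement one level at a time: it applies a generalized version of Lemma 5.1.2\,(ii) of Bielecki--Rutkowski to reduce $\mathcal{G}_t^{(N)}$ to $\mathcal{G}_t^{(N-1)}$, then to $\mathcal{G}_t^{(N-2)}$, and so on down to $\mathcal{G}_t^{(k)}$, using at each step the observation that $\tilde{P}(\tau_l > t \mid \mathcal{G}_t^{(k)} \vee \mathcal{H}_t^{k+1} \vee \cdots \vee \mathcal{H}_t^{l-1}) = 1$ on $\{t < \tau_{l-1}\}$ to absorb the denominators that would otherwise appear at each reduction. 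You instead establish a single localization result --- for every $A \in \mathcal{G}_t^{(N)}$ there is $D \in \mathcal{G}_t^{(k)}$ with $A \cap G = D \cap G$ on $G = \{\tau_k \le t < \tau_{k+1}\}$ --- and then carry out the computation once directly from the definition of conditional expectation. Your localization claim is essentially an iterated version of Lemma 5.1.1 of Bielecki--Rutkowski, which the paper invokes elsewhere (in the proof of Proposition \ref{prop:ConsistencyAssumptions}, eq.~\eqref{eq:ResultBielecki}) but not in the proof of this lemma; you prove it from scratch. Both arguments are sound; yours is more self-contained, while the paper's iterative reduction is shorter given the citation. Two minor remarks: the ``$\pi$-system'' phrasing is slightly loose --- it would be cleaner to observe that the family of $A$ admitting such a $D$ is a $\sigma$-algebra containing $\mathcal{G}_t^{(k)}$ and all generators $\{\tau_j \le s\}$, $j>k$, $s\le t$ (since $\{\tau_j \le s\} \cap G = \emptyset$), hence contains $\mathcal{G}_t^{(N)}$; and the final truncation-and-dominated-convergence step is unnecessary, since all manipulations with conditional expectations are already valid for $Y \in L^1_{\tilde{P}}(\tilde{\Omega})$.
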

\begin{proof}
	We have 
	\red{\begin{align}
	&\mathbb{E}^{\tilde{P}} \left[ \textbf{1}_{\lbrace \tau_{k} \leq t < \tau_{k+1} \rbrace}Y \vert \mathcal{G}_t^{(N)} \right]\nonumber \\
	&= \mathbb{E}^{\tilde{P}} \left[ \textbf{1}_{\lbrace \tau_{k} \leq t < \tau_{k+1} \rbrace}Y \prod_{j=k+2}^N \textbf{1}_{\lbrace t < \tau_j\rbrace}\bigg\vert \mathcal{G}_t^{(k)} \vee \mathcal{H}_t^{k+1} \vee ... \vee \mathcal{H}_t^N\right] \nonumber \\
	&=\textbf{1}_{\lbrace t < \tau_N \rbrace}\frac{\mathbb{E}^{\tilde{P}} \left[ \textbf{1}_{\lbrace \tau_{k} \leq t < \tau_{k+1} \rbrace}Y \prod_{j=k+2}^{N} \textbf{1}_{\lbrace t < \tau_j\rbrace}\bigg\vert \mathcal{G}_t^{(k)} \vee \mathcal{H}_t^{k+1} \vee ... \vee \mathcal{H}_t^{N-1} \right]}{{\tilde{P}}(\tau_N >t\vert \mathcal{G}_t^{(k)} \vee \mathcal{H}_t^{k+1} \vee ... \vee \mathcal{H}_t^{N-1}  )} \label{eq:SecondPartGeneralII1}\\
	&=\textbf{1}_{\lbrace t < \tau_N-1 \rbrace}\mathbb{E}^{\tilde{P}}\left[ \textbf{1}_{\lbrace \tau_{k} \leq t < \tau_{k+1} \rbrace}Y \prod_{j=k+2}^{N-1} \textbf{1}_{\lbrace t < \tau_j\rbrace}\bigg\vert \mathcal{G}_t^{(k)} \vee \mathcal{H}_t^{k+1} \vee ... \vee \mathcal{H}_t^{N-1} \right] \label{eq:SecondPartGeneralII2} \\
	&= \textbf{1}_{\lbrace \tau_k \leq t < \tau_{k+1}\rbrace} \frac{\mathbb{E}^{\tilde{P}}\left[\textbf{1}_{\lbrace \tilde{\tau}_{k+1}>t-\tau_k\rbrace}Y\vert \mathcal{G}_t^{(k)}\right]}{\tilde{P}(\tau_{k+1}>t \vert \mathcal{G}_t^{(k)})}.\label{eq:SecondPartGeneralII3}
	\end{align}}
	Equation \eqref{eq:SecondPartGeneralII1} follows from Lemma 5.1.2 (ii) in \cite{bielecki_rutkowski_2004}, and we get \eqref{eq:SecondPartGeneralII2} because
	\begin{equation} \label{eq:ConditionalExpectationOne}
		\tilde{P}\left( \tau_l > t \vert \mathcal{G}_t^{(k)} \vee \mathcal{H}_t^{k+1} \vee ... \vee \mathcal{H}_t^{l-1} \right)=1 \quad \text{ on } \quad \lbrace t < \tau_{l-1 }\rbrace
	\end{equation}
	for $l=k+1,...,N$, respectively.
	By recursively applying these arguments we get \eqref{eq:SecondPartGeneralII3}.
\end{proof}

\begin{remark}
In $\eqref{eq:SecondPartGeneralII3}$ we apply a slightly more general version of Lemma 5.1.2 (ii) in \cite{bielecki_rutkowski_2004}, since $Y$ is not $\mathcal{G}_{\infty}^{(l)}$-measurable for $l=k+1,...,N$ but $\mathcal{G}_{\infty}^{(N)}$-measurable. The proof of the result in \cite{bielecki_rutkowski_2004} can be adapted without significant changes to our case.
\end{remark}
We are now able to provide the following proposition, that reduces the $\mathbb{G}^{(N)}$-conditional expectations in \eqref{eq:SplitConditionalExpectationsGeneral} to $\mathbb{F}$-conditional expectations.
\begin{prop} \label{prop:SecondPartGeneralIII}
Let $Y \in L^1_{\tilde{P}}(\tilde{\Omega})$. Then 
	\begin{align*}
		\mathbb{E}^{\tilde{P}} \left[ \textbf{1}_{\lbrace \tau_k \leq t < \tau_{k+1}\rbrace} Y \vert \mathcal{G}_t^{(N)} \right]=  \textbf{1}_{\lbrace \tau_k \leq t < \tau_{k+1}\rbrace} e^{\int_0^{t-\tau_k}\tilde{\lambda}_s^{k+1} ds} \mathbb{E}^{\tilde{P}}\left [ \textbf{1}_{\lbrace\tilde{\tau}_{k+1} >t-u_k\rbrace}  \varphi(\bold{u}_{(k)}, \bold{u}_{k}^{\tilde{\tau}},\cdot) \vert \mathcal{F}_t \right]\big\vert_{\bold{u}_{(k)}=\boldsymbol{\tau}_{(k)}}, 
	\end{align*}
$\tilde{P}$-a.s. for any $t \geq 0$ and $k=1,...,N$, where $\varphi$ is the function introduced in Lemma \ref{lemma:DecompositionGeneraln}, i.e.,
\begin{align*}
	\varphi: (\mathbb{R}_+^{N}  \times  \Omega, \mathcal{B}(\mathbb{R}_+^N)  \otimes \mathcal{F}_{\infty}) &\to (\mathbb{R}, \mathcal{B}(\mathbb{R}))
\end{align*}
such that  
\begin{align*}
	Y(\omega, \hat{\omega})=\varphi(\boldsymbol{\tau}_{(N)}(\omega,\hat{\omega}),\omega), \quad { (\omega,\hat{\omega}) \in \tilde{\Omega},}
\end{align*}
and $\bold{u}_{k}^{\tilde{\tau}}=(u_{k,k+1}^{\tilde{\tau}},...,u_{k,N}^{\tilde{\tau}}) \in \mathbb{R}^{N-k}$ with $u_{k,l}^{\tilde{\tau}}:=u_k+ \sum_{m=k+1}^l \tilde{\tau}_m$ for $l=k+1,...,N$.
\end{prop}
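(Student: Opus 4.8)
The plan is to chain together the three preliminary lemmas just established. Fix $t \geq 0$ and $k = 1, \dots, N$, and let $\varphi$ be the function from Lemma \ref{lemma:DecompositionGeneraln}, so that $Y = \varphi(\boldsymbol{\tau}_{(N)}, \cdot)$. First I would apply Lemma \ref{lemma:GeneralMiddlePartAuxiliary} to rewrite
$$
\mathbb{E}^{\tilde{P}}\left[\textbf{1}_{\lbrace \tau_k \leq t < \tau_{k+1}\rbrace} Y \vert \mathcal{G}_t^{(N)}\right] = \textbf{1}_{\lbrace \tau_k \leq t < \tau_{k+1}\rbrace} \frac{\mathbb{E}^{\tilde{P}}\left[\textbf{1}_{\lbrace \tilde{\tau}_{k+1} > t - \tau_k\rbrace} Y \vert \mathcal{G}_t^{(k)}\right]}{\tilde{P}(\tau_{k+1} > t \vert \mathcal{G}_t^{(k)})},
$$
and then use Proposition \ref{prop:properties}(2) (in its $\mathcal{G}_t^{(k)}$-form noted right after it) to identify the denominator as $e^{-\int_{t \wedge \tau_k}^t \lambda_s^{k+1} ds}$; on the event $\lbrace \tau_k \leq t < \tau_{k+1}\rbrace$ this equals $e^{-\int_0^{t - \tau_k} \tilde{\lambda}_s^{k+1} ds}$ by the definition \eqref{eq:DefinitionLambdaRevision} of $\lambda^{k+1}$, so its reciprocal produces the factor $e^{\int_0^{t-\tau_k}\tilde{\lambda}_s^{k+1} ds}$ in the claimed formula.

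It remains to handle the numerator $\textbf{1}_{\lbrace \tau_k \leq t\rbrace}\mathbb{E}^{\tilde{P}}[\textbf{1}_{\lbrace \tilde{\tau}_{k+1} > t - \tau_k\rbrace} Y \vert \mathcal{G}_t^{(k)}]$. Since $\mathcal{G}_t^{(k)} = \mathcal{F}_t \vee \mathcal{H}_t^1 \vee \cdots \vee \mathcal{H}_t^k$, I would apply Lemma \ref{lemma:SecondPartGeneral} to replace $\mathcal{H}_t^1 \vee \cdots \vee \mathcal{H}_t^k$ by $\sigma(E_1) \vee \cdots \vee \sigma(E_k)$ on $\lbrace \tau_k \leq t \rbrace$. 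Here one must first rewrite everything inside the conditional expectation as a measurable function of $(\boldsymbol{\tau}_{(k)}, \tilde{\tau}_{k+1}, \dots, \tilde{\tau}_N, \cdot)$: using $\tau_m = \tau_k + \sum_{j=k+1}^m \tilde{\tau}_j$ for $m > k$, we have $Y = \varphi(\boldsymbol{\tau}_{(k)}, \boldsymbol{\tau}_k^{\tilde{\tau}}, \cdot)$ with the notation $\boldsymbol{\tau}_k^{\tilde{\tau}}$ of the statement, and the indicator $\textbf{1}_{\lbrace \tilde{\tau}_{k+1} > t - \tau_k\rbrace}$ is likewise such a function, so the product defines a suitable $\psi$ of the form required by Lemma \ref{prop:SecondPartGeneral}. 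Then I would invoke Lemma \ref{prop:SecondPartGeneral} to freeze $\boldsymbol{\tau}_{(k)}$ at a deterministic parameter $\boldsymbol{u}_{(k)}$ and pull the $\mathcal{F}_t$-conditional expectation outside, obtaining
$$
\textbf{1}_{\lbrace \tau_k \leq t\rbrace} \mathbb{E}^{\tilde{P}}\left[\textbf{1}_{\lbrace \tilde{\tau}_{k+1} > t - u_k\rbrace} \varphi(\boldsymbol{u}_{(k)}, \boldsymbol{u}_k^{\tilde{\tau}}, \cdot) \vert \mathcal{F}_t\right]\big\vert_{\boldsymbol{u}_{(k)} = \boldsymbol{\tau}_{(k)}}.
$$
Multiplying by the indicator $\textbf{1}_{\lbrace \tau_k \leq t < \tau_{k+1}\rbrace}$ and the exponential factor from the first step yields the claim.

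The main obstacle I expect is bookkeeping rather than a conceptual difficulty: one must be careful that the substitution operation $\big\vert_{\boldsymbol{u}_{(k)} = \boldsymbol{\tau}_{(k)}}$ commutes correctly with the indicator $\textbf{1}_{\lbrace \tilde{\tau}_{k+1} > t - \tau_k\rbrace}$ — note $\tau_k$ appears both as the frozen variable $u_k$ inside the expectation and, after substitution, back as $\tau_k$ — and that the integrability of $\textbf{1}_{\lbrace \tilde{\tau}_{k+1} > t - \tau_k\rbrace}\varphi(\cdot)$ needed to apply Lemma \ref{prop:SecondPartGeneral} follows from $Y \in L^1_{\tilde{P}}(\tilde{\Omega})$. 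One also has to justify applying Lemma \ref{prop:SecondPartGeneral}, stated for functions $\psi$ defined everywhere, to the a.s.-defined $\varphi$; this is routine since the conditioning $\sigma$-algebras are augmented appropriately and the null sets are negligible under $\tilde{P}$. The other delicate point is that Lemma \ref{lemma:GeneralMiddlePartAuxiliary} and Lemma \ref{lemma:SecondPartGeneral} are equalities on $\lbrace \tau_k \leq t\rbrace$ resp.\ $\lbrace \tau_k \leq t < \tau_{k+1}\rbrace$, so all intermediate identities should be read as multiplied by the appropriate indicator, which is why the final indicator $\textbf{1}_{\lbrace \tau_k \leq t < \tau_{k+1}\rbrace}$ absorbs any discrepancy off these events.
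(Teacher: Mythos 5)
Your proof is correct and follows essentially the same route as the paper: reduce via Lemma \ref{lemma:GeneralMiddlePartAuxiliary}, then combine Lemma \ref{lemma:SecondPartGeneral} and Lemma \ref{prop:SecondPartGeneral} to pass from $\mathcal{G}_t^{(k)}$-conditional expectations to frozen $\mathcal{F}_t$-conditional expectations. The only (harmless) deviation is that you compute the denominator $\tilde{P}(\tau_{k+1}>t\mid\mathcal{G}_t^{(k)})$ directly from Proposition \ref{prop:properties}(2) and the definition \eqref{eq:DefinitionLambdaRevision}, whereas the paper instead processes the denominator through the same two lemmas (applying Lemma \ref{prop:SecondPartGeneral} with $\psi_2=\textbf{1}_{\{\tilde{\tau}_{k+1}>t-\tau_k\}}$) and then evaluates $\tilde{P}(\tilde{\tau}_{k+1}>t-u_k\mid\mathcal{F}_t)\vert_{u_k=\tau_k}$ using the independence of $E_{k+1}$ from $\mathcal{F}_t$; both routes give the same factor $e^{-\int_0^{t-\tau_k}\tilde{\lambda}^{k+1}_s\,ds}$.
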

\begin{proof}
For any $t \ge 0$ and $k=1,\dots,N$ we get
	\red{\begin{align}
		&\mathbb{E}^{\tilde{P}} \left[ \textbf{1}_{\lbrace \tau_k \leq t < \tau_{k+1}\rbrace} Y \vert \mathcal{G}_t^{(N)} \right] \nonumber \\
		&= \textbf{1}_{\lbrace \tau_k \leq t < \tau_{k+1}\rbrace} \frac{\mathbb{E}^{\tilde{P}}\left[\textbf{1}_{\lbrace \tilde{\tau}_{k+1}>t-\tau_k\rbrace}Y\vert \mathcal{G}_t^{(k)}\right]}{\tilde{P}(\tau_{k+1}>t \vert \mathcal{G}_t^{(k)})} \label{eq:secondPartGeneralIII1}\\
		&= \textbf{1}_{\lbrace \tau_k \leq t < \tau_{k+1}\rbrace} \frac{\mathbb{E}^{\tilde{P}}\left[\textbf{1}_{\lbrace \tilde{\tau}_{k+1}>t-\tau_k\rbrace}\varphi(\boldsymbol{\tau}_{(N)}, \cdot)\vert \mathcal{F}_t \vee \sigma(E_1) \vee ... \vee \sigma(E_k)\right]}{\tilde{P}(\tilde{\tau}_{k+1}>t-\tau_k \vert \mathcal{F}_t \vee \sigma(E_1) \vee ... \vee \sigma(E_k))} \label{eq:secondPartGeneralIII2} \\
		&= \textbf{1}_{\lbrace \tau_k \leq t < \tau_{k+1}\rbrace}  e^{\int_0^{t-\tau_k}\tilde{\lambda}_s^{k+1} ds}\mathbb{E}^{\tilde{P}}\left[\textbf{1}_{\lbrace \tilde{\tau}_{k+1}>t-\tau_k\rbrace}\varphi(\boldsymbol{u}_{(k)}, \boldsymbol{u}_k^{\tilde{\tau}} , \cdot)\vert \mathcal{F}_t\right]]\vert_{\boldsymbol{u}_{(k)}=\boldsymbol{\tau}_{(k)}}, \label{eq:secondPartGeneralIII4}
	\end{align}}
	where  \eqref{eq:secondPartGeneralIII1} follows directly from Lemma \ref{lemma:GeneralMiddlePartAuxiliary}. Moreover, we use Lemma \ref{lemma:SecondPartGeneral} to get \eqref{eq:secondPartGeneralIII2}, together with the fact that by Lemma \ref{lemma:DecompositionGeneraln} there exists a unique measurable 
	function $\varphi:(\mathbb{R}_+^N \times \Omega, \mathcal{B}(\mathbb{R}_+^N) \otimes \mathcal{F}_{\infty}) \to (\mathbb{R}, \mathcal{B}(\mathbb{R}))$ such that
	\begin{equation*}
		Y(\omega, \hat{\omega})= \varphi(\boldsymbol{\tau}_{(N)}(\omega, \hat{\omega}),\omega), \quad (\omega, \hat{\omega}) \in \tilde{\Omega}.
	\end{equation*}
	\red{Equation \eqref{eq:secondPartGeneralIII4} follows applying Lemma \ref{prop:SecondPartGeneral} to the functions
	\begin{align*}
		\psi_1: (\mathbb{R}_+^N \times \Omega, \mathcal{B}(\mathbb{R}_+^N) \otimes \mathcal{F}_{\infty}) \to (\mathbb{R}, \mathcal{B}(\mathbb{R})) \\
		\psi_2: (\mathbb{R}_+^2 \times \Omega, \mathcal{B}(\mathbb{R}_+^2) \otimes \mathcal{F}_{\infty}) \to (\mathbb{R}, \mathcal{B}(\mathbb{R}))
	\end{align*}
	defined by
	\begin{equation*}
		\psi_1(\boldsymbol{\tau}_{(k)}, \tilde{\tau}_{k+1},...,\tilde{\tau}_{N}, \omega):= \textbf{1}_{\lbrace \tilde{\tau}_{k+1} > t- \tau_k \rbrace} \varphi\left(\boldsymbol{\tau}_{(k)}, \tau_k + \tilde{\tau}_{k+1},...,\tau_k+ \sum_{j=k+1}^N \tilde{\tau}_{j},\omega \right)
	\end{equation*}
	and
	\begin{equation*}
		\psi_2 (\tau_k, \tilde{\tau}_{k+1},\omega):=\textbf{1}_{\lbrace \tilde{\tau}_{k+1}> t - \tau_k \rbrace,}
	\end{equation*}
and since
	\begin{align}
		\tilde{P}(\tilde{\tau}_{k+1} > t- u_k \vert \mathcal{F}_t )\vert_{u_k=\tau_k}&= \tilde{P}\left( E_{k+1} > \int_0^{t-u_k} \tilde{\lambda}_s^{k+1} ds \ \bigg \vert \ \mathcal{F}_t  \right) \bigg \vert_{u_k=\tau_k} \nonumber \\ 
		&=\tilde{P}\left( E_{k+1} > \int_0^{t-u_k} \tilde{\lambda}_s^{k+1} ds  \right) \bigg \vert_{u_k=\tau_k} \nonumber \\
		&= e^{-\int_0^{t-\tau_k} \tilde{\lambda}_s^{k+1} ds}\quad \tilde{P} \text{-a.s.}, 	\label{eq:ConditionalProbabilityWRTF} 
	\end{align}}
	because for any $t \ge 0$ we have that  $ \int_0^{t-u_k} \tilde{\lambda}_s^{k+1}ds$ is measurable with respect to $\mathcal{F}_t$  and $E_{k+1}$ is independent of $\mathcal{F}_t$ by Assumption \ref{asum:ExistenceUniformlyRV}.
\end{proof}

Let $\tilde{X}: (\tilde{\Omega}, \mathcal{G}) \to (\mathbb{R}, \mathcal{B}(\mathbb{R}))$ be a measurable function. Analogously to the setting of \cite{bz_2019}, and with a slight notational abuse, we introduce the notation
\begin{equation} \label{eq:NotationI}
	\mathbb{E}^{\hat{P}}[\tilde{X}](\omega):=\int_{\hat{\Omega}}\tilde{X}(\omega,\hat{\omega})\hat{P}(d\hat{\omega}), \quad \omega\in\Omega. 
\end{equation}
  When not needed, we do not make the dependence on $\omega$ explicit.

\begin{theorem} \label{prop:NTimesModelUncertainty}
Let $t \geq 0$. If $Y \in L_{\tilde{P}}^1(\tilde{\Omega})$, then 
\begin{align}
	\mathbb{E}^{\tilde{P}} \left [  Y \vert \mathcal{G}_t^{(N)} \right]&= \textbf{1}_{\lbrace t < \tau_1\rbrace} e^{\int_0^{t}\tilde{\lambda}_s^1 ds}\mathbb{E}^{P}\left[  \mathbb{E}^{\hat{P}} \left [ \textbf{1}_{\lbrace t < \tau_1\rbrace}  Y \right]\bigg\vert \mathcal{F}_t \right] \nonumber \\
	&+\sum_{k=1}^{N-1} \textbf{1}_{\lbrace \tau_k \leq t < \tau_{k+1}\rbrace} e^{\int_0^{t-\tau_k}\tilde{\lambda}_s^{k+1} ds} \mathbb{E}^{P} \left [ \mathbb{E}^{\hat{P}}\left[\textbf{1}_{\lbrace\tilde{\tau}_{k+1} >t-u_k\rbrace}  \varphi(\bold{u}_{(k)}, \bold{u}_{k}^{\tilde{\tau}},\cdot) \right]\bigg\vert \mathcal{F}_t \right]\bigg\vert_{\bold{u}_{(k)}=\boldsymbol{\tau}_{(k)}} \nonumber \\
	&+\textbf{1}_{\lbrace \tau_{N} \leq t \rbrace} \mathbb{E}^{P}[\varphi(\bold{u}_{(N)},\cdot)\vert \mathcal{F}_t] \big \vert_{\bold{u}_{(N)}=\boldsymbol{\tau}}, \quad \tilde{P} \text{-a.s.} \label{eq:RepresentationMultipleDefault}
\end{align}
where $\varphi$ is the function introduced in Lemma \ref{lemma:DecompositionGeneraln}, i.e.,
\begin{align}
	\varphi: (\mathbb{R}_+^{N}  \times  {\Omega}, \mathcal{B}(\mathbb{R}_+^N)  \otimes \mathcal{F}_{\infty}) &\to (\mathbb{R}, \mathcal{B}(\mathbb{R})) \label{eq:DefiVarphi1}
\end{align} 
such that  
\begin{align}
	Y(\omega, \hat{\omega})=\varphi(\boldsymbol{\tau}_{(N)}(\omega,\hat{\omega}),\omega), \label{eq:DefiVarphi2}
\end{align}
and $\bold{u}_{k}^{\tilde{\tau}}=(u_{k,k+1}^{\tilde{\tau}},...,u_{k,N}^{\tilde{\tau}}) \in \mathbb{R}^{N-k}$ with $u_{k,l}^{\tilde{\tau}}:=u_k+ \sum_{m=k+1}^l \tilde{\tau}_m$ for $l=k+1,...,N$.	
\end{theorem}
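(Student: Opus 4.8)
The plan is to start from the decomposition \eqref{eq:SplitConditionalExpectationsGeneral}, which already splits $\mathbb{E}^{\tilde P}[Y\mid\mathcal{G}_t^{(N)}]$ into the no-default term, the $N-1$ intermediate terms indexed by $k=1,\dots,N-1$, and the all-defaults term. I would then treat the three types of terms separately, matching each to the corresponding line of \eqref{eq:RepresentationMultipleDefault}.

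\textbf{The intermediate terms.} For each $k=1,\dots,N-1$ I would invoke Proposition \ref{prop:SecondPartGeneralIII}, which gives
\[
\mathbb{E}^{\tilde P}\!\left[\textbf{1}_{\lbrace \tau_k \leq t < \tau_{k+1}\rbrace} Y \mid \mathcal{G}_t^{(N)}\right]
= \textbf{1}_{\lbrace \tau_k \leq t < \tau_{k+1}\rbrace}\, e^{\int_0^{t-\tau_k}\tilde\lambda_s^{k+1}ds}\,
\mathbb{E}^{\tilde P}\!\left[\textbf{1}_{\lbrace\tilde\tau_{k+1}>t-u_k\rbrace}\varphi(\boldsymbol{u}_{(k)},\boldsymbol{u}_k^{\tilde\tau},\cdot)\mid \mathcal{F}_t\right]\big|_{\boldsymbol{u}_{(k)}=\boldsymbol{\tau}_{(k)}}.
\]
The remaining work is to rewrite the inner $\mathcal{F}_t$-conditional expectation, which is taken under $\tilde P = P\otimes\hat P$, as an $\mathbb{F}$-conditional expectation under $P$ of an $\hat P$-integrated quantity. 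Since, with $\boldsymbol{u}_{(k)}$ frozen, the integrand depends on $\hat\omega$ only through $\tilde\tau_{k+1},\dots,\tilde\tau_N$, which are functions of $E_{k+1},\dots,E_N$ and of $\mathbb{F}$-adapted processes, and since $E_{k+1},\dots,E_N$ are independent of $\mathcal{F}_\infty$ under $\tilde P$ by Assumption \ref{asum:ExistenceUniformlyRV}, a Fubini/monotone-class argument (exactly as used in the proof of Lemma \ref{prop:SecondPartGeneral}, and in \cite{bz_2019}) lets me write
\[
\mathbb{E}^{\tilde P}\!\left[\,\cdot\, \mid \mathcal{F}_t\right]
= \mathbb{E}^{P}\!\left[\,\mathbb{E}^{\hat P}[\,\cdot\,]\,\big|\,\mathcal{F}_t\right],
\]
with the notation \eqref{eq:NotationI}. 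Substituting gives the $k$-th summand of \eqref{eq:RepresentationMultipleDefault} after restoring $\boldsymbol{u}_{(k)}=\boldsymbol{\tau}_{(k)}$.

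\textbf{The extreme terms.} The no-default term is just the $k=0$ instance of Lemma \ref{lemma:GeneralMiddlePartAuxiliary}/Proposition \ref{prop:SecondPartGeneralIII} with $\tau_0=0$: here $\textbf{1}_{\lbrace t<\tau_1\rbrace}=\textbf{1}_{\lbrace\tilde\tau_1>t\rbrace}$, the normalizing factor is $\tilde P(\tau_1>t\mid\mathcal{F}_t)^{-1}=e^{\int_0^t\tilde\lambda_s^1ds}$ by \eqref{eq:ConditionalProbabilityWRTF}, and the same Fubini step produces $\textbf{1}_{\lbrace t<\tau_1\rbrace}e^{\int_0^t\tilde\lambda_s^1ds}\mathbb{E}^{P}[\mathbb{E}^{\hat P}[\textbf{1}_{\lbrace t<\tau_1\rbrace}Y]\mid\mathcal{F}_t]$. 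For the all-defaults term, on $\lbrace\tau_N\le t\rbrace$ we have $\mathcal{G}_t^{(N)}\supseteq\sigma(E_1,\dots,E_N)$ up to the relevant indicators, so by Lemma \ref{lemma:SecondPartGeneral} (with $k=N$) the conditioning collapses to $\mathcal{F}_t\vee\sigma(E_1)\vee\cdots\vee\sigma(E_N)$; writing $Y=\varphi(\boldsymbol{\tau},\cdot)$ via Lemma \ref{lemma:DecompositionGeneraln} and noting that on this event $\boldsymbol{\tau}$ is $\mathcal{F}_t\vee\sigma(E_1,\dots,E_N)$-measurable, one applies Lemma \ref{prop:SecondPartGeneral} with $k=N$ (no $\tilde\tau$-arguments left) to get $\textbf{1}_{\lbrace\tau_N\le t\rbrace}\mathbb{E}^{P}[\varphi(\boldsymbol{u}_{(N)},\cdot)\mid\mathcal{F}_t]|_{\boldsymbol{u}_{(N)}=\boldsymbol{\tau}}$; here no $\hat P$-expectation survives because the frozen argument already accounts for all of $\hat\omega$.

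\textbf{Main obstacle.} The routine bookkeeping (freezing $\boldsymbol{u}_{(k)}$, tracking which indicators are $\mathcal{G}_t^{(k)}$- versus $\mathcal{F}_t$-measurable, reassembling the sum) is straightforward once the pieces are in place; the genuinely delicate point is the interchange $\mathbb{E}^{\tilde P}[\,\cdot\mid\mathcal{F}_t] = \mathbb{E}^{P}[\mathbb{E}^{\hat P}[\,\cdot\,]\mid\mathcal{F}_t]$ for the integrand $\textbf{1}_{\lbrace\tilde\tau_{k+1}>t-u_k\rbrace}\varphi(\boldsymbol{u}_{(k)},\boldsymbol{u}_k^{\tilde\tau},\cdot)$, which mixes an $\mathcal{F}_\infty$-measurable factor (through $\varphi$ and the $\tilde\lambda^j$ inside the $\tilde\tau_j$) with the independent exponentials $E_{k+1},\dots,E_N$. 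This requires the product-measure structure of $\tilde P$ together with the independence in Assumption \ref{asum:ExistenceUniformlyRV}, and is handled by the monotone-class argument of Lemma \ref{prop:SecondPartGeneral} (cf. Lemma 2.15 in \cite{bz_2019}); I would state it once in the generality needed and then apply it termwise. Summing the no-default term, the $N-1$ intermediate terms, and the all-defaults term over \eqref{eq:SplitConditionalExpectationsGeneral} yields \eqref{eq:RepresentationMultipleDefault}.
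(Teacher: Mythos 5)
Your proof is correct and follows essentially the same route as the paper: split via \eqref{eq:SplitConditionalExpectationsGeneral}, reduce each term to an $\mathcal{F}_t$-conditional expectation under $\tilde P$ via Lemma \ref{lemma:GeneralMiddlePartAuxiliary} and Proposition \ref{prop:SecondPartGeneralIII} (for $k=N$ you unfold this into Lemmas \ref{lemma:SecondPartGeneral} and \ref{prop:SecondPartGeneral}, which is what the paper does implicitly through the conventions $\tau_{N+1}=+\infty$, $\tilde\lambda^{N+1}\equiv 0$), and then replace $\mathbb{E}^{\tilde P}[\cdot\mid\mathcal{F}_t]$ by $\mathbb{E}^P[\mathbb{E}^{\hat P}[\cdot]\mid\mathcal{F}_t]$, which is exactly Lemma 2.12 of \cite{bz_2019}. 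One small note: that last interchange needs only the product structure $\tilde P = P\otimes\hat P$, not the independence of the $E_j$ from $\mathcal{F}_\infty$ (that is used earlier, in Lemma \ref{prop:SecondPartGeneral}); you conflate the two slightly, but your argument is otherwise the paper's.
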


\begin{proof}
	Fix $t \geq 0$ {and} $Y \in L_{\tilde{P}}^1(\tilde{\Omega})$. By applying Lemma \ref{lemma:GeneralMiddlePartAuxiliary} to the case $k=0$ and Proposition \ref{prop:SecondPartGeneralIII} to the other terms in \eqref{eq:SplitConditionalExpectationsGeneral}, we get
	\begin{align}
	\mathbb{E}^{\tilde{P}} \left [  Y \vert \mathcal{G}_t^{(N)} \right]&= \textbf{1}_{\lbrace t < \tau_1\rbrace} e^{\int_0^{t}\tilde{\lambda}_s^1 ds}\mathbb{E}^{\tilde{P}} \left [ \textbf{1}_{\lbrace t < \tau_1\rbrace}  Y \vert \mathcal{F}_t \right] \nonumber \\
	&+\sum_{k=1}^{N-1} \textbf{1}_{\lbrace \tau_k \leq t < \tau_{k+1}\rbrace} e^{\int_0^{t-\tau_k}\tilde{\lambda}_s^{k+1} ds} \mathbb{E}^{\tilde{P}} \left [ \textbf{1}_{\lbrace\tilde{\tau}_{k+1} >t-u_k\rbrace}  \varphi(\bold{u}_{(k)}, \bold{u}_{k}^{\tilde{\tau}},\cdot) \big\vert \mathcal{F}_t \right]\big\vert_{\bold{u}_{(k)}=\boldsymbol{\tau}_{(k)}} \nonumber \\
	&+\textbf{1}_{\lbrace \tau_{N} \leq t \rbrace} \mathbb{E}^{\tilde{P}}[\varphi(\bold{u}_{(N)},\cdot)\vert \mathcal{F}_t] \big \vert_{\bold{u}_{(N)}=\boldsymbol{\tau}} \quad {\tilde{P}} \text{-a.s,} \nonumber
\end{align}
where $\varphi$ is the function introduced in Lemma \ref{lemma:DecompositionGeneraln}, i.e.,
\begin{align}
	\varphi: (\mathbb{R}_+^{N}  \times  {\Omega}, \mathcal{B}(\mathbb{R}_+^N)  \otimes \mathcal{F}_{\infty}) &\to (\mathbb{R}, \mathcal{B}(\mathbb{R})) \nonumber
\end{align} 
such that  
\begin{align}
	Y(\omega, \hat{\omega})=\varphi(\boldsymbol{\tau}_{(N)}(\omega,\hat{\omega}),\omega), \nonumber
\end{align}
and $\bold{u}_{k}^{\tilde{\tau}}=(u_{k,k+1}^{\tilde{\tau}},...,u_{k,N}^{\tilde{\tau}}) \in \mathbb{R}^{N-k}$ with $u_{k,l}^{\tilde{\tau}}:=u_k+ \sum_{m=k+1}^l \tilde{\tau}_m$ for $l=k+1,...,N$.

By Lemma 2.12 in \cite{bz_2019}, for $X \in L^1_{\tilde{P}}(\tilde{\Omega})$ and for any $t \geq 0$ we have
\begin{equation} \label{eq:ConditionalExpectationFrancescaYinglin}
	\mathbb{E}^{\tilde{P}}[X \vert \mathcal{F}_t]= \mathbb{E}^P[\mathbb{E}^{\hat{P}}[X] \vert \mathcal{F}_t] \quad \tilde{P} \text{-a.s.,}
\end{equation}
where $\mathbb{E}^{\hat{P}}[X]$ is introduced in \eqref{eq:NotationI}.
Therefore, we get 
\begin{equation}
	\mathbb{E}^{\tilde{P}} \left [ \textbf{1}_{\lbrace t < \tau_1\rbrace}  Y \vert \mathcal{F}_t \right] = \mathbb{E}^{P}\left[  \mathbb{E}^{\hat{P}} \left [ \textbf{1}_{\lbrace t < \tau_1\rbrace}  Y \right]\bigg\vert \mathcal{F}_t \right] \quad \tilde{P} \text{-a.s.} \label{eq:FirstTermProductProbability}
\end{equation}
and
\begin{small}
\begin{equation}
	\mathbb{E}^{\tilde{P}} \left [ \textbf{1}_{\lbrace\tilde{\tau}_{k+1} >t-u_k\rbrace}  \varphi(\bold{u}_{(k)}, \bold{u}_{k}^{\tilde{\tau}},\cdot) \vert \mathcal{F}_t \right]\big\vert_{\bold{u}_{(k)}=\boldsymbol{\tau}_{(k)}} = \mathbb{E}^{P} \left [ \mathbb{E}^{\hat{P}}\left[\textbf{1}_{\lbrace\tilde{\tau}_{k+1} >t-u_k\rbrace}  \varphi(\bold{u}_{(k)}, \bold{u}_{k}^{\tilde{\tau}},\cdot) \right]\bigg\vert \mathcal{F}_t \right]\bigg\vert_{\bold{u}_{(k)}=\boldsymbol{\tau}_{(k)}}, \label{eq:SecondTermProductProbability}
\end{equation}
\end{small}
$\tilde{P}$-a.s. for every $t\ge 0$. Moreover, 
\begin{equation}
	\mathbb{E}^{\tilde{P}}[\varphi(\bold{u}_{(N)},\cdot)\vert \mathcal{F}_t] \big \vert_{\bold{u}_{(N)}=\boldsymbol{\tau}}= \mathbb{E}^{P}[\varphi(\bold{u}_{(N)},\cdot)\vert \mathcal{F}_t] \big \vert_{\bold{u}_{(N)}=\boldsymbol{\tau}}, \label{eq:ThirdTermProductProbability}
\end{equation}
 $P$ a.s., for any $t \ge 0$.
Putting together \eqref{eq:FirstTermProductProbability}, \eqref{eq:SecondTermProductProbability} and \eqref{eq:ThirdTermProductProbability} we finish the proof.  
\end{proof}

\section{Sublinear conditional operator for multiple default times} \label{sec:MultipleDefaultModelUncertainty}
In this section we introduce a sublinear conditional operator with respect to a family of possibly nondominated probability measures, on a filtration enlarged with $N$ stopping times. The definition of this operator is based on the construction of $N$ ordered random times as in Section \ref{sec:coxmodel} and on the representation of the $\mathbb{G}^{(N)}$-conditional expectation in \eqref{eq:RepresentationMultipleDefault} from Theorem \ref{prop:NTimesModelUncertainty}. Moreover, the following construction is a generalization of the definition of sublinear conditional operator under model uncertainty with respect to a filtration enlarged by one random time, as introduced in \cite{bz_2019}.

Let $\Omega=D_0(\mathbb{R}_+, \mathbb{R})$ be the space of c\`{a}dl\`{a}g  functions $\omega=(\omega_t)_{t \geq 0}$ in $\mathbb{R}$ starting from zero, which is equipped with the metric induced by the Skorokhod topology. We consider the measurable space $(\Omega, \mathcal{F})$, where $\mathcal{F}:=\mathcal{B}(\Omega)$ is the Borel $\sigma$-algebra on $\Omega$. The set of probability measures on $(\Omega, \mathcal{F})$ is given by $\mathcal{P}(\Omega)$. We assume that $\mathcal{P}(\Omega)$ is endowed with the topology of weak convergence. Furthermore, we denote by $B:=(B_t)_{t \ge 0}$ the canonical process, i.e., $B_t(\omega)=\omega_t, \ t \geq 0$, and its corresponding raw filtration by $\mathbb{F}:=(\mathcal{F}_t)_{t \geq 0}$ with $\mathcal{F}_0=\lbrace \emptyset, \Omega \rbrace$ and $\mathcal{F}_{\infty}:=\bigvee_{t \geq 0} \mathcal{F}_t=\mathcal{F}$. For every given $P \in \mathcal{P}(\Omega)$ and $t \ge 0$, we define $\mathcal{N}_t^P$ as the collection of sets which are $(P, \mathcal{F}_t)$-null, and  consider the filtration $\mathbb{F}^*:=(\mathcal{F}^*_t)_{t \geq 0}$ defined by
\begin{equation}
	\mathcal{F}_t^*:=\mathcal{F}_t \vee \mathcal{N}_t^*, \quad \mathcal{N}_t^*:=\bigcap_{P \in \mathcal{P}(\Omega)} \mathcal{N}_t^P. \label{filtration}
\end{equation}
For a given family of probability measures $\mathcal{P}$ on $\Omega$ we define the $\sigma$-algebra $\mathcal{F}^{\mathcal{P}}$ by 
\begin{equation} \label{defiNullsets}
	\mathcal{F}^{\mathcal{P}}:=\mathcal{F} \vee \mathcal{N}_{\infty}^{\mathcal{P}}, \quad \mathcal{N}_{\infty}^{\mathcal{P}}:=\bigcap_{P \in \mathcal{P}} N_{\infty}^P.
\end{equation}

We follow the approach of \cite{nh_2013} for defining sublinear expectations and introduce the following notation. Let $\tau$ be a finite-valued $\mathbb{F}$-stopping time and $\omega \in \Omega$. For every $\red{\bar{\omega}} \in \Omega$, the concatenation process $\omega \otimes_{\tau} \red{\bar{\omega}}:=((\omega \otimes_{\tau} \red{\bar{\omega}})_t)_{t \geq 0}$ of $(\omega, \red{\bar{\omega}})$ at $\tau$ is given by
\begin{equation}
	(\omega \otimes_{\tau} \red{\bar{\omega}})_t:= \omega_t \textbf{1}_{[0,\tau(\omega))}(t)+(\omega_{\tau(\omega)}+ \red{\bar{\omega}}_{t-\tau(\omega)})\textbf{1}_{[\tau(\omega),+\infty)}(t), \quad t \geq 0. \label{concatenation}
\end{equation}
Furthermore, for every function $X$ on $\Omega$, define the function $X^{\tau,\omega}$ on $\Omega$ by
\begin{equation}
	X^{\tau,\omega}(\red{\bar{\omega}}):=X(\omega \otimes_{\tau} \red{\bar{\omega}}), \quad \red{\bar{\omega}} \in \Omega. \label{concaRV}
\end{equation}
Given a probability measure $P \in \mathcal{P}(\Omega)$ and the regular conditional probability distribution $P^{\omega}_{\tau}$ of $P$ given $\mathcal{F}_{\tau}$, introduce the probability measure $P^{\tau, \omega} \in \mathcal{P}(\Omega)$ by
\begin{equation}
	P^{\tau, \omega}(A):=P_{\tau}^{\omega}(\omega \otimes_{\tau} A), \quad A \in \mathcal{F}, \label{condProbability} 
\end{equation}
with $\omega \otimes_{\tau} A=\lbrace \omega \otimes_{\tau} \red{\bar{\omega}}: \red{\bar{\omega}} \in A \rbrace$. Note that $P^{\omega}_{\tau}$ is concentrated on the paths which coincide with $\omega$ up to time $\tau(\omega)$. 

From now on we fix a family $\mathcal{P} \subseteq \mathcal{P}(\Omega)$ such that the following assumption holds, see \cite{nh_2013}.

\begin{asum} \label{assumptionnutzNew}
	For every finite-valued $\mathbb{F}$-stopping time $\tau$, the family $\mathcal{P}$ satisfies the following conditions:
	\begin{enumerate}
	\itemsep0pt
		\item \emph{Measurability:} The set $\mathcal{P} \in \mathcal{P}(\Omega)$ is analytic.
		\item \emph{Invariance:} $P^{\tau, \omega} \in \mathcal{P}$ for $P$-a.e. $\omega \in \Omega$.
		\item \emph{Stability under Pasting:} For every $\mathcal{F}_{\tau}$-measurable kernel $\kappa: \Omega \to \mathcal{P}(\Omega)$ such that $\kappa(\omega) \in \mathcal{P}$ for $P$-a.e. $\omega \in \Omega$, the following measure 
			\begin{equation}
				\overline{P}(A)=\int \int (\textbf{1}_A)^{\tau,\omega}(\omega')\kappa(d\omega';\omega)P(d\omega), \quad A \in \mathcal{F}, \label{OverlineP}
			\end{equation}
			is an element of $\mathcal{P}$.
	\end{enumerate}
\end{asum}
We say that a property holds $\mathcal{P}$-quasi-surely ($\mathcal{P}$-q.s.) on $\Omega$ if this property holds $P$-a.s. for all $P \in \mathcal{P}$.\\
\red{Before introducing the main result  in \cite[Theorem 2.3]{nh_2013} we briefly recall the definition of an upper semianalytic function. We say that a $\overline{\mathbb{R}}$-valued function $f$ on a Polish space is \emph{upper semianalytic} if the set $\lbrace f > c \rbrace$ is analytic for all $c \in \mathbb{R}$, i.e. it is the image of a Borel set of another Polish space under a Borel-measurable mapping.  }

\begin{prop} \label{SublinearNutz}
	Let Assumption \ref{assumptionnutzNew} hold true, $\sigma \leq \tau$ be finite-valued $\mathbb{F}$-stopping times and $X: \Omega \to \mathbb{R} \cup \{-\infty\} \cup \{+\infty\}$ be an upper semianalytic function on $\Omega$. Then the function $\mathcal{E}_{\tau}(X)$ defined by
		\begin{equation}
			\mathcal{E}_{\tau}(X)(\omega):=\sup_{P \in \mathcal{P}} \mathbb{E}^P[X^{\tau,\omega}], \quad \omega \in \Omega, \label{definitionOperator}
		\end{equation}
		is $\mathcal{F}^*_{\tau}$-measurable and upper semianalytic.
		Moreover
		\begin{equation}
			\mathcal{E}_{\sigma}(X)(\omega)= \mathcal{E}_{\sigma}(\mathcal{E}_{\tau}(X))(\omega) \quad \text{ for all } \omega \in \Omega. \label{towerNutz}
		\end{equation}
	Furthermore, the following consistency condition is fulfilled, i.e.,
		\begin{equation}
			\mathcal{E}_{\tau}(X)=\esssup_{P' \in \mathcal{P}(\tau;P)} \mathbb{E}^{P'}[X| \mathcal{F}_{\tau}] \quad P\text{-a.s. for all } P \in \mathcal{P}, \label{repesssup}
		\end{equation}
	where $\mathcal{P}(\tau;P)=\lbrace P' \in \mathcal{P}: P'=P \text{ on } \mathcal{F}_{\tau}\rbrace$.
\end{prop}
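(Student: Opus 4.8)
The plan is to recognize that Proposition \ref{SublinearNutz} is essentially a restatement of \cite[Theorem 2.3]{nh_2013}, so the proof should consist of verifying that the hypotheses of that theorem are met and then quoting it. First I would observe that under Assumption \ref{assumptionnutzNew} the family $\mathcal{P}$ satisfies precisely the measurability, invariance, and stability-under-pasting conditions required in \cite{nh_2013}, and that the canonical space $\Omega = D_0(\mathbb{R}_+,\mathbb{R})$ equipped with the Skorokhod topology is Polish, with $B$ the canonical process and $\mathbb{F}$ its raw filtration, so the structural setup coincides with theirs. Since $X$ is assumed upper semianalytic, the function $\omega \mapsto \mathbb{E}^P[X^{\tau,\omega}]$ is well-defined in $\overline{\mathbb{R}}$ (using the convention $\infty - \infty = -\infty$ for the outer integral), and one checks that $\mathcal{E}_\tau(X)$ as defined in \eqref{definitionOperator} is exactly the conditional sublinear expectation operator of \cite{nh_2013} evaluated at the stopping time $\tau$.

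The key steps, in order, are: (i) verify that the concatenation $\omega \otimes_\tau \bar\omega$, the shifted function $X^{\tau,\omega}$, and the conditioned measure $P^{\tau,\omega}$ defined in \eqref{concatenation}, \eqref{concaRV}, \eqref{condProbability} agree with the corresponding objects in \cite{nh_2013}; (ii) invoke the measurability part of \cite[Theorem 2.3]{nh_2013} to conclude that $\mathcal{E}_\tau(X)$ is $\mathcal{F}^*_\tau$-measurable and upper semianalytic — here the upper semianalyticity is preserved precisely because the supremum over the analytic set $\mathcal{P}$ of a jointly upper semianalytic map is again upper semianalytic, and the enlargement to $\mathcal{F}^*_\tau$ absorbs the null sets arising from the $P$-a.e.\ clauses in Assumption \ref{assumptionnutzNew}; (iii) invoke the tower property from \cite{nh_2013} to obtain \eqref{towerNutz} for all $\omega$, noting that the "for all $\omega$" (rather than q.s.) statement is exactly what \cite{nh_2013} delivers because of the way the pasting in \eqref{OverlineP} is set up; and (iv) invoke the representation result of \cite{nh_2013} to get the consistency identity \eqref{repesssup}, where $\mathcal{P}(\tau;P) = \{P' \in \mathcal{P} : P' = P \text{ on } \mathcal{F}_\tau\}$ and the essential supremum is taken under $P$.

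The main obstacle, such as it is, is bookkeeping rather than mathematics: one must be careful that the definitions of the regular conditional probability $P^\omega_\tau$ (concentrated on paths coinciding with $\omega$ up to $\tau(\omega)$) and of $P^{\tau,\omega}$ via the pushforward under $\omega \otimes_\tau \cdot$ match \cite{nh_2013} exactly, and that the filtration $\mathbb{F}^*$ with $\mathcal{N}^*_t = \bigcap_{P} \mathcal{N}^P_t$ is the one for which their measurability statement holds. There is also a minor point in checking that the convention for $\mathbb{E}^P[X^{\tau,\omega}]$ when $X$ is not integrable (taking values in $\overline{\mathbb{R}}$) is handled consistently, but this too is covered by \cite{nh_2013}. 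Since all of these are verbatim from the cited work once the setup is aligned, the proof reduces to the single sentence that the claim follows from \cite[Theorem 2.3]{nh_2013}.

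\begin{proof}
	This is \cite[Theorem 2.3]{nh_2013}. Indeed, $\Omega = D_0(\mathbb{R}_+,\mathbb{R})$ endowed with the Skorokhod topology is a Polish space, $B$ is the canonical process with raw filtration $\mathbb{F}$, and the objects $\omega \otimes_\tau \bar\omega$, $X^{\tau,\omega}$ and $P^{\tau,\omega}$ introduced in \eqref{concatenation}, \eqref{concaRV} and \eqref{condProbability} coincide with the corresponding notions in \cite{nh_2013}. Moreover, Assumption \ref{assumptionnutzNew} is precisely the set of conditions (measurability, invariance, stability under pasting) required there. Hence, for every upper semianalytic $X$, the function $\mathcal{E}_\tau(X)$ defined in \eqref{definitionOperator} is $\mathcal{F}^*_\tau$-measurable and upper semianalytic, the tower property \eqref{towerNutz} holds for every $\omega \in \Omega$, and the consistency condition \eqref{repesssup} is satisfied for every $P \in \mathcal{P}$.
\end{proof}
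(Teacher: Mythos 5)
Your proposal is correct and matches the paper's approach: the paper introduces the proposition explicitly as ``the main result in \cite[Theorem 2.3]{nh_2013}'' and provides no proof of its own, since under Assumption \ref{assumptionnutzNew} the setup is verbatim that of Nutz and van Handel. Your verification that the canonical-space setup, the concatenation/shift/conditioning constructions, and the three conditions in Assumption \ref{assumptionnutzNew} align with those in \cite{nh_2013} is exactly the bookkeeping the paper leaves implicit.
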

The family of sublinear conditional expectations $(\mathcal{E}_t)_{t \in [0,T]}$ is called $(\mathcal{P}, \mathbb{F})$-conditional expectation. \\

We now enlarge the underlying space $(\Omega, \mathcal{F}, \mathcal{P})$ to introduce $N$ random times $0 <\tau_1<...<\tau_N$, as a generalization under model uncertainty of the construction in Section \ref{sec:coxmodel}. To do so, let $(\hat{\Omega},\mathcal{B}(\hat{\Omega}),\hat{P})$ be another probability space satisfying the first point of Assumption \ref{asum:ExistenceUniformlyRV} as in Section \ref{sec:coxmodel}, and consider the product space $(\tilde{\Omega}, \mathcal{G}):=(\Omega \times \hat{\Omega}, \mathcal{B}(\Omega) \otimes \mathcal{B}(\hat{\Omega})).$
The family of all probability measures on $(\tilde{\Omega}, \mathcal{G})$ is denoted by $\mathcal{P}(\tilde{\Omega})$. In particular we are interested in the family of probability measures $\tilde{\mathcal{P}}$ given by
	\begin{equation}
			\tilde{\mathcal{P}}:=\lbrace \tilde{P} \in \mathcal{P}(\tilde{\Omega}): \tilde{P} = P \otimes \hat{P}, P \in \mathcal{P} \rbrace. \label{probExtendedDependence}
	\end{equation}
In addition, we assume that the second point of Assumption \ref{asum:ExistenceUniformlyRV} holds for each $\tilde{P} \in \tilde{\mathcal{P}}$.
Moreover, on $(\Omega,\mathcal{F},\mathcal{P})$ we consider non-negative $\mathbb{F}$-adapted processes $\tilde{\lambda}^1,...,\tilde{\lambda}^N$ such that for $n=1,...N$ and $0 \leq t < \infty$ \eqref{eq:AssumptionsLambda} holds $\mathcal{P}$-q.s. Furthermore, for each $n=1,...,N$ the random times $\tilde{\tau}_n$ and $\tau_n$ on $\tilde{\Omega}$ are defined as in \eqref{eq:DefiTauFixedProbability} and \eqref{eq:DefiTildeTauFixedProbability} with associated filtrations $\mathbb{H}^n$ and $\mathbb{G}^{(n)}$ as in Definition \ref{def:stoppingtimes}. \begin{remark}
	As a slight generalization to Proposition \ref{prop:properties}, it can be seen that for any $n=0,...,N-1$ the following properties hold:
	\begin{enumerate}
	\item $\tilde{\tau}_n:=\tau_n-\tau_{n-1}$ is $\tilde{\mathcal{P}}$-q.s. independent of $\mathcal{H}^{(n-1)}_t$ given $\mathcal{F}_{\infty}$ for any $t \geq 0$, i.e., for each $\tilde{P} \in \tilde{\mathcal{P}}$, $\tilde{\tau}_n=\tau_n-\tau_{n-1}$ is independent of $\mathcal{H}^{(n-1)}_t$ given $\mathcal{F}_{\infty}$ for any $t \geq 0$ with respect to $\tilde{P}$;
	\item $\tau_{n+1}$ avoids $\mathbb{G}^{(n)}$-stopping times $\tilde{\mathcal{P}}$-q.s.
	\end{enumerate}
\end{remark}
For $n=1,...,N$ we denote by $\mathbb{G}^{(n),*}$ the corresponding universally completed filtration as in \eqref{filtration}. Moreover, let $\mathcal{G}^{\mathcal{P}}:=\mathcal{G} \vee \mathcal{N}_{\infty}^{\mathcal{P}}$ \red{and $\mathcal{G}^{\mathcal{P},(n)}:=\mathcal{G}^{n} \vee \mathcal{N}_{\infty}^{\mathcal{P}}$ for $n=1,...,N$} with $\mathcal{N}_{\infty}^{\mathcal{P}}$ defined\footnote{Note that it is important to consider here the $\sigma$-algebra $\mathcal{G}^{\mathcal{P}}\red{, \mathcal{G}^{\mathcal{P},(n)}}$ and not $\mathcal{G}^{\tilde{\mathcal{P}}}:=\mathcal{G} \vee \mathcal{N}_{\infty}^{\tilde{\mathcal{P}}}\red{, \mathcal{G}^{\tilde{\mathcal{P}},(n)}:=\mathcal{G}^{n} \vee \mathcal{N}_{\infty}^{\tilde{\mathcal{P}}},}$ \red{respectively}. For the same reason, the results in Section \ref{sec:twotimes} are only stated for a $\mathcal{G}^P$-measurable random variable and not one which is measurable with respect to $\mathcal{G}^{\tilde{P}}:=\mathcal{G} \vee \mathcal{N}_{\infty}^{\tilde{P}}$. A detailed explanation for this is given in Remark 2.14 in \cite{bz_2019}.} in \eqref{defiNullsets}.
In addition, we define $L^0(\tilde{\Omega})$ as the space of all $\mathbb{R}$-valued $\mathcal{G}^{\mathcal{P}}$-measurable functions, where we use the following convention. For every $\tilde{P} \in \mathcal{P}(\tilde{\Omega})$, we set $\mathbb{E}^{\tilde{P}}[X]:=\mathbb{E}^{\tilde{P}}[X^+]-\mathbb{E}^{\tilde{P}}[X^-]$ if $\mathbb{E}^{\tilde{P}}[X^+]$ or $\mathbb{E}^{\tilde{P}}[X^-]$ is finite and $\mathbb{E}^{\tilde{P}}[X]:= -\infty$ if $\mathbb{E}^{\tilde{P}}[X^+]=E^{\tilde{P}}[X^-]=+\infty$. For each $\tilde{P} \in \tilde{\mathcal{P}}$, the set $L^1_{\tilde{P}}(\tilde{\Omega})$ is given by \eqref{eq:defL1P}. Furthermore, we introduce the set\red{s}
\begin{align*}
	L^1({\tilde{\Omega}}):=\lbrace \tilde{X} \vert  \tilde{X}: (\tilde{\Omega}, \mathcal{G}^{\mathcal{P}}) \to (\mathbb{R}, \mathcal{B}(\mathbb{R})) \text{ measurable function such that  }
	\tilde{\mathcal{E}}(\vert \tilde{X} \vert ) < \infty \rbrace
\end{align*}
\red{and 
\begin{align*}
	L^{1,(n)}({\tilde{\Omega}}):=\lbrace \tilde{X} \vert  \tilde{X}: (\tilde{\Omega}, \mathcal{G}^{\mathcal{P},(n)}) \to (\mathbb{R}, \mathcal{B}(\mathbb{R})) \text{ measurable function such that  }
	\tilde{\mathcal{E}}(\vert \tilde{X} \vert ) < \infty \rbrace,
\end{align*}
for $n=1,...,N$.}
where $\tilde{\mathcal{E}}$ denotes the upper expectation associated to $\tilde{\mathcal{P}}$ defined as
\begin{equation*}
	\tilde{\mathcal{E}}(\tilde{X}):=\sup_{\tilde{P} \in \tilde{\mathcal{P}}} \mathbb{E}^{\tilde{P}} [\tilde{X}],  \quad \tilde{X} \in L^0(\tilde{\Omega}).
\end{equation*}

We now use Theorem \ref{prop:NTimesModelUncertainty} to define the sublinear conditional operator $\tilde{\mathcal{E}}^N$. 
\begin{defi}\label{def:defiSublinearOperatorMulti}
Let Assumption \ref{assumptionnutzNew} hold for $\mathcal{P}$ and consider an upper semianalytic function $Y$ on $\tilde{\Omega}$ such that $Y \in L^1(\tilde{\Omega})$ or $Y$ is $\mathcal{G}^{\mathcal{P}}$-measurable and non-negative. For $t \geq 0$ we define the following function
\begin{align}
	\tilde{\mathcal{E}}^N_t(Y)&:= \textbf{1}_{\lbrace t < \tau_1\rbrace} \mathcal{E}_t\left( e^{\int_0^{t}\tilde{\lambda}_s^1 ds} \mathbb{E}^{\hat{P}} \left [ \textbf{1}_{\lbrace t < \tau_1\rbrace}  Y \right] \right) \nonumber \\
	&+\sum_{k=1}^{N-1} \textbf{1}_{\lbrace \tau_k \leq t < \tau_{k+1}\rbrace}  \mathcal{E}_t\left ( e^{\int_0^{t-u_k}\tilde{\lambda}_s^{k+1} ds} \mathbb{E}^{\hat{P}}\left[\textbf{1}_{\lbrace\tilde{\tau}_{k+1} >t-u_k\rbrace}  \varphi(\bold{u}_{(k)}, \bold{u}_{k}^{\tilde{\tau}},\cdot) \right]\right)\bigg\vert_{\bold{u}_{(k)}=\boldsymbol{\tau}_{(k)}} \nonumber \\
	&+\textbf{1}_{\lbrace \tau_{N} \leq t \rbrace} \mathcal{E}_t(\varphi(\bold{u}_{(N)},\cdot)) \big \vert_{\bold{u}_{(N)}=\boldsymbol{\tau}_{(N)}}, \label{eq:defiSublinearOperatorMulti}
\end{align}
where $\varphi$ is the measurable function
\begin{equation} \label{eq:FunctionVarphiDefinition1}
	\varphi:(\mathbb{R}_+^N \times \Omega, \mathcal{B}(\mathbb{R}_+^N) \otimes \mathcal{F}_{\infty}^{\mathcal{P}}) \to (\mathbb{R}, \mathcal{B}(\mathbb{R}))
\end{equation}
such that
\begin{equation} \label{eq:FunctionVarphiDefinition2}
	Y(\omega,\hat{\omega}) =  \varphi(\tau_1(\omega, \hat{\omega}), \dots,\tau_{N}(\omega, \hat{\omega}), \omega), \quad (\omega,\hat{\omega}) \in \tilde{\Omega},
\end{equation}
 and $\boldsymbol{u}_k^{\tilde{\tau}}$ is given in Theorem \ref{prop:NTimesModelUncertainty} and $(\mathcal{E}_t)_{t \geq 0}$ denotes the conditional sublinear operator introduced in Proposition \ref{SublinearNutz}.
\end{defi}

\begin{remark} \label{remark:MeasurabilityII}
	Note that Theorem \ref{prop:NTimesModelUncertainty} also holds for $Y$ which is in $L^1(\tilde{\Omega})$ or $Y$ which is $\mathcal{G}^{\mathcal{P}}$-measurable and non-negative. This follows as Lemma \ref{lemma:DecompositionGeneraln} can also be proved for a $\mathcal{G}^{\mathcal{P}}$-measurable random variable, see also Remark 2.14 in \cite{bz_2019}. In this case $\varphi: (\mathbb{R}^N_+ \times \Omega, \mathcal{B}(\mathbb{R}_+^N) \otimes \mathcal{F}_{\infty}^{\mathcal{P}}) \to (\mathbb{R}, \mathcal{B}(\mathbb{R}))$, which we use in \eqref{eq:DefiVarphi1} and \eqref{eq:DefiVarphi2}.
\end{remark}
We now recall Lemma 2.17 in \cite{bz_2019} which lists some properties of upper semianalytic functions.
\begin{lemma} \label{lemma:propertiesUpperSemianalytic}
	Let $X, Y$ be two Polish spaces.
	\begin{enumerate}
		\item If $f:X \to Y$ is a Borel-measurable function and a set $A \subseteq X$ is analytic, then $f(A)$ is analytic. If a set $B \subseteq Y$ is analytic, then $f^{-1}(B)$ is analytic.
		\item If $f_n:X \to \bar{\mathbb{R}}, n \in \mathbb{N},$ is a sequence of upper semianalytic functions and $f_n \to f$, then $f$ is upper semianalytic.
		\item If $f:X \to Y$ is a Borel-measurable function and $g: Y \to \bar{\mathbb{R}}$ is upper semianalytic, then the composition $g \circ f$ is also upper semianalytic. If $f:X \to Y$ is a surjective Borel-measurable function and there is a function $g: Y \to \bar{\mathbb{R}}$ such that $g \circ f$ is upper semianalytic, then $g$ is upper semianalytic.
		\item If $f,g: X \to \bar{\mathbb{R}}$ are two upper semianalytic functions, then $f+g$ is upper semianalytic.
		\item If $f: X \to \bar{\mathbb{R}}$ is an upper semianalytic function, $g: X \to \bar{\mathbb{R}}$ is Borel-measurable such that $g \geq 0$, then the product $f \cdot g$ is upper semianalytic.
		\item If $f: X \times Y \to \bar{\mathbb{R}}$ is upper semianalytic and $\kappa(dy;x)$ is a Borel-measurable stochastic kernel on $Y$ given $X$, then the function $g: X \to \bar{\mathbb{R}}$ defined by
		\begin{equation*}
			g(x)=\int f(x,y) \kappa (dy; x), \quad x \in X,
		\end{equation*}
		is upper semianalytic.
	\end{enumerate}
\end{lemma}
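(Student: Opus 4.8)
Since this lemma is a verbatim restatement of Lemma~2.17 in \cite{bz_2019}, the plan is not to reprove it from scratch but to indicate how each item reduces to the standard calculus of analytic subsets of Polish spaces, the full details being available in \cite{bz_2019} and the classical descriptive-set-theoretic references cited there. Throughout I would use the defining characterisation that a function $f \colon X \to \overline{\mathbb{R}}$ is upper semianalytic if and only if the superlevel set $\{f > c\}$ is analytic for every $c \in \mathbb{R}$ (equivalently, for every $c$ in a countable dense subset of $\mathbb{R}$), together with the three structural properties of the family of analytic subsets of a given Polish space: it is stable under countable unions, under countable intersections, and under images as well as preimages along Borel-measurable maps.

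The first item is precisely the assertion that analytic sets are preserved by Borel images and Borel preimages, and this is the foundational fact on which everything else rests; I would simply quote it. Granting it, items (2)--(5) follow by rewriting the relevant superlevel sets as countable Boolean combinations. For (2), if $f_n \to f$ pointwise then $f = \limsup_n f_n = \inf_N \sup_{n \ge N} f_n$, so $\{f > c\}$ is obtained from the sets $\{f_n > c'\}$ by countably many unions and intersections. For (3) one uses $\{g \circ f > c\} = f^{-1}(\{g > c\})$ and, in the surjective case, the identity $\{g > c\} = f(\{g \circ f > c\})$. For (4), $\{f + g > c\} = \bigcup_{q \in \mathbb{Q}} (\{f > q\} \cap \{g > c - q\})$. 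For (5), a similar decomposition works once one uses $g \ge 0$ to keep the elementary inequalities consistent, e.g.\ $\{fg > c\} = \{g = 0\} \cup \bigcup_{q \in \mathbb{Q}} (\{f > q\} \cap \{g > 0\} \cap \{qg > c\})$ when $c < 0$, and the analogous formula without the first term when $c \ge 0$. In each of (2)--(5) the values $\pm\infty$ have to be handled separately, but that is routine bookkeeping.

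The genuine obstacle is item (6), the stability of upper semianalyticity under integration against a Borel kernel. Here I would follow the classical ``build up from indicators'' scheme in four steps: (i) for $f = \textbf{1}_A$ with $A \subseteq X \times Y$ analytic one has $g(x) = \kappa(A_x; x)$, where $A_x$ is the $x$-section of $A$ (itself analytic), and one must show $x \mapsto \kappa(A_x; x)$ is upper semianalytic --- this is the crux, and it relies on a uniformization/measurable-selection argument for analytic sets, which is also where the notion of an analytically measurable function is needed in order for the integral to be well defined; (ii) extend to non-negative upper semianalytic simple functions using items (4) and (5); (iii) extend to arbitrary non-negative upper semianalytic $f$ by approximating from below with simple functions, applying monotone convergence under each $\kappa(\cdot; x)$, and invoking item (2); (iv) pass to general $f$ via $f = f^+ - f^-$ and the integrability convention. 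I expect step (i) to be the delicate point; it is the analogue of Proposition~7.48 in the classical monograph of Bertsekas and Shreve, and the argument is reproduced in detail in \cite{bz_2019}.
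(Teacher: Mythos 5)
Your proposal takes the same route as the paper, which does not prove this lemma at all but simply recalls it as Lemma 2.17 of \cite{bz_2019}; you likewise reduce everything to the standard calculus of analytic sets and defer the heavy lifting (particularly item 6) to \cite{bz_2019} and the classical Bertsekas--Shreve reference. The explicit superlevel-set decompositions you sketch for items (2)--(5) and the indicator--simple--monotone--signed scheme for item (6) are correct and standard, so the proposal is sound.
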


\begin{prop} \label{prop:WellDefinedConsistency}
	Let Assumption \ref{assumptionnutzNew} hold for $\mathcal{P}$ and consider an upper semianalytic function $Y$ on $\tilde{\Omega}$ such that $Y \in L^1(\tilde{\Omega})$ or $Y$ is $\mathcal{G}^{\mathcal{P}}$-measurable and non-negative. Then the sublinear conditional operator $(\tilde{\mathcal{E}}_t^{N}(Y))_{t \geq 0}$  in \eqref{eq:defiSublinearOperatorMulti} is well-defined. Furthermore, $\tilde{\mathcal{E}}^N(Y)$ satisfies the consistency condition 
	\begin{equation}
			\tilde{\mathcal{E}}_t^N (Y)=\esssupT_{\tilde{P}' \in \tilde{\mathcal{P}}(t;\tilde{P})} \mathbb{E}^{\tilde{P}'}\left[Y| \mathcal{G}_{t}^{(N)}\right] \quad \tilde{P}\text{-a.s. for all } \tilde{P} \in \tilde{\mathcal{P}}, \label{eq:ConsistencyCondition}
	\end{equation}	
	for any $t \geq 0$, where $\mathcal{\tilde{P}}(t;\tilde{P})=\lbrace \tilde{P}' \in \tilde{\mathcal{P}}: \tilde{P}'=\tilde{P} \text{ on } \mathcal{G}_{t}^{(N)}\rbrace$.

\end{prop}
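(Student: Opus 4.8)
The plan is to establish well-definedness first, then the consistency (essential-supremum) representation, essentially by transferring the corresponding statements for the $(\mathcal{P},\mathbb{F})$-operator $\mathcal{E}_t$ from Proposition \ref{SublinearNutz} through the explicit decomposition \eqref{eq:defiSublinearOperatorMulti}, piece by piece over the events $\{t<\tau_1\}$, $\{\tau_k\le t<\tau_{k+1}\}$ for $k=1,\dots,N-1$, and $\{\tau_N\le t\}$.

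\textbf{Step 1: Well-definedness.} I would argue that each summand in \eqref{eq:defiSublinearOperatorMulti} makes sense. By Remark \ref{remark:MeasurabilityII} and Lemma \ref{lemma:DecompositionGeneraln} (in its $\mathcal{G}^{\mathcal{P}}$-measurable form), the function $\varphi$ exists and is unique, and if $Y$ is upper semianalytic then so is $\varphi$ in its arguments, using Lemma \ref{lemma:propertiesUpperSemianalytic}(3) applied to the Borel map $(\boldsymbol{u},\omega)\mapsto\varphi(\boldsymbol{u},\omega)$ precomposed appropriately. For each fixed $\boldsymbol{u}_{(k)}$, the inner object $e^{\int_0^{t-u_k}\tilde\lambda^{k+1}_s ds}\,\mathbb{E}^{\hat P}[\textbf{1}_{\{\tilde\tau_{k+1}>t-u_k\}}\varphi(\boldsymbol{u}_{(k)},\boldsymbol{u}_k^{\tilde\tau},\cdot)]$ is, as a function of $\omega\in\Omega$, upper semianalytic: the $\hat P$-integral is an integration against a Borel kernel on $\hat\Omega$, so Lemma \ref{lemma:propertiesUpperSemianalytic}(6) applies, and multiplication by the nonnegative Borel factor $e^{\int_0^{t-u_k}\tilde\lambda^{k+1}_sds}$ preserves upper semianalyticity by Lemma \ref{lemma:propertiesUpperSemianalytic}(5). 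Hence $\mathcal{E}_t$ of this object is defined and $\mathcal{F}_t^*$-measurable by Proposition \ref{SublinearNutz}; after substituting $\boldsymbol{u}_{(k)}=\boldsymbol{\tau}_{(k)}$ and multiplying by the $\mathcal{G}_t^{(N)}$-measurable indicator $\textbf{1}_{\{\tau_k\le t<\tau_{k+1}\}}$ one obtains a $\mathcal{G}_t^{(N),*}$-measurable, upper semianalytic random variable. Summing the $N+1$ pieces and invoking Lemma \ref{lemma:propertiesUpperSemianalytic}(4) (or (2) for limits if $Y$ is only nonnegative, approximating by $Y\wedge n$) gives that $\tilde{\mathcal{E}}_t^N(Y)$ is well-defined; the integrability in the $L^1(\tilde\Omega)$ case follows since $\tilde{\mathcal{E}}(|\tilde{\mathcal{E}}_t^N(Y)|)\le\tilde{\mathcal{E}}(|Y|)<\infty$, which itself is a consequence of the consistency representation below together with the tower-type bound.

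\textbf{Step 2: Consistency.} Fix $\tilde{P}=P\otimes\hat P\in\tilde{\mathcal{P}}$ and $t\ge0$. On the right-hand side of \eqref{eq:ConsistencyCondition}, I would first observe that $\tilde{\mathcal{P}}(t;\tilde P)$ is in bijection with $\mathcal{P}(t;P):=\{P'\in\mathcal{P}:P'=P\text{ on }\mathcal{F}_t\vee\sigma(E_1)\vee\dots\vee\sigma(E_N)\}$ — more precisely, since every element of $\tilde{\mathcal{P}}$ has the product form $P'\otimes\hat P$ and the $E_i$ are independent of $\mathcal{F}_\infty$, agreement on $\mathcal{G}_t^{(N)}$ forces $P'=P$ on $\mathcal{F}_t$. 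Then for each such $\tilde P'=P'\otimes\hat P$, Theorem \ref{prop:NTimesModelUncertainty} applied under $\tilde P'$ rewrites $\mathbb{E}^{\tilde P'}[Y|\mathcal{G}_t^{(N)}]$ as the same three-block sum as in \eqref{eq:defiSublinearOperatorMulti} but with $\mathcal{E}_t(\cdot)$ replaced by $\mathbb{E}^{P'}[\cdot|\mathcal{F}_t]$. Taking the $\tilde P$-essential supremum over $\tilde P'\in\tilde{\mathcal{P}}(t;\tilde P)$, I would push the essential supremum inside the finite sum — this is legitimate because the indicators $\textbf{1}_{\{\tau_k\le t<\tau_{k+1}\}}$ are $\mathcal{G}_t^{(N)}$-measurable, hence common to all $\tilde P'$, and partition $\tilde\Omega$, so the supremum decouples across the blocks. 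On each block I then recognize $\esssup^{\tilde P}_{\tilde P'}\mathbb{E}^{P'}[\,\cdot\,|\mathcal{F}_t]=\esssup^{P}_{P'\in\mathcal{P}(t;P)}\mathbb{E}^{P'}[\,\cdot\,|\mathcal{F}_t]$, which by \eqref{repesssup} of Proposition \ref{SublinearNutz} equals $\mathcal{E}_t(\cdot)$ $P$-a.s., frozen at $\boldsymbol{u}_{(k)}=\boldsymbol{\tau}_{(k)}$. Reassembling the blocks yields exactly \eqref{eq:defiSublinearOperatorMulti}, i.e. \eqref{eq:ConsistencyCondition}.

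\textbf{The main obstacle} will be the measurable-selection subtlety in commuting the essential supremum with the substitution $\boldsymbol{u}_{(k)}=\boldsymbol{\tau}_{(k)}$ on the middle blocks: one must ensure that the parametrized family $\boldsymbol{u}_{(k)}\mapsto\esssup^{P}_{P'}\mathbb{E}^{P'}[\textbf{1}_{\{\tilde\tau_{k+1}>t-u_k\}}\varphi(\boldsymbol{u}_{(k)},\boldsymbol{u}_k^{\tilde\tau},\cdot)\,|\,\mathcal{F}_t]$ admits a jointly measurable version that coincides $\tilde P$-a.s. with $\esssup^{\tilde P}_{\tilde P'}\mathbb{E}^{\tilde P'}[\textbf{1}_{\{\tau_k\le t<\tau_{k+1}\}}Y|\mathcal{G}_t^{(N)}]$ after freezing; this is precisely where the fact that $\mathcal{E}_t$ is defined pathwise (pointwise in $\omega$, for fixed parameter) in Proposition \ref{SublinearNutz}, rather than only $P$-a.s., is essential — it lets us first build the deterministic, jointly upper-semianalytic map $(\boldsymbol{u}_{(k)},\omega)\mapsto\mathcal{E}_t(\cdots)(\omega)$ and only afterwards substitute the random $\boldsymbol{\tau}_{(k)}$. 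I would handle this exactly as in the single-default case of \cite{bz_2019}, to which the argument reduces block by block once the $\mathbb{F}$-level consistency \eqref{repesssup} and the representation from Theorem \ref{prop:NTimesModelUncertainty} are in hand; the integrability claim $\tilde{\mathcal{E}}(|\tilde{\mathcal{E}}^N_t(Y)|)<\infty$ then follows by taking $\sup_{\tilde P\in\tilde{\mathcal{P}}}\mathbb{E}^{\tilde P}[\cdot]$ on both sides of the just-established identity and using conditional Jensen together with the tower property for $\mathbb{E}^{\tilde P}[\,\cdot\,|\mathcal{G}_t^{(N)}]$.
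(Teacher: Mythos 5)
Your two-step plan — well-definedness of the displayed formula via upper-semianalyticity of each block, then consistency via the blockwise $\esssup$ representation from \eqref{repesssup} under Theorem \ref{prop:NTimesModelUncertainty} — is indeed the route the paper takes. Step 2, including the observation $\tilde{\mathcal{P}}(t;\tilde P)=\{\tilde P'\in\tilde{\mathcal{P}}: P'=P\text{ on }\mathcal{F}_t\}$ and the decoupling over the disjoint $\mathcal{G}_t^{(N)}$-measurable events, matches the paper (which itself defers to Theorem 2.18 of \cite{bz_2019} for the details).

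However, Step 1 glosses over precisely the part the paper has to work for. Writing ``Lemma \ref{lemma:propertiesUpperSemianalytic}(3) applied to the Borel map $(\boldsymbol{u},\omega)\mapsto\varphi(\boldsymbol{u},\omega)$ precomposed appropriately'' is not an argument: $\varphi$ is only $\mathcal{B}(\mathbb{R}_+^N)\otimes\mathcal{F}_\infty^{\mathcal{P}}$-measurable, not Borel, and its upper semianalyticity is exactly what needs proving — you cannot feed it into point 3 as if it were already a Borel map. The paper's actual mechanism is: (i) switch from the $\tau$-representation $\varphi$ to the incremental representation $\tilde\varphi(\tilde\tau_1,\dots,\tilde\tau_N,\omega)$, because the map $f(\omega,\hat\omega)=(\tilde\tau_1(\omega,\hat\omega),\dots,\tilde\tau_N(\omega,\hat\omega),\omega)$ is \emph{surjective} onto $\mathbb{R}_+^N\times\Omega$ — the analogous map in $\tau$-coordinates is not, since $\tau_1<\dots<\tau_N$ always; (ii) apply the \emph{second} half of Lemma \ref{lemma:propertiesUpperSemianalytic}(3) (surjective $f$, $g\circ f$ upper semianalytic $\Rightarrow g$ upper semianalytic) to obtain that $\tilde\varphi$ itself is upper semianalytic on $\mathbb{R}_+^N\times\Omega$; (iii) then recover the object with $\boldsymbol{u}_{(k)}$ frozen and the remaining slots filled by $\tilde\tau_{k+1},\dots,\tilde\tau_N$ via two further Borel precompositions ($\tilde f$ and $\Psi_{\tilde{\boldsymbol{u}}_{(k)}}$) and the \emph{first} half of point 3. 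Without the $\tilde\tau$-coordinate change you lose surjectivity and step (ii) fails; without the subsequent chain of compositions you cannot freeze $\boldsymbol{u}_{(k)}$ while keeping the map on $\Omega\times\hat\Omega$ upper semianalytic. Your ``main obstacle'' paragraph correctly identifies a real measurable-selection issue, but it concerns the consistency step, not this one — the semianalyticity gap in Step 1 would need to be filled by the argument just sketched.
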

\begin{proof}
	By the same arguments as in the proof of Lemma 2.18 in \cite{bz_2019}, $e^{\int_0^{t}\tilde{\lambda}_s^1 ds} \mathbb{E}^{\hat{P}} \left [ \textbf{1}_{\lbrace t < \tau_1\rbrace}  Y \right] $ is an upper {semianalytic} function on $\Omega$ for any $t \ge 0$. 
	
We now prove that 
\begin{equation}
	e^{\int_0^{t-u_k}\tilde{\lambda}_s^{k+1} ds} \mathbb{E}^{\hat{P}}\left[\textbf{1}_{\lbrace\tilde{\tau}_{k+1} >t-u_k\rbrace}  \varphi(\bold{u}_{(k)}, \bold{u}_{k}^{\tilde{\tau}},\cdot) \right] \label{eq:UpperSemianalyticMiddlePart}
\end{equation}
is an upper seminanalytic function on $\Omega$  for $k=1,...,N-1$ and fixed $\boldsymbol{u}_{(k)}$, for every $t \ge 0$.

Note that by Theorem \ref{prop:NTimesModelUncertainty} and Remark \ref{remark:MeasurabilityII}
there exists a function 
\begin{align}
	\tilde{\varphi}: (\mathbb{R}_+^{N}  \times  {\Omega}, \mathcal{B}(\mathbb{R}_+^N)  \otimes \mathcal{F}_{\infty}^{{\mathcal{P}}}) &\to (\mathbb{R}, \mathcal{B}(\mathbb{R})) \label{eq:PresentationTilde}
\end{align} 
such that  
\begin{align*}
	Y(\omega, \hat{\omega})=\varphi(\boldsymbol{\tau}_{(N)}(\omega,\hat{\omega}),\omega)=\tilde{\varphi}(\tilde{\boldsymbol{\tau}}_{(N)}(\omega,\hat{\omega}),\omega), 
\end{align*}
with $\tilde{\boldsymbol{\tau}}_{(N)}=(\tilde{\tau}_1,...,\tilde{\tau}_N)$. Therefore, we can rewrite \eqref{eq:UpperSemianalyticMiddlePart} as 
\begin{equation*}
	e^{\int_0^{t-\sum_{l=1}^k u_l}\tilde{\lambda}_s^{k+1} ds} \mathbb{E}^{\hat{P}}\left[\textbf{1}_{\lbrace\tilde{\tau}_{k+1} >t- \sum_{l=1}^k u_l\rbrace}  \tilde{\varphi}(\tilde{\bold{u}}_{(k)}, \tilde{\boldsymbol{\tau}}_{(k+1:N)},\cdot) \right].
\end{equation*}
where $\tilde{\boldsymbol{\tau}}_{(k+1:N)}:=(\tilde{\tau}_{k+1},...,\tilde{\tau}_N)$. 

From now on we fix $t \ge 0$, $k=1,...,N-1$ and  $\tilde{\bold{u}}_{(k)}$.  By {point 5 of Lemma \ref{lemma:propertiesUpperSemianalytic}} it is enough to prove that $\mathbb{E}^{\hat{P}}\left[\textbf{1}_{\lbrace\tilde{\tau}_{k+1} >t- \sum_{l=1}^k u_l\rbrace}  \tilde{\varphi}(\tilde{\bold{u}}_{(k)}, \tilde{\boldsymbol{\tau}}_{(k+1:N)},\cdot) \right]$ is an upper semianalytic function on $\Omega$, as $e^{\int_0^{t-u_k}\tilde{\lambda}_s^{k+1} ds} $ is a non-negative Borel-measurable function. 

Next, we show that $Z:=\textbf{1}_{\lbrace\tilde{\tau}_{k+1} >t- \sum_{l=1}^k u_l\rbrace}  \tilde{\varphi}(\tilde{\bold{u}}_{(k)}, \tilde{\boldsymbol{\tau}}_{(k+1:N)},\cdot)$ is an upper semianalytic function on $\Omega \times \hat{\Omega}$. {Point 6 of Lemma \ref{lemma:propertiesUpperSemianalytic}} will then imply that \eqref{eq:UpperSemianalyticMiddlePart} is an upper semianalytic function on $\Omega$. As $\textbf{1}_{\lbrace\tilde{\tau}_{k+1} >t- \sum_{l=1}^k u_l\rbrace} $ is Borel-measurable on $\Omega \times \hat{\Omega}$ and non-negative, it can be seen that $Z$ is upper semianalytic applying again {point 5 of Lemma \ref{lemma:propertiesUpperSemianalytic}.} Therefore, it remains to show that for fixed $k=1,...,N-1$ and $\tilde{\bold{u}}_{(k)}$, the function 
\begin{equation*}
	(\omega, \hat{\omega})  \mapsto \tilde{\varphi}(\tilde{u}_1,...,\tilde{u}_k, \tilde{\tau}_{k+1}(\omega, \hat{\omega}),...,\tilde{\tau}_N(\omega, \hat{\omega}), \omega)
\end{equation*}
is upper semianalytic. We first prove that the function $g:\mathbb{R}_+^N \times \Omega \to \mathbb{R}$ given by
\begin{equation*}
	g(\tilde{\bold{u}}_{(N)},\omega)=\tilde{\varphi}(\tilde{\bold{u}}_{(N)},\omega)
\end{equation*}
is upper semianalytic.

To do so, define the function $f: \Omega \times \hat{\Omega} \to \mathbb{R}_+^N \times \Omega$ with $f(\omega, \hat{\omega})=(\tilde{\tau}_1(\omega,\hat{\omega}),...,\tilde{\tau}_N(\omega,\hat{\omega}), \omega)$. {Then $f$} is surjective by the definition of $(\tilde{\tau}_n)_{n=1}^N$.
Note that $Y(\omega, \hat{\omega})= \tilde\varphi \circ f(\omega, \hat{\omega})= g \circ f(\omega, \hat{\omega}) $ is an upper semianalytic function on $\Omega \times \hat{\Omega}$, which implies by the second part of {point 3 of Lemma \ref{lemma:propertiesUpperSemianalytic}} that the function $g$ is upper semianalytic on $\mathbb{R}_+^N \times\Omega $. \\
We now define the function $\tilde{f}: \mathbb{R}_+^k \times \Omega \times \hat{\Omega} \to \mathbb{R}_+^N \times \Omega$ by 
$$
\tilde{f}(\tilde{\boldsymbol{u}}_{(k)},\omega, \hat{\omega})=(\tilde{\boldsymbol{u}}_{(k)},\tilde{\boldsymbol{\tau}}_{(k+1:N)}(\omega, \hat{\omega}), \omega),
$$
which is a Borel-measurable function on $\mathbb{R}_+^k \times \Omega \times \hat{\Omega}$. Then it holds 
$$
h(\tilde{\boldsymbol{u}}_{(k)},\omega, \hat{\omega}):=g \circ \tilde{f} (\tilde{\boldsymbol{u}}_{(k)},\omega, \hat{\omega})=\tilde{\varphi}( \tilde{\boldsymbol{u}}_{(k)},\tilde{\boldsymbol{\tau}}_{(k+1:N)}(\omega, \hat{\omega}), \omega),
$$ which is again upper semianalytic on $\mathbb{R}^k_+ \times \Omega \times \hat{\Omega}$ by the first part of {point 3 of Lemma \ref{lemma:propertiesUpperSemianalytic}}. 

For fixed $\tilde{\boldsymbol{u}}_{(k)}$ define now $\Psi_{\tilde{\bold{u}}_{(k)}}:\Omega \times \hat{\Omega} \to \mathbb{R}^k_+ \times \Omega \times \hat{\Omega}$ by $\Psi_{\tilde{\bold{u}}_{(k)}}(\omega, \hat{\omega})=(\tilde{\bold{u}}_{(k)},\omega, \hat{\omega})$, which is Borel-measurable. Then we have 
$$
\tilde{\varphi}(\tilde{\boldsymbol{u}}_{(k)},\tilde{\boldsymbol{\tau}}_{(k+1:N)}(\omega, \hat{\omega}), \omega) = h \circ \Psi_{\tilde{\bold{u}}_{(k)}}(\omega, \hat{\omega})=h(\tilde{\bold{u}}_{(k)},\omega, \hat{\omega}),
$$
which is upper semianalytic on $\Omega \times \hat{\Omega}$ by the first part of {point 3 of Lemma \ref{lemma:propertiesUpperSemianalytic}}.
By using similar arguments and the fact that $g$ is upper semianalytic, we can also show that $\tilde{\varphi}(\tilde{\bold{u}}_{(N)},\omega)$ is upper semianalytic on $\Omega$. \\
The consistency condition in \eqref{eq:ConsistencyCondition} follows by the same arguments as Theorem 2.18 in \cite{bz_2019}. This is possible as ${\lbrace \tau_1 > t\rbrace}$,$(\lbrace \tau_k \leq t < \tau_{k+1}\rbrace)_{k=1}^{N-1}$, $\lbrace \tau_N \leq t \rbrace$ are disjoint events and 
\begin{equation*}
	\mathcal{\tilde{P}}(t;\tilde{P})=\lbrace \tilde{P}' \in \tilde{\mathcal{P}}: {P}' \otimes \hat{P}={P} \otimes \hat{P} \text{ on } \mathcal{G}_{t}^{(N)}\rbrace= \lbrace \tilde{P}' \in \tilde{\mathcal{P}}: {P}'={P} \text{ on } \mathcal{F}_{t}\rbrace.
\end{equation*}
\end{proof}

\begin{prop} \label{prop:EvaluatedOperatorUpperSemianalytic}
 Let $t \geq 0$ and $Y$ satisfy the assumptions in Proposition \ref{prop:WellDefinedConsistency}. The function $\tilde{\mathcal{E}}_t^N(Y)$ defined in \eqref{eq:defiSublinearOperatorMulti} is upper semianalytic and measurable with respect to $\mathcal{G}_t^{(N),*}$ and $\mathcal{G}^{\mathcal{P}}$. 
\end{prop}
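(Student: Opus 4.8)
The strategy is to treat the $N+1$ summands of \eqref{eq:defiSublinearOperatorMulti} one by one and then glue them together by the stability rules of Lemma \ref{lemma:propertiesUpperSemianalytic}. Write $\xi_0$ for the integrand of the first $\mathcal{E}_t$ in \eqref{eq:defiSublinearOperatorMulti}, $\xi_k(\boldsymbol{u}_{(k)},\cdot)$ (for $k=1,\dots,N-1$) for the integrand of the $k$-th $\mathcal{E}_t$ there before the substitution $\boldsymbol{u}_{(k)}=\boldsymbol{\tau}_{(k)}$, and $\varphi(\boldsymbol{u}_{(N)},\cdot)$ for the one in the last summand. The two ingredients I would prepare first are the joint upper semianalyticity in $(\boldsymbol{u}_{(k)},\omega)$ of these integrands and a version of Proposition \ref{SublinearNutz} in which the integrand carries an extra Euclidean parameter.

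The first ingredient is essentially already contained in the proof of Proposition \ref{prop:WellDefinedConsistency}: that proof shows that $\xi_0$ is upper semianalytic on $\Omega$, that $(\tilde{\boldsymbol{u}}_{(k)},\omega,\hat\omega)\mapsto\tilde{\varphi}(\tilde{\boldsymbol{u}}_{(k)},\tilde{\boldsymbol{\tau}}_{(k+1:N)}(\omega,\hat\omega),\omega)$ is upper semianalytic on $\mathbb{R}_+^k\times\Omega\times\hat\Omega$, and that $(\boldsymbol{u}_{(N)},\omega)\mapsto\varphi(\boldsymbol{u}_{(N)},\omega)$ is upper semianalytic on $\mathbb{R}_+^N\times\Omega$. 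From the middle statement one deduces the joint upper semianalyticity of $(\boldsymbol{u}_{(k)},\omega)\mapsto\xi_k(\boldsymbol{u}_{(k)},\omega)$ by multiplying with the non-negative Borel factors $\textbf{1}_{\{\cdot\}}$ and $e^{\int\cdots}$ (point 5 of Lemma \ref{lemma:propertiesUpperSemianalytic}) and integrating out $\hat\omega$ against $\hat{P}$ (point 6), after the Borel reparametrization between $\boldsymbol{\tau}_{(k)}$ and $\tilde{\boldsymbol{\tau}}_{(k)}$ used in the proof of Proposition \ref{prop:WellDefinedConsistency}.

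The second ingredient is the parametrized form of Proposition \ref{SublinearNutz}: if $\Phi:\mathbb{R}_+^k\times\Omega\to\overline{\mathbb{R}}$ is upper semianalytic, then $(\boldsymbol{u},\omega)\mapsto\mathcal{E}_t(\Phi(\boldsymbol{u},\cdot))(\omega)=\sup_{P\in\mathcal{P}}\mathbb{E}^{P}[\Phi(\boldsymbol{u},\cdot)^{t,\omega}]$ is upper semianalytic on $\mathbb{R}_+^k\times\Omega$ and $\mathcal{B}(\mathbb{R}_+^k)\otimes\mathcal{F}_t^*$-measurable. I would obtain this by rereading the proof of Theorem 2.3 in \cite{nh_2013} (see also the single-default construction in \cite{bz_2019}): the map $(\boldsymbol{u},\omega,\bar\omega)\mapsto\Phi(\boldsymbol{u},\omega\otimes_t\bar\omega)$ is upper semianalytic as a composition with a Borel map (point 3 of Lemma \ref{lemma:propertiesUpperSemianalytic}); integrating $\bar\omega$ against the Borel kernel $(\omega,P)\mapsto P^{t,\omega}$ keeps it upper semianalytic jointly in $(\boldsymbol{u},\omega,P)$ (point 6); and $\sup_{P\in\mathcal{P}}$ over the analytic set $\mathcal{P}$ is a projection of analytic sets, hence preserves upper semianalyticity. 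The joint measurability follows as in \cite{nh_2013}, since upper semianalytic functions are universally measurable and, for fixed $\boldsymbol{u}$, the $\omega$-dependence factors through $\mathcal{F}_t$ up to $\mathcal{P}$-polar sets. I expect this step to be the main obstacle, since it forces one to re-enter the Nutz--van Handel measurable-selection argument rather than quote it as a black box; once set up correctly it is routine.

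Finally I would assemble the pieces. The first summand is $\textbf{1}_{\{t<\tau_1\}}\mathcal{E}_t(\xi_0)$: by Proposition \ref{SublinearNutz}, $\mathcal{E}_t(\xi_0)$ is $\mathcal{F}_t^*$-measurable and upper semianalytic on $\Omega$; pulled back to $\tilde\Omega$ through the Borel coordinate projection and multiplied by the non-negative Borel indicator $\textbf{1}_{\{t<\tau_1\}}$ it stays upper semianalytic (points 3 and 5 of Lemma \ref{lemma:propertiesUpperSemianalytic}) and is $\mathcal{G}_t^{(N),*}$-measurable. For $k=1,\dots,N-1$, the function $(\boldsymbol{u}_{(k)},\omega)\mapsto\mathcal{E}_t(\xi_k(\boldsymbol{u}_{(k)},\cdot))(\omega)$ is upper semianalytic and $\mathcal{B}(\mathbb{R}_+^k)\otimes\mathcal{F}_t^*$-measurable by the second ingredient; composing it with the Borel map $(\omega,\hat\omega)\mapsto(\boldsymbol{\tau}_{(k)}(\omega,\hat\omega),\omega)$ and multiplying by $\textbf{1}_{\{\tau_k\le t<\tau_{k+1}\}}$ produces an upper semianalytic function on $\tilde\Omega$, which is $\mathcal{G}_t^{(N),*}$-measurable because on the event $\{\tau_k\le t<\tau_{k+1}\}\in\mathcal{G}_t^{(k+1)}$ the vector $\boldsymbol{\tau}_{(k)}$ is $\mathcal{G}_t^{(k),*}$-measurable and $\mathcal{F}_t^*\subseteq\mathcal{G}_t^{(N),*}$. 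The last summand is handled identically with $k=N$, using that $\boldsymbol{\tau}_{(N)}$ is $\mathcal{G}_t^{(N),*}$-measurable on $\{\tau_N\le t\}$. Since the indicators $\textbf{1}_{\{t<\tau_1\}},(\textbf{1}_{\{\tau_k\le t<\tau_{k+1}\}})_{k=1}^{N-1},\textbf{1}_{\{\tau_N\le t\}}$ partition $\tilde\Omega$, no $\infty-\infty$ ambiguity occurs, so point 4 of Lemma \ref{lemma:propertiesUpperSemianalytic} yields that $\tilde{\mathcal{E}}_t^N(Y)$ is upper semianalytic and $\mathcal{G}_t^{(N),*}$-measurable; and $\mathcal{G}_t^{(N),*}\subseteq\mathcal{G}^{\mathcal{P}}$ because $\mathcal{N}_\infty^*\subseteq\mathcal{N}_\infty^{\mathcal{P}}$, which gives $\mathcal{G}^{\mathcal{P}}$-measurability and finishes the proof.
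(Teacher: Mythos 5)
Your proposal is correct and is essentially the unrolled version of the paper's one-line argument, which simply cites Proposition~2.21 of \cite{bz_2019}: decompose $\tilde{\mathcal{E}}_t^N(Y)$ according to the disjoint default events, prove a parametrized version of the Nutz--van Handel measurability/semianalyticity result for $(\boldsymbol{u},\omega)\mapsto\mathcal{E}_t(\Phi(\boldsymbol{u},\cdot))(\omega)$, and substitute $\boldsymbol{u}=\boldsymbol{\tau}_{(k)}$ on each stratum. One small imprecision worth flagging: the $\mathcal{F}_t^*$-measurability in \cite{nh_2013} is with respect to the \emph{universal} completion ($\mathcal{N}_t^*=\bigcap_{P\in\mathcal{P}(\Omega)}\mathcal{N}_t^P$), not the $\mathcal{P}$-polar sets, so the phrase ``factors through $\mathcal{F}_t$ up to $\mathcal{P}$-polar sets'' should read ``up to universally null sets''; this does not affect the conclusion since $\mathcal{N}_t^*\subseteq\mathcal{N}_\infty^{\mathcal{P}}$.
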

\begin{proof}
	This follows by the same arguments as Proposition 2.21 in \cite{bz_2019}.
\end{proof}
For $m=1,...,\red{N}$ and $Y$ \red{which is upper semianalytic on $\tilde{\Omega}$ such that $Y \in L^{1,(m)}(\tilde{\Omega})$ or $Y$ is $\mathcal{G}^{\mathcal{P},(m)}$-measurable and non-negative}, we denote by $\tilde{\mathcal{E}}^m$ the following operator
\begin{align}
	\tilde{\mathcal{E}}^m_t(Y)&:=\textbf{1}_{\lbrace t < \tau_1\rbrace} \mathcal{E}_t\left( e^{\int_0^{t}\tilde{\lambda}_s^1 ds} \mathbb{E}^{\hat{P}} \left [ \textbf{1}_{\lbrace t < \tau_1\rbrace}  Y \right] \right) \nonumber \\
	&\ +\sum_{k=1}^{m-1} \textbf{1}_{\lbrace \tau_k \leq t < \tau_{k+1}\rbrace}  \mathcal{E}_t\left ( e^{\int_0^{t-u_k}\tilde{\lambda}_s^{k+1} ds} \mathbb{E}^{\hat{P}}\left[\textbf{1}_{\lbrace\tilde{\tau}_{k+1} >t-u_k\rbrace}  \varphi(\textbf{u}_{(k)},u_{k,{k+1}}^{\tilde{\tau}},...,u_{k,m}^{\tilde{\tau}},\cdot)\right]\right)\bigg\vert_{\textbf{u}_{(k)}=\bold{\tau}_{(k)}} \nonumber \\
	&\ + \textbf{1}_{\lbrace \tau_{m} \leq t \rbrace} \mathcal{E}_t(\varphi(\bold{u}_{(m)},\cdot))\big \vert_{\bold{u}_{(m)}=\bold{\tau}_{(m)}}. \label{eq:DefinitionOperatorm}
\end{align} 
\begin{prop} \label{prop:OperatorCoincides}
Let $Y$ satisfy the assumptions in Proposition \ref{prop:WellDefinedConsistency}. Then:
\begin{enumerate}
	\item If $Y$ is $\mathcal{F}_{\infty}^{{\mathcal{P}}}$-measurable, then $\tilde{\mathcal{E}}_t^N(Y)$ coincides with $\mathcal{E}_t(Y)$ for $t \geq 0$.
	\item If $Y$ is $\mathcal{G}_{\infty}^{{{\mathcal{P}}},\red{(m)}}$-measurable, then $\tilde{\mathcal{E}}_t^{N}(Y)$ coincides with {$\tilde{\mathcal{E}}^m_t(Y)$} for $t \geq 0$ and $m=1,...,N$. For $m=1$, the operator $\tilde{\mathcal{E}}^1$ is the one defined in Theorem 2.18 in \cite{bz_2019}. 
	\end{enumerate}
	\end{prop}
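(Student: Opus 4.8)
The plan is to evaluate the three groups of summands in the definition \eqref{eq:defiSublinearOperatorMulti} of $\tilde{\mathcal{E}}^N_t(Y)$ under each of the two measurability hypotheses on $Y$, using the representation of Lemma \ref{lemma:DecompositionGeneraln} (in the $\mathcal{P}$-completed form of Remark \ref{remark:MeasurabilityII}) together with one recurring algebraic fact: whenever $\varphi$ does not depend on the time-argument $u_{k+1}$, integrating $\textbf{1}_{\{\tilde{\tau}_{k+1}>t-u_k\}}$ over $\hat{\Omega}$ under $\hat{P}$ produces exactly $e^{-\int_0^{t-u_k}\tilde{\lambda}^{k+1}_s\,ds}$. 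Indeed $\{\tilde{\tau}_{k+1}>t-u_k\}=\{\int_0^{t-u_k}\tilde{\lambda}^{k+1}_s\,ds<E_{k+1}\}$, the process $\int_0^{\cdot}\tilde{\lambda}^{k+1}_s\,ds$ is $\mathcal{F}_{\infty}$-measurable, and $E_{k+1}$ is unit-exponential under $\hat{P}$ and independent of $\mathcal{F}_{\infty}$ by Assumption \ref{asum:ExistenceUniformlyRV}; this factor then cancels the leading exponential in the corresponding summand.

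For part 1, if $Y$ is $\mathcal{F}_{\infty}^{\mathcal{P}}$-measurable we may take $\varphi(\boldsymbol{u}_{(N)},\omega)=Y(\omega)$, independent of all $N$ time-variables and of $\hat{\omega}$. First I would observe, by the above computation for ``$k=0$'', that $\mathbb{E}^{\hat{P}}[\textbf{1}_{\{t<\tau_1\}}Y]=Y\,e^{-\int_0^t\tilde{\lambda}^1_s\,ds}$, so the first summand equals $\textbf{1}_{\{t<\tau_1\}}\mathcal{E}_t(Y)$. For $k=1,\dots,N-1$ the same cancellation, together with $\varphi(\boldsymbol{u}_{(k)},\boldsymbol{u}_k^{\tilde{\tau}},\cdot)=Y$ and the fact that $\mathcal{E}_t(Y)$ carries no $\boldsymbol{u}_{(k)}$-dependence, gives that the $k$-th middle summand equals $\textbf{1}_{\{\tau_k\le t<\tau_{k+1}\}}\mathcal{E}_t(Y)$, and the last summand equals $\textbf{1}_{\{\tau_N\le t\}}\mathcal{E}_t(Y)$ directly. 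Since $\{t<\tau_1\},\{\tau_1\le t<\tau_2\},\dots,\{\tau_{N-1}\le t<\tau_N\},\{\tau_N\le t\}$ partition $\tilde{\Omega}$ (from the ordering $\tau_0<\tau_1<\cdots<\tau_N<\infty$), summing the three groups yields $\tilde{\mathcal{E}}^N_t(Y)=\mathcal{E}_t(Y)$, which is well defined by Proposition \ref{SublinearNutz} since $Y$ is upper semianalytic on $\Omega$.

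For part 2, if $Y$ is $\mathcal{G}_{\infty}^{\mathcal{P},(m)}$-measurable, then Lemma \ref{lemma:DecompositionGeneraln} applied to $\mathbb{G}^{(m)}$ produces a measurable $\varphi$ on $\mathbb{R}_+^m\times\Omega$ with $Y=\varphi(\tau_1,\dots,\tau_m,\cdot)$, which, read as a function on $\mathbb{R}_+^N\times\Omega$, ignores its last $N-m$ time-arguments. I would split the summands of \eqref{eq:defiSublinearOperatorMulti} into those with index $k\le m-1$ (together with the first summand) and those with $k\ge m$ (together with the last summand). The first summand involves $Y$ only through $\textbf{1}_{\{t<\tau_1\}}Y$, so it matches the first summand of $\tilde{\mathcal{E}}^m_t(Y)$ in \eqref{eq:DefinitionOperatorm} verbatim; for $1\le k\le m-1$ the first $m$ entries of $(\boldsymbol{u}_{(k)},\boldsymbol{u}_k^{\tilde{\tau}})$ are precisely $(\boldsymbol{u}_{(k)},u_{k,k+1}^{\tilde{\tau}},\dots,u_{k,m}^{\tilde{\tau}})$, so those summands also coincide verbatim with the corresponding summands of $\tilde{\mathcal{E}}^m_t(Y)$. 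For $k\ge m$ the quantity $\varphi(\boldsymbol{u}_{(k)},\boldsymbol{u}_k^{\tilde{\tau}},\cdot)=\varphi(\boldsymbol{u}_{(m)},\cdot)$ is independent of $\hat{\omega}$, so the $\hat{P}$-integral factors it out, leaves $e^{-\int_0^{t-u_k}\tilde{\lambda}^{k+1}_s\,ds}$, which cancels the leading exponential, and after the substitution $\boldsymbol{u}_{(k)}=\boldsymbol{\tau}_{(k)}$ (of which only the first $m$ coordinates enter) the $k$-th summand becomes $\textbf{1}_{\{\tau_k\le t<\tau_{k+1}\}}\,\mathcal{E}_t(\varphi(\boldsymbol{u}_{(m)},\cdot))\vert_{\boldsymbol{u}_{(m)}=\boldsymbol{\tau}_{(m)}}$, and likewise the last summand gives the same expression with $\textbf{1}_{\{\tau_N\le t\}}$. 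The telescoping identity $\sum_{k=m}^{N-1}\textbf{1}_{\{\tau_k\le t<\tau_{k+1}\}}+\textbf{1}_{\{\tau_N\le t\}}=\textbf{1}_{\{\tau_m\le t\}}$ then reproduces exactly the last summand of $\tilde{\mathcal{E}}^m_t(Y)$, so $\tilde{\mathcal{E}}^N_t(Y)=\tilde{\mathcal{E}}^m_t(Y)$; for $m=1$, comparing \eqref{eq:DefinitionOperatorm} with the definition in \cite{bz_2019} shows that $\tilde{\mathcal{E}}^1$ is the operator of Theorem 2.18 there.

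I do not expect a genuine obstacle; the argument is essentially a bookkeeping of which arguments of $\varphi$ are effective under each measurability hypothesis. The points that need care are: justifying that $\varphi(\boldsymbol{u}_{(m)},\cdot)$ may be pulled out of $\mathbb{E}^{\hat{P}}$ because it does not depend on $\hat{\omega}$, and that the residual $\hat{P}$-probability is the exponential survival function via Assumption \ref{asum:ExistenceUniformlyRV}; handling, by means of the leading indicator, the degenerate range $u_k>t$ where $\int_0^{t-u_k}\tilde{\lambda}^{k+1}_s\,ds$ and $\textbf{1}_{\{\tilde{\tau}_{k+1}>t-u_k\}}$ are irrelevant; and checking that the indicator families used are genuine (disjoint, covering) partitions $\tilde{\mathcal{P}}$-q.s.
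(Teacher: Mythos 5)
Your proof is correct and follows essentially the same route as the paper's: substitute the decomposition $Y=\varphi(\boldsymbol{\tau}_{(m)},\cdot)$ (respectively $Y(\omega)$), use the independence of $E_{k+1}$ from $\mathcal{F}_{\infty}$ to evaluate $\mathbb{E}^{\hat{P}}[\textbf{1}_{\{\tilde{\tau}_{k+1}>t-u_k\}}]=e^{-\int_0^{t-u_k}\tilde{\lambda}^{k+1}_s\,ds}$ so that it cancels the leading exponential in each summand with index $k\ge m$ (or all $k$ in part 1), and collapse the resulting indicator sum $\sum_{k=m}^{N-1}\textbf{1}_{\{\tau_k\le t<\tau_{k+1}\}}+\textbf{1}_{\{\tau_N\le t\}}=\textbf{1}_{\{\tau_m\le t\}}$, which is exactly how the paper arrives at the last term of $\tilde{\mathcal{E}}^m_t(Y)$.
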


	\begin{proof}
1. If $Y$ is $\mathcal{F}_{\infty}^{{\mathcal{P}}}$-measurable, then $Y(\tilde{\omega})=Y(\omega, \hat{\omega})=Y(\omega)$. Therefore, by Definition \ref{def:defiSublinearOperatorMulti} for any $t \ge 0$ we get
	\allowdisplaybreaks{
	\red{\begin{align*}
	\tilde{\mathcal{E}}^N_t(Y)&= \textbf{1}_{\lbrace t < \tau_1\rbrace} \mathcal{E}_t\left( e^{\int_0^{t}\tilde{\lambda}_s^1 ds} \mathbb{E}^{\hat{P}} \left [ \textbf{1}_{\lbrace t < \tau_1\rbrace}  Y\right] \right) \nonumber \\
	&\ +\sum_{k=1}^{N-1} \textbf{1}_{\lbrace \tau_k \leq t < \tau_{k+1}\rbrace}  \mathcal{E}_t\left ( e^{\int_0^{t-u_k}\tilde{\lambda}_s^{k+1} ds} \mathbb{E}^{\hat{P}}\left[\textbf{1}_{\lbrace\tilde{\tau}_{k+1} >t-u_k\rbrace}  Y\right]\right)\bigg\vert_{{u}_{k}={\tau}_{k}} \nonumber \\
	&\ +\textbf{1}_{\lbrace \tau_{N} \leq t \rbrace} \mathcal{E}_t(Y) \\
	&= \textbf{1}_{\lbrace t < \tau_1\rbrace} \mathcal{E}_t\left( e^{\int_0^{t}\tilde{\lambda}_s^1 ds} e^{-\int_0^{t}\tilde{\lambda}_s^1 ds} Y \right) \nonumber \\
	&\ +\sum_{k=1}^{N-1} \textbf{1}_{\lbrace \tau_k \leq t < \tau_{k+1}\rbrace}  \mathcal{E}_t\left ( e^{\int_0^{t-u_k}\tilde{\lambda}_s^{k+1} ds} e^{-\int_0^{t-u_k}\tilde{\lambda}_s^{k+1} ds}  Y\right)\bigg\vert_{{u}_{k}={\tau}_{k}} \nonumber \\
	&\ +\textbf{1}_{\lbrace \tau_{N} \leq t \rbrace} \mathcal{E}_t(Y) \\
	&= \textbf{1}_{\lbrace t < \tau_1\rbrace} \mathcal{E}_t\left(Y  \right) +\sum_{k=1}^{N-1} \textbf{1}_{\lbrace \tau_k \leq t < \tau_{k+1}\rbrace}  \mathcal{E}_t\left (   Y\right) +\textbf{1}_{\lbrace \tau_{N} \leq t \rbrace} \mathcal{E}_t(Y)\\
	&=\mathcal{E}_t(Y).
	\end{align*}}}
2. Let $m=1,...,N$. If $Y$ is $\mathcal{G}_{\infty}^{\mathcal{P},\red{(m)}}$-measurable, we can write  
	
	\begingroup
\allowdisplaybreaks
	$$
	Y(\omega, \hat{\omega})=\varphi(\tau_{1}(\omega, \hat{\omega}),...,\tau_m(\omega, \hat{\omega}), \omega)
	$$ 
	with $\varphi: (\mathbb{R}_+^m \times \Omega, \mathcal{B}(\mathbb{R}_+^m)\otimes \mathcal{F}_{\infty}^{{\mathcal{P}}}) \to (\mathbb{R},\mathcal{B}(\mathbb{R}))$.  Then for any $t \ge 0$ it follows
	\red{\small{
	\begin{align*}
	\tilde{\mathcal{E}}^N_t(Y)&= \textbf{1}_{\lbrace t < \tau_1\rbrace} \mathcal{E}_t\left( e^{\int_0^{t}\tilde{\lambda}_s^1 ds} \mathbb{E}^{\hat{P}} \left [ \textbf{1}_{\lbrace t < \tau_1\rbrace}  Y \right] \right) \nonumber \\
	&\ +\sum_{k=1}^{m-1} \textbf{1}_{\lbrace \tau_k \leq t < \tau_{k+1}\rbrace}  \mathcal{E}_t\left ( e^{\int_0^{t-u_k}\tilde{\lambda}_s^{k+1} ds} \mathbb{E}^{\hat{P}}\left[\textbf{1}_{\lbrace\tilde{\tau}_{k+1} >t-u_k\rbrace}  \varphi(\textbf{u}_{(k)},u_{k,{k+1}}^{\tilde{\tau}},...,u_{k,m}^{\tilde{\tau}},\cdot)\right]\right)\bigg\vert_{\textbf{u}_{(k)}=\boldsymbol{\tau}_{(k)}} \nonumber \\
	&\ +\sum_{k=m}^{N-1} \textbf{1}_{\lbrace \tau_k \leq t < \tau_{k+1}\rbrace}  \mathcal{E}_t\left ( e^{\int_0^{t-u_k}\tilde{\lambda}_s^{k+1} ds} \mathbb{E}^{\hat{P}}\left[\textbf{1}_{\lbrace\tilde{\tau}_{k+1} >t-u_k\rbrace}  \varphi(\bold{u}_{(m)},\cdot)\right]\right)\bigg\vert_{\bold{u}_{(m)}=\boldsymbol{\tau}_{(m)}} \nonumber \\
	&\ +\textbf{1}_{\lbrace \tau_{N} \leq t \rbrace} \mathcal{E}_t(\varphi(\bold{u}_{(m)},\cdot))\big \vert_{\bold{u}_{(m)}=\boldsymbol{\tau}_{(m)}} \\
	&= \textbf{1}_{\lbrace t < \tau_1\rbrace} \mathcal{E}_t\left( e^{\int_0^{t}\tilde{\lambda}_s^1 ds} \mathbb{E}^{\hat{P}} \left [ \textbf{1}_{\lbrace t < \tau_1\rbrace}  Y \right] \right) \\
	&\ +\sum_{k=1}^{{m-1}} \textbf{1}_{\lbrace \tau_k \leq t < \tau_{k+1}\rbrace}  \mathcal{E}_t\left ( e^{\int_0^{t-u_k}\tilde{\lambda}_s^{k+1} ds} \mathbb{E}^{\hat{P}}\left[\textbf{1}_{\lbrace\tilde{\tau}_{k+1} >t-u_k\rbrace}  \varphi(\textbf{u}_{(k)},u_{k,{k+1}}^{\tilde{\tau}},...,u_{k,m}^{\tilde{\tau}},\cdot)\right]\right)\bigg\vert_{\textbf{u}_{(k)}=\boldsymbol{\tau}_{(k)}} \nonumber \\
	&\ +\sum_{k={m}}^{{N-1}} \textbf{1}_{\lbrace \tau_k \leq t < \tau_{k+1}\rbrace}  \mathcal{E}_t\left ( \varphi(\bold{u}_{(m)},\cdot)\right)\bigg\vert_{\bold{u}_{(m)}=\boldsymbol{\tau}_{(m)}} +\textbf{1}_{\lbrace \tau_{N} \leq t \rbrace} \mathcal{E}_t(\varphi(\bold{u}_{(m)},\cdot))\big \vert_{\bold{u}_{(m)}=\boldsymbol{\tau}_{(m)}} \\
	&=\textbf{1}_{\lbrace t < \tau_1\rbrace} \mathcal{E}_t\left( e^{\int_0^{t}\tilde{\lambda}_s^1 ds} \mathbb{E}^{\hat{P}} \left [ \textbf{1}_{\lbrace t < \tau_1\rbrace}  Y \right] \right) \\
	&\ +\sum_{k=1}^{{m-1}} \textbf{1}_{\lbrace \tau_k \leq t < \tau_{k+1}\rbrace}  \mathcal{E}_t\left ( e^{\int_0^{t-u_k}\tilde{\lambda}_s^{k+1} ds} \mathbb{E}^{\hat{P}}\left[\textbf{1}_{\lbrace\tilde{\tau}_{k+1} >t-u_k\rbrace}  \varphi(\textbf{u}_{(k)},u_{k,{k+1}}^{\tilde{\tau}},...,u_{k,m}^{\tilde{\tau}},\cdot)\right]\right)\bigg\vert_{\textbf{u}_{(k)}=\boldsymbol{\tau}_{(k)}} \nonumber \\
	&\ + \textbf{1}_{\lbrace \tau_{m} \leq t \rbrace} \mathcal{E}_t(\varphi(\bold{u}_{(m)},\cdot))\big \vert_{\bold{u}_{(m)}=\boldsymbol{\tau}_{(m)}} \\
	& ={\tilde{\mathcal{E}}^m_t(Y)}.
	\end{align*}}}  
	\endgroup
\end{proof}

We now prove some consistency properties.

\begin{prop}	\label{prop:ConsistencyAssumptions}
Let $t \geq 0$ and $Y$ satisfy the assumptions in Proposition \ref{prop:WellDefinedConsistency}. Then the following holds:
\begin{enumerate}
	\item If $X$ is an upper semianalytic function on $\tilde{\Omega}$ such that $X \in L^1(\tilde{\Omega})$ and 
	\begin{equation*}
		\esssupT_{\tilde{P}' \in \tilde{\mathcal{P}}(t;\tilde{P})} \mathbb{E}^{\tilde{P}'}\left[Y| \mathcal{G}_{t}^{(N)}\right] = \esssupT_{\tilde{P}' \in \tilde{\mathcal{P}}(t;\tilde{P})} \mathbb{E}^{\tilde{P}'}\left[X| \mathcal{G}_{t}^{(N)}\right]\quad \tilde{P}\text{-a.s. for all } \tilde{P} \in \tilde{\mathcal{P}},
	\end{equation*}
	then $\tilde{\mathcal{E}}_t^N(X)=\tilde{\mathcal{E}}_t^N(Y)$ $\tilde{P}$-a.s. for all $\tilde{P} \in \tilde{\mathcal{P}}$.
	\item If $A \in \mathcal{G}_t^{(N)}$, then $\tilde{\mathcal{E}}_t^{N}(\textbf{1}_A Y)= \textbf{1}_{A} \tilde{\mathcal{E}}_t^{N}( Y)$.
	\item The following pathwise equalities hold:
	\begin{align}
		\tilde{\mathcal{E}}_t^N(\textbf{1}_{\lbrace{ \tau_1 > t \rbrace}}X)&=\textbf{1}_{\lbrace{ \tau_1 >  t \rbrace}} \tilde{\mathcal{E}}_t^N(X), \nonumber \\
		\tilde{\mathcal{E}}_t^N(\textbf{1}_{\lbrace{ \tau_k \leq t < \tau_{k+1}  \rbrace}}X)&=\textbf{1}_{\lbrace{ \tau_k \leq t < \tau_{k+1}  \rbrace}}\tilde{\mathcal{E}}_t^N(X), \quad k=1,...,N-1, \label{eq:PathwiseEqualities}\\
		\tilde{\mathcal{E}}_t^N(\textbf{1}_{\lbrace{ \tau_N \leq t \rbrace}}X)&=\textbf{1}_{\lbrace{ \tau_N \leq  t \rbrace}} \tilde{\mathcal{E}}_t^N(X). \nonumber 
	\end{align}
\end{enumerate}

\end{prop}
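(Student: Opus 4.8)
The plan is to prove the three parts in order, leaning heavily on the consistency representation \eqref{eq:ConsistencyCondition} from Proposition \ref{prop:WellDefinedConsistency} and on the corresponding properties of the Nutz--van Handel operator $\mathcal{E}_t$ recalled in Proposition \ref{SublinearNutz}. For part 1, the idea is that $\tilde{\mathcal{E}}_t^N$ depends on its argument only through the conditional essential supremum: by \eqref{eq:ConsistencyCondition} we have $\tilde{\mathcal{E}}_t^N(X) = \esssupT_{\tilde{P}' \in \tilde{\mathcal{P}}(t;\tilde{P})} \mathbb{E}^{\tilde{P}'}[X \mid \mathcal{G}_t^{(N)}]$ and likewise for $Y$, $\tilde{P}$-a.s. for every $\tilde{P} \in \tilde{\mathcal{P}}$; since the two essential suprema agree $\tilde{P}$-a.s. for all $\tilde{P} \in \tilde{\mathcal{P}}$ by hypothesis, the conclusion $\tilde{\mathcal{E}}_t^N(X) = \tilde{\mathcal{E}}_t^N(Y)$ $\tilde{P}$-a.s. is immediate. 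The only point to check is that $X \in L^1(\tilde{\Omega})$ guarantees $\tilde{\mathcal{E}}_t^N(X)$ is well-defined, which is Proposition \ref{prop:WellDefinedConsistency}.

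For part 2, I would first reduce to the case where $A$ is one of the ``stratum'' events. Since $\{\tau_1 > t\}$, $\{\tau_k \le t < \tau_{k+1}\}$ for $k = 1, \dots, N-1$, and $\{\tau_N \le t\}$ form a partition of $\tilde{\Omega}$ into $\mathcal{G}_t^{(N)}$-measurable sets, any $A \in \mathcal{G}_t^{(N)}$ decomposes as a disjoint union $A = \bigsqcup_k (A \cap A_k)$ where $A_k$ runs over these strata; on the stratum $\{\tau_k \le t < \tau_{k+1}\}$ the set $A \cap A_k$ can itself be written, on that event, in the form $\{\boldsymbol{\tau}_{(k)} \in C\} \cap A'$ with $C$ Borel in $\mathbb{R}_+^k$ and $A' \in \mathcal{F}_t$, using the decomposition of $\mathcal{G}_t^{(N)}$ on the strata analogous to \eqref{eq:RemarkFiltrations} (cf. the structure already exploited in Section \ref{sec:twotimes}). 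Then one pushes the indicator through: the factor $\textbf{1}_{\lbrace \tau_k \le t < \tau_{k+1}\rbrace}$ in front of the $k$-th term of \eqref{eq:defiSublinearOperatorMulti} is untouched, the factor $\textbf{1}_C(\boldsymbol{u}_{(k)})$ is a constant from the viewpoint of $\mathcal{E}_t$ acting in the $\cdot$-variable and comes out after the substitution $\boldsymbol{u}_{(k)} = \boldsymbol{\tau}_{(k)}$, and the factor $\textbf{1}_{A'} \in \mathcal{F}_t$ comes out of $\mathcal{E}_t$ by the $\mathcal{F}_t$-homogeneity of the Nutz--van Handel operator (which follows from \eqref{repesssup}, since $\textbf{1}_{A'}$ is $\mathcal{F}_t$-measurable and nonnegative). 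Reassembling the strata gives $\tilde{\mathcal{E}}_t^N(\textbf{1}_A Y) = \textbf{1}_A \tilde{\mathcal{E}}_t^N(Y)$.

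Part 3 is then a special case of part 2, applied to $A = \{\tau_1 > t\}$, $A = \{\tau_k \le t < \tau_{k+1}\}$, $A = \{\tau_N \le t\}$ respectively; the only subtlety is that these are claimed as \emph{pathwise} (everywhere) equalities rather than $\tilde{P}$-a.s. equalities, so one must observe that the manipulation in \eqref{eq:defiSublinearOperatorMulti} is in fact pathwise: multiplying by $\textbf{1}_{\lbrace \tau_k \le t < \tau_{k+1}\rbrace}$ annihilates all but the $k$-th summand pathwise because the indicators of the strata are pairwise orthogonal for every $\tilde{\omega}$, and $\textbf{1}_{\lbrace \tau_k \le t < \tau_{k+1}\rbrace}^2 = \textbf{1}_{\lbrace \tau_k \le t < \tau_{k+1}\rbrace}$ everywhere. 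I expect the main obstacle to be the bookkeeping in part 2 --- precisely, justifying that on each stratum every $A \in \mathcal{G}_t^{(N)}$ has the product form $\{\boldsymbol{\tau}_{(k)} \in C\} \cap A'$ with $A' \in \mathcal{F}_t$ (this requires a monotone-class argument building on the filtration identities in Section \ref{sec:coxmodel}) and then carefully tracking how the substitution $\boldsymbol{u}_{(k)} = \boldsymbol{\tau}_{(k)}$ interacts with pulling $\textbf{1}_C$ out of $\mathcal{E}_t$; once that is in place, parts 1 and 3 are short.
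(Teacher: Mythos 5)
Your treatment of part 1 is correct and coincides with the paper's: both read the claim off the consistency representation \eqref{eq:ConsistencyCondition}. For part 3, you correctly note it is a specialization of part 2 (applied to the strata, pathwise), which is also what the paper does.

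The gap is in part 2. You reduce each stratum to sets of the form $\{\boldsymbol{\tau}_{(k)} \in C\}\cap A'$ with $A'\in\mathcal{F}_t$ and propose to extend to general $A\in\mathcal{G}_t^{(N)}$ by a monotone-class argument. This does not close. The class $\mathcal{D}=\{A\in\mathcal{G}_t^{(N)} : \tilde{\mathcal{E}}_t^N(\textbf{1}_AY)=\textbf{1}_A\tilde{\mathcal{E}}_t^N(Y)\}$ is \emph{not} a Dynkin system: to pass from $A\subseteq B$ to $B\setminus A$ one would need $\tilde{\mathcal{E}}_t^N(\textbf{1}_BY-\textbf{1}_AY)=\tilde{\mathcal{E}}_t^N(\textbf{1}_BY)-\tilde{\mathcal{E}}_t^N(\textbf{1}_AY)$, which fails because $\tilde{\mathcal{E}}_t^N$ is only sublinear (it is built on $\mathcal{E}_t=\sup_{P}\mathbb{E}^P[\,\cdot\,]$). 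The same obstruction blocks monotone passage to the limit. In fact, even the base case is misstated: on the stratum $\{\tau_k\le t<\tau_{k+1}\}$ the trace of $\mathcal{G}_t^{(N)}$ is the trace of $\mathcal{F}_t\vee\sigma(\boldsymbol{\tau}_{(k)})$, whose generic elements are \emph{joint} sets $\{(\omega,\hat\omega):(\omega,\boldsymbol{\tau}_{(k)}(\omega,\hat\omega))\in D\}$, $D\in\mathcal{F}_t\otimes\mathcal{B}(\mathbb{R}_+^k)$, not rectangles $A'\cap\{\boldsymbol{\tau}_{(k)}\in C\}$.

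The paper avoids the Dynkin-system problem entirely by working with the decomposition function $\overline{\varphi}$ of $\textbf{1}_A$ (from Lemma \ref{lemma:DecompositionGeneraln}) \emph{inside} the linear operator $\mathbb{E}^{\hat P}$, before $\mathcal{E}_t$ is ever applied. After the substitution $\boldsymbol{u}_{(k)}=\boldsymbol{\tau}_{(k)}$, the composed indicator $\overline{\varphi}(\boldsymbol{u}_{(k)},\boldsymbol{u}_k^{\tilde\tau},\cdot)=\textbf{1}_{\tilde A^{k,N}}$ is measurable with respect to the enlarged filtration $\mathcal{F}_t\vee\overline{\mathcal{H}}_t^{k,(N)}$. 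The survival indicator $\textbf{1}_{\{\tilde\tau_{k+1}>t-u_k\}}$ (already present in \eqref{eq:defiSublinearOperatorMulti}) forces $u_{k,l}^{\tilde\tau}>t$ for all $l\ge k+1$, and on that event an \emph{iterated} application of Lemma 5.1.1 of \cite{bielecki_rutkowski_2004} (cf.\ \eqref{eq:ResultBielecki}) replaces $\textbf{1}_{\tilde A^{k,N}}$ by $\textbf{1}_{\tilde A^{k,k}}$ with $\tilde A^{k,k}\in\mathcal{F}_t$ --- a \emph{pure} $\mathcal{F}_t$-measurable, nonnegative factor that may be pulled out of $\mathcal{E}_t$. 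That filtration-reduction lemma is the essential ingredient that makes the factorisation work for arbitrary $A\in\mathcal{G}_t^{(N)}$; your outline needs to be replaced or augmented by it.
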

\begin{proof}
1. The first property follows directly by the representation in \eqref{eq:ConsistencyCondition}.

2. From now on, we fix a given time $t \ge 0$ and an event $A \in \mathcal{G}_t^{(N)}$. By \eqref{eq:defiSublinearOperatorMulti} we have

\begin{align}
	\tilde{\mathcal{E}}^N_t( \textbf{1}_A Y)&= \textbf{1}_{\lbrace \tau_1 >t \rbrace} \mathcal{E}_t\left( e^{\int_0^t \tilde{\lambda}_v^1 dv }\mathbb{E}^{\hat{P}}[\textbf{1}_{\lbrace t < \tau_1 \rbrace}\textbf{1}_A Y] \right)\label{eq:prop1Aout_1} \\
	&+ \sum_{k=1}^{N-1}\textbf{1}_{\lbrace \tau_k \leq t < \tau_{k+1} \rbrace}\mathcal{E}_t\left(e^{\int_0^{t-u_k} \tilde{\lambda}_v^{k+1} dv} \mathbb{E}^{\hat{P}} [\textbf{1}_{\lbrace \tilde{\tau}_{k+1} >t-u_k \rbrace}  {\varphi}(\boldsymbol{u}_{(k)},\boldsymbol{u}_{k}^{\tilde{\tau}}, \cdot) {\bar{\varphi}}(\boldsymbol{u}_{(k)},\boldsymbol{u}_{k}^{\tilde{\tau}}, \cdot)]\right) \bigg \vert_{\boldsymbol{u}_{(k)}=\boldsymbol{\tau}_{(k)}} \label{eq:prop1Aout_2} \\
	&+\textbf{1}_{\lbrace \tau_N \leq t \rbrace} \mathcal{E}_t\left({\varphi}(\boldsymbol{u}_{(N)},\cdot) {\bar{\varphi}}(\boldsymbol{u}_{(N)},\cdot)\right) \vert_{\boldsymbol{u}_{(N)}=\boldsymbol{\tau}_{(N)}}, \label{eq:prop1Aout}
\end{align}
where 
$$
\varphi:(\mathbb{R}_+^N  \times \Omega, \mathcal{B}(\mathbb{R}_+^N) \otimes \mathcal{F}_{\infty}^{{\mathcal{P}}} ) \to (\mathbb{R},  \mathcal{B}(\mathbb{R}))
$$ 
and
$$
\overline{\varphi}:(\mathbb{R}_+^N \times \Omega, \mathcal{B}(\mathbb{R}_+^N)  \otimes \mathcal{F}_{t} ) \to (\mathbb{R},  \mathcal{B}(\mathbb{R}))
$$
such that
$$
Y(\omega, \hat{\omega})=\varphi(\boldsymbol{\tau}_{(N)}(\omega, \hat{\omega}), \omega)
$$ 
and 
$$
\textbf{1}_A(\omega, \hat{\omega})=\overline{\varphi}(\boldsymbol{\tau}_{(N)}(\omega, \hat{\omega}), \omega).
$$
Fix now $\boldsymbol{u}_{(k)} \in \mathbb{R}_+^k$ and $u_{k,l}^{\tilde{\tau}}:=u_k+ \sum_{m=k+1}^l \tilde{\tau}_m$ for $l=k,...,N$ with $u_{k,k}^{\tilde{\tau}}:=u_k$. For every $l=k,...N$, we let $\overline{\mathbb{H}}^{k,l}=(\overline{\mathcal{H}}^{k,l}_t)_{t \geq 0}$ be the filtration generated by the random variable $u_{k,l}^{\tilde{\tau}}$. Note here that $\overline{\mathbb{H}}^{k,k}$ is the trivial filtration. Moreover, we introduce the filtration $\overline{\mathbb{H}}^{k,(l)}=(\overline{\mathcal{H}}^{k,(l)}_t)_{t \geq 0}$ as given by $\overline{\mathcal{H}}^{k,(l)}_t:= \bigvee_{j=k+1}^{l} \mathcal{H}^{k,j}_t$ for $t \geq 0$ and $l=k+1,...,N$. We now show that there exists $\tilde{A}^{k,N} \in \mathcal{F}_t \vee \overline{\mathcal{H}}^{k,(N)}_t$ such that
\begin{equation} \label{eq:RepresentationGeneralBarPhi}
	\overline{\varphi}(\boldsymbol{u}_{(k)},\boldsymbol{u}_k^{\tilde{\tau}}, \cdot)= \textbf{1}_{\tilde{A}^{k,N}}.
\end{equation}
Since $A \in \mathcal{G}^{(N)}_t$, there exists a measurable function $f$, constants $(s_n)_{n =1,...,N}$ such that $s_n \leq t$ and an $\mathcal{F}_t$-measurable random variable $\mu_t$ such that
\begin{equation*}
	\textbf{1}_A= f(\mu_t, \textbf{1}_{\lbrace \tau_1 \leq s_1\rbrace},..., \textbf{1}_{\lbrace \tau_N \leq s_N\rbrace}). 
\end{equation*}
Thus, as $\textbf{1}_A(\omega, \hat{\omega})=\overline{\varphi}(\boldsymbol{\tau}_{(N)}(\omega, \hat{\omega}), \omega)$ for any $(\omega, \hat{\omega}) \in \Omega \times \hat{\Omega}$, we have
\begin{align*}
	&\overline{\varphi}(\boldsymbol{\tau}_{(N)}(\omega, \hat{\omega}), \omega)
	\\&=f(\mu_t(\omega), \textbf{1}_{\lbrace\tau_1(\omega, \hat{\omega}) \leq s_1\rbrace},..., \textbf{1}_{\lbrace \tau_N(\omega, \hat{\omega}) \leq s_N\rbrace}) \\
	&=f\big(\mu_t(\omega), \textbf{1}_{\lbrace\tau_1(\omega, \hat{\omega}) \leq s_1\rbrace},...,\textbf{1}_{\lbrace\tau_k(\omega, \hat{\omega}) \leq s_k\rbrace},  \textbf{1}_{\lbrace \tau_k(\omega, \hat{\omega}) + \tilde{\tau}_{k+1}(\omega, \hat{\omega}) \leq s_{k+1}\rbrace},...,\textbf{1}_{\lbrace \tau_k(\omega, \hat{\omega})+ \sum_{j=k+1}^N\tilde{\tau}_j(\omega, \hat{\omega}) \leq s_N\rbrace}\big)
\end{align*}

for any $(\omega, \hat{\omega}) \in \Omega \times \hat{\Omega}$. Then it holds
\begin{align}
	\overline{\varphi}(\boldsymbol{u}_{(k)}, \boldsymbol{u}_{k}^{\tilde{\tau}}(\omega, \hat{\omega}), \omega)&=f\big(\mu_t(\omega), \textbf{1}_{\lbrace u_1 \leq s_1\rbrace},...,\textbf{1}_{\lbrace u_k(\omega, \hat{\omega}) \leq s_k\rbrace},  \textbf{1}_{\lbrace u_{k,k+1}^{\tilde{\tau}}(\omega, \hat{\omega}) \leq s_{k+1}\rbrace},...,\textbf{1}_{\lbrace u_{k,N}^{\tilde{\tau}}(\omega, \hat{\omega}) \leq s_N\rbrace}\big)\\
	&= \textbf{1}_{\tilde{A}^{k,N}} \nonumber
\end{align}
where the right-hand side of the first equality is measurable with respect to $\mathcal{F}_t \vee \overline{\mathcal{H}}^{k,(N)}_t$, so that $\tilde{A}^{k,N} \in \mathcal{F}_t \vee \overline{\mathcal{H}}^{k,(N)}_t$.

By Lemma 5.1.1 in \cite{bielecki_rutkowski_2004}, for any $\tilde{A}^{k,l} \in \mathcal{F}_t \vee \overline{\mathcal{H}}^{k,(l)}_t$ there exists $\tilde{A}^{k,l-1} \in \mathcal{F}_t \vee \overline{\mathcal{H}}^{k,(l-1)}_t$ such that 
\begin{align} \label{eq:ResultBielecki}
	 \tilde{A}^{k,l} \cap \lbrace u_{k,l}^{\tilde{\tau}} >t \rbrace= \tilde{A}^{k,l-1} \cap \lbrace u_{k,l}^{\tilde{\tau}} >t \rbrace,
\end{align}
where $l=k+1,...,N$. With the notation introduced above we have that $\tilde{A}^{k,k} \in \mathcal{F}_t$.  
Then we get 
\red{\small{
\allowdisplaybreaks{
\begin{align}
	& \textbf{1}_{\lbrace \tau_k \leq t < \tau_{k+1} \rbrace}\mathcal{E}_t\left(e^{\int_0^{t-u_k}\tilde{\lambda}_v^{k+1} dv} \mathbb{E}^{\hat{P}} \left[{\textbf{1}_{\lbrace \tilde{\tau}_{k+1} >t-u_{k} \rbrace}}  \varphi(\boldsymbol{u}_{(k)},\boldsymbol{u}_k^{\tilde{\tau}}, \cdot)\overline{\varphi}(\boldsymbol{u}_{(k)},\boldsymbol{u}_k^{\tilde{\tau}}, \cdot)\right]\right)\bigg \vert_{\boldsymbol{u}_{(k)}=\boldsymbol{\tau}_{(k)}} \nonumber \\
	&= \textbf{1}_{\lbrace \tau_k \leq t < \tau_{k+1} \rbrace}\mathcal{E}_t\left(e^{\int_0^{t-u_k}\tilde{\lambda}_v^{k+1} dv} \mathbb{E}^{\hat{P}} \left[\prod_{l=k+2}^N\textbf{1}_{\lbrace u_{k,l}^{\tilde{\tau}}>t \rbrace}{\textbf{1}_{\lbrace \tilde{\tau}_{k+1} >t-u_{k} \rbrace}}  \varphi(\boldsymbol{u}_{(k)},\boldsymbol{u}_k^{\tilde{\tau}}, \cdot)\textbf{1}_{ \tilde{A}^{k,N}}\right]\right)\bigg \vert_{\boldsymbol{u}_{(k)}=\boldsymbol{\tau}_{(k)}}  \label{eq:ConsistencyGeneral1}    \\
	&= \textbf{1}_{ \lbrace \tau_k \leq t < \tau_{k+1} \rbrace}\mathcal{E}_t\left(e^{\int_0^{t-u_k}\tilde{\lambda}_v^{k+1} dv} \mathbb{E}^{\hat{P}} \left[{\textbf{1}_{\lbrace \tilde{\tau}_{k+1} >t-u_{k} \rbrace}}  \varphi(\boldsymbol{u}_{(k)},\boldsymbol{u}_k^{\tilde{\tau}}, \cdot)\textbf{1}_{ \tilde{A}^{k,k+1}}\right]\right)\bigg \vert_{\boldsymbol{u}_{(k)}=\boldsymbol{\tau}_{(k)}} \nonumber  \\
	&= \textbf{1}_{ \lbrace \tau_k \leq t\rbrace} \left[\textbf{1}_{ \lbrace t < u_k +\tilde{\tau}_{k+1} \rbrace}\textbf{1}_{ \tilde{A}^{k,k+1}}\mathcal{E}_t\left(e^{\int_0^{t-u_k}\tilde{\lambda}_v^{k+1} dv}  \mathbb{E}^{\hat{P}} \left[{\textbf{1}_{\lbrace \tilde{\tau}_{k+1} >t-u_{k} \rbrace}}  \varphi(\boldsymbol{u}_{(k)},\boldsymbol{u}_k^{\tilde{\tau}}, \cdot)\right]\right)\right]\bigg \vert_{\boldsymbol{u}_{(k)}=\boldsymbol{\tau}_{(k)}} \nonumber  \\
	&= \textbf{1}_{ \lbrace \tau_k \leq t\rbrace} \Bigg[\textbf{1}_{ \lbrace t < u_k +\tilde{\tau}_{k+1} \rbrace}\notag \\ &\qquad \qquad \quad \prod_{l=k+2}^N\textbf{1}_{\lbrace u_{k,l}^{\tilde{\tau}}>t \rbrace}\textbf{1}_{ \tilde{A}^{k,k+1}}\mathcal{E}_t\left(e^{\int_0^{t-u_k}\tilde{\lambda}_v^{k+1} dv}  \mathbb{E}^{\hat{P}} \left[{\textbf{1}_{\lbrace \tilde{\tau}_{k+1} >t-u_{k} \rbrace}}  \varphi(\boldsymbol{u}_{(k)},\boldsymbol{u}_k^{\tilde{\tau}}, \cdot)\right]\right)\Bigg]\bigg \vert_{\boldsymbol{u}_{(k)}=\boldsymbol{\tau}_{(k)}} \nonumber  \\
	&= \textbf{1}_{ \lbrace \tau_k \leq t\rbrace} \left[\textbf{1}_{ \lbrace t < u_k +\tilde{\tau}_{k+1} \rbrace}\textbf{1}_{ \tilde{A}^{k,N}}\mathcal{E}_t\left(e^{\int_0^{t-u_k}\tilde{\lambda}_v^{k+1} dv}  \mathbb{E}^{\hat{P}} \left[{\textbf{1}_{\lbrace \tilde{\tau}_{k+1} >t-u_{k} \rbrace}}  \varphi(\boldsymbol{u}_{(k)},\boldsymbol{u}_k^{\tilde{\tau}}, \cdot)\right]\right)\right]\bigg \vert_{\boldsymbol{u}_{(k)}=\boldsymbol{\tau}_{(k)}} \label{eq:ConsistencyGeneral5}  \\
	&= \textbf{1}_{ \lbrace \tau_k \leq t\rbrace} \left[\textbf{1}_{ \lbrace t < u_k +\tilde{\tau}_{k+1} \rbrace}\overline{\varphi}(\boldsymbol{u}_{(k)},\boldsymbol{u}_k^{\tilde{\tau}}, \cdot)\mathcal{E}_t\left(e^{\int_0^{t-u_k}\tilde{\lambda}_v^{k+1} dv}  \mathbb{E}^{\hat{P}} \left[{\textbf{1}_{\lbrace \tilde{\tau}_{k+1} >t-u_{k} \rbrace}}  \varphi(\boldsymbol{u}_{(k)},\boldsymbol{u}_k^{\tilde{\tau}}, \cdot)\right]\right)\right]\bigg \vert_{\boldsymbol{u}_{(k)}=\boldsymbol{\tau}_{(k)}} \nonumber  \\
	&= \textbf{1}_{ \lbrace \tau_k \leq t < \tau_{k+1}\rbrace} \textbf{1}_A \mathcal{E}_t\left(e^{\int_0^{t-u_k}\tilde{\lambda}_v^{k+1} dv}  \mathbb{E}^{\hat{P}} \left[{\textbf{1}_{\lbrace \tilde{\tau}_{k+1} >t-u_{k} \rbrace}}  \varphi(\boldsymbol{u}_{(k)},\boldsymbol{u}_k^{\tilde{\tau}}, \cdot)\right]\right)\bigg \vert_{\boldsymbol{u}_{(k)}=\boldsymbol{\tau}_{(k)}}.  \label{eq:ConsistencyGeneral6}
\end{align}}
}}

\red{Here, \eqref{eq:ConsistencyGeneral1} is implied by \eqref{eq:RepresentationGeneralBarPhi}, whereas \eqref{eq:ConsistencyGeneral5} follows  by the recurrent use of \eqref{eq:ResultBielecki}}. \\
Note now that all the arguments used in the derivation of  \eqref{eq:ConsistencyGeneral6} also work for $k=0$ and $k=N+1$ with $\tau_0=0$ and $\tau_{N+1}=+\infty$. Hence we have
\begin{equation}
\tilde{\mathcal{E}}_t^{N}(\textbf{1}_A Y)= \textbf{1}_A\tilde{\mathcal{E}}_t^{N}(Y). \label{eq:UsingPathwiseEquality}
\end{equation}

The pathwise equalities in \eqref{eq:PathwiseEqualities} follow directly by \eqref{eq:UsingPathwiseEquality}. 
\end{proof}

\begin{remark}
	Let $Y$ be an upper semianalytic function on $\tilde{\Omega}$ and $k=2,...,N$. If $Y$ is $\mathcal{G}^{\mathcal{P},\red{(k-1)}}$-measurable as well as non-negative or $Y \in L^1(\tilde{\Omega})$ such that for all $t \geq 0$  also $\tilde{\mathcal{E}}^{k-1}_t(Y)$ in $L^1(\tilde{\Omega})$, then it holds
	\begin{equation} \label{eq:Composition}
		\tilde{\mathcal{E}}^{k}_t(\tilde{\mathcal{E}}^{k-1}_t(Y))=\tilde{\mathcal{E}}^{k-1}_t(Y) 
	\end{equation}
	for all $t \geq 0$. First note that the left-hand side in \eqref{eq:Composition} is well-defined due to the assumptions on $Y$ and as $\tilde{\mathcal{E}}^{k-1}_t(Y) $ is again upper semianalytic and $\mathcal{G}^{\mathcal{P}, \red{(k-1)}}$-measurable by Proposition \ref{prop:EvaluatedOperatorUpperSemianalytic}. Then equation \eqref{eq:Composition} follows by a generalization of Proposition \ref{prop:OperatorCoincides}. \\
	In general, it is not possible to consider $\tilde{\mathcal{E}}^{k-2}_t(\tilde{\mathcal{E}}^{k-1}_t(Y))$ because this expression is not well-defined, as $\tilde{\mathcal{E}}^{k-1}(Y)$ is not $\mathcal{G}^{ \mathcal{P},\red{(k-2)}}$-measurable.
\end{remark}

\section{Weak dynamic programming principle} \label{section:DynamicProgramming}
In this section we investigate dynamic programming for the operator $\tilde{\mathcal{E}}^{k}$ for $k=1,...,N$. More precisely, consider
a $\mathcal{G}^{\mathcal{P},\red{(k)}}$-measurable, upper semianalytic and non-negative random variable $Y$ on $\tilde{\Omega}$. Under Assumption \ref{assumptionnutzNew} we prove the dynamic programming principle 
\begin{equation} \label{eq:WeakDynamicProgramming}
		\tilde{\mathcal{E}}^{k}_s(\tilde{\mathcal{E}}_t^{k}(Y)) \geq \tilde{\mathcal{E}}_s^{k}(Y) \quad \tilde{P} \text{-a.s. for all } \tilde{P} \in \tilde{\mathcal{P}},
	\end{equation}
 for $0 \leq s \leq t,$ $k=1,...,N$.
Note that this is a generalization of Theorem 2.22 in \cite{bz_2019}, where \eqref{eq:WeakDynamicProgramming} is showed for $k=1$. \\
Fix $t \ge 0, \red{k=1,...,N}$ and consider an upper semianalytic function $\tilde{X}$ on $\tilde{\Omega}$ such that $\tilde{X} \in L^{1,\red{(k)}}(\tilde{\Omega})$ or $\tilde{X}$ is $\mathcal{G}^{\mathcal{P},\red{(k)}}$-measurable and non-negative. As in \eqref{eq:NotationI}, with a slight notational abuse we introduce the notation 
\begin{equation} \label{eq:NotationII}
\mathcal{E}_t(\tilde X):=\mathcal{E}_t(\tilde{X}(\cdot, \hat{\omega}))(\omega), \quad (\omega, \hat{\omega}) \in \tilde{\Omega}, 
\end{equation}
to denote that we fix $\hat{\omega}$ and compute the operator $\mathcal{E}_t$ on the function $\tilde{X}(\omega, \hat{\omega}), \omega \in \Omega$.
We start with a lemma that is extensively used in the following analysis.
\begin{lemma}\label{lem:francescayinglin}
For any $t \ge 0, \red{k=1,...,N}$ and any $\mathcal{G}^{\mathcal{P}, \red{(k)}}$-measurable, upper semianalytic and non-negative random variable $Z$ on $\tilde{\Omega}$, it holds 
$$
\mathbb{E}^{\hat{P}}[\mathcal{E}_t(Z)]\ge\mathcal{E}_t (\mathbb{E}^{\hat{P}}[Z]). 
$$
\end{lemma}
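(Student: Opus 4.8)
The plan is to reduce the statement to the Fubini-type identity \eqref{eq:ConditionalExpectationFrancescaYinglin}, i.e.\ Lemma 2.12 in \cite{bz_2019}, combined with the consistency representation \eqref{repesssup} of $\mathcal{E}_t$ as an essential supremum over the priors in $\mathcal{P}(t;P)$. First I would fix $t\ge 0$, $k\in\{1,\dots,N\}$, and a $\mathcal{G}^{\mathcal{P},(k)}$-measurable, upper semianalytic, non-negative $Z$ on $\tilde\Omega$; by Proposition \ref{prop:EvaluatedOperatorUpperSemianalytic}-type arguments (or directly by the construction of $\mathcal{E}_t$), $\mathcal{E}_t(Z)$ is upper semianalytic and $\mathcal{F}_t^*$-measurable as a function of $\omega$ for each fixed $\hat\omega$, so that $\mathbb{E}^{\hat P}[\mathcal{E}_t(Z)]$ is well-defined and, by point 6 of Lemma \ref{lemma:propertiesUpperSemianalytic}, upper semianalytic on $\Omega$; hence $\mathcal{E}_t(\mathbb{E}^{\hat P}[Z])$ on the right-hand side also makes sense. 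The inequality is pathwise, so it suffices to prove it after fixing an arbitrary $\omega$, or equivalently to prove the $P$-a.s.\ version for every $P\in\mathcal{P}$ and then invoke that $\mathcal{P}$-polar sets are negligible.

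The key step is the following: for each $P\in\mathcal{P}$ and each competitor $P'\in\mathcal{P}(t;P)$, apply \eqref{eq:ConditionalExpectationFrancescaYinglin} to the product measure $\tilde P':=P'\otimes\hat P\in\tilde{\mathcal{P}}$ and the random variable $Z$, giving
\begin{equation*}
\mathbb{E}^{\tilde P'}[Z\mid \mathcal{F}_t]=\mathbb{E}^{P'}\big[\mathbb{E}^{\hat P}[Z]\ \big|\ \mathcal{F}_t\big]\quad \tilde P'\text{-a.s.}
\end{equation*}
Since $\mathbb{E}^{\tilde P'}[Z\mid\mathcal{F}_t]=\mathbb{E}^{P'}[\mathbb{E}^{P'}[Z\mid \mathcal{G}_t^{(k)}]\mid\mathcal{F}_t]$ by the tower property and $\mathcal{F}_t\subseteq\mathcal{G}_t^{(k)}$, and since by \eqref{repesssup} applied on the enlarged space (Proposition \ref{prop:WellDefinedConsistency}) we have $\mathbb{E}^{P'}[Z\mid\mathcal{G}_t^{(k)}]\le \tilde{\mathcal{E}}_t^{k}(Z)$, one also needs the pointwise bound $\mathbb{E}^{P'}[Z\mid \mathcal{G}_t^{(k)}]\le \mathcal{E}_t(Z)$ on the event $\{\tau_1>t\}$ and its analogues; more directly, for the \emph{$\mathcal{F}_{\infty}^{\mathcal{P}}$-measurable slices} $Z(\cdot,\hat\omega)$ we have $\mathbb{E}^{P'}[Z(\cdot,\hat\omega)\mid\mathcal{F}_t]\le \mathcal{E}_t(Z(\cdot,\hat\omega))=\mathcal{E}_t(Z)(\cdot,\hat\omega)$ by \eqref{repesssup}, for $P'$-a.e.\ path. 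Integrating this slicewise bound against $\hat P$ and using the Fubini identity in the form $\mathbb{E}^{P'}[\mathbb{E}^{\hat P}[Z]\mid\mathcal{F}_t]=\int_{\hat\Omega}\mathbb{E}^{P'}[Z(\cdot,\hat\omega)\mid\mathcal{F}_t]\hat P(d\hat\omega)$ (Tonelli, all terms non-negative) yields
\begin{equation*}
\mathbb{E}^{P'}\big[\mathbb{E}^{\hat P}[Z]\ \big|\ \mathcal{F}_t\big]\le \int_{\hat\Omega}\mathcal{E}_t(Z)(\cdot,\hat\omega)\,\hat P(d\hat\omega)=\mathbb{E}^{\hat P}[\mathcal{E}_t(Z)]\quad P'\text{-a.s.}
\end{equation*}
Taking the essential supremum over $P'\in\mathcal{P}(t;P)$ on the left and invoking \eqref{repesssup} once more gives $\mathcal{E}_t(\mathbb{E}^{\hat P}[Z])\le \mathbb{E}^{\hat P}[\mathcal{E}_t(Z)]$ $P$-a.s.; since $P\in\mathcal{P}$ was arbitrary this is the claimed $\mathcal{P}$-q.s.\ inequality, and upgrading to the pointwise statement follows as in \cite{nh_2013,bz_2019}.

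The main obstacle is the measurability and interchange-of-integration bookkeeping: one must check that $\hat\omega\mapsto\mathcal{E}_t(Z)(\cdot,\hat\omega)$ is jointly measurable enough for Tonelli to apply (this is where upper semianalyticity of $Z$ on the product space and point 6 of Lemma \ref{lemma:propertiesUpperSemianalytic} are used, exactly as in the proof of Proposition \ref{prop:WellDefinedConsistency}), and that the essential-supremum representation \eqref{repesssup} may be applied path-by-path in $\hat\omega$ with a version measurable in $(\omega,\hat\omega)$. A clean way to handle this is to first prove the inequality for $Z$ of the product form $Z(\omega,\hat\omega)=g(\omega)h(\hat\omega)$ with $g$ upper semianalytic non-negative and $h$ bounded measurable — where it reduces immediately to $\mathcal{E}_t(g)\mathbb{E}^{\hat P}[h]=\mathbb{E}^{\hat P}[\mathcal{E}_t(g)h]$ — and then extend to general $\mathcal{G}^{\mathcal{P},(k)}$-measurable non-negative $Z$ by a monotone class / monotone convergence argument, using point 2 of Lemma \ref{lemma:propertiesUpperSemianalytic} to preserve upper semianalyticity under the limit.
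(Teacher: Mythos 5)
Your core argument --- applying the representation \eqref{repesssup} slicewise in $\hat\omega$ to obtain $\mathbb{E}^{P'}[Z(\cdot,\hat\omega)\mid\mathcal{F}_t]\le\mathcal{E}_t(Z)(\cdot,\hat\omega)$, integrating against $\hat{P}$ by Tonelli, and then taking the essential supremum over $P'\in\mathcal{P}(t;P)$ and invoking \eqref{repesssup} once more --- is precisely the paper's (terse) proof, spelled out. Two comments on the surrounding material. First, the detour in your middle paragraph through the tower property with respect to $\mathcal{G}_t^{(k)}$ and the bound $\mathbb{E}^{\tilde P'}[Z\mid\mathcal{G}_t^{(k)}]\le\tilde{\mathcal{E}}_t^{k}(Z)$ is a dead end that you yourself abandon (``more directly, \ldots'') and should simply be deleted: the claimed pointwise bound $\mathbb{E}^{\tilde P'}[Z\mid\mathcal{G}_t^{(k)}]\le\mathcal{E}_t(Z)$ is neither needed nor true in general, and it obscures the actual key step. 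Second, the monotone-class reduction you propose at the end does not go through. Letting $\mathcal{H}$ be the class of non-negative $Z$ satisfying $\mathcal{E}_t(\mathbb{E}^{\hat P}[Z])\le\mathbb{E}^{\hat P}[\mathcal{E}_t(Z)]$, for $Z_1,Z_2\in\mathcal{H}$ subadditivity of $\mathcal{E}_t$ only gives $\mathbb{E}^{\hat P}[\mathcal{E}_t(Z_1+Z_2)]\le\mathbb{E}^{\hat P}[\mathcal{E}_t(Z_1)]+\mathbb{E}^{\hat P}[\mathcal{E}_t(Z_2)]$, which is the wrong direction for chaining; $\mathcal{H}$ is therefore not stable under addition, so the functional monotone class theorem does not apply, and the equality you verify on pure products $g(\omega)h(\hat\omega)$ does not propagate to linear combinations because $\mathcal{E}_t$ is only sublinear. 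The one-sided inequality genuinely requires the direct Fubini--Tonelli argument, for which the measurability of $(\omega,\hat\omega)\mapsto\mathcal{E}_t(Z)(\omega,\hat\omega)$ already established in Proposition \ref{prop:EvaluatedOperatorUpperSemianalytic} suffices, without any algebraic reduction.
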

\begin{proof}
We recall that by \eqref{eq:NotationI} and \eqref{eq:NotationII} we have that 
\begin{align} 
\mathbb{E}^{\hat{P}}[\mathcal{E}_t(Z)]=\int_{\hat \Omega}\mathcal{E}_t(Z(\cdot, \hat \omega))(\omega)d\hat P (\hat\omega), \quad \omega \in \Omega.
\end{align}The proof follows by \eqref{repesssup} and the conditional Fubini-Tonelli Theorem.
\end{proof} 
For brevity reasons, we prove the result for the special case $k=2$. In this case, the left-hand side in \eqref{eq:WeakDynamicProgramming}, which is well-defined by Proposition \ref{prop:EvaluatedOperatorUpperSemianalytic}, can be rewritten as
\begin{align}
	\tilde{\mathcal{E}}^2_s(\tilde{\mathcal{E}}_t^2(Y))&=\textbf{1}_{\lbrace s < \tau_1 \rbrace} \mathcal{E}_s \left(e^{\int_0^s \tilde{\lambda}_v^1 dv } \mathbb{E}^{\hat{P}}\left[\textbf{1}_{\lbrace s < \tau_1 \rbrace} \tilde{\mathcal{E}}_t^{2}(Y)\right]\right) \nonumber \\
	&+\textbf{1}_{\lbrace \tau_1 \leq s < \tau_2 \rbrace} \mathcal{E}_s\left(e^{\int_0^{s-u_1} \tilde{\lambda}_v^2 dv } \mathbb{E}^{\hat{P}}\left[ \textbf{1}_{\lbrace \tilde{\tau}_2 > s-u_1\rbrace} \bar{\varphi}_t(u_1,u_1+\tilde{\tau}_2, \cdot)\right]\right) \bigg\vert_{u_1=\tau_1} \nonumber \\
	&+ \textbf{1}_{\lbrace \tau_2 \leq s\rbrace} \mathcal{E}_s\left(\bar{\varphi}_t(u_1,u_2, \cdot)\right)\vert_{(u_1,u_2)=(\tau_1, \tau_2)} \label{eq:WeakDynamicRewritten}
\end{align}
with $\bar{\varphi}_t:(\mathbb{R}_+^2 \times \Omega, \mathcal{B}(\mathbb{R}_+^2) \otimes \mathcal{F}_{\infty}^{\mathcal{P}} ) \to (\mathbb{R}, \mathcal{B}(\mathbb{R}))$
such that
\begin{equation*}
	\tilde{\mathcal{E}}_t^2(Y)(\omega, \hat{\omega})=\bar{\varphi}_t(\tau_1(\omega, \hat{\omega}), \tau_2(\omega, \hat{\omega}), \omega).
\end{equation*}
 By the definition of $\tilde{\mathcal{E}}_t^2$ in \eqref{eq:defiSublinearOperatorMulti} it follows that for any 
 $(u_1,u_2, \omega) \in \mathbb{R}_+^2 \times \Omega$ it holds
\begin{align}
	\bar{\varphi}_t(u_1,u_2, \omega)&=\textbf{1}_{\lbrace t < u_1 \rbrace} \mathcal{E}_t \left(e^{\int_0^t \tilde{\lambda}_v^1 dv } \mathbb{E}^{\hat{P}}[\textbf{1}_{\lbrace t < \tau_1 \rbrace} Y]\right)(\omega)\notag \\
	&+ \textbf{1}_{\lbrace u_1 \leq t < u_2 \rbrace} \mathcal{E}_t\left(e^{\int_0^{t-u_1} \tilde{\lambda}_v^2 dv } \mathbb{E}^{\hat{P}}[ \textbf{1}_{\lbrace \tilde{\tau}_2 > t-u_1\rbrace} {\varphi}(u_1,u_1+\tilde{\tau}_2, \cdot)]\right)(\omega) \notag \\
	&+\textbf{1}_{\lbrace u_2 \leq t\rbrace} \mathcal{E}_t({\varphi}(u_1,u_2, \cdot))(\omega),\label{eq:defbarvarphi}
\end{align}
where $\varphi$ is defined in \eqref{eq:DefiVarphi1} and \eqref{eq:DefiVarphi2}. 

With this expression in mind, we now proceed to analyze the three terms on the right-hand side of equation \eqref{eq:WeakDynamicRewritten}. We start from the third one.

\begin{prop}\label{prop:dynamicfirst}
For any $t \geq s \ge 0$, it holds
\begin{equation*}
	\textbf{1}_{\lbrace \tau_2 \leq s\rbrace} \mathcal{E}_s(\bar{\varphi}_t(u_1,u_2, \cdot))\vert_{(u_1,u_2)=(\tau_1, \tau_2)} = \textbf{1}_{\lbrace \tau_2 \leq s\rbrace} \mathcal{E}_s({\varphi}(u_1,u_2, \cdot))\vert_{(u_1,u_2)=(\tau_1, \tau_2)},
\end{equation*}
where $\varphi$ is defined in \eqref{eq:DefiVarphi1} and \eqref{eq:DefiVarphi2} and $\bar{\varphi}_t$ given in \eqref{eq:defbarvarphi}.
\end{prop}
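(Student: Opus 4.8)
The plan is to use that on the event $\lbrace \tau_2 \le s \rbrace$, since $s \le t$, both default times have already occurred by time $t$, so that the defining formula \eqref{eq:defbarvarphi} for $\bar{\varphi}_t$ collapses to a single term. Concretely, I would first observe that for every fixed pair $(u_1,u_2)$ with $u_1 \le u_2 \le t$ the indicators $\textbf{1}_{\lbrace t < u_1 \rbrace}$ and $\textbf{1}_{\lbrace u_1 \le t < u_2 \rbrace}$ both vanish while $\textbf{1}_{\lbrace u_2 \le t \rbrace}=1$, so that \eqref{eq:defbarvarphi} reads
\[
\bar{\varphi}_t(u_1,u_2,\omega) = \mathcal{E}_t\bigl(\varphi(u_1,u_2,\cdot)\bigr)(\omega), \qquad \omega \in \Omega,
\]
as an identity of functions on $\Omega$, where $\varphi$ is the function of \eqref{eq:DefiVarphi1}--\eqref{eq:DefiVarphi2}.

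Next, for such a fixed pair $(u_1,u_2)$ I would apply $\mathcal{E}_s$ to both sides and invoke the tower property \eqref{towerNutz} of Proposition \ref{SublinearNutz} for the constant (hence $\mathbb{F}$-) stopping times $s \le t$ and the function $\varphi(u_1,u_2,\cdot)$, which is upper semianalytic on $\Omega$ by the same argument as in the proof of Proposition \ref{prop:WellDefinedConsistency}. This gives
\[
\mathcal{E}_s\bigl(\bar{\varphi}_t(u_1,u_2,\cdot)\bigr)(\omega) = \mathcal{E}_s\bigl(\mathcal{E}_t(\varphi(u_1,u_2,\cdot))\bigr)(\omega) = \mathcal{E}_s\bigl(\varphi(u_1,u_2,\cdot)\bigr)(\omega)
\]
for every $\omega \in \Omega$ and every $(u_1,u_2)$ with $u_1 \le u_2 \le t$.

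Finally, I would substitute $(u_1,u_2)=(\tau_1(\omega,\hat\omega),\tau_2(\omega,\hat\omega))$ and multiply by $\textbf{1}_{\lbrace \tau_2 \le s \rbrace}$: on that event one has $\tau_1 < \tau_2 \le s \le t$, so the substituted pair falls into the region $u_1 \le u_2 \le t$ where the displayed identity is valid, and the claim follows. The only point requiring a line of justification is that the substitution is legitimate, i.e. that $(u_1,u_2,\omega) \mapsto \mathcal{E}_s(\bar\varphi_t(u_1,u_2,\cdot))(\omega)$ and $(u_1,u_2,\omega) \mapsto \mathcal{E}_s(\varphi(u_1,u_2,\cdot))(\omega)$ are jointly measurable, which is furnished by Proposition \ref{prop:EvaluatedOperatorUpperSemianalytic} (and its analogue for $\mathcal{E}_t(\varphi)$). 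I do not expect any genuine obstacle here: this is the easiest of the three terms, the only care needed being to keep the order of operations straight, namely first apply $\mathcal{E}_s$, then substitute $(\tau_1,\tau_2)$, then restrict to $\lbrace \tau_2 \le s \rbrace$.
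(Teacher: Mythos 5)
Your proof is correct and follows essentially the same route as the paper's: both arguments observe that on $\{\tau_2\le s\}\subseteq\{u_2\le t\}$ the defining sum for $\bar\varphi_t$ reduces to its third term $\mathcal{E}_t(\varphi(u_1,u_2,\cdot))$, and then conclude by the tower property \eqref{towerNutz} of $\mathcal{E}$. The paper organizes the steps slightly differently (first expanding $\bar\varphi_t$ inside $\mathcal{E}_s$, then using disjointness of the three indicator sets and the restriction to $\{\tau_2\le s\}$ to discard the first two terms), but the content is the same.
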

\begin{proof}
The proof uses similar arguments as in Theorem 2.22 in \cite{bz_2019}. In particular, we have
\begin{small}
\begin{align*}
	&\textbf{1}_{\lbrace \tau_2 \leq s\rbrace} \mathcal{E}_s\left(\bar{\varphi}_t(u_1,u_2, \cdot)\right)\vert_{(u_1,u_2)=(\tau_1,\tau_2)} \\
	&=\textbf{1}_{\lbrace \tau_2 \leq s\rbrace} \mathcal{E}_s\bigg( \textbf{1}_{\lbrace t < u_1 \rbrace} \mathcal{E}_t \left(e^{\int_0^t \tilde{\lambda}_v^1 dv } \mathbb{E}^{\hat{P}}[\textbf{1}_{\lbrace t < \tau_1 \rbrace} Y]\right) \\
	&\ \qquad \qquad \qquad + \textbf{1}_{\lbrace u_1 \leq t < u_2 \rbrace} \mathcal{E}_t\left(e^{\int_0^{t-u_1} \tilde{\lambda}_v^2 dv } \mathbb{E}^{\hat{P}}[ \textbf{1}_{\lbrace \tilde{\tau}_2 > t-u_1\rbrace} {\varphi}(u_1,u_1+\tilde{\tau}_2, \cdot)]\right) \notag \\
	&\ \qquad \qquad \qquad +\textbf{1}_{\lbrace u_2 \leq t\rbrace} \mathcal{E}_t\left({\varphi}(u_1,u_2, \cdot)\right) \bigg)\bigg\vert_{(u_1,u_2)=(\tau_1,\tau_2)} \\
	&=\textbf{1}_{\lbrace \tau_2 \leq s\rbrace} \bigg[  \textbf{1}_{\lbrace t < u_1 \rbrace} \mathcal{E}_s\left( \mathcal{E}_t \left(e^{\int_0^t \tilde{\lambda}_v^1 dv } \mathbb{E}^{\hat{P}}[\textbf{1}_{\lbrace t < \tau_1 \rbrace} Y]\right)\right)+ \\
	&\ \qquad \qquad \quad + \textbf{1}_{\lbrace u_1 \leq t < u_2 \rbrace} \mathcal{E}_s\left(\mathcal{E}_t\left(e^{\int_0^{t-u_1} \tilde{\lambda}_v^2 dv } \mathbb{E}^{\hat{P}}[ \textbf{1}_{\lbrace \tilde{\tau}_2 > t-u_1\rbrace} {\varphi}(u_1,u_1+\tilde{\tau}_2, \cdot)]\right)\right) \notag \\
	&\  \qquad \qquad \quad+\textbf{1}_{\lbrace u_2 \leq t\rbrace} \mathcal{E}_s\left(\mathcal{E}_t({\varphi}(u_1,u_2, \cdot)) \right)\bigg]\bigg\vert_{(u_1,u_2)=(\tau_1,\tau_2)} \\
	&=\textbf{1}_{\lbrace \tau_2 \leq s\rbrace} \textbf{1}_{\lbrace u_2 \leq t\rbrace} \mathcal{E}_s({\varphi}(u_1,u_2, \cdot) )\vert_{(u_1,u_2)=(\tau_1,\tau_2)} \\
	&=\textbf{1}_{\lbrace \tau_2 \leq s\rbrace} \mathcal{E}_s({\varphi}(u_1,u_2, \cdot) )\big\vert_{(u_1,u_2)=(\tau_1,\tau_2)}.
\end{align*}
\end{small}
Note that the second equality holds since we deal with disjoint sets, whereas the third equality follows from the tower property of $\mathcal{E}_t$ stated in \eqref{towerNutz} and since 
$$
\textbf{1}_{\lbrace \tau_2 \leq s\rbrace} \textbf{1}_{\lbrace t < \tau_1 \rbrace}=0, \ 
\textbf{1}_{\lbrace \tau_2 \leq s\rbrace} \textbf{1}_{\lbrace \tau_1 \leq t < \tau_2 \rbrace}=0
$$ 
for $0 \leq s \leq t$.
\end{proof}

We now consider the first term of the right-hand side of \eqref{eq:WeakDynamicRewritten}. We start with three lemmas. 
\begin{lemma}\label{lem:firstlemmasecondpiece}
Let $Y$ be a $\mathcal{G}^{\mathcal{P}, \red{(2)}}$-measurable, upper seminanalytic and non-negative random variable on $\tilde{\Omega}$. For any $t \ge 0$, it holds
\begin{equation} \notag
	\textbf{1}_{\lbrace \tau_2 \leq t \rbrace} \mathcal{E}_t \left( \varphi(u_1,u_2, \cdot )\right)\vert_{(u_1,u_2)=(\tau_1 , \tau_2 )}=\textbf{1}_{\lbrace \tau_2 \leq t \rbrace} \mathcal{E}_t(Y)	\end{equation}
where $\varphi$ is given in \eqref{eq:DefiVarphi1} and \eqref{eq:DefiVarphi2}. \end{lemma}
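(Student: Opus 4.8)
The plan is to exploit the defining property of $\varphi$ together with the pathwise structure of the event $\{\tau_2 \le t\}$. Recall that $\varphi$ is the measurable function from Lemma~\ref{lemma:DecompositionGeneraln} (in the version of Remark~\ref{remark:MeasurabilityII}, valid for $\mathcal{G}^{\mathcal{P},(2)}$-measurable $Y$), so that $Y(\omega,\hat\omega) = \varphi(\tau_1(\omega,\hat\omega),\tau_2(\omega,\hat\omega),\omega)$ for all $(\omega,\hat\omega)\in\tilde\Omega$. The key point is that on the event $\{\tau_2 \le t\}$ the random times $\tau_1,\tau_2$ are already ``revealed'', in the sense that $\mathbf{1}_{\{\tau_2 \le t\}}$ is $\mathcal{G}_t^{(2)}$-measurable and, more importantly, the operator $\mathcal{E}_t$ acting in the $\omega$-variable sees $u_1 = \tau_1$, $u_2 = \tau_2$ as frozen constants.

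First I would fix $t \ge 0$ and a point $(\omega,\hat\omega)$ in the event $\{\tau_2 \le t\}$, and evaluate the left-hand side there. By the definition of the frozen evaluation $\mathcal{E}_t(\varphi(u_1,u_2,\cdot))\big\vert_{(u_1,u_2)=(\tau_1,\tau_2)}$, this equals $\mathcal{E}_t\big(\varphi(\tau_1(\omega,\hat\omega),\tau_2(\omega,\hat\omega),\cdot)\big)(\omega)$, i.e. we apply the Nutz--van Handel operator to the function $\bar\omega \mapsto \varphi(\tau_1(\omega,\hat\omega),\tau_2(\omega,\hat\omega),\bar\omega)$ on $\Omega$. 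On the other hand $\mathcal{E}_t(Y)$ at the same point is $\mathcal{E}_t\big(Y(\cdot,\hat\omega)\big)(\omega) = \mathcal{E}_t\big(\varphi(\tau_1(\cdot,\hat\omega),\tau_2(\cdot,\hat\omega),\cdot)\big)(\omega)$ where now $\tau_1,\tau_2$ are themselves functions of the path-variable over which $\mathcal{E}_t$ integrates. So the claim reduces to showing that, on $\{\tau_2\le t\}$, freezing $\tau_1,\tau_2$ at their values before applying $\mathcal{E}_t$ gives the same answer as leaving them as path-dependent random variables.

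The mechanism that makes this work is the locality / $\mathcal{F}_t^*$-measurability built into $\mathcal{E}_t$ via the concatenation $\omega\otimes_t\bar\omega$: by \eqref{concatenation} the concatenated path agrees with $\omega$ on $[0,t)$, hence for any $\mathbb{F}$-stopping time $\sigma \le t$ one has $\sigma(\omega\otimes_t\bar\omega) = \sigma(\omega)$, and the same holds for $\tilde\tau_1$ on $\{\tilde\tau_1 \le t\}$ by \eqref{eq:RemarkFiltrations} (the identity $\mathcal{F}_t\vee\sigma(E_1)=\mathcal{F}_t\vee\sigma(\tau_1)$ on $\{\tau_1\le t\}$, here read through the $\hat\omega$-section). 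I would argue: on $\{\tau_2\le t\}$ we have $\tau_1\le\tau_2\le t$, and both $\tau_1$ and $\tau_2 = \tau_1 + \tilde\tau_2$ depend on $\omega$ only through its restriction to $[0,t]$ (together with $E_1,E_2$, which are frozen once $\hat\omega$ is fixed); therefore in $\mathcal{E}_t(Y(\cdot,\hat\omega))(\omega) = \sup_{P\in\mathcal{P}}\mathbb{E}^P[(Y(\cdot,\hat\omega))^{t,\omega}]$ the quantities $\tau_i(\omega\otimes_t\bar\omega)$ equal $\tau_i(\omega)$ for $P$-a.e.\ $\bar\omega$ (using that $\tau_2$ avoids $\mathbb{G}^{(1)}$-stopping times, Proposition~\ref{prop:properties}(3), resp.\ its $\tilde{\mathcal P}$-q.s.\ version, to handle the boundary $\{\tau_2 = t\}$ as null). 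Substituting these equalities inside the expectation and then taking the supremum yields exactly $\mathcal{E}_t(\varphi(u_1,u_2,\cdot))(\omega)\big\vert_{(u_1,u_2)=(\tau_1(\omega,\hat\omega),\tau_2(\omega,\hat\omega))}$, which is the right-hand side. Multiplying through by $\mathbf{1}_{\{\tau_2\le t\}}$ gives the assertion.

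The main obstacle I anticipate is making the ``freezing'' rigorous at the level of the sublinear operator rather than a single prior: one must check that the map $(u_1,u_2,\bar\omega)\mapsto\varphi(u_1,u_2,\bar\omega)$ composed appropriately is measurable/upper semianalytic so that the frozen evaluation is well defined (this is exactly the content of the arguments in the proof of Proposition~\ref{prop:WellDefinedConsistency}, which I would cite), and that the exceptional $P$-null sets where $\tau_i(\omega\otimes_t\bar\omega)\ne\tau_i(\omega)$ can be discarded uniformly — for this I would invoke Proposition~\ref{prop:properties}(3) (avoidance of stopping times) to kill the event $\{\tau_2 = t\}$ and the fact that $\{\tau_2 \le t\}$, $\{\tau_1\le t <\tau_2\}$, $\{t<\tau_1\}$ are disjoint, mirroring the structure already used in the proof of Proposition~\ref{prop:dynamicfirst}. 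This is the ``$k\ge 2$, some but not all defaults'' subtlety flagged in the introduction, but here in the favourable case where \emph{all} ($=2$) defaults have occurred, so in fact the single-default techniques of \cite{bz_2019} adapt directly.
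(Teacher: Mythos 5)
Your proposal is correct and follows essentially the same route as the paper's proof: on $\{\tau_2\le t\}$ the random times $\tau_1,\tau_2$ are determined by the path up to time $t$ (together with the frozen $\hat\omega$), so $\tau_i(\omega\otimes_t\omega',\hat\omega)=\tau_i(\omega,\hat\omega)$ and one substitutes inside $\mathcal{E}_t$ exactly as you describe. One small simplification you could make: since $\omega'\in D_0(\mathbb{R}_+,\mathbb{R})$ starts from zero, $(\omega\otimes_t\omega')_t=\omega_t$, so the concatenated path agrees with $\omega$ on the \emph{closed} interval $[0,t]$; hence the freezing identity holds for \emph{every} $\omega'$, not merely $P$-a.e., and the appeal to the avoidance property of Proposition~\ref{prop:properties}(3) to dispose of $\{\tau_2=t\}$ is unnecessary — the paper's proof accordingly works pathwise without null-set arguments.
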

\begin{proof}
On the event $\lbrace \tau_2 \leq t \rbrace$ we have
\begin{align*}
	&\tau_{1}(\omega \otimes_t \omega', \hat{\omega})=\tau_1(\omega, \hat{\omega}) \quad \text{ for all } \omega' \in \Omega \\
	&\tau_{2}(\omega \otimes_t \omega', \hat{\omega})=\tau_2(\omega, \hat{\omega}) \quad \text{ for all } \omega' \in \Omega,
\end{align*}
as $\lbrace \tau_2 \leq t \rbrace \subseteq \lbrace \tau_1 \leq t \rbrace$. Therefore, on the event $\lbrace \tau_2 \leq t \rbrace$ we get by \eqref{definitionOperator} and \eqref{eq:defiSublinearOperatorMulti} that
\begin{align*}
	\mathcal{E}_t(Y) =\mathcal{E}_t(Y (\cdot, \hat{\omega}))(\omega)&=\sup_{P \in \mathcal{P}} \int_{\Omega} Y(\omega \otimes_t \omega', \hat{\omega}) P(d\omega')\\
	&=\sup_{P \in \mathcal{P}} \int_{\Omega} \varphi(\tau_1(\omega \otimes_t \omega',\hat{\omega}),\tau_2(\omega \otimes_t \omega',\hat{\omega}),\omega \otimes_t \omega' ) P(d\omega') \\
	&=\sup_{P \in \mathcal{P}} \int_{\Omega} \varphi(\tau_1(\omega ,\hat{\omega}),\tau_2(\omega ,\hat{\omega}),\omega \otimes_t \omega') P(d\omega') \\
	&=\sup_{P \in \mathcal{P}} \int_{\Omega} \varphi(u_1,u_2, \omega \otimes_t \omega') P(d\omega') \big \vert_{(u_1,u_2)=(\tau_1, \tau_2)}\\
	&=\mathcal{E}_t(\varphi(u_1,u_2, \cdot))(\omega)\big \vert_{(u_1,u_2)=(\tau_1(\omega, \hat{\omega}), \tau_2(\omega, \hat{\omega}))},
\end{align*}
for all $\omega \in \Omega$. \end{proof}

\begin{lemma}\label{lem:secondlemmasecondpiece}
For every non-negative measurable function 
$$
\Psi: (\mathbb{R}_+^2 \times \Omega, \mathcal{B}(\mathbb{R}^2) \otimes \mathcal{F}^*_{\infty}) \to (\mathbb{R}, \mathcal{B}(\mathbb{R}))
$$
 we have that
\begin{equation}\notag
	\mathbb{E}^{\hat{P}}[\Psi(\tau_1,\tilde{\tau}_2,\cdot)]=\mathbb{E}^{\hat{P}}\left[ \left(\mathbb{E}^{\hat{P}}[\Psi(u_1,\tilde{\tau}_2,\cdot)]\right)\big \vert_{u_1=\tau_1} \right] 
\end{equation}
and
\begin{equation}
	\mathbb{E}^{\hat{P}}[\Psi(\tau_1,\tilde{\tau}_2,\cdot)]=\mathbb{E}^{\hat{P}}\left[ \left(\mathbb{E}^{\hat{P}}[\Psi(\tau_1,\tilde{u}_2,\cdot)]\right)\big \vert_{\tilde{u}_2=\tilde{\tau}_2} \right]. \label{eq:WeakDynamicProgrammingThird4_a}
\end{equation}
\end{lemma}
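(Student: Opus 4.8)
The plan is to fix $\omega \in \Omega$ once and for all and to reduce both identities to an application of the Fubini--Tonelli theorem to the pair $(\tau_1(\omega, \cdot), \tilde{\tau}_2(\omega, \cdot))$ under $\hat{P}$, where $\mathbb{E}^{\hat{P}}$ is understood as in \eqref{eq:NotationI}. The key structural observation is that, since $\tilde{\lambda}^1$ and $\tilde{\lambda}^2$ are $\mathbb{F}$-adapted, by \eqref{eq:DefiTauFixedProbability} and \eqref{eq:DefiTildeTauFixedProbability} the random variable $\tau_1 = \tilde{\tau}_1$ is $\mathcal{F}_{\infty} \otimes \sigma(E_1)$-measurable and $\tilde{\tau}_2$ is $\mathcal{F}_{\infty} \otimes \sigma(E_2)$-measurable. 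Hence, for fixed $\omega$, the map $\hat{\omega} \mapsto \tau_1(\omega, \hat{\omega})$ is a Borel function of $E_1$ alone and $\hat{\omega} \mapsto \tilde{\tau}_2(\omega, \hat{\omega})$ is a Borel function of $E_2$ alone. Since $E_1$ and $E_2$ are i.i.d., in particular independent, under $\hat{P}$ by Assumption \ref{asum:ExistenceUniformlyRV}, it follows that $\tau_1(\omega, \cdot)$ and $\tilde{\tau}_2(\omega, \cdot)$ are independent under $\hat{P}$, so that the law of the pair $(\tau_1(\omega, \cdot), \tilde{\tau}_2(\omega, \cdot))$ under $\hat{P}$ is the product measure $\nu_1^{\omega} \otimes \nu_2^{\omega}$ of its two marginals $\nu_1^{\omega}, \nu_2^{\omega}$ on $\mathbb{R}_+$.

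Since $\Psi$ is $\mathcal{B}(\mathbb{R}^2) \otimes \mathcal{F}_{\infty}^*$-measurable, the section $(u_1, u_2) \mapsto \Psi(u_1, u_2, \omega)$ is Borel-measurable, and it is non-negative by hypothesis. Thus, by the image measure formula and Tonelli's theorem,
\[
	\mathbb{E}^{\hat{P}}[\Psi(\tau_1, \tilde{\tau}_2, \cdot)](\omega) = \int_{\mathbb{R}_+^2} \Psi(u_1, u_2, \omega)\, (\nu_1^{\omega} \otimes \nu_2^{\omega})(d(u_1, u_2)) = \int_{\mathbb{R}_+} \left( \int_{\mathbb{R}_+} \Psi(u_1, u_2, \omega)\, \nu_2^{\omega}(du_2) \right) \nu_1^{\omega}(du_1).
\]
For fixed $u_1$ the inner integral equals $\mathbb{E}^{\hat{P}}[\Psi(u_1, \tilde{\tau}_2, \cdot)](\omega)$, it is Borel-measurable in $u_1$, and integrating it against $\nu_1^{\omega}$ amounts to evaluating it at $u_1 = \tau_1$ and applying $\mathbb{E}^{\hat{P}}$; this yields the first claimed identity. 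Interchanging the order of integration in the iterated integral by Tonelli, the inner integral $\int_{\mathbb{R}_+} \Psi(u_1, u_2, \omega)\, \nu_1^{\omega}(du_1)$ equals $\mathbb{E}^{\hat{P}}[\Psi(\tau_1, \tilde{u}_2, \cdot)](\omega)\big|_{\tilde{u}_2 = u_2}$, and integrating this against $\nu_2^{\omega}$ gives \eqref{eq:WeakDynamicProgrammingThird4_a}.

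The only delicate point is the structural claim that, for each fixed $\omega$, $\tau_1(\omega, \cdot)$ and $\tilde{\tau}_2(\omega, \cdot)$ are $\hat{P}$-independent; once this is established the rest is a direct application of Tonelli's theorem, with no integrability concerns thanks to the non-negativity of $\Psi$. The remaining steps — measurability of the sections of $\Psi$ and the identification of the iterated integrals with the nested expressions $\mathbb{E}^{\hat{P}}[\cdot]$ appearing in the statement — are routine.
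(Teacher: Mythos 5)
Your proof is correct, but it follows a genuinely different route from the paper's. The paper proves the two identities by first verifying them on the generating class of rectangle indicators $\Psi=\textbf{1}_{(0,s_1]}\textbf{1}_{(0,s_2]}\textbf{1}_{A^*}$, exploiting the independence of $E_1$ and $E_2$ under $\hat P$ to factor the relevant $\hat P$-expectations, and then invoking a monotone class argument to pass to general non-negative measurable $\Psi$. You instead fix $\omega$ once and for all, observe that $\tau_1(\omega,\cdot)$ and $\tilde\tau_2(\omega,\cdot)$ are Borel functions of $E_1$ and $E_2$ respectively (because $\tilde\lambda^1,\tilde\lambda^2$ are $\mathbb F$-adapted, hence depend on $\omega$ alone), conclude that the joint law of $(\tau_1(\omega,\cdot),\tilde\tau_2(\omega,\cdot))$ under $\hat P$ is the product measure $\nu_1^\omega\otimes\nu_2^\omega$, and then derive both identities in one stroke from the image measure formula and Tonelli's theorem — the monotone class step is absorbed into Tonelli. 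The two arguments hinge on precisely the same structural fact, namely the $\hat P$-independence of $E_1$ and $E_2$; your version is more economical and makes the "change of order of integration" nature of the lemma transparent, while the paper's version is more elementary in the sense of staying at the level of explicit indicator computations. Your measurability remarks at the end (sections of $\Psi$ Borel, identification of iterated integrals with nested $\mathbb E^{\hat P}$-expressions) are indeed routine and in order, so there is no gap.
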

\begin{proof}
 Fix first $s_1,s_2 \geq 0$ and $A^* \in \mathcal{F}^*_{\infty}$. Then since $E_1$ and $E_2$ are independent under $\hat{P}$, it holds
\begin{equation*}
\mathbb{E}^{\hat{P}}[\textbf{1}_{\lbrace \tau_1 \leq s_1 \rbrace} \textbf{1}_{\lbrace \tilde{\tau}_2 \leq s_2 \rbrace} \textbf{1}_{A^*}]=\textbf{1}_{A^*}  \mathbb{E}^{\hat{P}}[\textbf{1}_{\lbrace \tau_1 \leq s_1 \rbrace} \textbf{1}_{\lbrace \tilde{\tau}_2 \leq s_2 \rbrace}]=\textbf{1}_{A^*} \mathbb{E}^{\hat{P}}[\textbf{1}_{\lbrace \tau_1 \leq s_1 \rbrace} ] \mathbb{E}^{\hat{P}}[ \textbf{1}_{\lbrace \tilde{\tau}_2 \leq s_2 \rbrace}].
\end{equation*}
 On the other hand, we have
\begin{align*}
	\mathbb{E}^{\hat{P}}\left[ \left(\mathbb{E}^{\hat{P}}[\textbf{1}_{\lbrace u_1 \leq s_1 \rbrace} \textbf{1}_{\lbrace \tilde{\tau}_2 \leq s_2 \rbrace} \textbf{1}_{A^*} ]\right)\big \vert_{u_1=\tau_1} \right]&= \textbf{1}_{A^*}  \mathbb{E}^{\hat{P}}\left[\textbf{1}_{\lbrace u_1 \leq \tau_1 \rbrace} \mathbb{E}^{\hat{P}}[ \textbf{1}_{\lbrace \tilde{\tau}_2 \leq s_2 \rbrace} ] \right] \\
	&=\textbf{1}_{A^*}  \mathbb{E}^{\hat{P}}[\textbf{1}_{\lbrace \tau_1 \leq s_1 \rbrace} ] \mathbb{E}^{\hat{P}}[ \textbf{1}_{\lbrace \tilde{\tau}_2 \leq s_2 \rbrace}]
\end{align*}
and by the same arguments it follows
\begin{align*}
	\mathbb{E}^{\hat{P}}\left[ \left(\mathbb{E}^{\hat{P}}[\textbf{1}_{\lbrace \tau_1 \leq s_1 \rbrace} \textbf{1}_{\lbrace \tilde{u}_2 \leq s_2 \rbrace} \textbf{1}_{A^*} ]\right)\big \vert_{\tilde{u}_2=\tilde{\tau}_2} \right]=\textbf{1}_{A^*}  \mathbb{E}^{\hat{P}}[\textbf{1}_{\lbrace \tau_1 \leq s_1 \rbrace} ] \mathbb{E}^{\hat{P}}[ \textbf{1}_{\lbrace \tilde{\tau}_2 \leq s_2 \rbrace}].
\end{align*}

Then the result follows by a monotone class argument.
\end{proof}

\begin{lemma}\label{lem:thirdlemmasecondpiece}
Let $Y$ be a $\mathcal{G}^{\mathcal{P}, \red{(2)}}$-measurable, upper seminanalytic and non-negative random variable on $\tilde{\Omega}$ and $\tilde{\varphi}$ be the unique non-negative, measurable function $\tilde{\varphi}: (\mathbb{R}_+^2 \times \Omega, \mathcal{B}(\mathbb{R}_+^2) \otimes \mathcal{F}_{\infty}^{\mathcal{P}}) \to (\mathbb{R}, \mathcal{B}(\mathbb{R}))$ such that 
\begin{equation}
			Y(\omega, \hat{\omega})=\tilde{\varphi}(\tau_1(\omega,\hat{\omega}),\tilde{\tau}_2(\omega,\hat{\omega}),\omega), \quad (\omega,\hat{\omega}) \in \Omega \times \hat{\Omega}. \label{eq:DecompositionDifferent}		
\end{equation}
Then for each $0 \leq s \leq t$ it holds 
\begin{equation}
	\mathbb{E}^{\hat{P}}\left[\mathcal{E}_t\left(\mathbb{E}^{\hat{P}}[\textbf{1}_{\lbrace s< u_1 \leq t \rbrace} \tilde{\varphi}(u_1, \tilde{\tau}_2,\cdot)] \right) \bigg \vert_{u_1 = \tau_1} \right] \geq \mathcal{E}_t\left(\mathbb{E}^{\hat{P}}\left[\mathbb{E}^{\hat{P}}[\textbf{1}_{\lbrace s < u_1 \leq t \rbrace} \tilde{\varphi}(u_1, \tilde{\tau}_2,\cdot)] \big \vert_{u_1 = \tau_1}\right] \right) .\label{eq:WeakDynamicProgrammingThird9} 
\end{equation}
\end{lemma}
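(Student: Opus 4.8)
The plan is to reduce inequality \eqref{eq:WeakDynamicProgrammingThird9} to an application of Lemma \ref{lem:francescayinglin}. The key observation is that both sides have the same structure: on the left we have $\mathbb{E}^{\hat{P}}$ applied to $\mathcal{E}_t$ of something, while on the right we have $\mathcal{E}_t$ applied to $\mathbb{E}^{\hat{P}}$ of (essentially) the same something; Lemma \ref{lem:francescayinglin} tells us precisely that $\mathbb{E}^{\hat{P}}[\mathcal{E}_t(Z)] \ge \mathcal{E}_t(\mathbb{E}^{\hat{P}}[Z])$ for any $\mathcal{G}^{\mathcal{P},(k)}$-measurable, upper semianalytic and non-negative $Z$. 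So the whole issue is to recognize the inner expression as $\mathcal{E}_t$ evaluated on a genuine random variable on $\tilde{\Omega}$ to which the lemma applies, after dealing with the $\vert_{u_1=\tau_1}$ substitution.

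First I would set, for fixed $u_1$, $G_{u_1} := \mathbb{E}^{\hat{P}}[\textbf{1}_{\lbrace s < u_1 \le t\rbrace}\tilde{\varphi}(u_1,\tilde{\tau}_2,\cdot)]$, an $\mathcal{F}_{\infty}^{\mathcal{P}}$-measurable function on $\Omega$, parametrized measurably in $u_1$; here I would check via Proposition \ref{prop:WellDefinedConsistency}-type arguments (point 6 of Lemma \ref{lemma:propertiesUpperSemianalytic}) that $(u_1,\omega)\mapsto G_{u_1}(\omega)$ is upper semianalytic and non-negative. Then I would define the random variable $Z$ on $\tilde{\Omega}$ by $Z(\omega,\hat\omega) := G_{\tau_1(\omega,\hat\omega)}(\omega) = \mathbb{E}^{\hat{P}}[\textbf{1}_{\lbrace s < u_1 \le t\rbrace}\tilde{\varphi}(u_1,\tilde{\tau}_2,\cdot)]\big\vert_{u_1=\tau_1}$. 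The point is that $Z$ depends on $\hat\omega$ only through $\tau_1$, hence $Z$ is $\mathcal{G}^{\mathcal{P},(1)}$-measurable, so in particular $\mathcal{G}^{\mathcal{P},(2)}$-measurable, upper semianalytic and non-negative, so Lemma \ref{lem:francescayinglin} applies to it and yields
\begin{equation*}
	\mathbb{E}^{\hat{P}}[\mathcal{E}_t(Z)] \ge \mathcal{E}_t(\mathbb{E}^{\hat{P}}[Z]).
\end{equation*}
It then remains to identify the two sides of this with the two sides of \eqref{eq:WeakDynamicProgrammingThird9}. The right-hand side is immediate: $\mathbb{E}^{\hat{P}}[Z] = \mathbb{E}^{\hat{P}}\big[\mathbb{E}^{\hat{P}}[\textbf{1}_{\lbrace s < u_1 \le t\rbrace}\tilde{\varphi}(u_1,\tilde{\tau}_2,\cdot)]\big\vert_{u_1=\tau_1}\big]$, which is exactly the argument of $\mathcal{E}_t$ on the right of \eqref{eq:WeakDynamicProgrammingThird9}. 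For the left-hand side, since on the event one may exchange the outer $\mathbb{E}^{\hat P}$ with the substitution $\vert_{u_1=\tau_1}$ — this is precisely the content of Lemma \ref{lem:secondlemmasecondpiece}, where one treats $u_1$ as a frozen parameter and integrates the remaining $\hat\omega$-dependence (through $\tilde\tau_2$ inside, and through $\tau_1$ outside) — I would obtain
\begin{equation*}
	\mathbb{E}^{\hat{P}}[\mathcal{E}_t(Z)] = \mathbb{E}^{\hat{P}}\left[\mathcal{E}_t\left(\mathbb{E}^{\hat{P}}[\textbf{1}_{\lbrace s < u_1 \le t\rbrace}\tilde{\varphi}(u_1,\tilde{\tau}_2,\cdot)]\right)\bigg\vert_{u_1=\tau_1}\right],
\end{equation*}
which is the left-hand side of \eqref{eq:WeakDynamicProgrammingThird9}. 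Combining the last three displays gives the claim.

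The main obstacle I anticipate is the bookkeeping around the substitution $\vert_{u_1=\tau_1}$ and the interplay with $\mathcal{E}_t$: one must be careful that $\mathcal{E}_t$ is a pathwise (not $\tilde{P}$-a.s.) operator in the sense of \eqref{definitionOperator}, so that $\mathcal{E}_t(G_{u_1})\big\vert_{u_1=\tau_1}$ genuinely equals $\mathcal{E}_t\big(G_{\tau_1(\cdot,\hat\omega)}\big)$ when we recall $G_{u_1}$ does not depend on the part of $\omega$ after time $t$ only in the way needed — actually $G_{u_1}$ is $\mathcal{F}_\infty^{\mathcal{P}}$-measurable, not $\mathcal{F}_t^{\mathcal{P}}$-measurable, so $\mathcal{E}_t$ acts nontrivially on it, and one must verify measurability in the parameter $u_1$ survives the operator $\mathcal{E}_t$ (this uses that $\mathcal{E}_t$ preserves upper semianalyticity and joint measurability, cf.\ Proposition \ref{SublinearNutz} and Proposition \ref{prop:EvaluatedOperatorUpperSemianalytic}). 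Once the measurable-parametrization and the freeze-unfreeze steps are set up cleanly, the inequality itself is just one invocation of Lemma \ref{lem:francescayinglin}, so the conceptual content is light; the work is entirely in making the identification of both sides rigorous.
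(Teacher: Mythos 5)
Your proposal is not the paper's route, and it is a genuinely cleaner one if the bookkeeping is fixed. The paper proves Lemma \ref{lem:thirdlemmasecondpiece} from scratch: it applies \eqref{repesssup} to replace $\mathcal{E}_t$ with an $\esssup$ over $P'\in\mathcal{P}(t;P)$, drops the sup to get a single fixed prior, proves the Fubini-type identity \eqref{eq:IntermediateStepMonotoneClass} by a monotone class argument, and then reassembles the $\esssup$ on the other side. You instead package the whole inequality as one invocation of Lemma \ref{lem:francescayinglin} applied to $Z(\omega,\hat\omega):=G_{\tau_1(\omega,\hat\omega)}(\omega)$, $G_{u_1}:=\mathbb{E}^{\hat{P}}[\textbf{1}_{\{s<u_1\le t\}}\tilde\varphi(u_1,\tilde\tau_2,\cdot)]$. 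That $Z$ is $\mathcal{G}^{\mathcal{P},(1)}\subseteq\mathcal{G}^{\mathcal{P},(2)}$-measurable, upper semianalytic and non-negative is correct, and the identification of $\mathcal{E}_t(\mathbb{E}^{\hat{P}}[Z])$ with the right-hand side of \eqref{eq:WeakDynamicProgrammingThird9} is indeed immediate from the definition of $Z$. Conceptually your approach absorbs the paper's ad hoc monotone class step into the single lemma \ref{lem:francescayinglin} (whose proof contains the same Fubini-Tonelli core), so it is more modular.

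However, your attribution of the remaining step to Lemma \ref{lem:secondlemmasecondpiece} is wrong, and this is the one place where the argument, as written, has a gap. What you actually need is the pathwise identity
\begin{equation*}
\mathcal{E}_t\bigl(Z(\cdot,\hat\omega)\bigr)(\omega)\;=\;\mathcal{E}_t\bigl(G_{u_1}\bigr)(\omega)\big\vert_{u_1=\tau_1(\omega,\hat\omega)},
\end{equation*}
i.e.\ the commutation of the \emph{pathwise} operator $\mathcal{E}_t$ with the substitution $u_1\leftarrow\tau_1$. Lemma \ref{lem:secondlemmasecondpiece} says nothing about this: it is a tower-type identity for the probability $\hat{P}$ alone, namely $\mathbb{E}^{\hat{P}}[\Psi(\tau_1,\tilde\tau_2,\cdot)]=\mathbb{E}^{\hat{P}}[\mathbb{E}^{\hat{P}}[\Psi(u_1,\tilde\tau_2,\cdot)]\vert_{u_1=\tau_1}]$, which you do not even need on the left-hand side (on the right-hand side the identification is tautological, not an invocation of that lemma). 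The correct justification is a concatenation argument in the spirit of Lemma \ref{lem:firstlemmasecondpiece}: the indicator $\textbf{1}_{\{s<u_1\le t\}}$ inside $G_{u_1}$, once evaluated at $u_1=\tau_1(\omega,\hat\omega)$, kills everything off the event $\{s<\tau_1\le t\}$, and on that event $\tau_1(\omega\otimes_t\bar\omega,\hat\omega)=\tau_1(\omega,\hat\omega)$ for every $\bar\omega$, because $\tilde\lambda^1$ is $\mathbb{F}$-adapted, so $\tau_1$ is already determined by $\omega\vert_{[0,t]}$ whenever $\tau_1\le t$. Hence $Z(\omega\otimes_t\bar\omega,\hat\omega)=G_{\tau_1(\omega,\hat\omega)}(\omega\otimes_t\bar\omega)$ for all $\bar\omega$, and the two sides of the display agree by \eqref{definitionOperator}. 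Replace your reference to Lemma \ref{lem:secondlemmasecondpiece} by this pathwise argument (or simply by citing the mechanism of Lemma \ref{lem:firstlemmasecondpiece}), and your proof is complete.
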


\begin{proof}
First note that the existence of such a function $\tilde{\varphi}$ follows by Lemma \ref{lemma:DecompositionGeneraln}.  
Let $0 \leq s \leq t $. Then for each $P \in \mathcal{P}$ we have 
\begin{align}
	&\mathbb{E}^{\hat{P}}\left[\mathcal{E}_t\left(\mathbb{E}^{\hat{P}}[\textbf{1}_{\lbrace s< u_1 \leq t \rbrace} \tilde{\varphi}(u_1, \tilde{\tau}_2,\cdot)] \right) \bigg \vert_{u_1 = \tau_1} \right] \notag\\
	&=\mathbb{E}^{\hat{P}}\left[\esssup_{P' \in \mathcal{P}(t;P)}\mathbb{E}^{P'}\left[\mathbb{E}^{\hat{P}}[\textbf{1}_{\lbrace s< u_1 \leq t \rbrace} \tilde{\varphi}(u_1, \tilde{\tau}_2,\cdot)] \bigg\vert \mathcal{F}_t\right] \bigg \vert_{u_1 = \tau_1} \right] \label{eq:PreparationIntermediateStep1} \\
	&\geq \mathbb{E}^{\hat{P}}\left[\mathbb{E}^{P}\left[\mathbb{E}^{\hat{P}}[\textbf{1}_{\lbrace s< u_1 \leq t \rbrace} \tilde{\varphi}(u_1, \tilde{\tau}_2,\cdot)] \bigg\vert \mathcal{F}_t\right] \bigg \vert_{u_1 = \tau_1} \right]. \label{eq:PreparationIntermediateStep2} 
\end{align}
Here, we apply representation \eqref{repesssup} in \eqref{eq:PreparationIntermediateStep1}.
We now prove that for each non-negative and measurable function $\tilde{\varphi}: (\mathbb{R}_+^2 \times \Omega, \mathcal{B}(\mathbb{R}_+^2) \otimes \mathcal{F}_{\infty}^{\mathcal{P}}) \to (\mathbb{R}, \mathcal{B}(\mathbb{R}))$ it holds
\begin{align}
	&\mathbb{E}^{\hat{P}}\left[\mathbb{E}^{P}\left[\mathbb{E}^{\hat{P}}[\textbf{1}_{\lbrace s< u_1 \leq t \rbrace} \tilde{\varphi}(u_1, \tilde{\tau}_2,\cdot)] \bigg\vert \mathcal{F}_t\right] \bigg \vert_{u_1 = \tau_1} \right] \notag \\
	&{=} \mathbb{E}^P\left[\mathbb{E}^{\hat{P}}\left[\mathbb{E}^{\hat{P}}[\textbf{1}_{\lbrace s< u_1 \leq t \rbrace} \tilde{\varphi}(u_1, \tilde{\tau}_2,\cdot)]  \big \vert_{u_1 = \tau_1}\right] \bigg\vert \mathcal{F}_t  \right] \quad P \text{-a.s. for all } P \in \mathcal{P}. \label{eq:IntermediateStepMonotoneClass} 
\end{align}
 
Fix first $s_1, s_2 \geq 0$ and $A \in \mathcal{F}_{\infty}^{{\mathcal{P}}}$. For every $P \in \mathcal{P}$ we have that
\begin{align}
	&\mathbb{E}^{\hat{P}}\left[\mathbb{E}^{P}\left[\mathbb{E}^{\hat{P}}[\textbf{1}_{\lbrace s< u_1 \leq t \rbrace} \textbf{1}_{\lbrace  u_1 \leq s_1 \rbrace} \textbf{1}_{\lbrace  \tilde{\tau}_2 \leq s_2 \rbrace} \textbf{1}_A ]\bigg\vert \mathcal{F}_t\right] \bigg \vert_{u_1 = \tau_1} \right] \notag \\
	&=\mathbb{E}^{\hat{P}}\left[\mathbb{E}^{P}\left[\mathbb{E}^{\hat{P}}[ \textbf{1}_{\lbrace  \tilde{\tau}_2 \leq s_2 \rbrace} \textbf{1}_A ]\big\vert \mathcal{F}_t\right] \textbf{1}_{\lbrace s< \tau_1 \leq t \wedge s_1 \rbrace} \right]\notag \\
	&=\mathbb{E}^{\hat{P}}\left[\textbf{1}_{\lbrace s< \tau_1 \leq t \wedge s_1 \rbrace} \right] \mathbb{E}^{P}\left[\mathbb{E}^{\hat{P}}[ \textbf{1}_{\lbrace  \tilde{\tau}_2 \leq s_2 \rbrace} \textbf{1}_A ]\big\vert \mathcal{F}_t\right] \label{eq:IntermediateMonotoneClass1} \\
	&= \mathbb{E}^{P}\left[\mathbb{E}^{\hat{P}}\left[\textbf{1}_{\lbrace s< \tau_1 \leq t \wedge s_1 \rbrace} \right]\mathbb{E}^{\hat{P}}[ \textbf{1}_{\lbrace  \tilde{\tau}_2 \leq s_2 \rbrace} \textbf{1}_A ]\big\vert \mathcal{F}_t\right] \label{eq:IntermediateMonotoneClass2} \\
	&= \mathbb{E}^{P}\left[\mathbb{E}^{\hat{P}}\left[\textbf{1}_{\lbrace s< \tau_1 \leq t \wedge s_1 \rbrace}\mathbb{E}^{\hat{P}}[ \textbf{1}_{\lbrace  \tilde{\tau}_2 \leq s_2 \rbrace} \textbf{1}_A ] \right]\bigg\vert \mathcal{F}_t\right], \notag
\end{align}
where we use in \eqref{eq:IntermediateMonotoneClass1} that $ \mathbb{E}^{P}\left[\mathbb{E}^{\hat{P}}[ \textbf{1}_{\lbrace  \tilde{\tau}_2 \leq s_2 \rbrace} \textbf{1}_A ]\big\vert \mathcal{F}_t\right]$ is independent of $\hat{\omega}$. Moreover, \eqref{eq:IntermediateMonotoneClass2} follows as $\mathbb{E}^{\hat{P}}\left[\textbf{1}_{\lbrace s< \tau_1 \leq t \wedge s_1 \rbrace} \right]=-e^{-\int_0^{t \wedge s_1} \tilde{\lambda}_v^1 dv} + e^{-\int_0^s \tilde{\lambda}_v^1 dv}$ is $\mathcal{F}_t$-measurable. Then we get \eqref{eq:IntermediateStepMonotoneClass} by using a monotone class argument.
Note now that \eqref{eq:IntermediateStepMonotoneClass} implies 
\begin{align}
	&\mathbb{E}^{\hat{P}}\left[\mathbb{E}^{P}\left[\mathbb{E}^{\hat{P}}[\textbf{1}_{\lbrace s< u_1 \leq t \rbrace} \tilde{\varphi}(u_1, \tilde{\tau}_2,\cdot)] \bigg\vert \mathcal{F}_t\right] \bigg \vert_{u_1 = \tau_1} \right] \notag \\
	&\geq \esssup_{P' \in \mathcal{P}(t;P)}\mathbb{E}^{P'}\left[\mathbb{E}^{\hat{P}}\left[\mathbb{E}^{\hat{P}}[\textbf{1}_{\lbrace s< u_1 \leq t \rbrace} \tilde{\varphi}(u_1, \tilde{\tau}_2,\cdot)]  \big \vert_{u_1 = \tau_1}\right] \bigg\vert \mathcal{F}_t  \right] \notag\\  
	&= \mathcal{E}_t\left(\mathbb{E}^{\hat{P}}\left[\mathbb{E}^{\hat{P}}[\textbf{1}_{\lbrace s< u_1 \leq t \rbrace} \tilde{\varphi}(u_1, \tilde{\tau}_2,\cdot)]  \big \vert_{u_1 = \tau_1}\right] \right)  \label{eq:IntermediateStepMonotoneClass3} 
\end{align}
by using \eqref{repesssup} in \eqref{eq:IntermediateStepMonotoneClass3}. Putting together the inequalities \eqref{eq:PreparationIntermediateStep2} and \eqref{eq:IntermediateStepMonotoneClass3} yields \eqref{eq:WeakDynamicProgrammingThird9}. 

\end{proof}

The next lemma provides a fundamental step in the computations on the second term of the right-hand side of \eqref{eq:WeakDynamicRewritten}. 
\begin{lemma}\label{lem:secondtermsecondpiece}
Let $Y$ be a $\mathcal{G}^{\mathcal{P}, \red{(2)}}$-measurable, upper seminanlytic and non-negative random variable on $\tilde{\Omega}$. For any $0 \le s \le t$ it holds
\small{\begin{align}
	 \mathbb{E}^{\hat{P}}\left[  \textbf{1}_{\lbrace s<\tau_1 \leq t < \tau_2 \rbrace} \mathcal{E}_t\left(e^{\int_0^{t-u_1} \tilde{\lambda}_v^2dv} \mathbb{E}^{\hat{P}} [\textbf{1}_{\lbrace \tilde{\tau}_2 >t-u_1  \rbrace}\varphi(u_1,u_1+\tilde{\tau}_2, \cdot )]\right)\bigg\vert_{u_1=\tau_1} \right] \ge  \mathcal{E}_t\left( \mathbb{E}^{\hat{P}}\left[ \textbf{1}_{\lbrace s < \tau_1 \leq t < \tau_2 \rbrace} Y\right]\right), \label{eq:WeakDynamicProgrammingThird3}
\end{align}}
where $\varphi$ is given in \eqref{eq:DefiVarphi1} and \eqref{eq:DefiVarphi2}.
\end{lemma}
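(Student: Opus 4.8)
The plan is to transform the left-hand side of \eqref{eq:WeakDynamicProgrammingThird3} until it coincides with the left-hand side of the estimate \eqref{eq:WeakDynamicProgrammingThird9} in Lemma \ref{lem:thirdlemmasecondpiece}, applied to a suitable non-negative measurable function, and then to identify the right-hand side of \eqref{eq:WeakDynamicProgrammingThird9} as $\mathcal{E}_t\left(\mathbb{E}^{\hat{P}}\left[\textbf{1}_{\lbrace s<\tau_1\leq t<\tau_2\rbrace}Y\right]\right)$. Throughout I would use the decomposition $Y(\omega,\hat{\omega})=\varphi(\tau_1(\omega,\hat{\omega}),\tau_2(\omega,\hat{\omega}),\omega)=\varphi(\tau_1(\omega,\hat{\omega}),\tau_1(\omega,\hat{\omega})+\tilde{\tau}_2(\omega,\hat{\omega}),\omega)$ with $\varphi$ non-negative as in \eqref{eq:DefiVarphi1}--\eqref{eq:DefiVarphi2}, together with the elementary identity $\lbrace s<\tau_1\leq t<\tau_2\rbrace=\lbrace s<\tau_1\leq t\rbrace\cap\lbrace \tilde{\tau}_2>t-\tau_1\rbrace$, which holds since $\tau_2=\tau_1+\tilde{\tau}_2$.

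First I would simplify the left-hand side of \eqref{eq:WeakDynamicProgrammingThird3}. Splitting the indicator as above, for a frozen value $u_1\leq t$ the factor $e^{\int_0^{t-u_1}\tilde{\lambda}_v^2\,dv}$ is $\mathcal{F}_{t-u_1}\subseteq\mathcal{F}_t$-measurable and strictly positive, so by the concatenation representation \eqref{definitionOperator} of $\mathcal{E}_t$ it can be taken out of $\mathcal{E}_t$; after the substitution $u_1=\tau_1$ this produces a factor $e^{\int_0^{t-\tau_1}\tilde{\lambda}_v^2\,dv}$ in front of $\mathcal{E}_t\big(\mathbb{E}^{\hat{P}}[\textbf{1}_{\lbrace \tilde{\tau}_2>t-u_1\rbrace}\varphi(u_1,u_1+\tilde{\tau}_2,\cdot)]\big)\big\vert_{u_1=\tau_1}$, which depends on $\hat{\omega}$ only through $\tau_1$. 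Since $E_1$ and $E_2$ are i.i.d.\ under $\hat{P}$ and independent of $\mathcal{F}_{\infty}$, the random variables $\tau_1$ and $\tilde{\tau}_2$ are $\hat{P}$-independent given $\mathcal{F}_{\infty}$; hence, by the first identity in Lemma \ref{lem:secondlemmasecondpiece}, the remaining factor $\textbf{1}_{\lbrace \tilde{\tau}_2>t-\tau_1\rbrace}$ integrates out to exactly $e^{-\int_0^{t-\tau_1}\tilde{\lambda}_v^2\,dv}$, cancelling the exponential. Re-inserting the factor $\textbf{1}_{\lbrace s<u_1\leq t\rbrace}$ (constant in $u_1$) inside $\mathcal{E}_t$, this leaves the left-hand side of \eqref{eq:WeakDynamicProgrammingThird3} equal to
\begin{equation*}
\mathbb{E}^{\hat{P}}\left[\mathcal{E}_t\left(\textbf{1}_{\lbrace s<u_1\leq t\rbrace}\,\mathbb{E}^{\hat{P}}\left[\textbf{1}_{\lbrace \tilde{\tau}_2>t-u_1\rbrace}\varphi(u_1,u_1+\tilde{\tau}_2,\cdot)\right]\right)\bigg\vert_{u_1=\tau_1}\right].
\end{equation*}

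Next I would apply the inequality \eqref{eq:WeakDynamicProgrammingThird9} of Lemma \ref{lem:thirdlemmasecondpiece} to the non-negative measurable function $\tilde{\varphi}(u_1,\tilde{u}_2,\omega):=\textbf{1}_{\lbrace \tilde{u}_2>t-u_1\rbrace}\varphi(u_1,u_1+\tilde{u}_2,\omega)$; this is legitimate because the argument in the proof of Lemma \ref{lem:thirdlemmasecondpiece} is carried out for an arbitrary non-negative measurable function. It bounds the displayed quantity from below by $\mathcal{E}_t\big(\mathbb{E}^{\hat{P}}\big[\mathbb{E}^{\hat{P}}[\textbf{1}_{\lbrace s<u_1\leq t\rbrace}\textbf{1}_{\lbrace \tilde{\tau}_2>t-u_1\rbrace}\varphi(u_1,u_1+\tilde{\tau}_2,\cdot)]\big\vert_{u_1=\tau_1}\big]\big)$. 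Finally, using the first identity in Lemma \ref{lem:secondlemmasecondpiece} once more, now for $\Psi(u_1,\tilde{u}_2,\omega):=\textbf{1}_{\lbrace s<u_1\leq t\rbrace}\textbf{1}_{\lbrace \tilde{u}_2>t-u_1\rbrace}\varphi(u_1,u_1+\tilde{u}_2,\omega)$, the inner double $\hat{P}$-expectation equals $\mathbb{E}^{\hat{P}}[\textbf{1}_{\lbrace s<\tau_1\leq t\rbrace}\textbf{1}_{\lbrace \tilde{\tau}_2>t-\tau_1\rbrace}\varphi(\tau_1,\tau_1+\tilde{\tau}_2,\cdot)]=\mathbb{E}^{\hat{P}}[\textbf{1}_{\lbrace s<\tau_1\leq t<\tau_2\rbrace}Y]$, so the lower bound is precisely the right-hand side of \eqref{eq:WeakDynamicProgrammingThird3}, which concludes the argument. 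I expect the main obstacle to be the careful bookkeeping around the frozen variable $u_1=\tau_1$: one must justify moving $\mathcal{F}_t$-measurable (indeed constant-in-$u_1$) factors in and out of the nonlinear operator $\mathcal{E}_t$, and must invoke Lemma \ref{lem:secondlemmasecondpiece} repeatedly so that the substitution $u_1=\tau_1$ is compatible both with the $\hat{P}$-integration and with the indicator $\textbf{1}_{\lbrace \tilde{\tau}_2>t-\tau_1\rbrace}$. The only genuinely non-trivial inequality is the one already established in Lemma \ref{lem:thirdlemmasecondpiece}, whose proof is where the representation \eqref{repesssup} and the direction of the inequality enter; everything else here is identity manipulation.
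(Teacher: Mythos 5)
Your proof is correct and follows the same overall architecture as the paper's: split the outer $\mathbb{E}^{\hat{P}}$ into a double expectation via Lemma \ref{lem:secondlemmasecondpiece}, cancel the exponential against the integrated survival probability, invoke Lemma \ref{lem:thirdlemmasecondpiece} as the key inequality, and reassemble with Lemma \ref{lem:secondlemmasecondpiece}. There is, however, one genuine streamlining: the paper first pushes $\textbf{1}_{\lbrace \tilde{\tau}_2 > t-u_1\rbrace}$ \emph{inside} $\mathcal{E}_t$ (Remark 2.4(iv) of \cite{nh_2013}, giving \eqref{eq:WeakDynamicProgrammingThird7*}), then exchanges $\mathbb{E}^{\hat{P}}$ with $\mathcal{E}_t$ via the inequality of Lemma \ref{lem:francescayinglin} (step \eqref{eq:WeakDynamicProgrammingThird7}), and only afterwards cancels the exponential. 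You instead keep the indicator \emph{outside} $\mathcal{E}_t$ and integrate it out directly at the level of the inner $\mathbb{E}^{\hat{P}}$, which is an equality. This bypasses Lemma \ref{lem:francescayinglin} entirely and makes manifest that the paper's first $\geq$ in this chain is in fact an $=$ (since for fixed $\hat{\omega}$ the argument of $\mathcal{E}_t$ is a product of an $\mathcal{F}_t$-measurable non-negative factor depending on $\hat{\omega}$ and an $\hat{\omega}$-independent function of $\omega$, so $\mathbb{E}^{\hat{P}}[\mathcal{E}_t(\cdot)] = \mathcal{E}_t(\mathbb{E}^{\hat{P}}[\cdot])$ holds with equality). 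The only genuine inequality in your version is Lemma \ref{lem:thirdlemmasecondpiece}. One small caveat worth noting explicitly, which the paper does address: in your first application of Lemma \ref{lem:secondlemmasecondpiece} the function $\Psi$ contains $\mathcal{E}_t(\cdot)(\omega)$, which is only $\mathcal{F}_t^*$-measurable, so you are relying on the version of Lemma \ref{lem:secondlemmasecondpiece} stated with $\mathcal{F}_\infty^*$ (and joint measurability of $(u_1,\omega)\mapsto \mathcal{E}_t(\cdots)(\omega)$, for which the arguments of Proposition \ref{prop:WellDefinedConsistency} are needed); this matches the paper's explicit remark before \eqref{eq:DefinitionPsiWeakDynamic}.
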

\begin{proof}
Since the operator $\mathcal{E}_t$ is $\mathcal{F}^*_{\infty}$-measurable, and considering a generalized version of Lemma \ref{lemma:DecompositionGeneraln}, where the reference filtration at final time is now $\mathcal{F}^*_{\infty}$, we can apply Lemma \ref{lem:secondlemmasecondpiece} to the function $\Psi:(\mathbb{R}_+^2 \times \Omega, \mathcal{B}(\mathbb{R}^2) \otimes \mathcal{F}_{\infty}^*) \to (\mathbb{R},\mathcal{B}(\mathbb{R}))$ given by
\small{
\begin{align} \label{eq:DefinitionPsiWeakDynamic}
	\Psi(\tau_1, \tilde{\tau}_2, \omega)&:= \textbf{1}_{\lbrace s < \tau_1 \leq t \rbrace}  \textbf{1}_{\lbrace \tilde{\tau}_2>t-\tau_1 \rbrace} \mathcal{E}_t\left(e^{\int_0^{t-u_1} \tilde{\lambda}_v^2dv} \mathbb{E}^{\hat{P}} [\textbf{1}_{\lbrace \tilde{\tau}_2 >t-u_1  \rbrace}\varphi(u_1,u_1+\tilde{\tau}_2, \cdot )]\right)(\omega)  \bigg\vert_{u_1=\tau_1}.
\end{align}}
Then we get
\begingroup
\allowdisplaybreaks
\red{\begin{align}
	 &\mathbb{E}^{\hat{P}}\left[  \textbf{1}_{\lbrace s < \tau_1 \rbrace}  \textbf{1}_{\lbrace \tau_1 \leq t < \tau_2 \rbrace} \mathcal{E}_t\left(e^{\int_0^{t-u_1} \tilde{\lambda}_v^2dv} \mathbb{E}^{\hat{P}} [\textbf{1}_{\lbrace \tilde{\tau}_2 >t-u_1  \rbrace}\varphi(u_1,u_1+\tilde{\tau}_2, \cdot )]\right)\bigg\vert_{u_1=\tau_1} \right] \nonumber \\
	& =\mathbb{E}^{\hat{P}}\left[  \left(  \mathbb{E}^{\hat{P}} \left[  \mathcal{E}_t(\textbf{1}_{\lbrace s < u_1 \leq t \rbrace}  \textbf{1}_{\lbrace \tilde{\tau}_2 >t-u_1 \rbrace}  e^{\int_0^{t-u_1} \tilde{\lambda}_v^2dv} \mathbb{E}^{\hat{P}} [\textbf{1}_{\lbrace \tilde{\tau}_2 >t-u_1  \rbrace}\varphi(u_1,u_1+\tilde{\tau}_2, \cdot )])\right] \right) \bigg\vert_{u_1=\tau_1}\right]\label{eq:WeakDynamicProgrammingThird7*}\\
	& \geq \mathbb{E}^{\hat{P}}\left[  \mathcal{E}_t \left( \mathbb{E}^{\hat{P}}\left[\textbf{1}_{\lbrace s < u_1 \leq t \rbrace}  \textbf{1}_{\lbrace \tilde{\tau}_2 >t-u_1 \rbrace}e^{\int_0^{t-u_1} \tilde{\lambda}_v^2dv} \mathbb{E}^{\hat{P}} [\textbf{1}_{\lbrace \tilde{\tau}_2 >t-u_1  \rbrace}\varphi(u_1,u_1+\tilde{\tau}_2, \cdot )]\right]\right)  \bigg\vert_{u_1=\tau_1}\right] \label{eq:WeakDynamicProgrammingThird7} \\
	&= \mathbb{E}^{\hat{P}}\left[ \left(\textbf{1}_{\lbrace s < u_1 \leq t \rbrace}e^{\int_0^{t-u_1} \tilde{\lambda}_v^2dv}  \mathcal{E}_t \left( \mathbb{E}^{\hat{P}} \left[  \textbf{1}_{\lbrace \tilde{\tau}_2 >t-u_1 \rbrace} \mathbb{E}^{\hat{P}} [\textbf{1}_{\lbrace \tilde{\tau}_2 >t-u_1  \rbrace}\varphi(u_1,u_1+\tilde{\tau}_2, \cdot )]\right]\right) \right) \bigg\vert_{u_1=\tau_1}\right] \nonumber \\
	&= \mathbb{E}^{\hat{P}}\left[ \left(\textbf{1}_{\lbrace s < u_1 \leq t \rbrace}e^{\int_0^{t-u_1} \tilde{\lambda}_v^2dv} e^{-\int_0^{t-u_1} \tilde{\lambda}_v^2dv}  \mathcal{E}_t \left(\mathbb{E}^{\hat{P}} \left[\textbf{1}_{\lbrace \tilde{\tau}_2 >t-u_1  \rbrace}\varphi(u_1,u_1+\tilde{\tau}_2, \cdot )\right]\right) \right) \bigg\vert_{u_1=\tau_1}\right] \nonumber \\
	&= \mathbb{E}^{\hat{P}}\left[ \mathcal{E}_t \left(\mathbb{E}^{\hat{P}} \left[\textbf{1}_{\lbrace s < u_1 \leq t \rbrace}\textbf{1}_{\lbrace \tilde{\tau}_2 >t-u_1  \rbrace}\varphi(u_1,u_1+\tilde{\tau}_2, \cdot )\right]\right) \bigg\vert_{u_1=\tau_1}\right] \nonumber \\
	& \geq \mathcal{E}_t \left(\mathbb{E}^{\hat{P}}\left[ \mathbb{E}^{\hat{P}} [\textbf{1}_{\lbrace s < u_1 \leq t \rbrace} \textbf{1}_{\lbrace \tilde{\tau}_2 >t- u_1 \rbrace}\varphi(u_1, u_1+\tilde{\tau}_2, \cdot) ]\vert_{u_1=\tau_1}\right] \right) \label{eq:lemma2} \\
&=\mathcal{E}_t \left(\mathbb{E}^{\hat{P}}\left[ \textbf{1}_{\lbrace s < \tau_1 \leq t \rbrace} \textbf{1}_{\lbrace \tilde{\tau}_2 >t- \tau_1 \rbrace}\varphi(\tau_1, \tau_1+ \tilde{\tau}_2, \cdot)\right] \right) \label{eq:WeakDynamicProgrammingThird10} \\
&=\mathcal{E}_t \left(\mathbb{E}^{\hat{P}}\left[ \textbf{1}_{\lbrace s < \tau_1 \rbrace} \textbf{1}_{\lbrace \tau_1 \leq t <\tau_2  \rbrace} Y\right] \right) \notag.
\end{align}}
\endgroup
\red{Note that} \eqref{eq:WeakDynamicProgrammingThird7*} follows from the fact that for fixed $\hat{\omega} \in \hat{\Omega}$ the indicator function $\textbf{1}_{\lbrace \tilde{\tau}_2 >t-u_1 \rbrace}$ is $\mathcal{F}_t$-measurable and from Remark 2.4 (iv) in \cite{nh_2013}. Inequality \eqref{eq:WeakDynamicProgrammingThird7} comes directly from Lemma \ref{lem:francescayinglin}, whereas \eqref{eq:lemma2} follows by applying Lemma \ref{lem:thirdlemmasecondpiece} to $\tilde{Y}:= \textbf{1}_{\lbrace \tilde{\tau}_2 >t-u_1  \rbrace} Y $, which is again a $\mathcal{G}^{\mathcal{P}}$-measurable, upper semianalytic and non-negative random variable on $\tilde{\Omega}$. 
Finally, \eqref{eq:WeakDynamicProgrammingThird10} by Lemma  \ref{lem:secondlemmasecondpiece} applied to $Y\textbf{1}_{\lbrace s < \tau_1 < t < \tau_2 \rbrace}$.
\end{proof}

We are now ready to give the next proposition, which takes into account the first term of the right-hand side of \eqref{eq:WeakDynamicRewritten}.

\begin{prop}\label{prop:dynamicsecond}
Let $Y$ be a $\mathcal{G}^{\mathcal{P}, \red{(2)}}$-measurable, upper semianalytic and non-negative random variable on $\tilde{\Omega}$. For any $0 \le s \le t$ it holds
\begin{small}
\begin{align}
	\textbf{1}_{\lbrace s < \tau_1 \rbrace} \mathcal{E}_s \left(e^{\int_0^s \tilde{\lambda}_v^1 dv } \mathbb{E}^{\hat{P}}[\textbf{1}_{\lbrace s < \tau_1 \rbrace} \tilde{\mathcal{E}}_t^{2}(Y)]\right) \geq \textbf{1}_{\lbrace s < \tau_1 \rbrace} \mathcal{E}_s\left( e^{\int_0^s \tilde{\lambda}_v^1 dv} \mathbb{E}^{\hat{P}}\left [ \textbf{1}_{\lbrace s < \tau_1 \rbrace} Y \right] \right). \label{eq:WeakDynamicProgrammingThirdPart}
\end{align}
\end{small}
\end{prop}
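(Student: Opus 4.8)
The plan is to mirror the proof of Theorem 2.22 in \cite{bz_2019} for the single default, decomposing the event $\{s<\tau_1\}$ according to how many of the two defaults have occurred by time $t$ and treating the three resulting pieces separately, using the preparatory lemmas above for the hardest one.

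First I would insert the explicit expression \eqref{eq:defbarvarphi} of $\bar{\varphi}_t$ and decompose $\textbf{1}_{\{s<\tau_1\}}\tilde{\mathcal{E}}_t^2(Y)$ into the three summands supported on the disjoint events $A_1:=\{t<\tau_1\}$, $A_2:=\{s<\tau_1\le t<\tau_2\}$, $A_3:=\{s<\tau_1,\tau_2\le t\}$, whose union is $\{s<\tau_1\}$ (disjointness and this identity use $s\le t$ and $\tau_1<\tau_2$). Writing $B_i:=\mathbb{E}^{\hat{P}}[\textbf{1}_{A_i}Y]$, I would then apply $\mathbb{E}^{\hat{P}}[\cdot]$ summand by summand. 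For the $A_1$-piece: since $\mathbb{E}^{\hat{P}}[\textbf{1}_{\{t<\tau_1\}}]=e^{-\int_0^t\tilde{\lambda}_v^1 dv}$ by Assumption \ref{asum:ExistenceUniformlyRV} and the unit-exponential law of $E_1$, and since the $\mathcal{F}_t$-measurable factor $e^{\int_0^t\tilde{\lambda}_v^1 dv}$ is constant under concatenation at $t$ and so factors out of $\mathcal{E}_t$, the two exponentials cancel and one obtains the identity $\mathbb{E}^{\hat{P}}[\textbf{1}_{A_1}\tilde{\mathcal{E}}_t^2(Y)]=\mathcal{E}_t(B_1)$. For the $A_2$-piece: this is exactly the inequality of Lemma \ref{lem:secondtermsecondpiece}, namely $\mathbb{E}^{\hat{P}}[\textbf{1}_{A_2}\tilde{\mathcal{E}}_t^2(Y)]\ge\mathcal{E}_t(B_2)$. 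For the $A_3$-piece: Lemma \ref{lem:firstlemmasecondpiece} lets me replace $\mathcal{E}_t(\varphi(u_1,u_2,\cdot))\vert_{(u_1,u_2)=(\tau_1,\tau_2)}$ by $\mathcal{E}_t(Y)$ on $\{\tau_2\le t\}$; then, by arguments analogous to those in the proof of Lemma \ref{lem:firstlemmasecondpiece} based on the concatenation definitions \eqref{concatenation}--\eqref{concaRV}, I would establish $\textbf{1}_{A_3}\mathcal{E}_t(Y)=\mathcal{E}_t(\textbf{1}_{A_3}Y)$; finally Lemma \ref{lem:francescayinglin}, applied to the nonnegative upper semianalytic $\mathcal{G}^{\mathcal{P},(2)}$-measurable random variable $\textbf{1}_{A_3}Y$, gives $\mathbb{E}^{\hat{P}}[\textbf{1}_{A_3}\tilde{\mathcal{E}}_t^2(Y)]\ge\mathcal{E}_t(B_3)$.

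Summing the three contributions gives $\mathbb{E}^{\hat{P}}[\textbf{1}_{\{s<\tau_1\}}\tilde{\mathcal{E}}_t^2(Y)]\ge\mathcal{E}_t(B_1)+\mathcal{E}_t(B_2)+\mathcal{E}_t(B_3)$, and subadditivity of $\mathcal{E}_t$ together with $B_1+B_2+B_3=\mathbb{E}^{\hat{P}}[\textbf{1}_{\{s<\tau_1\}}Y]$ yields $\mathbb{E}^{\hat{P}}[\textbf{1}_{\{s<\tau_1\}}\tilde{\mathcal{E}}_t^2(Y)]\ge\mathcal{E}_t(\mathbb{E}^{\hat{P}}[\textbf{1}_{\{s<\tau_1\}}Y])$. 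To conclude I would multiply by the nonnegative $\mathcal{F}_s$-measurable factor $e^{\int_0^s\tilde{\lambda}_v^1 dv}$, apply $\mathcal{E}_s$ (monotone), pull $e^{\int_0^s\tilde{\lambda}_v^1 dv}$ inside $\mathcal{E}_t$ (again a factor constant under concatenation at $t$ since $s\le t$), and invoke the tower property \eqref{towerNutz}, $\mathcal{E}_s\circ\mathcal{E}_t=\mathcal{E}_s$, to reach $\textbf{1}_{\{s<\tau_1\}}\mathcal{E}_s(e^{\int_0^s\tilde{\lambda}_v^1 dv}\mathbb{E}^{\hat{P}}[\textbf{1}_{\{s<\tau_1\}}Y])$, which is the right-hand side of \eqref{eq:WeakDynamicProgrammingThirdPart}. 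All quantities are nonnegative and well-defined because $Y\ge 0$ and by Proposition \ref{prop:EvaluatedOperatorUpperSemianalytic}.

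I expect the main obstacle to be the $A_3$-piece, and precisely the identity $\textbf{1}_{A_3}\mathcal{E}_t(Y)=\mathcal{E}_t(\textbf{1}_{A_3}Y)$: this must be checked from \eqref{concatenation}--\eqref{concaRV} by verifying both that $\tau_1$ and $\tau_2$ are left unchanged by concatenation at $t$ on $\{\tau_2\le t\}$ and, conversely, that $\{\tau_2\le t\}$ cannot arise from concatenation at $t$ starting from $(\omega,\hat{\omega})$ with $\tau_2(\omega,\hat{\omega})>t$, so that moving the inner indicator out does not alter the value off $A_3$. Once this is in place, the remaining steps --- the disjoint-events bookkeeping, subadditivity, and the tower property --- follow the single default argument of \cite{bz_2019}.
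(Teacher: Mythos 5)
Your proposal is correct and follows essentially the same route as the paper's proof: the same decomposition of $\{s<\tau_1\}$ into the three events $\{t<\tau_1\}$, $\{s<\tau_1\le t<\tau_2\}$, $\{s<\tau_1,\tau_2\le t\}$, the same exponential cancellation for the first piece, Lemma \ref{lem:secondtermsecondpiece} for the second, Lemma \ref{lem:firstlemmasecondpiece} together with the $\mathcal{F}_t$-localization of $\mathcal{E}_t$ (which the paper obtains by citing Remark 2.4(iv) of \cite{nh_2013} plus the $\mathcal{F}_t$-measurability of the indicators for fixed $\hat\omega$, rather than re-deriving it from the concatenation formulas as you sketch) and then Lemma \ref{lem:francescayinglin} for the third, and finally sublinearity of $\mathcal{E}_t$ and the tower property \eqref{towerNutz}. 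The only cosmetic difference is bookkeeping: you establish the inner inequality $\mathbb{E}^{\hat P}[\textbf{1}_{\{s<\tau_1\}}\tilde{\mathcal{E}}_t^2(Y)]\ge\mathcal{E}_t(\mathbb{E}^{\hat P}[\textbf{1}_{\{s<\tau_1\}}Y])$ first and then apply $\mathcal{E}_s$, whereas the paper carries $\mathcal{E}_s$ along throughout.
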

\begin{proof}
From Definition \ref{def:defiSublinearOperatorMulti} we get
\begin{align}
&\textbf{1}_{\lbrace s < \tau_1 \rbrace} \mathcal{E}_s \left(e^{\int_0^s \tilde{\lambda}_v^1 dv } \mathbb{E}^{\hat{P}}[\textbf{1}_{\lbrace s < \tau_1 \rbrace} \tilde{\mathcal{E}}_t^{2}(Y)]\right) \notag \\
	&=\textbf{1}_{\lbrace s < \tau_1 \rbrace} \mathcal{E}_s\bigg( e^{\int_0^s \tilde{\lambda}_v^1 dv} \mathbb{E}^{\hat{P}}\bigg [ \textbf{1}_{\lbrace s < \tau_1 \rbrace} \bigg \lbrace \textbf{1}_{\lbrace t < \tau_1 \rbrace}\mathcal{E}_t(e^{\int_0^t \tilde{\lambda}_v^1 dv} \mathbb{E}^{\hat{P}}[\textbf{1}_{\lbrace t < \tau_1 \rbrace}Y]) \nonumber \\
	&\ + \textbf{1}_{\lbrace \tau_1 \leq t < \tau_2 \rbrace} \mathcal{E}_t\left(e^{\int_0^{t-u_1} \tilde{\lambda}_v^2dv} \mathbb{E}^{\hat{P}} [\textbf{1}_{\lbrace \tilde{\tau}_2 >t-u_1  \rbrace}\varphi(u_1,u_1+\tilde{\tau}_2, \cdot )]\right)\bigg\vert_{u_1=\tau_1} \notag \\
	&\ +	\textbf{1}_{\lbrace \tau_2 \leq t \rbrace} \mathcal{E}_t(\varphi(u_1,u_2, \cdot))\vert_{(u_1,u_2)=(\tau_1,\tau_2)} \bigg \rbrace \bigg ] \bigg) \label{eq:thingfromdefinition}.
	\end{align} 
Moreover, following the same arguments as in the proof of Theorem 2.22 of \cite{bz_2019} we get 
\begin{align}
	\mathbb{E}^{\hat{P}}\left [ \textbf{1}_{\lbrace s < \tau_1 \rbrace} \textbf{1}_{\lbrace t < \tau_1 \rbrace}\mathcal{E}_t\left(e^{\int_0^t \tilde{\lambda}_v^1 dv} \mathbb{E}^{\hat{P}}[\textbf{1}_{\lbrace t < \tau_1 \rbrace}Y]\right)\right]&= \mathcal{E}_t\left(e^{\int_0^t \tilde{\lambda}_v^1 dv} \mathbb{E}^{\hat{P}}[\textbf{1}_{\lbrace t < \tau_1 \rbrace}Y]\right) \mathbb{E}^{\hat{P}}\left [\textbf{1}_{\lbrace t < \tau_1 \rbrace}\right] \nonumber \\
	&=e^{\int_0^t \tilde{\lambda}_v^1 dv} \mathcal{E}_t( \mathbb{E}^{\hat{P}}[\textbf{1}_{\lbrace t < \tau_1 \rbrace}Y]) e^{-\int_0^t \tilde{\lambda}_v^1 dv} \nonumber \\
	&=\mathcal{E}_t( \mathbb{E}^{\hat{P}}[\textbf{1}_{\lbrace t < \tau_1 \rbrace}Y]). \label{eq:WeakDynamicProgrammingThird2}
\end{align}
By Lemma \ref{lem:secondtermsecondpiece} we have
\begin{small}
\begin{align}
	& \mathbb{E}^{\hat{P}}\left[  \textbf{1}_{\lbrace s<\tau_1 \leq t < \tau_2 \rbrace} \mathcal{E}_t\left(e^{\int_0^{t-u_1} \tilde{\lambda}_v^2dv} \mathbb{E}^{\hat{P}} [\textbf{1}_{\lbrace \tilde{\tau}_2 >t-u_1  \rbrace}\varphi(u_1,u_1+\tilde{\tau}_2, \cdot )]\right)\bigg\vert_{u_1=\tau_1} \right]\notag \\
	 & \quad \geq \mathcal{E}_t\left( \mathbb{E}^{\hat{P}}\left[ \textbf{1}_{\lbrace s < \tau_1 \leq t < \tau_2 \rbrace} Y\right]\right),\label{eq:WeakDynamicProgrammingThird2_*}
\end{align}
\end{small}
By Lemma \ref{lem:firstlemmasecondpiece} it follows that
\begin{align}
	\mathbb{E}^{\hat{P}}\left[ \textbf{1}_{\lbrace s < \tau_1 \rbrace}  \textbf{1}_{\lbrace \tau_2 \leq t \rbrace} \mathcal{E}_t \left( \varphi(u_1,u_2, \cdot )\right)\vert_{(u_1,u_2)=(\tau_1, \tau_2)}\right]= \mathbb{E}^{\hat{P}}\left[ \textbf{1}_{\lbrace s < \tau_1 \rbrace}  \textbf{1}_{\lbrace \tau_2 \leq t \rbrace} \mathcal{E}_t (Y)\right]. \label{eq:WeakDynamicProgrammingThird1}
\end{align}

Putting together \eqref{eq:thingfromdefinition},\eqref{eq:WeakDynamicProgrammingThird2},  \eqref{eq:WeakDynamicProgrammingThird2_*}, \eqref{eq:WeakDynamicProgrammingThird1} and Remark 2.4 (iii) in \cite{nh_2013}, with the notation \eqref{eq:NotationII}, we get that
\red{\small{
\begin{align}
&\textbf{1}_{\lbrace s < \tau_1 \rbrace} \mathcal{E}_s (e^{\int_0^s \tilde{\lambda}_v^1 dv } \mathbb{E}^{\hat{P}}[\textbf{1}_{\lbrace s < \tau_1 \rbrace} \tilde{\mathcal{E}}_t^{2}(Y)]) \notag \\
	&\geq \textbf{1}_{\lbrace s < \tau_1 \rbrace}  e^{\int_0^s \tilde{\lambda}_v^1 dv}  \mathcal{E}_s\left( \mathcal{E}_t( \mathbb{E}^{\hat{P}}[\textbf{1}_{\lbrace t < \tau_1 \rbrace}Y])+\mathcal{E}_t \left(\mathbb{E}^{\hat{P}}\left[ \textbf{1}_{\lbrace s < \tau_1 \rbrace} \textbf{1}_{\lbrace \tau_1 \leq t <\tau_2  \rbrace} Y\right] \right) +\mathbb{E}^{\hat{P}}\left[ \textbf{1}_{\lbrace s < \tau_1 \rbrace}  \textbf{1}_{\lbrace \tau_2 \leq t \rbrace} \mathcal{E}_t (Y)\right] \right) \label{eq:FirstEquality*}\\
	&=\textbf{1}_{\lbrace s < \tau_1 \rbrace}  e^{\int_0^s \tilde{\lambda}_v^1 dv}  \mathcal{E}_s\left( \mathcal{E}_t( \mathbb{E}^{\hat{P}}[\textbf{1}_{\lbrace t < \tau_1 \rbrace}Y])+ \mathcal{E}_t \left(\mathbb{E}^{\hat{P}}\left[ \textbf{1}_{\lbrace s < \tau_1 \rbrace} \textbf{1}_{\lbrace \tau_1 \leq t <\tau_2  \rbrace} Y\right] \right) +\mathbb{E}^{\hat{P}}[ \mathcal{E}_t ( \textbf{1}_{\lbrace s < \tau_1 \rbrace}  \textbf{1}_{\lbrace \tau_2 \leq t \rbrace}  Y)] \right) \label{eq:FirstEquality} \\
	&\geq \textbf{1}_{\lbrace s < \tau_1 \rbrace}  e^{\int_0^s \tilde{\lambda}_v^1 dv}  \mathcal{E}_s\left( \mathcal{E}_t( \mathbb{E}^{\hat{P}}[\textbf{1}_{\lbrace t < \tau_1 \rbrace}Y])+  \mathcal{E}_t\left(\mathbb{E}^{\hat{P}}[    \textbf{1}_{\lbrace s < \tau_1 \rbrace}   \textbf{1}_{\lbrace \tau_1 \leq t < \tau_2 \rbrace}Y ]\right) +\mathcal{E}_t(\mathbb{E}^{\hat{P}}[ \textbf{1}_{\lbrace s < \tau_1 \rbrace}  \textbf{1}_{\lbrace \tau_2 \leq t \rbrace}  Y]) \right) \label{eq:SecondInequality}\\
	&\geq \textbf{1}_{\lbrace s < \tau_1 \rbrace}  e^{\int_0^s \tilde{\lambda}_v^1 dv}  \mathcal{E}_s\left( \mathcal{E}_t( \mathbb{E}^{\hat{P}}[\textbf{1}_{\lbrace t < \tau_1 \rbrace}Y]+\mathbb{E}^{\hat{P}}[    \textbf{1}_{\lbrace s < \tau_1 \rbrace}   \textbf{1}_{\lbrace \tau_1 \leq t < \tau_2 \rbrace}Y ] +\mathbb{E}^{\hat{P}}[ \textbf{1}_{\lbrace s < \tau_1 \rbrace}  \textbf{1}_{\lbrace \tau_2 \leq t \rbrace}  Y]) \right) \label{eq:ThirdInequality} \\
	&=\textbf{1}_{\lbrace s < \tau_1 \rbrace}   \mathcal{E}_s\left( e^{\int_0^s \tilde{\lambda}_v^1 dv} \mathbb{E}^{\hat{P}}[\textbf{1}_{\lbrace s < \tau_1 \rbrace}Y] \right), \nonumber
\end{align}}}
where we use Lemma  \ref{lem:secondtermsecondpiece} in \eqref{eq:FirstEquality*}. Equality \eqref{eq:FirstEquality} holds by Remark 2.4 (iv) in \cite{nh_2013} together with the fact that  
\begin{equation} \label{eq:MeasurabilityFixedOmegaHat}
\lbrace{ s < \tau_1(\cdot, \hat{\omega})\rbrace},  \lbrace{ \tau_2(\cdot, \hat{\omega}) \leq t\rbrace} \in \mathcal{F}_t
\end{equation}
for fixed $\hat{\omega} \in \hat{\Omega}$. Inequality \eqref{eq:SecondInequality} follows from Lemma \ref{lem:francescayinglin}, and \eqref{eq:ThirdInequality} by the sublinearity of $\mathcal{E}_t$.
\end{proof}

We finally consider the second term of the right-hand side of \eqref{eq:WeakDynamicRewritten}, i.e.
$$
\textbf{1}_{\lbrace \tau_1 \leq s < \tau_2 \rbrace} \mathcal{E}_s\left(e^{\int_0^{s-u_1} \tilde{\lambda}_v^2 dv } \mathbb{E}^{\hat{P}}[ \textbf{1}_{\lbrace \tilde{\tau}_2 > s-u_1\rbrace} \bar{\varphi}_t(u_1,u_1+\tilde{\tau}_2, \cdot)]\right) \bigg\vert_{u_1=\tau_1}.
$$
\begin{prop}\label{prop:dynamicthird}
For any $0 \le s \le t$ it holds 
\begin{align}
	&\textbf{1}_{\lbrace \tau_1 \leq s < \tau_2 \rbrace} \mathcal{E}_s\left(e^{\int_0^{s-u_1} \tilde{\lambda}_v^2 dv } \mathbb{E}^{\hat{P}}[ \textbf{1}_{\lbrace \tilde{\tau}_2 > s-u_1\rbrace} \bar{\varphi}_t(u_1,u_1+\tilde{\tau}_2, \cdot)]\right) \bigg\vert_{u_1=\tau_1} \nonumber \\
	& \geq \textbf{1}_{\lbrace \tau_1 \leq s < \tau_2 \rbrace} \mathcal{E}_s\left (e^{\int_0^{s-u_1} \tilde{\lambda}_v^2 dv} \mathbb{E}^{\hat{P}}\left[\textbf{1}_{\lbrace \tilde{\tau}_2 > s-u_1\rbrace} \varphi(u_1,u_1+\tilde{\tau}_2, \cdot)\right]\right)\bigg \vert_{u_1=\tau_1}, \label{eq:WeakDynamicMiddleTerm}
\end{align}
where $\bar{\varphi}_t$ is defined in \eqref{eq:defbarvarphi} and $\varphi$ is given in \eqref{eq:DefiVarphi1} and \eqref{eq:DefiVarphi2}.
\end{prop}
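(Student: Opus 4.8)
The plan is to expand $\bar{\varphi}_t(u_1,u_1+\tilde{\tau}_2,\cdot)$ by means of its definition \eqref{eq:defbarvarphi}, and then to reduce the resulting expression to the right-hand side of \eqref{eq:WeakDynamicMiddleTerm} through successive uses of two elementary measurability facts, Lemma \ref{lem:francescayinglin}, subadditivity of $\mathcal{E}_t$ and the tower property \eqref{towerNutz}. Here $Y$ is, as throughout this section, a $\mathcal{G}^{\mathcal{P},(2)}$-measurable, upper semianalytic and non-negative random variable on $\tilde{\Omega}$, and $\varphi,\bar{\varphi}_t$ are as in \eqref{eq:DefiVarphi1}, \eqref{eq:DefiVarphi2} and \eqref{eq:defbarvarphi}.

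First I would fix $u_1\le t$; this is enough, since multiplying by $\textbf{1}_{\lbrace\tau_1\le s<\tau_2\rbrace}$ and then setting $u_1=\tau_1$ confines us to $u_1=\tau_1\le s\le t$. For such $u_1$ the term $\textbf{1}_{\lbrace t<u_1\rbrace}$ in \eqref{eq:defbarvarphi} is zero; rewriting $\lbrace u_1\le t<u_1+\tilde{\tau}_2\rbrace=\lbrace\tilde{\tau}_2>t-u_1\rbrace$, $\lbrace u_1+\tilde{\tau}_2\le t\rbrace=\lbrace\tilde{\tau}_2\le t-u_1\rbrace$, and multiplying by $\textbf{1}_{\lbrace\tilde{\tau}_2>s-u_1\rbrace}$ (using $s\le t$), I get
\[
\textbf{1}_{\lbrace\tilde{\tau}_2>s-u_1\rbrace}\,\bar{\varphi}_t(u_1,u_1+\tilde{\tau}_2,\cdot)=\textbf{1}_{\lbrace\tilde{\tau}_2>t-u_1\rbrace}A(u_1,\cdot)+\textbf{1}_{\lbrace s-u_1<\tilde{\tau}_2\le t-u_1\rbrace}\mathcal{E}_t(\varphi(u_1,u_1+\tilde{\tau}_2,\cdot)),
\]
with $A(u_1,\cdot):=\mathcal{E}_t(e^{\int_0^{t-u_1}\tilde{\lambda}_v^2 dv}\mathbb{E}^{\hat{P}}[\textbf{1}_{\lbrace\tilde{\tau}_2>t-u_1\rbrace}\varphi(u_1,u_1+\tilde{\tau}_2,\cdot)])$. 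The two facts I would then record are: $A(u_1,\cdot)$ does not depend on $\hat{\omega}$, and for fixed $\hat{\omega}$ the indicators $\textbf{1}_{\lbrace\tilde{\tau}_2>t-u_1\rbrace}$ and $\textbf{1}_{\lbrace s-u_1<\tilde{\tau}_2\le t-u_1\rbrace}$ are, as functions of $\omega$, $\mathcal{F}_t$-measurable (for $u_1\le s\le t$), so they can be moved in and out of $\mathcal{E}_t$.

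Applying $e^{\int_0^{s-u_1}\tilde{\lambda}_v^2 dv}\mathbb{E}^{\hat{P}}[\,\cdot\,]$ to this identity: since $A(u_1,\cdot)$ is $\hat{\omega}$-free and $\mathbb{E}^{\hat{P}}[\textbf{1}_{\lbrace\tilde{\tau}_2>r\rbrace}]=e^{-\int_0^r\tilde{\lambda}_v^2 dv}$ ($E_2$ being unit exponential under $\hat{P}$), the first summand equals $e^{\int_0^{s-u_1}\tilde{\lambda}_v^2 dv}e^{-\int_0^{t-u_1}\tilde{\lambda}_v^2 dv}A(u_1,\cdot)$, and pulling the non-negative $\mathcal{F}_t$-measurable factor $e^{-\int_0^{t-u_1}\tilde{\lambda}_v^2 dv}$ inside $\mathcal{E}_t$ simplifies it to $e^{\int_0^{s-u_1}\tilde{\lambda}_v^2 dv}\mathcal{E}_t(\mathbb{E}^{\hat{P}}[\textbf{1}_{\lbrace\tilde{\tau}_2>t-u_1\rbrace}\varphi(u_1,u_1+\tilde{\tau}_2,\cdot)])$. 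For the second summand I would move the indicator into $\mathcal{E}_t$ and then invoke Lemma \ref{lem:francescayinglin} to bound it below by $e^{\int_0^{s-u_1}\tilde{\lambda}_v^2 dv}\mathcal{E}_t(\mathbb{E}^{\hat{P}}[\textbf{1}_{\lbrace s-u_1<\tilde{\tau}_2\le t-u_1\rbrace}\varphi(u_1,u_1+\tilde{\tau}_2,\cdot)])$. Using subadditivity of $\mathcal{E}_t$ (so that $\mathcal{E}_t(W_1)+\mathcal{E}_t(W_2)\ge\mathcal{E}_t(W_1+W_2)$) and $\textbf{1}_{\lbrace\tilde{\tau}_2>t-u_1\rbrace}+\textbf{1}_{\lbrace s-u_1<\tilde{\tau}_2\le t-u_1\rbrace}=\textbf{1}_{\lbrace\tilde{\tau}_2>s-u_1\rbrace}$, the whole inner expression is bounded below by $e^{\int_0^{s-u_1}\tilde{\lambda}_v^2 dv}\mathcal{E}_t(W)$, where $W:=\mathbb{E}^{\hat{P}}[\textbf{1}_{\lbrace\tilde{\tau}_2>s-u_1\rbrace}\varphi(u_1,u_1+\tilde{\tau}_2,\cdot)]$ is upper semianalytic on $\Omega$ by the arguments in the proof of Proposition \ref{prop:WellDefinedConsistency}. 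Finally, applying the monotone operator $\mathcal{E}_s$, pulling out the $\mathcal{F}_s$-measurable factor, using $\mathcal{E}_s(\mathcal{E}_t(W))=\mathcal{E}_s(W)$ from \eqref{towerNutz}, re-inserting the factor and multiplying by $\textbf{1}_{\lbrace\tau_1\le s<\tau_2\rbrace}$ before setting $u_1=\tau_1$ returns exactly the right-hand side of \eqref{eq:WeakDynamicMiddleTerm}.

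I expect the main obstacle to be the bookkeeping in the first step — pinning down which of the three indicator terms of $\bar{\varphi}_t$ survives on $\lbrace\tau_1\le s<\tau_2\rbrace$ and cancelling the compensators $e^{\int_0^{s-u_1}\tilde{\lambda}_v^2 dv}$ and $e^{\pm\int_0^{t-u_1}\tilde{\lambda}_v^2 dv}$ against the survival probabilities coming from $\mathbb{E}^{\hat{P}}[\textbf{1}_{\lbrace\tilde{\tau}_2>\cdot\rbrace}]$ — together with the two measurability observations, which are exactly what allow Lemma \ref{lem:francescayinglin} and the tower property to be applied in the direction needed. Everything else (monotonicity and subadditivity of $\mathcal{E}$, and pulling out non-negative $\mathcal{F}_s$- and $\mathcal{F}_t$-measurable factors) is routine and parallels the proofs of Propositions \ref{prop:dynamicfirst} and \ref{prop:dynamicsecond}.
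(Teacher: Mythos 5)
Your proof is correct and follows essentially the same route as the paper's: expand $\bar{\varphi}_t$, compute the $\hat{P}$-expectations, apply Lemma \ref{lem:francescayinglin} to the term involving $\mathbb{E}^{\hat{P}}[\mathcal{E}_t(\cdot)]$, use sublinearity of $\mathcal{E}_t$, recombine the indicators into $\textbf{1}_{\lbrace\tilde{\tau}_2>s-u_1\rbrace}$, and finish with the tower property \eqref{towerNutz}. The only cosmetic difference is that you discard the $\textbf{1}_{\lbrace t<u_1\rbrace}$ term of $\bar{\varphi}_t$ at the outset (since $u_1=\tau_1\le s\le t$), while the paper carries it until step \eqref{eq:WeakDynamicMiddleTerm4*}.
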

\begin{proof}
From \eqref{eq:defbarvarphi} we get
\begin{align}
&\textbf{1}_{\lbrace \tau_1 \leq s < \tau_2 \rbrace} \mathcal{E}_s\left(e^{\int_0^{s-u_1} \tilde{\lambda}_v^2 dv } \mathbb{E}^{\hat{P}}[ \textbf{1}_{\lbrace \tilde{\tau}_2 > s-u_1\rbrace}{\bar{\varphi}_t}(u_1,u_1+\tilde{\tau}_2, \cdot)]\right) \bigg\vert_{u_1=\tau_1}\notag \\
&=\textbf{1}_{\lbrace \tau_1 \leq s < \tau_2 \rbrace} \mathcal{E}_s\bigg (e^{\int_0^{s-u_1} \tilde{\lambda}_v^2 dv} \mathbb{E}^{\hat{P}}\bigg[\textbf{1}_{\lbrace \tilde{\tau}_2 > s-u_1\rbrace} \bigg \lbrace \textbf{1}_{\lbrace u_1 > t \rbrace} \mathcal{E}_t(e^{\int_0^t \tilde{\lambda}_v^1 dv} \mathbb{E}^{\hat{P}}[\textbf{1}_{\lbrace t < \tau_1 \rbrace}Y]) \nonumber \\
	&\ +\textbf{1}_{\lbrace u_1 \leq t \leq u_1+\tilde{\tau}_2 \rbrace} \mathcal{E}_t(e^{\int_0^{t-u_1} \tilde{\lambda}_v^2 dv} \mathbb{E}^{\hat{P}}[\textbf{1}_{\lbrace \tilde{\tau}_2 >t-u_1 \rbrace} \varphi(u_1,u_1+\tilde{\tau}_2, \cdot)]) \notag \\ \ &\ + \textbf{1}_{\lbrace u_1+\tilde{\tau}_2 \leq t \rbrace} \mathcal{E}_t(\varphi(u_1,u_1+\tilde{\tau}_2,\cdot))\bigg\rbrace \bigg]\bigg) \bigg \vert_{u_1=\tau_1} \label{eq:initialtermlastprop}.
\end{align}
Since $\textbf{1}_{\lbrace u_1 > t \rbrace} \mathcal{E}_t\left(e^{\int_0^t \tilde{\lambda}_v^1 dv} \mathbb{E}^{\hat{P}}[\textbf{1}_{\lbrace t < \tau_1 \rbrace}Y]\right)$ is independent of $\hat{\omega}$, we have
\begin{small}
\begin{align}
	&\mathbb{E}^{\hat{P}}\left[ \textbf{1}_{\lbrace \tilde{\tau}_2 > s-u_1\rbrace} \textbf{1}_{\lbrace u_1 > t \rbrace} \mathcal{E}_t\left(e^{\int_0^t \tilde{\lambda}_v^1 dv} \mathbb{E}^{\hat{P}}[\textbf{1}_{\lbrace t < \tau_1 \rbrace}Y]\right) \right]\nonumber \\
	&= \mathbb{E}^{\hat{P}}\left[ \textbf{1}_{\lbrace \tilde{\tau}_2 > s-u_1\rbrace}\right]  \textbf{1}_{\lbrace u_1 > t \rbrace}\mathcal{E}_t\left(e^{\int_0^t \tilde{\lambda}_v^1 dv} \mathbb{E}^{\hat{P}}[\textbf{1}_{\lbrace t < \tau_1 \rbrace}Y]\right)\notag \\
	&=e^{\int_0^{s-u_1} \tilde{\lambda}_v^2 dv} \textbf{1}_{\lbrace u_1 > t \rbrace} e^{\int_0^t \tilde{\lambda}_v^1 dv} \mathcal{E}_t(\mathbb{E}^{\hat{P}}[\textbf{1}_{\lbrace t < \tau_1 \rbrace}Y]). \label{eq:WeakDynamicMiddleTerm1}
\end{align}
\end{small}

Also note that $\textbf{1}_{\lbrace u_1 \leq   t \rbrace} \mathcal{E}_t \left(e^{\int_0^{t-u_1} \tilde{\lambda}_v^2 dv} \mathbb{E}^{\hat{P}}[\textbf{1}_{\lbrace \tilde{\tau}_2 >t-u_1 \rbrace} \varphi(u_1,u_1+\tilde{\tau}_2, \cdot) ]\right)$ is independent of $\hat{\omega}$, so that 
\begin{align}
	&\mathbb{E}^{\hat{P}}\left[ \textbf{1}_{\lbrace \tilde{\tau}_2 > s-u_1\rbrace} \textbf{1}_{\lbrace u_1 \leq   t <u_1 +\tilde{\tau}_2\rbrace} \mathcal{E}_t \left(e^{\int_0^{t-u_1} \tilde{\lambda}_v^2 dv} \mathbb{E}^{\hat{P}}[\textbf{1}_{\lbrace \tilde{\tau}_2 >t-u_1 \rbrace} \varphi(u_1,u_1+\tilde{\tau}_2, \cdot) ]\right)\right] \nonumber \\
	&=\mathbb{E}^{\hat{P}}\left[ \textbf{1}_{\lbrace \tilde{\tau}_2 > s-u_1\rbrace} \textbf{1}_{\lbrace \tilde{\tau}_2 > t-u_1\rbrace}\right] \textbf{1}_{\lbrace u_1 \leq   t \rbrace} e^{\int_0^{t-u_1} \tilde{\lambda}_v^2 dv} \mathcal{E}_t ( \mathbb{E}^{\hat{P}}[\textbf{1}_{\lbrace \tilde{\tau}_2 >t-u_1 \rbrace} \varphi(u_1,u_1+\tilde{\tau}_2, \cdot) ]) \nonumber \\
	&=\mathbb{E}^{\hat{P}}\left[  \textbf{1}_{\lbrace \tilde{\tau}_2 > t-u_1\rbrace}\right] \textbf{1}_{\lbrace u_1 \leq   t \rbrace} e^{\int_0^{t-u_1} \tilde{\lambda}_v^2 dv} \mathcal{E}_t ( \mathbb{E}^{\hat{P}}[\textbf{1}_{\lbrace \tilde{\tau}_2 >t-u_1 \rbrace} \varphi(u_1,u_1+\tilde{\tau}_2, \cdot) ]) \nonumber  \\
	&=e^{-\int_0^{t-u_1} \tilde{\lambda}_v^2 dv}\textbf{1}_{\lbrace u_1 \leq   t \rbrace} e^{\int_0^{t-u_1} \tilde{\lambda}_v^2 dv} \mathcal{E}_t ( \mathbb{E}^{\hat{P}}[\textbf{1}_{\lbrace \tilde{\tau}_2 >t-u_1 \rbrace} \varphi(u_1,u_1+\tilde{\tau}_2, \cdot) ]) \nonumber \\
	&=\textbf{1}_{\lbrace u_1 \leq   t \rbrace}\mathcal{E}_t ( \mathbb{E}^{\hat{P}}[\textbf{1}_{\lbrace \tilde{\tau}_2 >t-u_1 \rbrace} \varphi(u_1,u_1+\tilde{\tau}_2, \cdot) ]).\label{eq:WeakDynamicMiddleTerm2}
\end{align}
\allowdisplaybreaks
Putting together \eqref{eq:WeakDynamicMiddleTerm1}, \eqref{eq:WeakDynamicMiddleTerm2} and Remark 2.4 (iii) in \cite{nh_2013}, we have that the right-hand side of \eqref{eq:initialtermlastprop} is equal to 
\red{\begin{align}
	&\textbf{1}_{\lbrace \tau_1 \leq s < \tau_2 \rbrace} e^{\int_0^{s-u_1} \tilde{\lambda}_v^2 dv} \mathcal{E}_s\bigg ( e^{\int_0^{s-u_1} \tilde{\lambda}_v^2 dv} \textbf{1}_{\lbrace u_1 > t \rbrace} e^{\int_0^t \tilde{\lambda}_v^1 dv} \mathcal{E}_t(\mathbb{E}^{\hat{P}}[\textbf{1}_{\lbrace t < \tau_1 \rbrace}Y]) \nonumber \\
	&+ \textbf{1}_{\lbrace u_1 \leq   t \rbrace}\mathcal{E}_t ( \mathbb{E}^{\hat{P}}[\textbf{1}_{\lbrace \tilde{\tau}_2 >t-u_1 \rbrace} \varphi(u_1,u_1+\tilde{\tau}_2, \cdot) ])  \nonumber \\
	&+ \mathbb{E}^{\hat{P}}\left[ \textbf{1}_{\lbrace \tilde{\tau}_2 > s-u_1\rbrace} \textbf{1}_{\lbrace u_1+\tilde{\tau}_2 \leq t \rbrace} \mathcal{E}_t(\varphi(u_1,u_1+\tilde{\tau}_2,\cdot)) \right]\bigg) \bigg \vert_{u_1=\tau_1} \nonumber\\
	&=\textbf{1}_{\lbrace \tau_1 \leq s < \tau_2 \rbrace} e^{\int_0^{s-u_1} \tilde{\lambda}_v^2 dv} \mathcal{E}_s\bigg(\mathcal{E}_t ( \textbf{1}_{\lbrace u_1 \leq   t \rbrace}\mathbb{E}^{\hat{P}}[\textbf{1}_{\lbrace \tilde{\tau}_2 >t-u_1 \rbrace} \varphi(u_1,u_1+\tilde{\tau}_2, \cdot) ])  \nonumber\\
	&+\mathbb{E}^{\hat{P}}\left[ \mathcal{E}_t(\textbf{1}_{\lbrace \tilde{\tau}_2 > s-u_1\rbrace} \textbf{1}_{\lbrace u_1+\tilde{\tau}_2 \leq t \rbrace} \varphi(u_1,u_1+\tilde{\tau}_2,\cdot)) \right]\bigg) \bigg \vert_{u_1=\tau_1} \label{eq:WeakDynamicMiddleTerm4*} \\
	&\geq \textbf{1}_{\lbrace \tau_1 \leq s < \tau_2 \rbrace} e^{\int_0^{s-u_1} \tilde{\lambda}_v^2 dv} \mathcal{E}_s\bigg(\mathcal{E}_t \big( \textbf{1}_{\lbrace u_1 \leq   t \rbrace}\mathbb{E}^{\hat{P}}[\textbf{1}_{\lbrace \tilde{\tau}_2 >t-u_1 \rbrace} \varphi(u_1,u_1+\tilde{\tau}_2, \cdot) ])  \nonumber\\
	&+\mathcal{E}_t\left(\mathbb{E}^{\hat{P}}\left[ \textbf{1}_{\lbrace \tilde{\tau}_2 > s-u_1\rbrace} \textbf{1}_{\lbrace u_1+\tilde{\tau}_2 \leq t \rbrace} \varphi(u_1,u_1+\tilde{\tau}_2,\cdot) \right]\right) \bigg) \bigg \vert_{u_1=\tau_1} \label{eq:WeakDynamicMiddleTerm3}\\
	& \geq \textbf{1}_{\lbrace \tau_1 \leq s < \tau_2 \rbrace} e^{\int_0^{s-u_1} \tilde{\lambda}_v^2 dv} \mathcal{E}_s\bigg(\mathcal{E}_t \big( \mathbb{E}^{\hat{P}}[\textbf{1}_{\lbrace u_1 \leq   t \rbrace}\textbf{1}_{\lbrace \tilde{\tau}_2 >t-u_1 \rbrace} \varphi(u_1,u_1+\tilde{\tau}_2, \cdot) ] \nonumber \\
	&+\mathbb{E}^{\hat{P}}\left[ \textbf{1}_{\lbrace \tilde{\tau}_2 > s-u_1\rbrace} \textbf{1}_{\lbrace u_1+\tilde{\tau}_2 \leq t \rbrace} \varphi(u_1,u_1+\tilde{\tau}_2,\cdot) \right]\big) \bigg) \bigg \vert_{u_1=\tau_1} \label{eq:WeakDynamicMiddleTerm4}\\
	& = \textbf{1}_{\lbrace \tau_1 \leq s < \tau_2 \rbrace} e^{\int_0^{s-u_1} \tilde{\lambda}_v^2 dv} \notag \\
	& \quad \cdot \mathcal{E}_s\bigg(\mathcal{E}_t \big( \mathbb{E}^{\hat{P}}\left[\big(\textbf{1}_{\lbrace u_1 \leq   t \rbrace}\textbf{1}_{\lbrace \tilde{\tau}_2 >t-u_1 \rbrace} +\textbf{1}_{\lbrace \tilde{\tau}_2 > s-u_1\rbrace} \textbf{1}_{\lbrace u_1+\tilde{\tau}_2 \leq t \rbrace} \big)\varphi(u_1,u_1+\tilde{\tau}_2,\cdot) \right]\big) \bigg) \bigg \vert_{u_1=\tau_1} \nonumber  \\
	&=\textbf{1}_{\lbrace \tau_1 \leq s < \tau_2 \rbrace} e^{\int_0^{s-u_1} \tilde{\lambda}_v^2 dv} \mathcal{E}_s\big( \mathbb{E}^{\hat{P}}[\textbf{1}_{\lbrace \tilde{\tau}_2 >s-u_1 \rbrace} \varphi(u_1,u_1+\tilde{\tau}_2, \cdot) ] \big)  \big \vert_{u_1=\tau_1}. \label{eq:WeakDynamicMiddleTerm4**}
\end{align}}
Note that \red{\eqref{eq:WeakDynamicMiddleTerm4*} comes from $\textbf{1}_{\lbrace u_1 \leq s\rbrace} \textbf{1}_{\lbrace u_1 >t \rbrace}=0$ for $0 \leq s \leq t$ and Remark 2.4 (iv) in \cite{nh_2013}, together with \eqref{eq:MeasurabilityFixedOmegaHat}}. The inequality in \eqref{eq:WeakDynamicMiddleTerm3} follows by Lemma \ref{lem:francescayinglin} and the one in \eqref{eq:WeakDynamicMiddleTerm4} by the sublinearity of $\mathcal{E}_t$. 
As $ \textbf{1}_{\lbrace \tilde{\tau}_2 > t-u_1 \rbrace}\textbf{1}_{\lbrace \tilde{\tau}_2 > s-u_1 \rbrace}=\textbf{1}_{\lbrace \tilde{\tau}_2 > t-u_1 \rbrace}$, we have
\red{\begin{align}
	&\textbf{1}_{\lbrace u_1 \leq   t \rbrace}\textbf{1}_{\lbrace \tilde{\tau}_2 >t-u_1 \rbrace} +\textbf{1}_{\lbrace \tilde{\tau}_2 > s-u_1\rbrace} \textbf{1}_{\lbrace u_1+\tilde{\tau}_2 \leq t \rbrace} 
	=\textbf{1}_{\lbrace \tilde{\tau}_2 > s-u_1\rbrace} \textbf{1}_{\lbrace u_1 \leq   t \rbrace} \label{eq:Indicator2}.
\end{align}}
Then $\eqref{eq:WeakDynamicMiddleTerm4**}$ \red{follows by \eqref{eq:Indicator2} and the tower property of $\mathcal{E}_t$, together with the fact that} for $0 \leq s \leq t $ \red{we have}
\begin{equation*}
	\textbf{1}_{\lbrace \tau_1 \leq s < \tau_2 \rbrace} \textbf{1}_{\lbrace \tau_1 \leq t \rbrace}= \textbf{1}_{\lbrace \tau_1 \leq s < \tau_2 \rbrace}.
	\end{equation*} 
\end{proof}

Putting together Definition \ref{def:defiSublinearOperatorMulti} with Propositions \ref{prop:dynamicfirst}, \ref{prop:dynamicsecond} and \ref{prop:dynamicthird}, we then get the following theorem.
\begin{theorem}\label{thm:weakdynamicprogrammingprinciple}
	Let Assumption \ref{assumptionnutzNew} hold and $Y$ be an upper semianalytic function on $\tilde{\Omega}$ such that $Y$ is $\mathcal{G}^{\mathcal{P}, \red{(k)}}$-measurable and non-negative. Then for any $0 \leq s \leq t,k=1,...,N$ it holds
	\begin{equation} \notag
		\tilde{\mathcal{E}}^{k}_s(\tilde{\mathcal{E}}_t^{k}(Y)) \geq \tilde{\mathcal{E}}_s^{k}(Y) \quad \tilde{P} \text{-a.s. for all } \tilde{P} \in \tilde{\mathcal{P}}. 
	\end{equation}
\end{theorem}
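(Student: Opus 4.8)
The plan is to reduce the general-$k$ statement to the term-by-term structure already established for $k=2$. First I would apply Definition \ref{def:defiSublinearOperatorMulti} to write $\tilde{\mathcal{E}}^k_t(Y)=\bar{\varphi}_t(\tau_1,\dots,\tau_k,\cdot)$ for a measurable $\bar{\varphi}_t:(\mathbb{R}_+^k\times\Omega,\mathcal{B}(\mathbb{R}_+^k)\otimes\mathcal{F}_\infty^{\mathcal{P}})\to(\mathbb{R},\mathcal{B}(\mathbb{R}))$, the obvious generalization of \eqref{eq:defbarvarphi}; substituting this into $\tilde{\mathcal{E}}^k_s(\cdot)$ expresses $\tilde{\mathcal{E}}^k_s(\tilde{\mathcal{E}}^k_t(Y))$ as a sum of $k+1$ pieces carried by the pairwise disjoint events $\{s<\tau_1\}$, $\{\tau_j\le s<\tau_{j+1}\}$ for $j=1,\dots,k-1$ and $\{\tau_k\le s\}$, exactly as in \eqref{eq:WeakDynamicRewritten}. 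Because $\tilde{\mathcal{E}}^k_s(Y)$ decomposes along the same events and these events are disjoint, it is enough to prove the inequality on each event separately and then add the resulting $k+1$ inequalities, which yields the claim for every $\tilde{P}\in\tilde{\mathcal{P}}$.

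On $\{\tau_k\le s\}$ the argument of Proposition \ref{prop:dynamicfirst} applies verbatim: since $\{\tau_k\le s\}\subseteq\{\tau_j\le t\}$ for all $j\le k$ and $0\le s\le t$, every indicator $\textbf{1}_{\{s<\tau_1\}}$ and $\textbf{1}_{\{\tau_j\le t<\tau_{j+1}\}}$ occurring in $\bar{\varphi}_t$ vanishes on this event, and the tower property \eqref{towerNutz} of $\mathcal{E}$ collapses $\textbf{1}_{\{\tau_k\le s\}}\mathcal{E}_s(\bar{\varphi}_t(\boldsymbol{u}_{(k)},\cdot))\vert_{\boldsymbol{u}_{(k)}=\boldsymbol{\tau}_{(k)}}$ to $\textbf{1}_{\{\tau_k\le s\}}\mathcal{E}_s(\varphi(\boldsymbol{u}_{(k)},\cdot))\vert_{\boldsymbol{u}_{(k)}=\boldsymbol{\tau}_{(k)}}$, so equality holds. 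On $\{s<\tau_1\}$ I would run the proof of Proposition \ref{prop:dynamicsecond}: expand $\textbf{1}_{\{s<\tau_1\}}\mathbb{E}^{\hat{P}}[\textbf{1}_{\{s<\tau_1\}}\tilde{\mathcal{E}}^k_t(Y)]$ into the $k+1$ summands of $\tilde{\mathcal{E}}^k_t(Y)$; treat the no-default summand as in \eqref{eq:WeakDynamicProgrammingThird2}, the top summand $\{\tau_k\le t\}$ via Lemma \ref{lem:firstlemmasecondpiece}, and each intermediate summand $\{\tau_j\le t<\tau_{j+1}\}$ via the $k$-indexed generalization of Lemma \ref{lem:secondtermsecondpiece} (which itself rests on Lemmas \ref{lem:francescayinglin}, \ref{lem:secondlemmasecondpiece} and \ref{lem:thirdlemmasecondpiece}); then Lemma \ref{lem:francescayinglin} moves $\mathbb{E}^{\hat{P}}$ inside $\mathcal{E}_t$ and the sublinearity of $\mathcal{E}_t$ combines the summands, producing $\ge\textbf{1}_{\{s<\tau_1\}}\mathcal{E}_s(e^{\int_0^s\tilde{\lambda}^1_v dv}\mathbb{E}^{\hat{P}}[\textbf{1}_{\{s<\tau_1\}}Y])$, the first term of $\tilde{\mathcal{E}}^k_s(Y)$.

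On each intermediate event $\{\tau_j\le s<\tau_{j+1}\}$, $j=1,\dots,k-1$, I would follow Proposition \ref{prop:dynamicthird}, working with the conditioning vector $\boldsymbol{\tau}_{(j)}$ and the increments $\tilde{\tau}_{j+1},\dots,\tilde{\tau}_k$: expand $\bar{\varphi}_t$, discard the summands supported on $\{\tau_j>t\}$ since $\{\tau_j\le s\}\cap\{\tau_j>t\}=\emptyset$ for $s\le t$, evaluate the $\hat{P}$-integrals over $\tilde{\tau}_{j+1}$ as in \eqref{eq:WeakDynamicMiddleTerm1}--\eqref{eq:WeakDynamicMiddleTerm2} so that the factors $e^{\pm\int\tilde{\lambda}^{j+1}}$ cancel, apply Lemma \ref{lem:francescayinglin} and the sublinearity of $\mathcal{E}_t$, and finally use the indicator identity $\textbf{1}_{\{u_j\le t\}}\textbf{1}_{\{\tilde{\tau}_{j+1}>t-u_j\}}+\textbf{1}_{\{\tilde{\tau}_{j+1}>s-u_j\}}\textbf{1}_{\{u_j+\tilde{\tau}_{j+1}\le t\}}=\textbf{1}_{\{\tilde{\tau}_{j+1}>s-u_j\}}\textbf{1}_{\{u_j\le t\}}$ together with the tower property of $\mathcal{E}_t$ to land on the $j$-th term of $\tilde{\mathcal{E}}^k_s(Y)$, namely $\textbf{1}_{\{\tau_j\le s<\tau_{j+1}\}}\mathcal{E}_s(e^{\int_0^{s-u_j}\tilde{\lambda}^{j+1}_v dv}\mathbb{E}^{\hat{P}}[\textbf{1}_{\{\tilde{\tau}_{j+1}>s-u_j\}}\varphi(\boldsymbol{u}_{(j)},\boldsymbol{u}_{j}^{\tilde{\tau}},\cdot)])\vert_{\boldsymbol{u}_{(j)}=\boldsymbol{\tau}_{(j)}}$.

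The step I expect to be the main obstacle is this last one for general $j$: once two or more defaults remain after time $s$, the quantities involve nested $\mathbb{E}^{\hat{P}}$-expectations over $\tilde{\tau}_{j+1},\dots,\tilde{\tau}_k$, and one must carefully record in which $\sigma$-algebra each indicator set lives so that the correct recursive versions of Lemmas \ref{lem:secondlemmasecondpiece} and \ref{lem:thirdlemmasecondpiece} and the measurability facts of the type \eqref{eq:MeasurabilityFixedOmegaHat} can be invoked, exchanging $\mathbb{E}^{\hat{P}}$ with $\mathcal{E}_t$ one increment at a time. Modulo this bookkeeping, every individual exchange is a direct transcription of the $k=2$ computations carried out above, and adding the $k+1$ term-by-term inequalities over the disjoint events completes the proof.
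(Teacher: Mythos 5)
Your proposal is correct and follows essentially the same route as the paper: the paper proves the $k=2$ case by decomposing $\tilde{\mathcal{E}}^2_s(\tilde{\mathcal{E}}^2_t(Y))$ along the disjoint events $\{s<\tau_1\}$, $\{\tau_1\le s<\tau_2\}$, $\{\tau_2\le s\}$ via Propositions \ref{prop:dynamicfirst}, \ref{prop:dynamicsecond} and \ref{prop:dynamicthird} and then states that ``for $k=3,\dots,N$ the result follows by similar arguments''. What you have written out is precisely the content of that remark — the term-by-term treatment of the $k+1$ disjoint events, the recursion through the nested $\mathbb{E}^{\hat{P}}$-integrals over $\tilde{\tau}_{j+1},\dots,\tilde{\tau}_k$, and the observation that the only genuinely new bookkeeping occurs when several defaults remain after time $s$.
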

\begin{proof}
	For $k =3,...,N$ the result follows by similar arguments as in the Propositions \ref{prop:dynamicfirst}, \ref{prop:dynamicsecond} and \ref{prop:dynamicthird}.
\end{proof}

\red{
\begin{remark}
	Note that the strong dynamic programming principle is not any longer valid for the operator $\tilde{\mathcal{E}}^k$ for $k=1,...,N$. This follows immediately by combining point (i) of Proposition \ref{prop:OperatorCoincides} with Appendix A in \cite{bz_2019}, where a counterexample for a function $Y$ in the case $k=1$ is provided. We also refer to Remark 2.24 in \cite{bz_2019} for some further comments. \\Furthermore, whether the strong dynamic programming principle is satisfied or not, it depends not only on the derivative $Y$ but also on the set of priors $\mathcal{P}$ with respect to which we define the operator $\tilde{\mathcal{E}}^k$. It is then difficult to assess this property in full generality.  However, in the next section we study some basket credit derivatives and derive some sufficient conditions under which the strong tower property holds for these financial instruments. 
	\end{remark}
}

\section{Valuation of credit portfolio products in a multiple default setting under model uncertainty} \label{sec:valuation}
 In this section we study the valuation of basket credit derivatives in a multiple default setting under model uncertainty. For more details on these insurance products we refer to Section 9.1 in \cite{bielecki_rutkowski_2004}. \red{Moreover, there also exists an extensive literature on the modeling of CDOs, see e.g. \cite{cont_minca_2011}, \cite{Ehlers_Schoenbucher_2009}, \cite{filipovic_overbeck_schmidt_2010}, \cite{giesecke_goldberg_ding_2009}, \cite{sidenius_piterbarg_andersen_2008}.} Standard products in this class of derivatives are given by the collection of the so called $i$-th to default contingent claim CCT$^{(i)}$ with maturity $T>0$ for any $i=1,...,N$. In the framework outlined in Section \ref{sec:coxmodel}, we assume that the multiple defaults are represented by a collection of ordered stopping times $(\tau_i)_{1 \le i \le n}$ constructed as in \eqref{eq:DefiTauFixedProbability} and \eqref{eq:DefiTildeTauFixedProbability}. In this setting, for any $i=1,\dots,N$, the payoff of the claim CCT$^{(i)}$ is defined as follows:
\begin{itemize}
\item If $\tau_i \leq T$, then the claim pays the amount $Z^i_{\tau_i}$ at time $\tau_i$, where $Z^i$ is given by a $\mathbb{G}^{(N)}$-predictable process, and a $\mathcal{G}_T^{(N)}$-measurable amount $X^i$ at time $T$.
\item If $\tau_i >T$, the holder receives an amount $X$ at time $T$, where $X$ is given by a non-negative and $\mathcal{G}_T^{(N)}$-measurable random variable.
 \end{itemize}
  We now evaluate a special type of these contracts by using the operator $\tilde{\mathcal{E}}^N$ in the following setting. For the financial interpretation of $\tilde{\mathcal{E}}^N$ as pricing operator, we refer to Section \ref{sec:Superhedging}.\\
Let $T < \infty$ be the maturity time. We define the filtration 
$\mathbb{F}^{\mathcal{P}}:=(\mathcal{F}_t^{\mathcal{P}})_{t \in [0,T]}$ by
\begin{equation}
	\mathcal{F}_t^{\mathcal{P}}:=\mathcal{F}_t^* \vee \mathcal{N}_{T}^{\mathcal{P}}, \quad t \in [0,T], \label{filtrationP}
\end{equation}
where $\mathcal{N}_{T}^{\mathcal{P}}$ is the collection of sets which are $(P, \mathcal{F}_{T})$-null for all $P \in \mathcal{P}$. For fixed $i=1,...,N$  we consider a product associated to the $i$-th default event, defined in particular by the following payoff:
\begin{enumerate}
	\item $\textbf{1}_{\lbrace \tau_i >T\rbrace}Y$, where $Y$ is a $\mathcal{F}_T^{\mathcal{P}}$-measurable, non-negative and upper semianalytic function on $\Omega$, such that $\mathcal{E}(Y)< \infty$;
	\item $\textbf{1}_{\lbrace 0 < \tau_i \leq T\rbrace}Z_{\tau_i}$, where $Z=(Z_t)_{t \in [0,T]}$ is an $\mathbb{F}^{\mathcal{P}}$-predictable non-negative process on $\Omega$, such that the function $Z(t,\omega):=Z_t(\omega), (t,\omega) \in [0,T] \times \Omega$, is upper semianalytic and $\sup_{t \in [0,T]} \mathcal{E}(Z_t) < \infty$.
\end{enumerate}
The payoff defined above can be seen as a particular case of a CCT$^{(i)}$, for any fixed $i=1,\dots,N$, and as a generalization for multiple default times  of the insurance products studied in Section 2.4 of \cite{bz_2019}.
Before we evaluate this payoff for $N=2$, we state an auxiliary lemma which we need in the following. 
\begin{lemma} \label{lemma:ExpectationCalculations}
Fix $\omega \in \Omega$ and let $h: (\mathbb{R}_+ \times \Omega, \mathcal{B}(\mathbb{R}_+) \otimes \mathcal{F}^{\mathcal{P}}_{T}) \to (\mathbb{R}, \mathcal{B}(\mathbb{R}))$ be a function such that 
$$
 \mathbb{E}^{\hat{P}}\left[ \vert h(\tilde{\tau}_i(\omega, \hat{\omega}), \omega) \vert \right]<\infty, 
$$
$i=1,2$.
Then we have 
	\begin{equation} \label{eq:Expectation1}
		\mathbb{E}^{\hat{P}}\left[\textbf{1}_{\lbrace s <  \tilde{\tau}_{i}(\omega, \hat{\omega}) \leq t\rbrace} h(\tilde{\tau}_{i}(\omega, \hat{\omega}), \omega) \right]=\int_t^s h(x, \omega) e^{-\int_0^x\tilde{\lambda}^{i}_y(\omega)dy} \tilde{\lambda}^{i}_x(\omega) dx
	\end{equation}
	for all $0 \leq s \leq t \leq T$ and $i=1,2$.
\end{lemma}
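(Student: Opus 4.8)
The plan is to establish \eqref{eq:Expectation1} by identifying the $\hat{P}$-law of $\tilde{\tau}_i(\omega,\cdot)$ for fixed $\omega$, and then integrating $h$ against its density. Recall from \eqref{eq:DefiTauFixedProbability} that $\tilde{\tau}_i=\inf\{t>0:\int_0^t\tilde{\lambda}^i_s\,ds\ge E_i\}$, where $\tilde{\lambda}^i$ is $\mathbb{F}$-adapted and $E_i$ is unit-exponentially distributed under $\hat P$ and independent of $\mathcal{F}_\infty$ by Assumption \ref{asum:ExistenceUniformlyRV}. Fixing $\omega$, the path $s\mapsto\tilde\lambda^i_s(\omega)$ becomes a deterministic, non-negative, locally integrable function with $\int_0^\infty\tilde\lambda^i_s(\omega)\,ds=+\infty$ by \eqref{eq:AssumptionsLambda}; hence $\Lambda^i(t,\omega):=\int_0^t\tilde\lambda^i_s(\omega)\,ds$ is a continuous, non-decreasing function increasing from $0$ to $+\infty$. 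First I would observe that for fixed $\omega$, $\tilde{\tau}_i(\omega,\hat\omega)$ depends on $\hat\omega$ only through $E_i(\hat\omega)$, and $\{\tilde{\tau}_i(\omega,\hat\omega)>x\}=\{E_i(\hat\omega)>\Lambda^i(x,\omega)\}$ (this uses that $\Lambda^i(\cdot,\omega)$ is continuous so the infimum is attained and the avoidance of flat stretches is immaterial; more carefully, $\{\tilde\tau_i>x\}=\{\Lambda^i(x,\omega)<E_i\}$ up to the $\hat P$-null event where $E_i$ hits an endpoint of a flat stretch, which has probability zero since $E_i$ has a density).

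Next I would compute the survival function: for $x\ge 0$,
\begin{equation*}
\hat{P}\bigl(\tilde{\tau}_i(\omega,\cdot)>x\bigr)=\hat{P}\bigl(E_i>\Lambda^i(x,\omega)\bigr)=e^{-\Lambda^i(x,\omega)}=e^{-\int_0^x\tilde\lambda^i_y(\omega)\,dy},
\end{equation*}
using that $E_i\sim\mathrm{Exp}(1)$. Since $\Lambda^i(\cdot,\omega)$ is continuous and increases to $+\infty$, the law of $\tilde\tau_i(\omega,\cdot)$ under $\hat P$ is a proper probability measure on $(0,\infty)$ with distribution function $F_i(x)=1-e^{-\Lambda^i(x,\omega)}$. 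When $\tilde\lambda^i_\cdot(\omega)$ is, say, only locally integrable (not necessarily continuous as a function), $F_i$ is still absolutely continuous with Radon–Nikodym derivative $f_i(x)=\tilde\lambda^i_x(\omega)e^{-\Lambda^i(x,\omega)}$ for a.e.\ $x$, by the chain rule / fundamental theorem of calculus for absolutely continuous functions. Therefore, for the Borel-measurable $h(\cdot,\omega)$ (measurability in the first variable follows from the joint measurability assumption on $h$ and Fubini), the change-of-variables / image-measure formula gives
\begin{equation*}
\mathbb{E}^{\hat{P}}\bigl[\textbf{1}_{\{s<\tilde{\tau}_i(\omega,\cdot)\le t\}}\,h(\tilde{\tau}_i(\omega,\cdot),\omega)\bigr]=\int_{(s,t]}h(x,\omega)\,dF_i(x)=\int_s^t h(x,\omega)\,e^{-\int_0^x\tilde\lambda^i_y(\omega)\,dy}\,\tilde\lambda^i_x(\omega)\,dx,
\end{equation*}
which is \eqref{eq:Expectation1} (note the bounds in the statement read $\int_t^s$, presumably a typo for $\int_s^t$). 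The integrability hypothesis $\mathbb{E}^{\hat P}[|h(\tilde\tau_i(\omega,\cdot),\omega)|]<\infty$ guarantees that all the integrals above are well-defined and finite, justifying the application of the image-measure formula without sign restrictions on $h$.

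The main obstacle, such as it is, is being careful about the pointwise-in-$\omega$ identification of the conditional law: one must check that for the \emph{fixed} $\omega$ the function $\Lambda^i(\cdot,\omega)$ genuinely has the claimed regularity ($\mathcal P$-q.s.\ finiteness and divergence of the integrals in \eqref{eq:AssumptionsLambda} only hold for $P$-a.e.\ $\omega$, so the identity \eqref{eq:Expectation1} is itself implicitly an a.s.\ statement in $\omega$), and that the exceptional $\hat P$-null set where $E_i$ lands on a plateau of $\Lambda^i(\cdot,\omega)$ does not affect the expectation. Both points are routine given Assumption \ref{asum:ExistenceUniformlyRV} and the absolute continuity of $F_i$; the independence of $E_i$ from $\mathcal F_\infty$ is what allows us to treat $\Lambda^i(\cdot,\omega)$ as a deterministic function and $E_i$ as an independent exponential when conditioning on $\omega$. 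No deeper difficulty is expected.
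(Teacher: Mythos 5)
Your proof is correct and takes essentially the same route as the paper's: fixing $\omega$, you compute the conditional survival function $\hat{P}(\tilde{\tau}_i(\omega,\cdot)>x)=e^{-\int_0^x\tilde\lambda^i_y(\omega)\,dy}$ from the exponential law of $E_i$ and its independence from $\mathcal{F}_\infty$, read off the density $\tilde\lambda^i_x(\omega)e^{-\int_0^x\tilde\lambda^i_y(\omega)\,dy}$, and then apply the image-measure formula. The extra care you take about absolute continuity, the plateau null set, and the $\int_t^s$ vs.\ $\int_s^t$ typo is sound but not a different argument.
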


\begin{proof}
	Let $i =1,2$. For fixed $\omega \in \Omega$ and $u \geq 0$ it holds
	\begin{equation*}
		\hat{P}(\tilde{\tau}_{i}(\omega, \hat{\omega}) \leq u )= 1-\hat{P}(\tilde{\tau}_{i} (\omega, \hat{\omega}) > u )= 1-e^{-\int_0^u \tilde{\lambda}_y^{i}(\omega) dy}.
	\end{equation*}
	It follows that for fixed $\omega \in \Omega$ the density of $\tilde{\tau}_{i}(\cdot, \omega)$ is given by the function $u \to e^{-\int_0^u \tilde{\lambda}_y^{i}(\omega) dy}\tilde{\lambda}_u^{i}(\omega)$.
	 Therefore we have
	\begin{align*}
		\mathbb{E}^{\hat{P}}\left[\textbf{1}_{\lbrace s <  \tilde{\tau}_{i}(\omega,\hat{\omega}) \leq t\rbrace} h(\tilde{\tau}_{i}(\omega, \hat{\omega}),\omega) \right]	= \int_t^s h(x,\omega) e^{-\int_0^x\tilde{\lambda}^{i}_y(\omega)dy} \tilde{\lambda}^{i}_x(\omega) dx
	\end{align*}
	for $0 \leq s \leq t \leq T$.
\end{proof}

We start by valuating the payoff payed in case of no default.

\begin{prop}
	Let $Y$ be an $\mathcal{F}_T^{\mathcal{P}}$-measurable upper semianalytic function on $\Omega$ such that $\mathcal{E}(\vert Y \vert)< \infty$. Then for every $t \in [0,T]$ the random variables
	\begin{gather*}
		\textbf{1}_{\lbrace \tau_1 > T \rbrace}Y, \quad Y e^{-\int_t^T \tilde{\lambda}_s^1 ds}, \quad \textbf{1}_{\lbrace \tau_2 > T \rbrace}Y,  \\
		Y \left(\int_t^T e^{-\int_0^{T-x} \tilde{\lambda}_s^2 ds}e^{-\int_{t}^x \tilde{\lambda}_y^1 dy}\tilde{\lambda}_x^1dx+e^{-\int_{t}^{T}\tilde{\lambda}_s^1 ds}\right), \quad  Ye^{-\int_{t-u_1}^{T-u_1} \tilde{\lambda}_s^2 ds}
	\end{gather*}
	are upper semianalytic. Moreover, $\textbf{1}_{\lbrace \tau_1 >T \rbrace}Y$ and $\textbf{1}_{\lbrace \tau_2 >T \rbrace}Y$ belong to $L^1(\tilde{\Omega})$. If $\mathcal{P}$ satisfies Assumption \ref{assumptionnutzNew}, we have that
	\begin{equation} \label{eq:FirstPayoffi=1}
		\tilde{\mathcal{E}}_t^2\left(\textbf{1}_{\lbrace \tau_1 > T \rbrace}Y\right)= \textbf{1}_{\lbrace \tau_1 > t \rbrace} \mathcal{E}_t\left(Y e^{-\int_t^T \tilde{\lambda}_s^1 ds}\right)
	\end{equation} 
	and
	
	\begin{align} \label{eq:FirstPayoffi=2}
		\tilde{\mathcal{E}}_t^2\left(\textbf{1}_{\lbrace \tau_2 > T \rbrace}Y\right)&= \textbf{1}_{\lbrace \tau_1 > t \rbrace} \mathcal{E}_t \left( Y \left[\int_t^T e^{-\int_0^{T-x} \tilde{\lambda}_s^2 ds}e^{-\int_{t}^x \tilde{\lambda}_y^1 dy}\tilde{\lambda}_x^1dx+e^{-\int_{t}^{T}\tilde{\lambda}_s^1 ds}\right]\right) \notag \\
		&\ + \textbf{1}_{\lbrace \tau_1 \leq t < \tau_2\rbrace} \mathcal{E}_t \left( Ye^{-\int_{t-u_1}^{T-u_1} \tilde{\lambda}_s^2 ds}\right)\bigg \vert_{u_1= \tau_1}
	\end{align}

	for every $t \in [0,T]$.
\end{prop}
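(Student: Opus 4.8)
The plan is to identify, in each case, the function $\varphi$ representing the payoff in the sense of Lemma \ref{lemma:DecompositionGeneraln}, substitute it into the defining formula \eqref{eq:defiSublinearOperatorMulti} for $\tilde{\mathcal{E}}^2_t$, and then exploit that $t\le T$ to see that most of the three resulting terms vanish, while the surviving ones collapse after the explicit $\hat P$-expectations over $\tilde\tau_1,\tilde\tau_2$ are computed and the normalising exponential factors cancel.

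First I would dispose of the measurability and integrability assertions. Since $Y$ is upper semianalytic on $\Omega$, its immersion into $\tilde\Omega$ is upper semianalytic by point 3 of Lemma \ref{lemma:propertiesUpperSemianalytic}; the multipliers $\textbf{1}_{\{\tau_1>T\}}$, $\textbf{1}_{\{\tau_2>T\}}$, $e^{-\int_t^T\tilde\lambda^1_s ds}$, $e^{-\int_{t-u_1}^{T-u_1}\tilde\lambda^2_s ds}$ and $\int_t^T e^{-\int_0^{T-x}\tilde\lambda^2_s ds}e^{-\int_t^x\tilde\lambda^1_y dy}\tilde\lambda^1_x dx+e^{-\int_t^T\tilde\lambda^1_s ds}$ are non-negative and Borel-measurable (the integral being Borel in $\omega$, for each fixed $u_1$ as well, by Fubini--Tonelli and progressive measurability of $\tilde\lambda^1,\tilde\lambda^2$), so point 5 of Lemma \ref{lemma:propertiesUpperSemianalytic} gives upper semianalyticity of all listed products. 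For integrability, $\tilde{\mathcal{E}}(\textbf{1}_{\{\tau_i>T\}}|Y|)=\sup_{P\in\mathcal P}\mathbb E^{P\otimes\hat P}[\textbf{1}_{\{\tau_i>T\}}|Y|]\le\sup_{P\in\mathcal P}\mathbb E^{P}[|Y|]=\mathcal E(|Y|)<\infty$, since $Y$ depends only on $\omega$ and $\textbf{1}_{\{\tau_i>T\}}\le 1$; hence $\textbf{1}_{\{\tau_1>T\}}Y,\textbf{1}_{\{\tau_2>T\}}Y\in L^1(\tilde\Omega)$.

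For \eqref{eq:FirstPayoffi=1} I would take $\varphi(u_1,u_2,\omega):=\textbf{1}_{\{u_1>T\}}Y(\omega)$ and plug it into \eqref{eq:defiSublinearOperatorMulti} with $N=2$. On $\{\tau_1\le t\}$ one has $u_1=\tau_1\le t\le T$, so the $k=1$ term is zero, and on $\{\tau_2\le t\}$ one has $\tau_1<\tau_2\le t\le T$, so the last term is zero. In the first term $\textbf{1}_{\{t<\tau_1\}}\textbf{1}_{\{\tau_1>T\}}=\textbf{1}_{\{\tau_1>T\}}$ because $t\le T$, and $\mathbb E^{\hat P}[\textbf{1}_{\{\tau_1>T\}}Y](\omega)=Y(\omega)\hat P(\tilde\tau_1(\omega,\cdot)>T)=Y(\omega)e^{-\int_0^T\tilde\lambda^1_s(\omega)ds}$ since $\tau_1=\tilde\tau_1$ and $E_1$ is unit-exponential; multiplying by $e^{\int_0^t\tilde\lambda^1_s ds}$ and cancelling produces $\textbf{1}_{\{\tau_1>t\}}\mathcal E_t(Y e^{-\int_t^T\tilde\lambda^1_s ds})$, which is \eqref{eq:FirstPayoffi=1}. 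For \eqref{eq:FirstPayoffi=2} I would take $\varphi(u_1,u_2,\omega):=\textbf{1}_{\{u_2>T\}}Y(\omega)$; the last term again vanishes because $\{\tau_2\le t\}\subseteq\{\tau_2\le T\}$. In the $k=1$ term one has $\varphi(u_1,u_1+\tilde\tau_2,\omega)=\textbf{1}_{\{u_1+\tilde\tau_2>T\}}Y(\omega)$ and, since $t\le T$, $\textbf{1}_{\{\tilde\tau_2>t-u_1\}}\textbf{1}_{\{\tilde\tau_2>T-u_1\}}=\textbf{1}_{\{\tilde\tau_2>T-u_1\}}$, so on $\{\tau_1\le t<\tau_2\}$ (where $u_1=\tau_1\le t\le T$) the inner $\hat P$-expectation equals $Y e^{-\int_0^{T-u_1}\tilde\lambda^2_s ds}$; multiplying by $e^{\int_0^{t-u_1}\tilde\lambda^2_s ds}$ yields the second summand of \eqref{eq:FirstPayoffi=2}. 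For the first term I would compute $\mathbb E^{\hat P}[\textbf{1}_{\{t<\tau_1\}}\textbf{1}_{\{\tau_2>T\}}Y]$: writing $\tau_2=\tilde\tau_1+\tilde\tau_2$ and using that, for fixed $\omega$, $\tilde\tau_1$ and $\tilde\tau_2$ are independent under $\hat P$, I condition on $\tilde\tau_1$, split $\{t<\tilde\tau_1\}=\{t<\tilde\tau_1\le T\}\cup\{\tilde\tau_1>T\}$, apply Lemma \ref{lemma:ExpectationCalculations} with $i=1$ and $h(x,\omega)=e^{-\int_0^{T-x}\tilde\lambda^2_s ds}Y(\omega)$ on the first piece and $\hat P(\tilde\tau_1>T)=e^{-\int_0^T\tilde\lambda^1_y dy}$ on the second, getting $Y\big(\int_t^T e^{-\int_0^{T-x}\tilde\lambda^2_s ds}e^{-\int_0^x\tilde\lambda^1_y dy}\tilde\lambda^1_x dx+e^{-\int_0^T\tilde\lambda^1_y dy}\big)$; multiplying by $e^{\int_0^t\tilde\lambda^1_s ds}$ gives the first summand of \eqref{eq:FirstPayoffi=2}.

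The main obstacle will be that last convolution-type computation: carrying the case distinction $\{\tilde\tau_1\le T\}$ versus $\{\tilde\tau_1>T\}$ through the nested $\hat P$-expectation and matching it exactly against the two-summand right-hand side of \eqref{eq:FirstPayoffi=2}, keeping the exponential bookkeeping consistent. Everything else amounts to the elementary observation that, since $t\le T$, the "$\tau_1\le t<\tau_2$" and "$\tau_2\le t$" contributions either are identically zero or simplify once the normalising factors $e^{\int_0^{t-u_k}\tilde\lambda^{k+1}_s ds}$ are cancelled against the survival functions of $\tilde\tau_{k+1}$.
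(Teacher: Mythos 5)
Your argument is correct and follows essentially the same route as the paper: write the payoff via $\varphi$, substitute into \eqref{eq:defiSublinearOperatorMulti}, observe that the indicator structure together with $t\le T$ kills the spurious terms, and reduce the inner $\hat P$-expectations with Lemma \ref{lem:secondlemmasecondpiece} and Lemma \ref{lemma:ExpectationCalculations}. The only cosmetic deviation is for \eqref{eq:FirstPayoffi=1}, where you recompute directly from the definition whereas the paper invokes Proposition \ref{prop:OperatorCoincides} to reduce $\tilde{\mathcal E}^2_t$ to $\tilde{\mathcal E}^1_t$ and then cites the single-default result (Lemma 2.26 of \cite{bz_2019}) — both are valid and the underlying computation is the same.
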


\begin{proof}
Clearly $\textbf{1}_{\lbrace \tau_1 > T \rbrace}Y$ and $Y e^{-\int_t^T \tilde{\lambda}_s^1 ds}$ are upper semianalytic and belong to $L^1(\tilde{\Omega})$.
	As $\textbf{1}_{\lbrace \tau_1 > T \rbrace}Y$ is measurable with respect to $\mathcal{G}_{\infty}^{(1)}$, Proposition \ref{prop:OperatorCoincides} implies that
	\begin{equation} \label{eq:FirstPayoffi=1Coincides}
		\tilde{\mathcal{E}}_t^2\left(\textbf{1}_{\lbrace \tau_1 > T \rbrace}Y\right)= \tilde{\mathcal{E}}_t^1\left(\textbf{1}_{\lbrace \tau_1 > T \rbrace}Y\right)
	\end{equation}
	 for all $t \in [0,T]$. Thus \eqref{eq:FirstPayoffi=1} follows directly by \eqref{eq:FirstPayoffi=1Coincides} and Lemma 2.26 in \cite{bz_2019}. \\
	 By noting that $\textbf{1}_{\lbrace \tau_2 > T \rbrace}, e^{\int_0^t \tilde{\lambda}_s^1 ds}, \int_t^T e^{-\int_0^{T-x} \tilde{\lambda}_s^2 ds}e^{-\int_{t}^x \tilde{\lambda}_y^1 dy}\tilde{\lambda}_x^1dx, e^{-\int_{t}^{T}\tilde{\lambda}_s^1 ds}$ and $e^{-\int_{t-u_1}^{T-u_1} \tilde{\lambda}_s^2 ds}$ are non-negative Borel-measurable functions, it follows by point 4 and 5 in Lemma \ref{lemma:propertiesUpperSemianalytic} that the random variables
	 \begin{equation*}
	 	 \textbf{1}_{\lbrace \tau_2 > T \rbrace}Y, \quad Y \left(\int_t^T e^{-\int_0^{T-x} \tilde{\lambda}_s^2 ds}e^{-\int_{t}^x \tilde{\lambda}_y^1 dy}\tilde{\lambda}_x^1dx+e^{-\int_{t}^{T}\tilde{\lambda}_s^1 ds}\right), \quad Ye^{-\int_{t-u_1}^{T-u_1} \tilde{\lambda}_s^2 ds}
	 \end{equation*} are upper semianalytic. Moreover, it holds
	\begin{align*}
	 	\tilde{\mathcal{E}}(\vert Y \textbf{1}_{\lbrace \tau_2 > T\rbrace}\vert)=  \sup_{\tilde{P} \in \tilde{\mathcal{P}}} \mathbb{E}^{\tilde{P}} \left[ \vert Y \textbf{1}_{\lbrace \tau_2 > T\rbrace}\vert \right] \leq \sup_{\tilde{P} \in \tilde{\mathcal{P}}} \mathbb{E}^{\tilde{P}} \left[ \vert Y \vert \right]=  \sup_{{P} \in {\mathcal{P}}} \mathbb{E}^{{P}} \left[ \vert Y \vert \right]= \mathcal{E}(\vert Y \vert ) < \infty,
	 \end{align*}
	 which proves that $Y \textbf{1}_{\lbrace \tau_2 > T\rbrace}$ is in $L^1(\tilde{\Omega})$.
	From Definition \ref{def:defiSublinearOperatorMulti} we get
	\begin{align}
		\tilde{\mathcal{E}}_t^2\left(\textbf{1}_{\lbrace \tau_2 > T \rbrace}Y\right)&= \textbf{1}_{\lbrace \tau_1 > t \rbrace} \mathcal{E}_t \left( e^{\int_0^t \tilde{\lambda}^1_s ds} \mathbb{E}^{\hat{P}}\left[ \textbf{1}_{\lbrace \tau_1 > t \rbrace}  \textbf{1}_{\lbrace \tau_2 >T \rbrace}Y \right]\right) \nonumber \\
		&\ + \textbf{1}_{\lbrace \tau_1 \leq  t < \tau_2 \rbrace} \mathcal{E}_t \left( e^{\int_0^{t-u_1} \tilde{\lambda}^2_s ds} \mathbb{E}^{\hat{P}}\left[  \textbf{1}_{\lbrace \tilde{\tau}_2 >T-u_1 \rbrace}Y \right]\right) \bigg \vert_{u_1=\tau_1}\label{eq:FirstPayoffBreak}
	\end{align}
	 for $t \in [0,T]$. It holds
	\begin{align}
		\mathbb{E} ^{\hat{P}}\left[ \textbf{1}_{\lbrace \tau_1 > t \rbrace}  \textbf{1}_{\lbrace \tau_2 >T \rbrace}Y \right]&=Y\mathbb{E}^{\hat{P}}\left[ \textbf{1}_{\lbrace \tau_1 > t \rbrace}  \textbf{1}_{\lbrace \tau_2 >T \rbrace}\right] \nonumber\\
		&=Y\left(\mathbb{E}^{\hat{P}}\left[ \textbf{1}_{\lbrace t < \tau_1 \leq T  \rbrace}  \textbf{1}_{\lbrace \tau_2 >T \rbrace}\right] + \mathbb{E}^{\hat{P}}\left[ \textbf{1}_{\lbrace \tau_1 > T  \rbrace}  \textbf{1}_{\lbrace \tau_2 >T \rbrace}\right]\right) \notag \\
		 &=Y\left(\mathbb{E} ^{\hat{P}}\left[ \mathbb{E}^{\hat{P}}\left[ \textbf{1}_{\lbrace  t < u_1 \leq T\rbrace}  \textbf{1}_{\lbrace \tilde{\tau}_2 >T-u_1 \rbrace}\right]\big \vert_{u_1=\tau_1} \right] + \mathbb{E}^{\hat{P}}\left[ \textbf{1}_{\lbrace \tau_1 > T  \rbrace} \right]\right) \label{eq:FirstPayoffApplicationLemmaA} \\
		  &=Y\left(\mathbb{E}^{\hat{P}}\left[  \textbf{1}_{\lbrace t < \tau_1 \leq T \rbrace}  e^{-\int_0^{T-\tau_1} \tilde{\lambda}_s^2 ds} \right]+e^{-\int_0^{T}\tilde{\lambda}_s^1 ds}\right)  \notag \\
		  &=Y\left(\int_t^T e^{-\int_0^{T-x} \tilde{\lambda}_s^2 ds}e^{-\int_0^x \tilde{\lambda}_y^1 dy}\tilde{\lambda}_x^1dx+e^{-\int_0^{T}\tilde{\lambda}_s^1 ds}\right), \quad 0 \le t \le T. \label{eq:FirstPayoffApplicationLemma}
	\end{align}
Here we use Lemma \ref{lem:secondlemmasecondpiece} in \eqref{eq:FirstPayoffApplicationLemmaA} and Lemma \ref{lemma:ExpectationCalculations} in \eqref{eq:FirstPayoffApplicationLemma} with $h(x, \omega)=e^{-\int_0^{T-x} \tilde{\lambda}^2_s(\omega) ds}$.

Moreover, for any fixed $u_1>0$ and  $t \in [0,T]$ we have 
 \begin{equation}\label{eq:exphatPsecond}
 Y \mathbb{E}^{\hat{P}}\left[ \textbf{1}_{\lbrace \tilde{\tau}_2 >T-u_1 \rbrace} \right]= Y e^{-\int_0^{T-\tau_1} \tilde{\lambda}_s^2 ds} .
 \end{equation} 
 Then the result follows by putting together \eqref{eq:FirstPayoffBreak}, \eqref{eq:FirstPayoffApplicationLemma} and \eqref{eq:exphatPsecond}.
	\end{proof}
\begin{remark}
	\begin{enumerate}
	\item \red{Under suitable conditions on the process $\tilde{\lambda}^1$ it is possible to compute \eqref{eq:FirstPayoffi=1}. For example by following \cite{bo_2020} and \cite{fns_2019}, we can assume that $\tilde{\lambda}^1$ is given by a non-linear affine process and compute \eqref{eq:FirstPayoffi=1} via generalized Riccati equations.}
	\item \red{Furthermore, it is also possible to include non-linear affine intensities into the self-exciting framework we consider in Example \ref{example:SelfExciting}. To do so, following the same approach as in \cite{bo_2020} we assume that the process $\mu$ in \eqref{eq:TildeLambdaSelfExciting} is a non-linear affine CIR model in order to guarantee that $\mu$ is positive. For a detailed study of the CIR model under parameter uncertainty we refer to \cite{fns_2019}.}
	\end{enumerate}
\end{remark}
We now turn to the case when the default happens before time $T$. We start with the following lemma.
\begin{lemma} \label{lemma:SecondPayOffFixedProbability}
	Let $Z:=(Z_t)_{t \in [0,T]}$ be an $\mathbb{F}^{\mathcal{P}}$-predictable and non-negative or bounded process on $\Omega$, and fix a measure $\tilde{P} \in \tilde{\mathcal{P}}$ such that $\tilde{P}=P \otimes \hat{P}$ for a probability measure $P \in \mathcal{P}$. Then we have
	\begin{equation} \label{eq:secondPayoffi=1}
		\mathbb{E}^{\tilde{P}}\left[ \textbf{1}_{\lbrace t < \tau_1 \leq s \rbrace} Z_{\tau_1} \big \vert \mathcal{G}_t^{(2)}\right]= \textbf{1}_{\lbrace \tau_1 > t \rbrace} \mathbb{E}^{{{P}}}\left[ \int_t^s Z_u e^{-\int_t^u {\tilde{\lambda}_v^1} dv } {\tilde{\lambda}_u^1} du \bigg \vert \mathcal{F}_t \right] \quad \tilde{P} \text{-a.s.}
	\end{equation}
	and 
	\begin{align}
		&\mathbb{E}^{\tilde{P}}\left[ \textbf{1}_{\lbrace t < \tau_2 \leq s \rbrace} Z_{\tau_2} \big \vert \mathcal{G}_t^{(2)}\right]\notag \\&=
		\textbf{1}_{\lbrace \tau_1 > t \rbrace} e^{\int_0^t \tilde{\lambda}_v^1 dv }{\mathbb{E}}^{{P}}\left[\int_t^s \left(\int_{0}^{s-y} Z_{y+x}e^{-\int_0^x \tilde{
	\lambda}_w^2 dw }\tilde{\lambda}_x^2 dx\right)e^{-\int_0^y\tilde{\lambda}_w^1dw} \tilde{\lambda}_y^1dy\bigg\vert \mathcal{F}_t\right] \nonumber \\
		&\quad +\textbf{1}_{\lbrace \tau_1 \leq t < \tau_2 \rbrace} e^{\int_0^{t-\tau_1} \tilde{\lambda}_v^2 dv } 	\mathbb{E}^{{P}}\left[ \int_{t-u_1}^{s-u_1} Z_{u_1+x} e^{-\int_0^x \tilde{
	\lambda}_w^2 dw }\tilde{\lambda}_x^2 dx \bigg\vert \mathcal{F}_t \right]
\quad \tilde{P} \text{-a.s.}, \label{eq:secondPayoffi=2}
	\end{align}
	 for any $0 \leq t \leq s \leq T$. 
	\end{lemma}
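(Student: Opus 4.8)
The plan is to reduce both identities to $\mathbb{F}$-conditional expectations using the machinery already built in Section \ref{sec:twotimes}, specialized to $N=2$. For \eqref{eq:secondPayoffi=1}, I would first write $\textbf{1}_{\lbrace t < \tau_1 \leq s\rbrace} Z_{\tau_1}$ in the decomposition form of Lemma \ref{lemma:DecompositionGeneraln}: since $Z$ is $\mathbb{F}^{\mathcal{P}}$-predictable and $\tau_1 = \tilde{\tau}_1 = \inf\{u : \int_0^u \tilde{\lambda}^1_v dv \geq E_1\}$, the random variable $Z_{\tau_1}\textbf{1}_{\{t<\tau_1\leq s\}}$ is $\mathcal{G}^{(1)}_\infty$-measurable, so it suffices to handle the event $\{\tau_1 > t\}$. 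On that event I would apply Lemma \ref{lemma:GeneralMiddlePartAuxiliary} with $k=0$ (the no-default term) — or more directly the representation in Theorem \ref{prop:NTimesModelUncertainty} restricted to the $\{t<\tau_1\}$ piece — to get $\textbf{1}_{\{\tau_1>t\}} e^{\int_0^t \tilde\lambda^1_v dv}\,\mathbb{E}^{\tilde P}[\textbf{1}_{\{t<\tau_1\leq s\}}Z_{\tau_1}\mid \mathcal{F}_t]$, then pass to $\mathbb{E}^P[\mathbb{E}^{\hat P}[\cdot]\mid\mathcal{F}_t]$ via \eqref{eq:ConditionalExpectationFrancescaYinglin}. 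The inner $\hat P$-expectation is computed by Lemma \ref{lemma:ExpectationCalculations} with $h(x,\omega)=Z_x(\omega)$ (valid since $Z$ is $\mathbb{F}^{\mathcal{P}}$-predictable, hence $Z_x$ is $\mathcal{F}^{\mathcal{P}}_T$-measurable for $x\leq T$), giving $\mathbb{E}^{\hat P}[\textbf{1}_{\{t<\tau_1\leq s\}}Z_{\tau_1}] = \int_t^s Z_x e^{-\int_0^x \tilde\lambda^1_v dv}\tilde\lambda^1_x\,dx$. Multiplying by $e^{\int_0^t\tilde\lambda^1_v dv}$ and absorbing it into the integral to change the lower limits from $0$ to $t$ in the exponent yields \eqref{eq:secondPayoffi=1}.

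For \eqref{eq:secondPayoffi=2} I would proceed analogously but now split $\textbf{1}_{\{t<\tau_2\leq s\}}Z_{\tau_2}$ according to whether $\tau_1 > t$ or $\tau_1 \leq t$. On $\{\tau_1 \leq t < \tau_2\}$, apply Lemma \ref{lemma:GeneralMiddlePartAuxiliary} (or equivalently Proposition \ref{prop:SecondPartGeneralIII}) with $k=1$: writing $Z_{\tau_2} = Z_{\tau_1 + \tilde\tau_2}$ and conditioning first on $\mathcal{G}^{(1)}_t$, one obtains the factor $e^{\int_0^{t-\tau_1}\tilde\lambda^2_v dv}$ and a conditional expectation of $\textbf{1}_{\{t-\tau_1 < \tilde\tau_2 \leq s-\tau_1\}} Z_{\tau_1+\tilde\tau_2}$; then Lemma \ref{prop:SecondPartGeneral} freezes $\tau_1 = u_1$ and reduces this to an $\mathcal{F}_t$-conditional expectation, and Lemma \ref{lemma:ExpectationCalculations} applied to $h(x,\omega) = Z_{u_1+x}(\omega)$ over the interval $(t-u_1, s-u_1]$ produces $\int_{t-u_1}^{s-u_1} Z_{u_1+x}e^{-\int_0^x\tilde\lambda^2_w dw}\tilde\lambda^2_x\,dx$. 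On $\{\tau_1 > t\}$, I first condition on $\mathcal{F}_t \vee \sigma(E_1)$ and integrate out $\tilde\tau_2 = \tau_2 - \tau_1$ using the independence in Assumption \ref{asum:ExistenceUniformlyRV}: for fixed $\tau_1 = y$, $\mathbb{E}^{\hat P}[\textbf{1}_{\{t<y+\tilde\tau_2\leq s\}}Z_{y+\tilde\tau_2}] = \int_{(t-y)^+}^{s-y} Z_{y+x}e^{-\int_0^x\tilde\lambda^2_w dw}\tilde\lambda^2_x\,dx$, and since we are on $\{\tau_1 > t\}$ we have $y > t$ so $(t-y)^+ = 0$, giving the inner integral $\int_0^{s-y}$; then integrate over $\tilde\tau_1 = \tau_1$ using its $\hat P$-density $e^{-\int_0^y\tilde\lambda^1_w dw}\tilde\lambda^1_y$ (again Lemma \ref{lemma:ExpectationCalculations}), restricting to $\{t < \tau_1 \leq s\}$, and finally move to $\mathbb{E}^P[\cdot\mid\mathcal{F}_t]$ via \eqref{eq:ConditionalExpectationFrancescaYinglin}, absorbing the prefactor $e^{\int_0^t\tilde\lambda^1_v dv}$ as stated.

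The routine parts are the Fubini/density computations, all of which are licensed by Lemma \ref{lemma:ExpectationCalculations} and the independence structure. The main obstacle I anticipate is \emph{bookkeeping the nested change-of-variables correctly} on the $\{\tau_1 > t\}$ branch of \eqref{eq:secondPayoffi=2}: one must be careful that after conditioning on $\mathcal{G}^{(1)}_t = \mathcal{F}_t \vee \mathcal{H}^1_t$, on the event $\{\tau_1 > t\}$ the filtration reduces (via a Bielecki–Rutkowski-type argument, Lemma 5.1.2 in \cite{bielecki_rutkowski_2004}) so that conditioning on $\{t<\tau_1\}$ plus $\mathcal{F}_t$ is what remains, and that the substitution $\tau_2 = \tau_1 + \tilde\tau_2$ with the two successive densities $\tilde\lambda^1_y\,e^{-\int_0^y\tilde\lambda^1_w dw}$ and $\tilde\lambda^2_x\,e^{-\int_0^x\tilde\lambda^2_w dw}$ is applied in the right order, with the integration region $\{t < y \leq s,\ 0 < x \leq s-y\}$ correctly carved out from $\{t < y + x \leq s\}$. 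A secondary technical point is justifying that $h(x,\omega) = Z_{u_1+x}(\omega)$ satisfies the measurability and integrability hypotheses of Lemma \ref{lemma:ExpectationCalculations} uniformly; this follows from $\mathbb{F}^{\mathcal{P}}$-predictability of $Z$ together with the non-negativity (or boundedness) assumption, so that all the applications of Fubini–Tonelli are valid.
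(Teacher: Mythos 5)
Your plan is correct and follows essentially the same route as the paper: reduce via Theorem~\ref{prop:NTimesModelUncertainty} (equivalently via Lemma~\ref{lemma:GeneralMiddlePartAuxiliary} / Proposition~\ref{prop:SecondPartGeneralIII} plus equation~\eqref{eq:ConditionalExpectationFrancescaYinglin}), then evaluate the inner $\hat{P}$-expectations by the density formula in Lemma~\ref{lemma:ExpectationCalculations}, applied once for the $\{\tau_1\leq t<\tau_2\}$ branch and twice (for $\tilde\tau_2$, then $\tau_1$) for the $\{\tau_1>t\}$ branch, with the nested region $\{t<y\leq s,\ 0<x\leq s-y\}$ carved out exactly as you describe. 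The only cosmetic difference is for~\eqref{eq:secondPayoffi=1}: the paper observes that the integrand is $\mathcal{G}^{(1)}_\infty$-measurable, replaces conditioning on $\mathcal{G}^{(2)}_t$ by conditioning on $\mathcal{G}^{(1)}_t$, and then simply cites Lemma~2.27 of~\cite{bz_2019}, whereas you re-derive the single-default identity from the $N=2$ machinery (Lemma~\ref{lemma:GeneralMiddlePartAuxiliary} with $k=0$ and Lemma~\ref{lemma:ExpectationCalculations}); both are valid and equivalent in substance.
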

\begin{proof}
	Fix $t \in [0,T]$. Equation \eqref{eq:secondPayoffi=1} follows as
	\begin{equation*}
		\mathbb{E}^{\tilde{P}}\left[ \textbf{1}_{\lbrace t < \tau_1 \leq s \rbrace} Z_{\tau_1} \big \vert \mathcal{G}_t^{(2)}\right]=\mathbb{E}^{\tilde{P}}\left[ \textbf{1}_{\lbrace t < \tau_1 \leq s \rbrace} Z_{\tau_1} \big \vert \mathcal{G}_t^{(1)}\right]
	\end{equation*} 
	and by Lemma 2.27 in \cite{bz_2019}. By Theorem \ref{prop:NTimesModelUncertainty} it holds 
	\begin{align}
		\mathbb{E}^{\tilde{P}}\left[ \textbf{1}_{\lbrace t < \tau_2 \leq s \rbrace} Z_{\tau_2} \big \vert \mathcal{G}_t^{(2)}\right]&=\textbf{1}_{\lbrace \tau_1 > t \rbrace} e^{\int_0^t \tilde{\lambda}_v^1 dv } {\mathbb{E}}^{{P}}\left[\mathbb{E}^{\hat{P}} \left[ \textbf{1}_{\lbrace t < \tau_1 \rbrace} \textbf{1}_{\lbrace t < \tau_2 \leq s \rbrace} Z_{\tau_2}\right] \vert \mathcal{F}_t\right] \nonumber\\
		&+\textbf{1}_{\lbrace  \tau_1 \leq t < \tau_2 \rbrace} e^{\int_0^{t-\tau_1} \tilde{\lambda}_v^2 dv} \mathbb{E}^{\tilde{P}} \left[ \textbf{1}_{\lbrace t < u_1 + \tilde{\tau}_2 \leq s\rbrace} Z_{u_1+\tilde{\tau}_2} \vert \mathcal{F}_t \right] \vert_{u_1=\tau_1}.\label{eq:secondPayoffi=2,1}
	\end{align} 
We have
\red{\small{
\begin{align}
	{\mathbb{E}}^{{P}} \left[\mathbb{E}^{\hat{P}} \left[ \textbf{1}_{\lbrace t < \tau_1 \rbrace} \textbf{1}_{\lbrace t < \tau_2 \leq s \rbrace} Z_{\tau_2}\right] \big \vert \mathcal{F}_t\right] 
	&
	={\mathbb{E}}^{{P}} \left[\mathbb{E}^{\hat{P}} \left[\textbf{1}_{\lbrace t < \tau_1 \leq s  \rbrace} \mathbb{E}^{\hat{P}}\left[ \textbf{1}_{\lbrace  \tilde{\tau}_2 \leq s-u_1 \rbrace} Z_{u_1+\tilde{\tau}_2}\right]\vert_{u_1=\tau_1}\right] \bigg\vert \mathcal{F}_t\right] \notag \\
	&={\mathbb{E}}^{{P}} \left[\mathbb{E}^{\hat{P}} \left[\textbf{1}_{\lbrace t < \tau_1 \leq s \rbrace} \int_{0}^{s-\tau_1} Z_{\tau_1+x}e^{-\int_0^x \tilde{
	\lambda}_w^2 dw }\tilde{\lambda}_x^2 dx\right] \bigg\vert \mathcal{F}_t\right] \label{eq:secondPayoffi=2,1b*} \\
	&={\mathbb{E}}^{{P}}\left[\int_t^s \left(\int_{0}^{s-y} Z_{y+x}e^{-\int_0^x \tilde{\lambda}_w^2 dw }\tilde{\lambda}_x^2 dx\right)e^{-\int_0^y\tilde{\lambda}_w^1dw} \tilde{\lambda}_y^1dy\bigg\vert \mathcal{F}_t\right]. \label{eq:secondPayoffi=2,1c}
	\end{align}}}
	
\red{We use twice Lemma \ref{lemma:ExpectationCalculations}, namely in \eqref{eq:secondPayoffi=2,1b*} and \eqref{eq:secondPayoffi=2,1c}. }
Analogously we can compute the second term in $\eqref{eq:secondPayoffi=2,1}$ by using  $\eqref{eq:ConditionalExpectationFrancescaYinglin}$ and Lemma \ref{lemma:ExpectationCalculations}.
\end{proof}

We are now ready to give the following proposition.
\begin{prop}
	Let $Z:=(Z_t)_{t \in [0,T]}$ be an $\mathbb{F}^{\mathcal{P}}$-predictable process on $\Omega$ such that the function $Z(t,\omega):=Z_t(\omega)$, $(t, \omega) \in [0,T] \times \Omega$, is upper semianalytic and there exists $M \in \mathbb{R}_+$ such that $\sup_{t \in [0,T]} \vert Z_t \vert < M$ $\mathcal{P}$-q.s..
	Then for all $0 \leq t \leq s \leq T$ the random variables
	\begin{gather*}
		\textbf{1}_{\lbrace t < \tau_1 \leq s \rbrace} Z_{\tau_1}, \quad  \int_t^s Z_u e^{-\int_t^u \tilde{\lambda}^1_v dv } \tilde{\lambda}^1_u du, \quad  \textbf{1}_{\lbrace t < \tau_2 \leq s \rbrace} Z_{\tau_2},  \\
		\int_t^s \left(\int_{0}^{s-y} Z_{y+x}e^{-\int_0^x \tilde{\lambda}_w^2 dw }\tilde{\lambda}_x^2 dx\right)e^{-\int_0^y\tilde{\lambda}_w^1dw} \tilde{\lambda}_y^1dy, \quad \int_{t-u_1}^{s-u_1} Z_{u_1+x} e^{-\int_0^x \tilde{\lambda}_w^2 dw }\tilde{\lambda}_x^2 dx
	\end{gather*}
	are upper semianalytic. Moreover, $\textbf{1}_{\lbrace t < \tau_1 \leq s \rbrace} Z_{\tau_1}$ and $\textbf{1}_{\lbrace t < \tau_2 \leq s \rbrace} Z_{\tau_2}$ belong to $L^1(\tilde{\Omega})$.
		If $\mathcal{P}$ satisfies Assumption \ref{assumptionnutzNew}, we have
	\begin{equation} \label{eq:secondPayoffi=1Operator}
		\tilde{\mathcal{E}}_t^2\left( \textbf{1}_{\lbrace t < \tau_1 \leq s \rbrace} Z_{\tau_1} \right)= \textbf{1}_{\lbrace \tau_1 > t \rbrace} \mathcal{E}_t\left( \int_t^s Z_u e^{-\int_t^u {\tilde{\lambda}_v^1} dv } {\tilde{\lambda}^1}_u du \right) \quad \tilde{P} \text{-a.s. for all } \tilde{P} \in \tilde{\mathcal{P}}
	\end{equation}  
	and 
	\begin{small}
	\begin{align} \label{eq:secondPayoffi=2Operator}
		\tilde{\mathcal{E}}_t^2\left( \textbf{1}_{\lbrace t < \tau_2 \leq s \rbrace} Z_{\tau_2} \right)&=\textbf{1}_{\lbrace \tau_1 > t \rbrace} e^{\int_0^t \tilde{\lambda}_v^1 dv }\mathcal{E}_t \left(\int_t^s \left(\int_{0}^{s-y} Z_{y+x}e^{-\int_0^x \tilde{\lambda}_w^2 dw }\tilde{\lambda}_x^2 dx\right)e^{-\int_0^y\tilde{\lambda}_w^1dw} \tilde{\lambda}_y^1dy\right) \nonumber \\
		&+ \textbf{1}_{\lbrace \tau_1 \leq t < \tau_2 \rbrace}  e^{\int_0^{t-\tau_1} \tilde{\lambda}_v^2 dv } \mathcal{E}_t\left(\int_{t-u_1}^{s-u_1} Z_{u_1+x} e^{-\int_0^x \tilde{\lambda}_w^2 dw }\tilde{\lambda}_x^2 dx\right)\bigg \vert_{u_1=\tau_1}, 
	\end{align}
	\end{small}
	$ \tilde{P} \text{-a.s. for all } \tilde{P} \in \tilde{\mathcal{P}},$ for all $0 \leq t \leq T$. 
\end{prop}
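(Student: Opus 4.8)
The plan is to verify the three assertions in turn — upper semianalyticity of the listed random variables, their membership in $L^1(\tilde{\Omega})$, and the two pricing identities \eqref{eq:secondPayoffi=1Operator}--\eqref{eq:secondPayoffi=2Operator} — following the blueprint of the preceding proposition. For the upper-semianalyticity claims I would first extend $Z$ to $\mathbb{R}_+\times\Omega$ by $\bar Z(u,\omega):=Z(u,\omega)$ for $u\le T$ and $\bar Z(u,\omega):=0$ for $u>T$; since $\{Z>c\}$ is analytic for each $c$ and $[0,T]\times\Omega$ is Borel, $\bar Z$ remains upper semianalytic. As $(\omega,\hat\omega)\mapsto(\tau_i(\omega,\hat\omega),\omega)$ is Borel-measurable from $\tilde{\Omega}$ into $\mathbb{R}_+\times\Omega$, point 3 of Lemma \ref{lemma:propertiesUpperSemianalytic} gives that $\bar Z_{\tau_i}$ is upper semianalytic on $\tilde{\Omega}$; it coincides with $Z_{\tau_i}$ on $\{t<\tau_i\le s\}$, so multiplying by the non-negative Borel indicator $\textbf{1}_{\{t<\tau_i\le s\}}$ and invoking point 5 shows $\textbf{1}_{\{t<\tau_i\le s\}}Z_{\tau_i}$ is upper semianalytic. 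For the integral expressions I would apply point 6 of Lemma \ref{lemma:propertiesUpperSemianalytic} with the deterministic (hence Borel) Lebesgue kernel on the relevant interval: for instance the integrand $(u,\omega)\mapsto Z_u(\omega)\,e^{-\int_t^u\tilde{\lambda}_v^1(\omega)\,dv}\,\tilde{\lambda}_u^1(\omega)$ is the product of the upper semianalytic function $Z$ with a non-negative factor that is Borel in $(u,\omega)$, hence upper semianalytic by point 5, so $\int_t^s Z_u e^{-\int_t^u\tilde{\lambda}_v^1 dv}\tilde{\lambda}_u^1 du$ is upper semianalytic by point 6; the same argument, applied to the inner integral first and then to the outer one, handles the double integral and the $u_1$-parametrised integral.

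For $L^1(\tilde{\Omega})$, note that $\sup_{u\in[0,T]}|Z_u|<M$ holds $\mathcal{P}$-q.s. and every $\tilde{P}\in\tilde{\mathcal{P}}$ has the form $P\otimes\hat{P}$ with $P\in\mathcal{P}$, while the bound involves only $\omega$; hence the bound holds $\tilde{\mathcal{P}}$-q.s. and $\tilde{\mathcal{E}}(|\textbf{1}_{\{t<\tau_i\le s\}}Z_{\tau_i}|)\le M<\infty$ for $i=1,2$. This boundedness also legitimates every application of Lemma \ref{lemma:ExpectationCalculations} and the Fubini-type exchanges below. For \eqref{eq:secondPayoffi=1Operator}: since $Z$ is $\mathbb{F}^{\mathcal{P}}$-predictable and $\{t<\tau_1\le s\}$ forces $\tau_1\le T$, the random variable $\textbf{1}_{\{t<\tau_1\le s\}}Z_{\tau_1}$ is $\mathcal{G}^{\mathcal{P},(1)}_{\infty}$-measurable, so Proposition \ref{prop:OperatorCoincides} yields $\tilde{\mathcal{E}}_t^2(\textbf{1}_{\{t<\tau_1\le s\}}Z_{\tau_1})=\tilde{\mathcal{E}}_t^1(\textbf{1}_{\{t<\tau_1\le s\}}Z_{\tau_1})$, and the claim follows from Lemma 2.26 in \cite{bz_2019} applied to the predictable process $u\mapsto\textbf{1}_{\{u>t\}}Z_u$ with horizon $s$. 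Alternatively one expands Definition \ref{def:defiSublinearOperatorMulti}: on $\{\tau_1\le t\}$ the indicator $\textbf{1}_{\{t<\tau_1\le s\}}$ vanishes (as $t\le s$), leaving only $\textbf{1}_{\{\tau_1>t\}}\mathcal{E}_t(e^{\int_0^t\tilde{\lambda}_s^1 ds}\mathbb{E}^{\hat{P}}[\textbf{1}_{\{t<\tau_1\le s\}}Z_{\tau_1}])$; Lemma \ref{lemma:ExpectationCalculations} with $i=1$ and $h(x,\omega)=Z_x(\omega)$ evaluates $\mathbb{E}^{\hat{P}}[\textbf{1}_{\{t<\tau_1\le s\}}Z_{\tau_1}]=\int_t^s Z_x e^{-\int_0^x\tilde{\lambda}_y^1 dy}\tilde{\lambda}_x^1 dx$, and combining the exponentials and extracting the $\mathcal{F}_t$-measurable, non-negative factor $e^{\int_0^t\tilde{\lambda}_s^1 ds}$ from $\mathcal{E}_t$ gives the right-hand side.

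The identity \eqref{eq:secondPayoffi=2Operator} is the substantive step, and I would argue exactly as in the proof of the previous proposition. Write $\textbf{1}_{\{t<\tau_2\le s\}}Z_{\tau_2}=\varphi(\tau_1,\tau_2,\cdot)$ with $\varphi(u_1,u_2,\omega)=\textbf{1}_{\{t<u_2\le s\}}Z_{u_2}(\omega)$ and expand Definition \ref{def:defiSublinearOperatorMulti} for $N=2$. The third summand vanishes, since $\varphi(u_1,u_2,\cdot)=0$ for $u_2\le t$. In the second summand, on $\{\tau_1\le t<\tau_2\}$ one has $\textbf{1}_{\{\tilde{\tau}_2>t-u_1\}}\varphi(u_1,u_1+\tilde{\tau}_2,\cdot)=\textbf{1}_{\{t-u_1<\tilde{\tau}_2\le s-u_1\}}Z_{u_1+\tilde{\tau}_2}$, so Lemma \ref{lemma:ExpectationCalculations} with $i=2$ and $h(x,\omega)=Z_{u_1+x}(\omega)$ gives the inner $\hat{P}$-expectation as $\int_{t-u_1}^{s-u_1}Z_{u_1+x}e^{-\int_0^x\tilde{\lambda}_w^2 dw}\tilde{\lambda}_x^2 dx$; extracting the $\mathcal{F}_t$-measurable non-negative factor $e^{\int_0^{t-u_1}\tilde{\lambda}_v^2 dv}$ from $\mathcal{E}_t$ and evaluating at $u_1=\tau_1$ produces the second term on the right-hand side of \eqref{eq:secondPayoffi=2Operator}. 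In the first summand, on $\{\tau_1>t\}$ write $\tau_2=\tilde{\tau}_1+\tilde{\tau}_2$, note $\textbf{1}_{\{t<\tau_1\}}\textbf{1}_{\{t<\tau_2\le s\}}=\textbf{1}_{\{t<\tilde{\tau}_1\}}\textbf{1}_{\{\tilde{\tau}_1+\tilde{\tau}_2\le s\}}$, apply Lemma \ref{lem:secondlemmasecondpiece} to $\Psi(u_1,\tilde u_2,\omega):=\textbf{1}_{\{t<u_1\}}\textbf{1}_{\{u_1+\tilde u_2\le s\}}Z_{u_1+\tilde u_2}(\omega)$, and then Lemma \ref{lemma:ExpectationCalculations} twice (first with $i=2$ in the $\tilde{\tau}_2$-variable, then with $i=1$ in the $\tilde{\tau}_1$-variable) to obtain
\[
\mathbb{E}^{\hat{P}}[\textbf{1}_{\{t<\tau_1\}}\textbf{1}_{\{t<\tau_2\le s\}}Z_{\tau_2}]=\int_t^s\Bigl(\int_0^{s-y}Z_{y+x}e^{-\int_0^x\tilde{\lambda}_w^2 dw}\tilde{\lambda}_x^2 dx\Bigr)e^{-\int_0^y\tilde{\lambda}_w^1 dw}\tilde{\lambda}_y^1 dy;
\]
multiplying by $e^{\int_0^t\tilde{\lambda}_v^1 dv}$ and extracting this $\mathcal{F}_t$-measurable non-negative factor from $\mathcal{E}_t$ gives the first term on the right-hand side of \eqref{eq:secondPayoffi=2Operator}, and summing the three contributions finishes the proof.

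The main obstacle will be the bookkeeping in this last step: distinguishing the evaluation time $t$ from the identically-named payoff parameter, correctly splitting the two-dimensional $\hat{P}$-integral into iterated one-dimensional integrals of the form covered by Lemma \ref{lemma:ExpectationCalculations} via Lemma \ref{lem:secondlemmasecondpiece}, and checking that each exponential prefactor is $\mathcal{F}^*_t$-measurable so that it may be pulled out of $\mathcal{E}_t$ by positive homogeneity (cf. \eqref{repesssup}). None of these steps is deep, but the indices and the nested expectations demand care.
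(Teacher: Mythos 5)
Your proof is correct and follows essentially the same route as the paper: Proposition \ref{prop:OperatorCoincides} to reduce to the one-default operator for \eqref{eq:secondPayoffi=1Operator}, and for \eqref{eq:secondPayoffi=2Operator} the computation of the $\hat P$-expectations via Lemma \ref{lemma:ExpectationCalculations} and the Fubini-type Lemma \ref{lem:secondlemmasecondpiece}, with the $\mathcal{F}_t$-measurable exponential factors pulled out of $\mathcal{E}_t$ by positive homogeneity. The only organizational difference is that the paper first states the classical conditional-expectation version of \eqref{eq:secondPayoffi=2Operator} as a standalone result (Lemma \ref{lemma:SecondPayOffFixedProbability}) and then cites it, whereas you expand Definition \ref{def:defiSublinearOperatorMulti} directly; the underlying computation is identical.

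Two minor remarks. First, for \eqref{eq:secondPayoffi=1Operator} you cite Lemma~2.26 of \cite{bz_2019}, which in this paper is invoked for the recovery payoff $\textbf{1}_{\{\tau_1>T\}}Y$; the relevant result for a predictable rebate $\textbf{1}_{\{t<\tau_1\le s\}}Z_{\tau_1}$ is Corollary~2.28 of \cite{bz_2019}, which is what the paper uses. Since you also give a self-contained argument via Definition \ref{def:defiSublinearOperatorMulti} and Lemma \ref{lemma:ExpectationCalculations}, this does not affect correctness. Second, $Z$ is only assumed bounded, not non-negative, while Lemma \ref{lem:secondlemmasecondpiece} is stated for non-negative $\Psi$; you should note that the extension to bounded $Z$ is immediate by splitting into positive and negative parts, the required integrability being guaranteed by $|Z|<M$.
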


\begin{proof}
Note that $\textbf{1}_{\lbrace t < \tau_1 \leq s \rbrace} Z_{\tau_1}$ and $\int_t^s Z_u e^{-\int_t^u \tilde{\lambda}^1_v dv } \tilde{\lambda}^1_u du$ are upper semianalytic and in $L^1(\tilde{\Omega})$ by Corollary 2.28 in \cite{bz_2019}.
	Equation \eqref{eq:secondPayoffi=1Operator} follows by Corollary 2.28 in \cite{bz_2019}, as
	\begin{equation*}
		\tilde{\mathcal{E}}_t^2\left( \textbf{1}_{\lbrace t < \tau_1 \leq s \rbrace} Z_{\tau_1} \right)=\tilde{\mathcal{E}}_t^1\left( \textbf{1}_{\lbrace t < \tau_1 \leq s \rbrace} Z_{\tau_1} \right)
	\end{equation*}
	by Proposition \ref{prop:OperatorCoincides}. \\
 Points 5 and 6 in Lemma \ref{lemma:propertiesUpperSemianalytic} imply that
	\begin{equation*}
		\textbf{1}_{\lbrace t < \tau_2 \leq s \rbrace} Z_{\tau_2}, \quad 
		\int_t^s \left(\int_{0}^{s-y} Z_{y+x}e^{-\int_0^x \tilde{\lambda}_w^2 dw }\tilde{\lambda}_x^2 dx\right)e^{-\int_0^y\tilde{\lambda}_w^1dw} \tilde{\lambda}_y^1dy, \quad \int_{t-u_1}^{s-u_1} Z_{u_1+x} e^{-\int_0^x \tilde{\lambda}_w^2 dw }\tilde{\lambda}_x^2 dx
	\end{equation*} 
	are upper semianalytic. Clearly, $\textbf{1}_{\lbrace t < \tau_2 \leq s \rbrace} Z_{\tau_2}$ is in $L^1(\tilde{\Omega})$. Moreover, equation \eqref{eq:secondPayoffi=2Operator} follows by Lemma \ref{lemma:SecondPayOffFixedProbability}.
\end{proof}

\subsection{Sufficient conditions for the classical tower property}
In this section we show that the operator $\tilde{\mathcal{E}}_t^2$ applied to CCT$^{(i)}$ maps $L^1(\tilde{\Omega})$ on to $L^1(\tilde{\Omega})$, and derive some sufficient conditions under which the classical tower property holds for such payoffs. 
\begin{prop} \label{prop:PayoffClassicalTowerII}
	Let $Z:=(Z_t)_{t \in [0,T]}$ be an $\mathbb{F}^{\mathcal{P}}$-predictable non-negative process on $\Omega$ such that the function $Z(t,\omega):=Z_t(\omega)$, $(t, \omega) \in [0,T] \times \Omega$, is upper semianalytic and there exists $M \in \mathbb{R}_+$ such that $\sup_{t \in [0,T]} Z_t  < M$ $P$-a.s. for all $P \in \mathcal{P}$.  If $\mathcal{P}$ satisfies Assumption \ref{assumptionnutzNew}, then we have
	\begin{equation*}
		\tilde{\mathcal{E}}_t^2(\textbf{1}_{\lbrace 0< {\tau}_2 < T \rbrace}Z_{\tau_2}) \in L^1(\tilde{\Omega}).
	\end{equation*} 
	If in addition
	\begin{align}
		&\int_s^t \mathcal{E}_t  \left ( \int_{t-x}^{T-x} Z_{x+v}e^{-\int_0^{v} \tilde{\lambda}_w^2 dw} \tilde{\lambda}_{v}^2 dv\right)e^{-\int_0^x \tilde{\lambda}^1_y dy}\tilde{\lambda}^1_x dx \notag\\
	&=\mathcal{E}_t \left( \int_s^t \left( \int_{t-x}^{T-x} Z_{x+v}e^{-\int_0^{v} \tilde{\lambda}_w^2 dw} \tilde{\lambda}_{v}^2 dv\right)e^{-\int_0^x \tilde{\lambda}^1_y dy}\tilde{\lambda}^1_x dx\right)\label{eq:Payoff2ConditionI}
	\end{align}
	and
	\small{
	\begin{align}
		&\mathcal{E}_t\left(\int_t^T\left( \int_0^{T-x} Z_{v+x}e^{-\int_0^{v} \tilde{\lambda}_w^2 dw }\tilde{\lambda}_{v}^2 dv\right)e^{-\int_0^x \tilde{\lambda}_y^1 dy}\tilde{\lambda}_x^1 dx \right) \notag\\ 
		&\quad + \mathcal{E}_t \left(\int_s^t \left( \int_{t-x}^{T-x} Z_{v+x} e^{-\int_0^{v} \tilde{\lambda}_w^2 dw }\tilde{\lambda}_{v}^2 dv \right) e^{-\int_0^x \tilde{\lambda}_y^1dy}\tilde{\lambda}_x^1dx \right) \nonumber \\
		&=\mathcal{E}_t \Bigg( \int_t^T\left( \int_0^{T-x} Z_{v+x}e^{-\int_0^{v} \tilde{\lambda}_w^2 dw }\tilde{\lambda}_{v}^2 dv\right)e^{-\int_0^x \tilde{\lambda}_y^1 dy}\tilde{\lambda}_x^1 dx  \notag \\ 
		& \qquad \quad + \int_t^T\left( \int_0^{T-x} Z_{v+x}e^{-\int_0^{v} \tilde{\lambda}_w^2 dw }\tilde{\lambda}_{v}^2 dv\right)e^{-\int_0^x \tilde{\lambda}_y^1 dy}\tilde{\lambda}_x^1 dx  \Bigg) \label{eq:Payoff2ConditionII}
	\end{align}}
	for all $0 \leq s \leq t \leq T$, then  
	\begin{equation}\label{eq:towerprop}
	\tilde{\mathcal{E}}_s^2\left(\tilde{\mathcal{E}}_t^2 \left(\textbf{1}_{\lbrace 0< {\tau}_2 < T \rbrace}Z_{\tau_2}\right)\right)	= \tilde{\mathcal{E}}_s^2\left( \textbf{1}_{\lbrace 0< {\tau}_2 < T \rbrace}Z_{\tau_2}\right)	 \quad \tilde{P} \text{-a.s. for all } \tilde{P} \in \tilde{\mathcal{P}}
	\end{equation}
	for all $0 \leq s \leq t \leq T$. 
\end{prop}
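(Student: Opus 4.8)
The plan is to split the payoff $\mathbf{1}_{\{0<\tau_2<T\}}Z_{\tau_2}$ according to which of the three regimes $\{s<\tau_1\}$, $\{\tau_1\le s<\tau_2\}$, $\{\tau_2\le s\}$ prevails at the inner time $t$, exactly as in the representation \eqref{eq:secondPayoffi=2Operator}, and to check that on each regime the inner/outer composition $\tilde{\mathcal{E}}^2_s(\tilde{\mathcal{E}}^2_t(\cdot))$ collapses to $\tilde{\mathcal{E}}^2_s(\cdot)$ using the exchange hypotheses \eqref{eq:Payoff2ConditionI}--\eqref{eq:Payoff2ConditionII}. First I would record that $\tilde{\mathcal{E}}^2_t(\mathbf{1}_{\{0<\tau_2<T\}}Z_{\tau_2})\in L^1(\tilde{\Omega})$: by \eqref{eq:secondPayoffi=2Operator} and the assumed bound $\sup_{t\in[0,T]}Z_t<M$ $\mathcal{P}$-q.s., each of the three summands is dominated $\mathcal{P}$-q.s. by a constant times $M$ (the inner $dx$-integrals of the exponential densities are bounded by $1$), and $\mathcal{E}_t$ as well as $\mathbb{E}^{\hat P}$ preserve such bounds; then $\tilde{\mathcal{E}}(|\tilde{\mathcal{E}}^2_t(\cdot)|)\le cM<\infty$. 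In particular the left-hand side of \eqref{eq:towerprop} is well defined.

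Next I would apply $\tilde{\mathcal{E}}^2_s$ to the three-term expression for $\tilde{\mathcal{E}}^2_t(\mathbf{1}_{\{0<\tau_2<T\}}Z_{\tau_2})$. Since $\{s<\tau_1\}$, $\{\tau_1\le s<\tau_2\}$, $\{\tau_2\le s\}$ are disjoint $\mathcal{G}^{(2)}_s$-measurable events, Proposition \ref{prop:ConsistencyAssumptions}(3) lets me bring the indicators outside and treat each block separately. On $\{\tau_2\le s\}$ one uses $\{\tau_2\le s\}\subseteq\{\tau_1\le s\}\subseteq\{\tau_2\le t\}^c{}^c$ — more precisely $\{\tau_2\le s\}\subseteq\{\tau_2\le t\}$ — so that inside $\tilde{\mathcal{E}}^2_t(\cdot)$ only the third summand survives and equals $\mathcal{E}_t(\varphi(\cdot))$, and then the tower property \eqref{towerNutz} of the Nutz--van Handel operator $\mathcal{E}$ together with $\mathbf{1}_{\{\tau_2\le s\}}\mathbf{1}_{\{\tau_2\le t\}}=\mathbf{1}_{\{\tau_2\le s\}}$ gives collapse, exactly as in Proposition \ref{prop:dynamicfirst}. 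On $\{s<\tau_1\}$ the computation mirrors Proposition \ref{prop:dynamicsecond}: one expands $\mathbb{E}^{\hat P}[\mathbf{1}_{\{s<\tau_1\}}\tilde{\mathcal{E}}^2_t(\cdot)]$ into its three pieces, uses $\mathbf{1}_{\{s<\tau_1\}}\mathbf{1}_{\{t<\tau_1\}}\mathcal{E}_t(\cdot)$-type identities to kill the exponential prefactors (Lemma \ref{lem:secondlemmasecondpiece}, Lemma \ref{lemma:ExpectationCalculations}), and then — this is where \eqref{eq:Payoff2ConditionII} enters — the exchange of $\mathcal{E}_t$ with the $\hat{P}$-integrals is upgraded from the inequality of Lemma \ref{lem:francescayinglin} to an equality, so the three pieces recombine under a single $\mathcal{E}_t$ and the tower property of $\mathcal{E}$ finishes the job. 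On $\{\tau_1\le s<\tau_2\}$ one argues as in Proposition \ref{prop:dynamicthird}, fixing $u_1=\tau_1$ and working with the function $\bar\varphi_t(u_1,u_1+\tilde\tau_2,\cdot)$; here the relevant exchange identity is \eqref{eq:Payoff2ConditionI}, which turns the $\int_s^t(\cdots)dx$ and $\mathcal{E}_t$ into commuting operations and lets the indicator algebra $\mathbf{1}_{\{\tilde\tau_2>t-u_1\}}\mathbf{1}_{\{\tilde\tau_2>s-u_1\}}=\mathbf{1}_{\{\tilde\tau_2>t-u_1\}}$ plus $\mathbf{1}_{\{u_1+\tilde\tau_2\le t\}}$ reassemble into $\mathbf{1}_{\{\tilde\tau_2>s-u_1\}}$ after applying $\mathcal{E}_t$, cf. \eqref{eq:Indicator2}.

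Finally I would add the three collapsed blocks back together and recognize the result as the three-term expansion of $\tilde{\mathcal{E}}^2_s(\mathbf{1}_{\{0<\tau_2<T\}}Z_{\tau_2})$, again via \eqref{eq:secondPayoffi=2Operator} with $s$ in place of $t$; the disjointness of the three events guarantees no overlap is double-counted. The main obstacle, as in Propositions \ref{prop:dynamicsecond} and \ref{prop:dynamicthird}, is the exchange of integration order between the sublinear operator $\mathcal{E}_t$ and the $\hat P$-expectation: Lemma \ref{lem:francescayinglin} only provides $\mathbb{E}^{\hat P}[\mathcal{E}_t(Z)]\ge\mathcal{E}_t(\mathbb{E}^{\hat P}[Z])$, so equality of the iterated operators cannot hold unconditionally — this is precisely why the hypotheses \eqref{eq:Payoff2ConditionI} and \eqref{eq:Payoff2ConditionII} are imposed, and the bookkeeping of exactly which $dx$-integral must be pulled through which $\mathcal{E}_t$ (and on which of the disjoint events) is the delicate part; everything else is the same indicator-algebra-plus-tower-property routine already carried out in Section \ref{section:DynamicProgramming}.
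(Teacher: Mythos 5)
Your overall strategy --- upgrade, for this specific payoff, the inequalities that appear in the weak dynamic programming argument of Section \ref{section:DynamicProgramming} to equalities, using the conditional Fubini identity of Lemma \ref{lemma:ExpectationCalculations} plus the additional hypotheses where needed --- is exactly the paper's. The integrability argument is also fine (the paper uses the even shorter bound $\tilde{\mathcal{E}}^2_t(|Z_{\tau_2}|)\le\tilde{\mathcal{E}}^2_t(M)=M$, but your per-term domination works as well). And the treatment of the $\{\tau_2\le s\}$ block via the tower property of $\mathcal{E}$ is correct.

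The genuine error is in your allocation of the two extra hypotheses. You assign \eqref{eq:Payoff2ConditionII} to the $\{s<\tau_1\}$ block (correct) and \eqref{eq:Payoff2ConditionI} to the $\{\tau_1\le s<\tau_2\}$ block (incorrect). In fact \emph{both} hypotheses are consumed in the $\{s<\tau_1\}$ block: condition \eqref{eq:Payoff2ConditionI} is exactly what upgrades inequality \eqref{eq:lemma2} inside Lemma \ref{lem:secondtermsecondpiece} (which is invoked within Proposition \ref{prop:dynamicsecond}), while condition \eqref{eq:Payoff2ConditionII} upgrades the sublinearity step \eqref{eq:ThirdInequality}. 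The tell is the $\int_s^t(\cdots)e^{-\int_0^x\tilde\lambda^1_y\,dy}\tilde\lambda^1_x\,dx$ structure in \eqref{eq:Payoff2ConditionI}: that is a $\hat P$-average over $\tau_1\in(s,t]$, which can only arise when $\tau_1$ has not yet occurred at time $s$, i.e., on $\{s<\tau_1\}$. On $\{\tau_1\le s<\tau_2\}$ the value $u_1=\tau_1\le s$ is frozen and no such $\tau_1$-integral is present; the relevant inequalities \eqref{eq:WeakDynamicMiddleTerm3}--\eqref{eq:WeakDynamicMiddleTerm4} in Proposition \ref{prop:dynamicthird} are upgraded to equalities automatically for this payoff (Lemma \ref{lemma:ExpectationCalculations} produces $\mathcal{F}_t$-measurable integrands, so $\mathcal{E}_t$ acts trivially), with no extra hypothesis needed. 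If you carried your plan out literally you would be imposing \eqref{eq:Payoff2ConditionI} on a block where it is vacuous while leaving a genuine inequality unaddressed in the $\{s<\tau_1\}$ block.

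A minor imprecision: you attribute the need for \eqref{eq:Payoff2ConditionII} to Lemma \ref{lem:francescayinglin}. The inequality coming from Lemma \ref{lem:francescayinglin} (i.e.\ \eqref{eq:SecondInequality}) in fact upgrades automatically for this payoff; what \eqref{eq:Payoff2ConditionII} actually repairs is the subsequent sublinearity inequality \eqref{eq:ThirdInequality}, which merges the three $\mathcal{E}_t$-terms into a single $\mathcal{E}_t$ of the sum --- your phrase ``recombine under a single $\mathcal{E}_t$'' is the right picture, but the attributed source is off.
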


\begin{proof}
	Let $t \in [0,T]$. We start by proving that
\begin{equation} \label{eq:Payoff2GoalIntegrability}
	\tilde{\mathcal{E}}\left(  \tilde{\mathcal{E}}_t^2\left( \textbf{1}_{\lbrace 0< {\tau}_2 < T \rbrace}Z_{\tau_2}  \right)  \right) < \infty.
\end{equation}
We have
\begin{align}
	\tilde{\mathcal{E}}\left(  \tilde{\mathcal{E}}^2_t(\textbf{1}_{\lbrace 0< {\tau}_2 < T \rbrace}Z_{\tau_2}) \right)
	&=\sup_{\tilde{P} \in \tilde{\mathcal{P}}} \mathbb{E}^{\tilde{P}} \left[ \vert \tilde{\mathcal{E}}^2_t( \textbf{1}_{\lbrace 0< {\tau}_2 < T \rbrace}Z_{\tau_2})\vert \right] \notag \\
	&\leq \sup_{\tilde{P} \in \tilde{\mathcal{P}}} \mathbb{E}^{\tilde{P}} \left[  \tilde{\mathcal{E}}^2_t( \vert Z_{\tau_2}\vert) \right] \notag \\
	&\leq \sup_{\tilde{P} \in \tilde{\mathcal{P}}} \mathbb{E}^{\tilde{P}} \left[  \tilde{\mathcal{E}}^2_t( M) \right] \notag \\
	&=M < \infty.
	\end{align}
To derive the classical tower property for the payoff function $\textbf{1}_{\lbrace 0 < \tau_2 \leq T\rbrace }Z_{\tau_2}$ we need to prove that in this case the inequalities which we use in Section \ref{section:DynamicProgramming} to derive the weak dynamic programming principle are indeed equalities. Let $0 \leq s \leq t \leq T$. First, we prove that \eqref{eq:WeakDynamicProgrammingThird7} in Lemma \ref{lem:secondtermsecondpiece} is indeed an equality, i.e.,
\begin{align}
	&\mathbb{E}^{\hat{P}} \left[ \mathcal{E}_t \left( \textbf{1}_{\lbrace s < u_1 \leq t\rbrace} \textbf{1}_{\lbrace \tilde{\tau}_2 > t-u_1\rbrace}  e^{\int_0^{t-u_1} \tilde{\lambda}_v^2 dv } \mathbb{E}^{\hat{P}} \left[ \textbf{1}_{\lbrace \tilde{\tau}_2>t-u_1 \rbrace} \textbf{1}_{\lbrace 0 < u_1 + \tilde{\tau}_2 \leq T \rbrace} Z_{u_1+\tilde{\tau}_2}\right]\right) \right] \notag \\
	&=\mathcal{E}_t \left( \mathbb{E}^{\hat{P}} \left[ \textbf{1}_{\lbrace s < u_1 \leq t\rbrace} \textbf{1}_{\lbrace \tilde{\tau}_2 > t-u_1\rbrace}  e^{\int_0^{t-u_1} \tilde{\lambda}_v^2 dv } \mathbb{E}^{\hat{P}} \left[ \textbf{1}_{\lbrace \tilde{\tau}_2>t-u_1 \rbrace} \textbf{1}_{\lbrace 0 < u_1 + \tilde{\tau}_2 \leq T \rbrace} Z_{u_1+\tilde{\tau}_2}\right]\right] \right). \label{eq:PayoffII1}
\end{align}
By using Lemma \ref{lemma:ExpectationCalculations} we can rewrite the left-hand side of \eqref{eq:PayoffII1} as
\red{\begin{align}
	&\mathbb{E}^{\hat{P}} \left[ \mathcal{E}_t \left( \textbf{1}_{\lbrace s < u_1 \leq t\rbrace} \textbf{1}_{\lbrace \tilde{\tau}_2 > t-u_1\rbrace}  e^{\int_0^{t-u_1} \tilde{\lambda}_v^2 dv } \mathbb{E}^{\hat{P}} \left[ \textbf{1}_{\lbrace \tilde{\tau}_2>t-u_1 \rbrace} \textbf{1}_{\lbrace 0 < u_1 + \tilde{\tau}_2 \leq T \rbrace} Z_{u_1+\tilde{\tau}_2}\right]\right) \right] \notag \\
	&=\mathbb{E}^{\hat{P}} \left[ \textbf{1}_{\lbrace \tilde{\tau}_2 > t-u_1\rbrace}   \right] \mathcal{E}_t \left( \textbf{1}_{\lbrace s < u_1 \leq t\rbrace}  e^{\int_0^{t-u_1} \tilde{\lambda}_v^2 dv }  \int_{t-u_1}^{T-u_1} Z_{u_1+v}e^{-\int_0^{v} \tilde{\lambda}_w^2 dw} \tilde{\lambda}_{v}^2 dv \right) \label{eq:PayoffII3_2} \\
	&=\mathcal{E}_t \left( \textbf{1}_{\lbrace s < u_1 \leq t\rbrace} \int_{t-u_1}^{T-u_1} Z_{u_1+v}e^{-\int_0^{v} \tilde{\lambda}_w^2 dw} \tilde{\lambda}_{v}^2 dv\right),\label{eq:PayoffII3} 
\end{align}}
where \eqref{eq:PayoffII3_2} follows since $\mathcal{E}_t \left( \textbf{1}_{\lbrace s < u_1 \leq t\rbrace}  e^{\int_0^{t-u_1} \tilde{\lambda}_v^2 dv }  \int_{t-u_1}^{T-u_1} Z_{u_1+v}e^{-\int_0^{v} \tilde{\lambda}_w^2 dw} \tilde{\lambda}_{v}^2 dv \right)$ is independent of $\hat{\omega}$. We now turn to the right-hand side of \eqref{eq:PayoffII1}. By Lemma \ref{lemma:ExpectationCalculations} and similar calculations as for the left-hand side of \eqref{eq:PayoffII1}, we have that 
\red{\begin{align}
	&\mathcal{E}_t \left( \mathbb{E}^{\hat{P}} \left[ \textbf{1}_{\lbrace s < u_1 \leq t\rbrace} \textbf{1}_{\lbrace \tilde{\tau}_2 > t-u_1\rbrace}  e^{\int_0^{t-u_1} \tilde{\lambda}_v^2 dv } \mathbb{E}^{\hat{P}} \left[ \textbf{1}_{\lbrace \tilde{\tau}_2>t-u_1 \rbrace} \textbf{1}_{\lbrace 0 < u_1 + \tilde{\tau}_2 \leq T \rbrace} Z_{u_1+\tilde{\tau}_2}\right]\right] \right) \notag \\
	&=\mathcal{E}_t \left(\textbf{1}_{\lbrace s < u_1 \leq t\rbrace} \int_{t-u_1}^{T-u_1} Z_{u_1+v}e^{-\int_0^{v} \tilde{\lambda}_w^2 dw} \tilde{\lambda}_{v}^2 dv \right) \label{eq:PayoffII4}. 
\end{align}}
At this point, equation \eqref{eq:PayoffII1} follows directly by \eqref{eq:PayoffII3} and \eqref{eq:PayoffII4}. 
Coming now to equation \eqref{eq:lemma2}, from Lemma \ref{lemma:ExpectationCalculations} it can be seen that
\begin{align}
	&\mathbb{E}^{\hat{P}} \left[ \mathcal{E}_t \left( \mathbb{E}^{\hat{P}} \left[ \textbf{1}_{\lbrace s < u_1 \leq t\rbrace} \textbf{1}_{\lbrace \tilde{\tau}_2 > t-u_1\rbrace}\textbf{1}_{\lbrace 0 < u_1 + \tilde{\tau}_2 \leq T \rbrace} Z_{u_1+\tilde{\tau}_2} \right]\right) \bigg\vert_{u_1=\tau_1}\right] \notag \\
	&=\mathcal{E}_t \left( \mathbb{E}^{\hat{P}} \left[ \mathbb{E}^{\hat{P}} \left[ \textbf{1}_{\lbrace s < u_1 \leq t\rbrace} \textbf{1}_{\lbrace \tilde{\tau}_2 > t-u_1\rbrace}\textbf{1}_{\lbrace 0 < u_1 + \tilde{\tau}_2 \leq T \rbrace} Z_{u_1+\tilde{\tau}_2} \right] \big\vert_{u_1=\tau_1}\right] \right) \label{eq:PayoffII2}
\end{align}
is equivalent to  
\small{
\begin{align*}
	&\mathbb{E}^{\hat{P}} \left[ \textbf{1}_{\lbrace s < \tau_1 \leq t \rbrace} \mathcal{E}_t  \left ( \int_{t-u_1}^{T-u_1} Z_{u_1+v}e^{-\int_0^{v} \tilde{\lambda}_w^2 dw} \tilde{\lambda}_{v}^2 dv\right)\bigg\vert_{u_1=\tau_1}\right]\notag \\ 
	&\qquad =\mathcal{E}_t\left( \mathbb{E}^{\hat{P}} \left [ \textbf{1}_{\lbrace s < \tau_1 \leq t \rbrace} \int_{t-u_1}^{T-u_1} Z_{u_1+v}e^{-\int_0^{v} \tilde{\lambda}_w^2 dw} \tilde{\lambda}_{v}^2 dv\right]\right),
\end{align*}}
\\which is exactly condition \eqref{eq:Payoff2ConditionI} by using Lemma \ref{lemma:ExpectationCalculations}.
Next, we show that the inequalities 
\eqref{eq:SecondInequality} and \eqref{eq:ThirdInequality} are indeed equalities, i.e.,
\begin{align}
	\mathbb{E}^{\hat{P}} \left[ \mathcal{E}_t \left ( \textbf{1}_{\lbrace s < \tau_1\rbrace}  \textbf{1}_{\lbrace \tau_2 \leq t  \rbrace} \textbf{1}_{\lbrace 0 < \tau_2 \leq T  \rbrace} Z_{\tau_2}\right) \right]= \mathcal{E}_t \left( \mathbb{E}^{\hat{P}}  \left [ \textbf{1}_{\lbrace s < \tau_1\rbrace}  \textbf{1}_{\lbrace \tau_2 \leq t  \rbrace} \textbf{1}_{\lbrace 0 < \tau_2 \leq T  \rbrace} Z_{\tau_2}\right] \right) \label{eq:PayoffII5}
\end{align}
and
\begin{align}
&\mathcal{E}_t\left( \mathbb{E}^{\hat{P}} \left[ \textbf{1}_{\lbrace t < \tau_1\rbrace} \textbf{1}_{\lbrace 0  < \tau_2 \leq T \rbrace}Z_{\tau_2}\right]\right)	+ \mathcal{E}_t\left( \mathbb{E}^{\hat{P}} \left[ \textbf{1}_{\lbrace s < \tau_1\rbrace}  \textbf{1}_{\lbrace \tau_1  \leq t < \tau_2  \rbrace}\textbf{1}_{\lbrace 0  < \tau_2 \leq T \rbrace}Z_{\tau_2}\right]\right) \notag \\
&+  \mathcal{E}_t\left( \mathbb{E}^{\hat{P}} \left[ \textbf{1}_{\lbrace s < \tau_1\rbrace}  \textbf{1}_{\lbrace \tau_1  \leq t < \tau_2  \rbrace}\textbf{1}_{\lbrace 0  < \tau_2 \leq T \rbrace}Z_{\tau_2}\right]\right) \notag \\
&= \mathcal{E}_t \bigg(\mathbb{E}^{\hat{P}} \left[ \textbf{1}_{\lbrace t < \tau_1\rbrace} \textbf{1}_{\lbrace 0  < \tau_2 \leq T \rbrace}Z_{\tau_2}\right]+ \mathbb{E}^{\hat{P}} \left[ \textbf{1}_{\lbrace s < \tau_1\rbrace}  \textbf{1}_{\lbrace \tau_1  \leq t < \tau_2  \rbrace}\textbf{1}_{\lbrace 0  < \tau_2 \leq T \rbrace}Z_{\tau_2}\right] \notag\\
&+\mathbb{E}^{\hat{P}} \left[ \textbf{1}_{\lbrace s < \tau_1\rbrace}  \textbf{1}_{\lbrace \tau_1  \leq t < \tau_2  \rbrace}\textbf{1}_{\lbrace 0  < \tau_2 \leq T \rbrace}Z_{\tau_2}\right] \bigg). \label{eq:PayoffII6}
\end{align}
We first consider the right-hand side of \eqref{eq:PayoffII5}. Similar arguments used in the derivation of \eqref{eq:secondPayoffi=2,1c} imply that
\begin{align}
	\mathcal{E}_t\left(\mathbb{E}^{\hat{P}}  \left [ \textbf{1}_{\lbrace s < \tau_1\rbrace}  \textbf{1}_{\lbrace \tau_2 \leq t  \rbrace} \textbf{1}_{\lbrace 0 < \tau_2 \leq T  \rbrace} Z_{\tau_2}\right]\right)	&=\mathcal{E}_t \left( \int_s^t \left( \int_0^{t-x} Z_{v+x} e^{-\int_0^{v} \tilde{\lambda}_w^2 dw} \tilde{\lambda}_{v}^2 dv\right)e^{-\int_0^x \tilde{\lambda}_y^1 dy} \tilde{\lambda}_x^1 dx\right)  \label{eq:PayoffII7} \\
	&=\int_s^t \left( \int_0^{t-x} Z_{v+x} e^{-\int_0^{v} \tilde{\lambda}_w^2 dw} \tilde{\lambda}_{v}^2 dv\right)e^{-\int_0^x \tilde{\lambda}_y^1 dy} \tilde{\lambda}_x^1 dx \label{eq:PayoffII8},
\end{align}
since $ \int_s^t \left( \int_0^{t-x} Z_{v+x} e^{-\int_0^{v} \tilde{\lambda}_w^2 dw} \tilde{\lambda}_{v}^2 dv\right)e^{-\int_0^x \tilde{\lambda}_y^1 dy} \tilde{\lambda}_x^1 dx$ is non-negative and $\mathcal{F}_t$-measurable.
	
Moreover, note that for fixed $\omega \in \hat{\Omega}$ the random variable $\textbf{1}_{\lbrace s < \tau_1\rbrace}  \textbf{1}_{\lbrace \tau_2 \leq t  \rbrace} \textbf{1}_{\lbrace 0 < \tau_2 \leq T  \rbrace} Z_{\tau_2}$ is $\mathcal{F}_t$-measurable and non-negative. For this reason, the left-hand side of \eqref{eq:PayoffII5} can be rewritten as
\begin{align}
	\mathbb{E}^{\hat{P}} \left[ \mathcal{E}_t \left ( \textbf{1}_{\lbrace s < \tau_1\rbrace}  \textbf{1}_{\lbrace \tau_2 \leq t  \rbrace} \textbf{1}_{\lbrace 0 < \tau_2 \leq T  \rbrace} Z_{\tau_2}\right) \right]&= \mathbb{E}^{\hat{P}} \left[ \textbf{1}_{\lbrace s < \tau_1\rbrace}  \textbf{1}_{\lbrace \tau_2 \leq t  \rbrace} \textbf{1}_{\lbrace 0 < \tau_2 \leq T  \rbrace} Z_{\tau_2}\right] \notag \\
	&= \int_s^t \left( \int_0^{t-x} Z_{v+x} e^{-\int_0^{v} \tilde{\lambda}_w^2 dw} \tilde{\lambda}_{v}^2 dv\right)e^{-\int_0^x \tilde{\lambda}_y^1 dy} \tilde{\lambda}_x^1 dx,\label{eq:PayoffII9}
\end{align}
where \eqref{eq:PayoffII9} follows by \eqref{eq:PayoffII7}. We now get \eqref{eq:PayoffII5} directly from \eqref{eq:PayoffII8} and \eqref{eq:PayoffII9}.

We now turn to \eqref{eq:PayoffII6}. The left-hand side terms can be written with similar arguments used as in \eqref{eq:secondPayoffi=2,1c} as
\begin{align}
	\mathbb{E}^{\hat{P}} \left[ \textbf{1}_{\lbrace t < \tau_1\rbrace}  \textbf{1}_{\lbrace 0 < \tau_2 \leq T \rbrace} Z_{\tau_2}\right]	&=\int_t^T\left( \int_0^{T-x} Z_{v+x}e^{-\int_0^{v} \tilde{\lambda}_w^2 dw }\tilde{\lambda}_{v}^2 dv\right)e^{-\int_0^x \tilde{\lambda}_y^1 dy}\tilde{\lambda}_x^1 dx, \label{eq:PayoffII10}
\end{align}
\begin{align}
\mathbb{E}^{\hat{P}} \left[ \textbf{1}_{\lbrace s < \tau_1 \leq t \rbrace} \textbf{1}_{\lbrace t < \tau_2 \leq T \rbrace} Z_{\tau_2} \right]	 &=\int_s^t \left( \int_{t-x}^{T-x} Z_{v+x} e^{-\int_0^{v} \tilde{\lambda}_w^2 dw }\tilde{\lambda}_{v}^2 dv \right) e^{-\int_0^x \tilde{\lambda}_y^1dy}\tilde{\lambda}_x^1dx\label{eq:PayoffII11}
\end{align}
and 
\begin{align}
	\mathbb{E}^{\hat{P}} \left[ \textbf{1}_{\lbrace s < \tau_1\rbrace} \textbf{1}_{\lbrace  \tau_2 \leq t \rbrace} Z_{\tau_2}\right]	&= \int_s^t \left( \int_0^{t-x} Z_{v+x} e^{-\int_0^{v} \tilde{\lambda}_w^2 dw} \tilde{\lambda}_{v}^2 dv\right)e^{-\int_0^x \tilde{\lambda}_y^1 dy} \tilde{\lambda}_x^1 dx. \label{eq:PayoffII12}
\end{align}
By \eqref{eq:PayoffII8}, \eqref{eq:PayoffII10}, \eqref{eq:PayoffII11},  \eqref{eq:PayoffII12} and assumption \eqref{eq:Payoff2ConditionII}
we get
\red{\begin{align}
&\mathcal{E}_t\left( \mathbb{E}^{\hat{P}} \left[ \textbf{1}_{\lbrace t < \tau_1\rbrace} \textbf{1}_{\lbrace 0  < \tau_2 \leq T \rbrace}Z_{\tau_2}\right]\right)	+ \mathcal{E}_t\left( \mathbb{E}^{\hat{P}} \left[ \textbf{1}_{\lbrace s < \tau_1\rbrace}  \textbf{1}_{\lbrace \tau_1  \leq t < \tau_2  \rbrace}\textbf{1}_{\lbrace 0  < \tau_2 \leq T \rbrace}Z_{\tau_2}\right]\right) \notag \\
&\quad +  \mathcal{E}_t\left( \mathbb{E}^{\hat{P}} \left[ \textbf{1}_{\lbrace s < \tau_1\rbrace}  \textbf{1}_{\lbrace \tau_1  \leq t < \tau_2  \rbrace}\textbf{1}_{\lbrace 0  < \tau_2 \leq T \rbrace}Z_{\tau_2}\right]\right) \notag \\
&=\mathcal{E}_t\bigg( \int_t^T\left( \int_0^{T-x} Z_{v+x}e^{-\int_0^{v} \tilde{\lambda}_w^2 dw }\tilde{\lambda}_{v}^2 dv\right)e^{-\int_0^x \tilde{\lambda}_y^1 dy}\tilde{\lambda}_x^1 dx \notag \\ & \qquad \quad  + \int_s^t \left( \int_{t-x}^{T-x} Z_{v+x} e^{-\int_0^{v} \tilde{\lambda}_w^2 dw }\tilde{\lambda}_{v}^2 dv \right) e^{-\int_0^x \tilde{\lambda}_y^1dy}\tilde{\lambda}_x^1dx \notag \\
&\quad  \qquad + \int_s^t \left( \int_0^{t-x} Z_{v+x} e^{-\int_0^{v} \tilde{\lambda}_w^2 dw} \tilde{\lambda}_{v}^2 dv\right)e^{-\int_0^x \tilde{\lambda}_y^1 dy} \tilde{\lambda}_x^1 dx \bigg). \notag
\end{align}}
Here, the last equality is implied by the $\mathcal{F}_t \text{-measurability}$ of $ \int_s^t \left( \int_0^{t-x} Z_{v+x} e^{-\int_0^{v} \tilde{\lambda}_w^2 dw} \tilde{\lambda}_{v}^2 dv\right)e^{-\int_0^x \tilde{\lambda}_y^1 dy} \tilde{\lambda}_x^1 dx$. This proves equation \eqref{eq:PayoffII6}. 

In order to conclude the proof, we need to show that inequalities \eqref{eq:WeakDynamicMiddleTerm3} and \eqref{eq:WeakDynamicMiddleTerm4} in the proof of Proposition \ref{prop:dynamicthird} are indeed equalities, that is,
\begin{align} \label{eq:PayoffII16}
	\mathbb{E}^{\hat{P}} \left[ \mathcal{E}_t \left( \textbf{1}_{\lbrace \tilde{\tau}_2 > s-u_1 \rbrace} \textbf{1}_{\lbrace u_1+ \tilde{\tau}_2 \leq t  \rbrace} \textbf{1}_{\lbrace u_1+ \tilde{\tau}_2 \leq T  \rbrace} Z_{u_1+\tilde{\tau}_2} \right)\right]= \mathcal{E}_t\left( \mathbb{E}^{\hat{P}} \left[ \textbf{1}_{\lbrace \tilde{\tau}_2 > s-u_1 \rbrace} \textbf{1}_{\lbrace u_1+ \tilde{\tau}_2 \leq t  \rbrace} \textbf{1}_{\lbrace u_1+ \tilde{\tau}_2 \leq T  \rbrace} Z_{u_1+\tilde{\tau}_2} \right]\right)
\end{align}
and 
\begin{align} 
	&\mathcal{E}_t \left( \textbf{1}_{\lbrace u_1 \leq t \rbrace} \mathbb{E}^{\hat{P}} \left[  \textbf{1}_{\lbrace \tilde{\tau}_2 >t-u_1 \rbrace} \textbf{1}_{\lbrace \tilde{\tau}_2 +u_1 \leq T \rbrace} Z_{u_1 + \tilde{\tau}_2} \right]\right) +\mathcal{E}_t \left(  \mathbb{E}^{\hat{P}} \left[  \textbf{1}_{\lbrace \tilde{\tau}_2 >s-u_1 \rbrace} \textbf{1}_{\lbrace \tilde{\tau}_2 +u_1 \leq t\rbrace} \textbf{1}_{\lbrace \tilde{\tau}_2 +u_1 \leq T \rbrace} Z_{u_1 + \tilde{\tau}_2} \right]\right)  \notag\\
	&=\mathcal{E}_t \left( \textbf{1}_{\lbrace u_1 \leq t \rbrace} \mathbb{E}^{\hat{P}} \left[  \textbf{1}_{\lbrace \tilde{\tau}_2 >t-u_1 \rbrace} \textbf{1}_{\lbrace \tilde{\tau}_2 +u_1 \leq T \rbrace} Z_{u_1 + \tilde{\tau}_2} \right]+  \mathbb{E}^{\hat{P}} \left[  \textbf{1}_{\lbrace \tilde{\tau}_2 >s-u_1 \rbrace} \textbf{1}_{\lbrace \tilde{\tau}_2 +u_1 \leq t\rbrace} \textbf{1}_{\lbrace \tilde{\tau}_2 +u_1 \leq T \rbrace} Z_{u_1 + \tilde{\tau}_2} \right]\right). \label{eq:PayoffII17}
\end{align}
By Lemma \ref{lemma:ExpectationCalculations} we have
\begin{align}
	\mathcal{E}_t\left( \mathbb{E}^{\hat{P}} \left[ \textbf{1}_{\lbrace \tilde{\tau}_2 > s-u_1 \rbrace} \textbf{1}_{\lbrace u_1+ \tilde{\tau}_2 \leq t  \rbrace} \textbf{1}_{\lbrace u_1+ \tilde{\tau}_2 \leq T  \rbrace} Z_{u_1+\tilde{\tau}_2} \right]\right)	&= \mathcal{E}_t\left( \int_{s-u_1}^{t-u_1} Z_{v+u_1} e^{-\int_0^{v} \tilde{\lambda}_w^2 dw} \tilde{\lambda}_{v}^2dv\right) \notag  \\
	&=  \int_{s-u_1}^{t-u_1} Z_{v+u_1} e^{-\int_0^{v} \tilde{\lambda}_w^2 dw} \tilde{\lambda}_{v}^2dv, \label{eq:PayoffII18}
\end{align}
where \eqref{eq:PayoffII18} follows since $\int_{s-u_1}^{t-u_1} Z_{v+u_1} e^{-\int_0^{v} \tilde{\lambda}_w^2 dw} \tilde{\lambda}_{v}^2dv$ is non-negative and $\mathcal{F}_t$-measurable.
By similar arguments as in \eqref{eq:PayoffII9} it follows
\begin{align}
	\mathbb{E}^{\hat{P}} \left[ \mathcal{E}_t \left( \textbf{1}_{\lbrace \tilde{\tau}_2 > s-u_1 \rbrace} \textbf{1}_{\lbrace u_1+ \tilde{\tau}_2 \leq t  \rbrace} \textbf{1}_{\lbrace u_1+ \tilde{\tau}_2 \leq T  \rbrace} Z_{u_1+\tilde{\tau}_2} \right)\right]	&=  \int_{s-u_1}^{t-u_1} Z_{v+u_1} e^{-\int_0^{v} \tilde{\lambda}_w^2 dw} \tilde{\lambda}_{v}^2dv \label{eq:PayoffII19}, 
\end{align}
which implies together with \eqref{eq:PayoffII18} that \eqref{eq:PayoffII16} holds. Moreover, by \eqref{eq:PayoffII18} and the $\mathcal{F}_t$-measurability of $  \int_{s-u_1}^{t-u_1} Z_{v+u_1} e^{-\int_0^{v} \tilde{\lambda}_w^2 dw} \tilde{\lambda}_{v}^2dv$, equation \eqref{eq:PayoffII17} follows. 
\end{proof}

\begin{example}\label{example:PayoffII}
Fix $0 \leq s \leq t\leq T$ such that $T \leq t+s$. Moreover, let the process $Z=(Z_t)_{t \in [0,T]}$ be deterministic and non-negative. We want to show that conditions \eqref{eq:Payoff2ConditionI} and \eqref{eq:Payoff2ConditionII} are satisfied. \\
	Note that 
	\begin{equation} \label{eq:PayoffIIExample1}
		 \int_s^t \left( \int_{t-x}^{T-x} Z_{x+v}e^{-\int_0^{v} \tilde{\lambda}_w^2 dw} \tilde{\lambda}_{v}^2 dv\right)e^{-\int_0^x \tilde{\lambda}^1_y dy}\tilde{\lambda}^1_x dx
	\end{equation}
	is $\mathcal{F}_t$-measurable and non-negative if {$T \leq t+s$}.
	Then we get that 
	\begin{align} 
	&\mathcal{E}_t\left(  \int_s^t \left( \int_{t-x}^{T-x} Z_{x+v}e^{-\int_0^{v} \tilde{\lambda}_w^2 dw} \tilde{\lambda}_{v}^2 dv\right)e^{-\int_0^x \tilde{\lambda}^1_y dy}\tilde{\lambda}^1_x dx\right) \notag\\ 
	&=  \int_s^t \left( \int_{t-x}^{T-x} Z_{x+v}e^{-\int_0^{v} \tilde{\lambda}_w^2 dw} \tilde{\lambda}_{v}^2 dv\right)e^{-\int_0^x \tilde{\lambda}^1_y dy}\tilde{\lambda}^1_x dx.\label{eq:PayoffIIExample2}
	\end{align}
	Moreover, we have 
	\begin{align}
	&\int_s^t \mathcal{E}_t  \left ( \int_{t-x}^{T-x} Z_{x+v}e^{-\int_0^{v} \tilde{\lambda}_w^2 dw} \tilde{\lambda}_{v}^2 dv\right) e^{-\int_0^x \tilde{\lambda}^1_y dy}\tilde{\lambda}^1_x dx \notag \\
	&=\int_s^t  \left(\int_{t-x}^{T-x} Z_{x+v}e^{-\int_0^{v} \tilde{\lambda}_w^2 dw} \tilde{\lambda}_{v}^2 dv\right) e^{-\int_0^x \tilde{\lambda}^1_y dy}\tilde{\lambda}^1_x dx,  \label{eq:PayoffIIExample3}
		\end{align}
	where \eqref{eq:PayoffIIExample3} holds because $ \int_{t-u_1}^{T-u_1} Z_{u_1+v}e^{-\int_0^{v} \tilde{\lambda}_w^2 dw} \tilde{\lambda}_{v}^2 dv$ is $\mathcal{F}_t$-measurable and non-negative for $\int_{t-u_1}^{T-u_1} Z_{u_1+v}e^{-\int_0^{v} \tilde{\lambda}_w^2 dw} \tilde{\lambda}_{v}^2 dv$. Condition \eqref{eq:Payoff2ConditionI} follows now directly from \eqref{eq:PayoffIIExample2} and \eqref{eq:PayoffIIExample3}. 
	
	Moreover, since $\int_s^t \left( \int_{t-x}^{T-x} Z_{x+v}e^{-\int_0^{v} \tilde{\lambda}_w^2 dw} \tilde{\lambda}_{v}^2 dv\right)e^{-\int_0^x \tilde{\lambda}^1_y dy}\tilde{\lambda}^1_x dx$ is
	 $\mathcal{F}_t$-measurable, equation \eqref{eq:PayoffIIExample2} implies that condition \eqref{eq:Payoff2ConditionII} holds.
\end{example}

\begin{prop} \label{prop:PayoffClassicalTowerI}
	Fix $0 \leq s \leq t \leq T$ and let $Y=Y(\omega)$, $\omega \in \Omega$, be an $\mathcal{F}_T^{\mathcal{P}}$-measurable upper semianalytic and non-negative function on $\Omega$ such that $\mathcal{E}( Y)< \infty$. If $\mathcal{P}$ satisfies Assumption \ref{assumptionnutzNew}, then $\tilde{\mathcal{E}}_t^2(\textbf{1}_{\lbrace \tilde{\tau}_2 >T \rbrace}Y) \in L^1(\tilde{\Omega})$ for any $t \in [0,T]$. If in addition
	\begin{equation} \label{eq:Payoff1Condition1}
		\int_s^t  \mathcal{E}_t \left(Y e^{-\int_0^{T-x} \tilde{\lambda}_v^2 dv }\right) e^{-\int_0^x \tilde{\lambda}_y^1 dy}\tilde{\lambda}_x^1dx= \mathcal{E}_t\left( \int_s^t Y e^{-\int_0^{T-x} \tilde{\lambda}_v^2dv}e^{-\int_0^x \tilde{\lambda}_y^1 dy}\tilde{\lambda}_x^1 dx\right)
		\end{equation}
		and
	\begin{align} 
		&\mathcal{E}_t\left( Y\left(\int_t^T e^{-\int_0^{T-x} \tilde{\lambda}_u^2 du}e^{-\int_0^x \tilde{\lambda}_y^1 dy}\tilde{\lambda}_x^1dx+e^{-\int_0^{T}\tilde{\lambda}_u^1 du} \right) \right)+ \mathcal{E}_t \left( Y\int_s^t e^{-\int_0^{T-x} \tilde{\lambda}_u^2du} e^{-\int_0^{x} \tilde{\lambda}_y^1dy} \tilde{\lambda}_x^1dx \right) \notag \\
	&= \mathcal{E}_t\left( Y\left(\int_s^T e^{-\int_0^{T-x} \tilde{\lambda}_u^2 du}e^{-\int_0^x \tilde{\lambda}_y^1 dy}\tilde{\lambda}_x^1dx+e^{-\int_0^{T}\tilde{\lambda}_u^1 du} \right) \right), \label{eq:Payoff1Condition2}
	\end{align}
	then  
	\begin{equation}
	\tilde{\mathcal{E}}_s^2\left(\tilde{\mathcal{E}}_t^2 \left( \textbf{1}_{\lbrace \tau_2 > T\rbrace }Y\right)\right)	= \tilde{\mathcal{E}}_s^2\left( \textbf{1}_{\lbrace \tau_2 > T\rbrace }Y\right)	 \quad \tilde{P} \text{-a.s. for all } \tilde{P} \in \tilde{\mathcal{P}}.
	\end{equation}
\end{prop}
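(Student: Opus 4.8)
The plan is to follow the blueprint of the proof of Proposition~\ref{prop:PayoffClassicalTowerII}: the payoff $\textbf{1}_{\{\tau_2>T\}}Y$ plays for the ``second default after $T$'' leg exactly the role that $\textbf{1}_{\{0<\tau_2<T\}}Z_{\tau_2}$ plays for the ``second default before $T$'' leg, and the hypotheses \eqref{eq:Payoff1Condition1}--\eqref{eq:Payoff1Condition2} are the precise analogues of \eqref{eq:Payoff2ConditionI}--\eqref{eq:Payoff2ConditionII}. First I would settle integrability. The random variable $\textbf{1}_{\{\tau_2>T\}}Y$ is $\mathcal{G}^{\mathcal{P},(2)}$-measurable, upper semianalytic, non-negative, and dominated by $Y$, so by monotonicity of $\tilde{\mathcal{E}}_t^2$ and Proposition~\ref{prop:OperatorCoincides}(1) (which gives $\tilde{\mathcal{E}}_t^2(Y)=\mathcal{E}_t(Y)$ since $Y$ is $\mathcal{F}_\infty^{\mathcal{P}}$-measurable) one gets $\tilde{\mathcal{E}}\big(\tilde{\mathcal{E}}_t^2(\textbf{1}_{\{\tau_2>T\}}Y)\big)\le\tilde{\mathcal{E}}(\mathcal{E}_t(Y))=\sup_{P\in\mathcal{P}}\mathbb{E}^P[\mathcal{E}_t(Y)]=\mathcal{E}_0(\mathcal{E}_t(Y))=\mathcal{E}_0(Y)=\mathcal{E}(Y)<\infty$ by the tower property \eqref{towerNutz}. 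Hence $\tilde{\mathcal{E}}_t^2(\textbf{1}_{\{\tau_2>T\}}Y)\in L^1(\tilde\Omega)$, and by Proposition~\ref{prop:EvaluatedOperatorUpperSemianalytic} it is again upper semianalytic and $\mathcal{G}^{\mathcal{P},(2)}$-measurable, so $\tilde{\mathcal{E}}_s^2\big(\tilde{\mathcal{E}}_t^2(\textbf{1}_{\{\tau_2>T\}}Y)\big)$ is well-defined; moreover Theorem~\ref{thm:weakdynamicprogrammingprinciple} (with $k=2$) gives $\tilde{\mathcal{E}}_s^2\big(\tilde{\mathcal{E}}_t^2(\textbf{1}_{\{\tau_2>T\}}Y)\big)\ge\tilde{\mathcal{E}}_s^2(\textbf{1}_{\{\tau_2>T\}}Y)$, so only the reverse inequality is at stake.

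For ``$\le$'' I would revisit the decomposition \eqref{eq:WeakDynamicRewritten} together with the proofs of Propositions~\ref{prop:dynamicfirst},~\ref{prop:dynamicsecond},~\ref{prop:dynamicthird}, and show that for this payoff every inequality used there is in fact an equality. The representing function is $\varphi(u_1,u_2,\omega)=\textbf{1}_{\{u_2>T\}}Y(\omega)$, which vanishes on $\{u_2\le t\}$ because $t\le T$; therefore, substituting it in \eqref{eq:defbarvarphi}, the $\{\tau_2\le s\}$-leg of \eqref{eq:WeakDynamicRewritten} disappears on both sides by Proposition~\ref{prop:dynamicfirst}, and in the remaining two legs the $\hat P$-expectations collapse. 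Indeed, by the inclusion $\{u_1+\tilde\tau_2>T\}\subseteq\{\tilde\tau_2>t-u_1\}$ valid for $u_1\le t\le T$ and by Lemma~\ref{lemma:ExpectationCalculations}, the terms $\mathbb{E}^{\hat P}[\textbf{1}_{\{\tilde\tau_2>t-u_1\}}\textbf{1}_{\{u_1+\tilde\tau_2>T\}}Y]$ reduce to $Y\,e^{-\int_0^{T-u_1}\tilde\lambda_v^2\,dv}$, i.e.\ to $Y$ times an $\mathcal{F}_t$-measurable factor, in accordance with \eqref{eq:FirstPayoffi=2}. Exactly as in the proof of Proposition~\ref{prop:PayoffClassicalTowerII} one then checks, leg by leg: inequality \eqref{eq:WeakDynamicProgrammingThird7} in Lemma~\ref{lem:secondtermsecondpiece} becomes an equality because the inner $\mathbb{E}^{\hat P}$-term no longer depends on $\hat\omega$, so Lemma~\ref{lem:francescayinglin} applies with equality; inequality \eqref{eq:lemma2}, which comes from Lemma~\ref{lem:thirdlemmasecondpiece}, becomes an equality precisely under hypothesis \eqref{eq:Payoff1Condition1}; inequality \eqref{eq:SecondInequality} becomes an equality since the relevant integrands are $\mathcal{F}_t$-measurable (and, for the $\{\tau_2\le t\}$-piece, identically zero because $\{\tau_2\le t\}\cap\{\tau_2>T\}=\emptyset$), while \eqref{eq:ThirdInequality} becomes an equality precisely under hypothesis \eqref{eq:Payoff1Condition2}; finally the inequalities \eqref{eq:WeakDynamicMiddleTerm3} and \eqref{eq:WeakDynamicMiddleTerm4} in the proof of Proposition~\ref{prop:dynamicthird} become equalities by the same $\mathcal{F}_t$-measurability argument combined with Lemma~\ref{lemma:ExpectationCalculations}. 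Retracing the chains in Propositions~\ref{prop:dynamicsecond} and~\ref{prop:dynamicthird} with ``$\ge$'' replaced by ``$=$'' then yields $\tilde{\mathcal{E}}_s^2\big(\tilde{\mathcal{E}}_t^2(\textbf{1}_{\{\tau_2>T\}}Y)\big)\le\tilde{\mathcal{E}}_s^2(\textbf{1}_{\{\tau_2>T\}}Y)$, which is the asserted identity.

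The main obstacle I anticipate is purely organizational bookkeeping: one must track, term by term in the legs $\{s<\tau_1\}$, $\{\tau_1\le s<\tau_2\}$, $\{\tau_2\le s\}$, which indicator products become $\mathcal{F}_t$-measurable (or vanish) after integrating out $\tilde\tau_1$ and $\tilde\tau_2$ with the explicit densities from Lemma~\ref{lemma:ExpectationCalculations}, and to match each surviving inequality of the weak dynamic programming proof either to \eqref{eq:Payoff1Condition1}, to \eqref{eq:Payoff1Condition2}, or to such a measurability fact. No genuinely new idea beyond those already used for Proposition~\ref{prop:PayoffClassicalTowerII} should be required.
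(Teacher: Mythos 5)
Your proposal is correct and follows exactly the blueprint the paper intends: the paper's own proof is a single sentence asserting that \eqref{eq:WeakDynamicProgrammingThird7}, \eqref{eq:lemma2}, \eqref{eq:SecondInequality}, \eqref{eq:ThirdInequality}, \eqref{eq:WeakDynamicMiddleTerm3} and \eqref{eq:WeakDynamicMiddleTerm4} become equalities ``by similar arguments as the ones used in the proof of Proposition~\ref{prop:PayoffClassicalTowerII}'' (the paper's reference to itself here is a typo). You correctly supply the integrability argument, the observation that $\varphi(u_1,u_2,\omega)=\textbf{1}_{\{u_2>T\}}Y(\omega)$ kills every piece involving $\{\tau_2\le t\}$ or $\{u_1+\tilde\tau_2\le t\}$, and the exact correspondence \eqref{eq:Payoff1Condition1}$\leftrightarrow$\eqref{eq:lemma2} and \eqref{eq:Payoff1Condition2}$\leftrightarrow$\eqref{eq:ThirdInequality}, which matches the structure of the computations in the preceding proposition.
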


\begin{proof}
To prove this result we have to show that the inequalities \eqref{eq:WeakDynamicProgrammingThird7}, \eqref{eq:lemma2}, \eqref{eq:SecondInequality}, \eqref{eq:ThirdInequality}, \eqref{eq:WeakDynamicMiddleTerm3} and \eqref{eq:WeakDynamicMiddleTerm4} are indeed equalities. This can be done by similar arguments as the ones used in the proof of Proposition \ref{prop:PayoffClassicalTowerI}.
\end{proof} 
We now provide some examples where  $\eqref{eq:Payoff1Condition1}$ and $\eqref{eq:Payoff1Condition2}$ are satisfied.
\begin{example} \label{example:Payoff1}
	Let us consider $Y=1$. Assume that $T \leq t+s$, which also implies that $T \leq 2t$.
	In this case the left-hand side in $\eqref{eq:Payoff1Condition1}$ can be rewritten as 
	\begin{equation} \label{eq:Payoff1Condition1Example1}
		\int_s^t  \mathcal{E}_t \left(  e^{-\int_0^{T-x} \tilde{\lambda}_v^2 dv }\right) e^{-\int_0^x \tilde{\lambda}_y^1 dy}\tilde{\lambda}_x^1dx= \int_s^t   e^{-\int_0^{T-x} \tilde{\lambda}_v^2 dv } e^{-\int_0^x \tilde{\lambda}_y^1 dy}\tilde{\lambda}_x^1dx, 
	\end{equation}
	since $e^{-\int_0^{T-x} \tilde{\lambda}_v^2 dv}$ is $\mathcal{F}_t$-measurable and non-negative  for $s < x \leq t$. Note now that 
	\begin{equation*}
		 \int_s^t   e^{-\int_0^{T-x} \tilde{\lambda}_v^2 dv } e^{-\int_0^x \tilde{\lambda}_y^1 dy}\tilde{\lambda}_x^1dx
	\end{equation*}
 is $\mathcal{F}_t$-measurable and non-negative. Thus, the right-hand side in $\eqref{eq:Payoff1Condition1}$ can be written as
	\begin{equation}
		 \mathcal{E}_t \left( \int_s^t   e^{-\int_0^{T-x} \tilde{\lambda}_v^2 dv } e^{-\int_0^x \tilde{\lambda}_y^1 dy}\tilde{\lambda}_x^1dx\right)= \int_s^t   e^{-\int_0^{T-x} \tilde{\lambda}_v^2 dv } e^{-\int_0^x \tilde{\lambda}_y^1 dy}\tilde{\lambda}_x^1dx, \label{eq:Payoff1Condition1Example2}
	\end{equation}
	so that $\eqref{eq:Payoff1Condition1}$ holds by $\eqref{eq:Payoff1Condition1Example1}$ and $\eqref{eq:Payoff1Condition1Example2}$.
Regarding \eqref{eq:Payoff1Condition2}, we get
	\begin{align} 
	&\mathcal{E}_t\left( \int_t^T e^{-\int_0^{T-x} \tilde{\lambda}_u^2 du}e^{-\int_0^x \tilde{\lambda}_y^1 dy}\tilde{\lambda}_x^1dx+e^{-\int_0^{T}\tilde{\lambda}_u^1 du}  \right)+ \mathcal{E}_t  \left(\int_s^t e^{-\int_0^{T-x} \tilde{\lambda}_u^2du} e^{-\int_0^{x} \tilde{\lambda}_u^1du} \tilde{\lambda}_x^1dx \right)\notag \\
	=&\mathcal{E}_t\left( \int_t^T e^{-\int_0^{T-x} \tilde{\lambda}_u^2 du}e^{-\int_0^x \tilde{\lambda}_y^1 dy}\tilde{\lambda}_x^1dx+e^{-\int_0^{T}\tilde{\lambda}_u^1 du}  \right)+ \int_s^t e^{-\int_0^{T-x} \tilde{\lambda}_u^2du} e^{-\int_0^{x} \tilde{\lambda}_u^1du} \tilde{\lambda}_x^1dx \label{eq:Payoff1Condition2Example1} \\
	=&\mathcal{E}_t\left( \int_t^T e^{-\int_0^{T-x} \tilde{\lambda}_u^2 du}e^{-\int_0^x \tilde{\lambda}_y^1 dy}\tilde{\lambda}_x^1dx+e^{-\int_0^{T}\tilde{\lambda}_u^1 du}  + \int_s^t e^{-\int_0^{T-x} \tilde{\lambda}_u^2du} e^{-\int_0^{x} \tilde{\lambda}_u^1du} \tilde{\lambda}_x^1dx \right) \label{eq:Payoff1Condition2Example2} \\
	=&\mathcal{E}_t\left( \int_s^T e^{-\int_0^{T-x} \tilde{\lambda}_u^2 du}e^{-\int_0^x \tilde{\lambda}_y^1 dy}\tilde{\lambda}_x^1dx+e^{-\int_0^{T}\tilde{\lambda}_u^1 du}  \right) \notag, 	
	\end{align}
	where $\eqref{eq:Payoff1Condition2Example1}$ and $\eqref{eq:Payoff1Condition2Example2}$ come from $\eqref{eq:Payoff1Condition1Example2}$ and from the $\mathcal{F}_t$-measurability of the term 
	$$
	\int_s^t e^{-\int_0^{T-x} \tilde{\lambda}_u^2du} e^{-\int_0^{x} \tilde{\lambda}_u^1du} \tilde{\lambda}_x^1dx ,
	$$
	 respectively. 
\end{example}

\begin{remark}
\begin{enumerate}
	\item Under the assumption $T \leq t+s$, equality $\eqref{eq:Payoff1Condition1}$ can also be proved for a general payoff $Y$ as given in Proposition \ref{prop:PayoffClassicalTowerI} by using the same arguments as in Example \ref{example:Payoff1}. However, it is not possible to prove $\eqref{eq:Payoff1Condition2}$ as the term $Y\int_s^t e^{-\int_0^{T-x} \tilde{\lambda}_u^2du} e^{-\int_0^{x} \tilde{\lambda}_u^1du} \tilde{\lambda}_x^1dx  $ is not any longer $\mathcal{F}_t$-measurable. Thus, we have 
	\begin{equation*}
		\mathcal{E}_t \left( Y\int_s^t e^{-\int_0^{T-x} \tilde{\lambda}_u^2du} e^{-\int_0^{x} \tilde{\lambda}_u^1du} \tilde{\lambda}_x^1dx   \right) = \mathcal{E}_t \left( Y \right) \int_s^t e^{-\int_0^{T-x} \tilde{\lambda}_u^2du} e^{-\int_0^{x} \tilde{\lambda}_u^1du} \tilde{\lambda}_x^1dx,  
	\end{equation*}
	which does not allow to do a similar step as in \eqref{eq:Payoff1Condition2Example2}.
	\item In the $G$-setting, equality $\eqref{eq:Payoff1Condition2}$ holds when
	\begin{small} 
		\begin{align*}
		&-\mathcal{E}_t\left( -Y\left(\int_t^T e^{-\int_0^{T-x} \tilde{\lambda}_u^2 du}e^{-\int_0^x \tilde{\lambda}_y^1 dy}\tilde{\lambda}_x^1dx+e^{-\int_0^{T}\tilde{\lambda}_u^1 du} \right) \right)\\
		&= \mathcal{E}_t\left( Y\left(\int_t^T e^{-\int_0^{T-x} \tilde{\lambda}_u^2 du}e^{-\int_0^x \tilde{\lambda}_y^1 dy}\tilde{\lambda}_x^1dx+e^{-\int_0^{T}\tilde{\lambda}_u^1 du} \right) \right), 
	\end{align*}
	\end{small}
	or 
	\begin{equation*}
		-\mathcal{E}_t \left(-Y\int_s^t e^{-\int_0^{T-x} \tilde{\lambda}_u^2du} e^{-\int_0^{x} \tilde{\lambda}_y^1dy} \tilde{\lambda}_x^1dx \right)
		= \mathcal{E}_t \left( Y\int_s^t e^{-\int_0^{T-x} \tilde{\lambda}_u^2du} e^{-\int_0^{x} \tilde{\lambda}_y^1dy} \tilde{\lambda}_x^1dx \right).
	\end{equation*}
\end{enumerate}
\end{remark}

 \section{Superhedging} \label{sec:Superhedging}
By generalizing the superhedging results for payment stream in Section 3 in \cite{bz_2019} we can prove also a dynamic robust superhedging duality in our extended setting. \\ 
Fix $T >0$. We define the filtration $\mathbb{G}^{\tilde{\mathcal{P}},(N)}:=(\mathcal{G}_t^{\tilde{\mathcal{P}},(N)})_{t \in [0,T]}$ by
 \begin{equation*}
 	\mathcal{G}^{\tilde{\mathcal{P}},(N)}_t:=\mathcal{G}_t^{(N),*} \vee \mathcal{N}_T^{\tilde{\mathcal{P}}}, \quad t \in [0,T],
 \end{equation*}
 where $\mathcal{N}_T^{\tilde{\mathcal{P}}}$ is the collection of sets which are $(\tilde{P},\mathcal{G}_T^{(N)})$-null for all $\tilde{P} \in \tilde{\mathcal{P}}$. Let $\tilde{R}:=(\tilde{R}_t)_{t \in [0,T]}$ be a non-negative $\mathbb{G}^{\tilde{\mathcal{P}},(N)}$-adapted process with nondecreasing paths, such that $\tilde{R}_t$ is upper semianalytic for all $t \in [0,T]$ and $\tilde{R}_0=0$. Moreover, consider a process $S=(S_t)_{t \in [0,T]}$ which is $m$-dimensional, $\mathbb{G}^{\tilde{\mathcal{P}},(N)}$-adapted with c\`{a}dl\`{a}g paths and a $(\tilde{P},\mathbb{G}^{\tilde{{P}},(N)})$-semimartingale for all $\tilde{P} \in \tilde{\mathcal{P}}$. Here, $S$ represents the (discounted) tradable assets on the enlarged market. The money market account is set equal to $1$. 
 Furthermore, the set of $m$-dimensional $\mathbb{G}^{\tilde{\mathcal{P}},(N)}$-predictable processes which are $S$-integrable for all $\tilde{P} \in \tilde{\mathcal{P}}$ is denoted by $\tilde{L}(S,\tilde{\mathcal{P}})$ and the admissible strategies on $\tilde{\Omega}$ are given by
\begin{equation*}
\tilde{\bigtriangleup}:=\bigg\lbrace \tilde{\delta} \in \tilde{L}(S,\tilde{\mathcal{P}}): \int^{(\tilde{P})} \tilde{\delta} dS \text{ is a } (\tilde{P},\mathbb{G}_+^{\tilde{{P}},(N)}) \text{-supermartingale for all } \tilde{P} \in \tilde{\mathcal{P}} \bigg\rbrace.
\end{equation*}
We use the notation $\int^{(\tilde{P})} \tilde{\delta} dS := (\int^{(\tilde{P}),t} \tilde{\delta} dS)_{t \in [0,T]}$ for the usual It\^{o} integral under $\tilde{P}$. 
\begin{asum} \label{sigmaSaturated} 
\begin{enumerate} 
\itemsep0pt
	\item $\tilde{\mathcal{P}}$ is a set of sigma martingale measures for $S$, i.e., $S$ is a $(\tilde{P}, \mathbb{G}_+^{{\tilde{P}},(N)})$-sigma-martingale for all $\tilde{P} \in \tilde{\mathcal{P}}$;
	\item $\tilde{\mathcal{P}}$ is saturated: all equivalent sigma-martingale measures of its elements still belong to $\tilde{\mathcal{P}}$;
	\item $S$ has dominating diffusion under every $\tilde{P} \in \tilde{\mathcal{P}}$.
\end{enumerate}
\end{asum}
Next we state generalized versions of Theorem 3.11 and Theorem 3.12 in \cite{bz_2019}.
\begin{prop} \label{superhedging}
Let Assumption \ref{assumptionnutzNew} hold for $\mathcal{P}$ and Assumption \ref{sigmaSaturated} for $\tilde{\mathcal{P}}$, respectively. Consider a cumulative payment stream $\tilde{R}=(\tilde{R}_{s})_{s \in [0,T]}$ with $\tilde{\mathcal{E}}_{s}^N(\tilde{R}_T) \in L^1(\tilde{\Omega})$ for all $s \in [0,T]$. 
Let $t \in [0,T]$. If there exists a $\mathbb{G}^{{\mathcal{P}},(N)}$-adapted process $\tilde{X}=(\tilde{X}_s)_{s \in [0,T]}$ with c\`{a}dl\`{a}g paths, such that for $s \in [0,t],$
\begin{equation*}
	\tilde{X}_s = \tilde{\mathcal{E}}_s^N(\tilde{R}_t) \quad \tilde{P}\text{-a.s.} \text{ for all } \tilde{P} \in \tilde{\mathcal{P}},
\end{equation*}
and if the tower property holds for $\tilde{R}$, i.e., for all $r,s \in [0,t]$ with  $r \leq s$,
\begin{equation*}
	\tilde{\mathcal{E}}_r^N(\tilde{R}_t)=\tilde{\mathcal{E}}_r^N(\tilde{\mathcal{E}}_s^N(\tilde{R}_t)) \quad \tilde{P}\text{-a.s.} \text{ for all } \tilde{P} \in \tilde{\mathcal{P}},
\end{equation*}
then we have the following equivalent dualities for all $\tilde{P} \in \tilde{\mathcal{P}}$ and $0 \leq s \leq t \leq T$
\begin{align}
	\tilde{\mathcal{E}}_s^N(\tilde{R}_t) 
	=& \essinf \lbrace \tilde{v} \text{ is } \mathcal{G}_s^{\tilde{\mathcal{P}},(N)}\text{-measurable}: \exists \tilde{\delta} \in \tilde{\bigtriangleup} \text{ such that } \tilde{v} + \int_s^{(\tilde{P}'),t} \tilde{\delta}_u dS_u \geq \tilde{R}_t \quad \tilde{P}' \text{-a.s. } \nonumber \\ 
	&\text{ for all } \tilde{P}' \in \tilde{\mathcal{P}} \rbrace  \quad \tilde{P} \text{-a.s.}  \notag \\
	=& \essinf \lbrace \tilde{v} \text{ is } \mathcal{G}_s^{\tilde{\mathcal{P}},(N)}\text{-measurable}: \exists \tilde{\delta} \in \tilde{\bigtriangleup} \text{ such that } \tilde{v} + \int_s^{(\tilde{P}'),t} \tilde{\delta}_u dS_u \geq \tilde{R}_t \quad \tilde{P}' \text{-a.s. } \nonumber \\ 
	&\text{ for all } \tilde{P}' \in \tilde{\mathcal{P}}(s;\tilde{P}) \rbrace  \quad \tilde{P} \text{-a.s.} \notag
\end{align}
and 
\begin{align}
	\tilde{\mathcal{E}}_s^N(\tilde{R}_t-\tilde{R}_s)=& \essinf \lbrace \tilde{v} \text{ is } \mathcal{G}_s^{\tilde{\mathcal{P}},(N)}\text{-measurable}: \exists \tilde{\delta} \in \tilde{\bigtriangleup} \text{ such that } \tilde{v} + \int_s^{(\tilde{P}'),t} \tilde{\delta}_u dS_u \geq \tilde{R}_t - \tilde{R}_s \nonumber \\	&\tilde{P}' \text{-a.s. for all } \tilde{P}' \in \tilde{\mathcal{P}} \rbrace  \quad \tilde{P} \text{-a.s.}  \notag \\
	=& \essinf \lbrace \tilde{v} \text{ is } \mathcal{G}_s^{\tilde{\mathcal{P}},(N)}\text{-measurable}: \exists \tilde{\delta} \in \tilde{\bigtriangleup} \text{ such that } \tilde{v} + \int_s^{(\tilde{P}'),t} \tilde{\delta}_u dS_u \geq \tilde{R}_t  - \tilde{R}_s  \nonumber \\ 
	&\tilde{P}' \text{-a.s. for all } \tilde{P}' \in \tilde{\mathcal{P}}(s;\tilde{P}) \rbrace  \quad \tilde{P} \text{-a.s.} \notag
\end{align}

\end{prop}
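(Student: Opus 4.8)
The plan is to reduce Proposition \ref{superhedging} to the corresponding single-default superhedging duality established in Section 3 of \cite{bz_2019}, which is stated there for the operator on a filtration enlarged by one random time. The key observation is that the sublinear conditional operator $\tilde{\mathcal{E}}^N$ we built satisfies exactly the two structural properties that the abstract superhedging argument of \cite{bz_2019} relies on: the consistency representation as an essential supremum over priors in \eqref{eq:ConsistencyCondition}, i.e.
$$
\tilde{\mathcal{E}}_t^N (\tilde R_t)=\esssupT_{\tilde{P}' \in \tilde{\mathcal{P}}(t;\tilde{P})} \mathbb{E}^{\tilde{P}'}\left[\tilde R_t| \mathcal{G}_{t}^{(N)}\right] \quad \tilde{P}\text{-a.s.},
$$
and the tower property for the payment stream $\tilde R$, which is assumed in the statement. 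So first I would record that, by Proposition \ref{prop:WellDefinedConsistency}, $\tilde{\mathcal{E}}^N$ admits the representation \eqref{eq:ConsistencyCondition}, and that the description of the conditioning set $\mathcal{\tilde{P}}(t;\tilde{P})=\lbrace \tilde{P}' \in \tilde{\mathcal{P}}: P'=P \text{ on }\mathcal{F}_t\rbrace$ derived in the proof of that proposition carries over here.

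The second step is to invoke Assumption \ref{sigmaSaturated}, which is the exact analogue of the hypotheses under which \cite{bz_2019} proves the robust superhedging duality (sigma-martingale measures, saturation, dominating diffusion), together with the integrability assumption $\tilde{\mathcal{E}}_s^N(\tilde R_T)\in L^1(\tilde\Omega)$ and the existence of a c\`adl\`ag aggregator $\tilde X$ with $\tilde X_s=\tilde{\mathcal{E}}_s^N(\tilde R_t)$ $\tilde P$-a.s. for all $\tilde P$. With these in place, the duality
$$
\tilde{\mathcal{E}}_s^N(\tilde R_t)=\essinf\{\tilde v\ \mathcal{G}_s^{\tilde{\mathcal{P}},(N)}\text{-meas.}:\ \exists\,\tilde\delta\in\tilde\bigtriangleup,\ \tilde v+\int_s^{(\tilde P'),t}\tilde\delta_u dS_u\ge \tilde R_t\ \tilde P'\text{-a.s. for all }\tilde P'\in\tilde{\mathcal{P}}\}
$$
follows by repeating verbatim the argument of Theorem 3.11 in \cite{bz_2019}: one direction (``$\ge$'') is a standard supermartingale/optional-decomposition argument applied to the aggregator $\tilde X$ under the family $\tilde{\mathcal{P}}$, using that $\int^{(\tilde P)}\tilde\delta\,dS$ is a $\tilde P$-supermartingale for every strategy in $\tilde\bigtriangleup$; the other direction (``$\le$'') uses the nondominated optional decomposition theorem under Assumption \ref{sigmaSaturated} to produce a superhedging strategy for the Snell-envelope-type process associated to $\tilde R_t$. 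The equivalence of the two formulations (conditioning on $\tilde{\mathcal{P}}$ versus $\tilde{\mathcal{P}}(s;\tilde P)$) is then the conditioning/pasting argument already used in \cite{bz_2019}, relying on stability under pasting and the tower property of $\tilde{\mathcal{E}}^N$; and the statement for the increment $\tilde R_t-\tilde R_s$ follows from the first one by property 2 of Proposition \ref{prop:ConsistencyAssumptions}, since $\tilde R_s$ is $\mathcal{G}_s^{(N)}$-measurable so that $\tilde{\mathcal{E}}_s^N(\tilde R_t-\tilde R_s)=\tilde{\mathcal{E}}_s^N(\tilde R_t)-\tilde R_s$ and the hedging inequality shifts by the $\mathcal{G}_s^{\tilde{\mathcal{P}},(N)}$-measurable constant $\tilde R_s$.

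The main obstacle I anticipate is not the superhedging machinery itself — that is imported wholesale from \cite{bz_2019} once the two structural properties are secured — but rather checking that $\tilde{\mathcal{E}}^N$ genuinely plays the role of the single-default operator in all the places where \cite{bz_2019} uses it. Concretely, one must verify: that $\tilde{\mathcal{E}}^N$ is monotone, sublinear, constant-preserving on $\mathcal{G}_s^{(N)}$, and satisfies $\tilde{\mathcal{E}}_s^N(\mathbf 1_A \tilde X)=\mathbf 1_A\tilde{\mathcal{E}}_s^N(\tilde X)$ for $A\in\mathcal{G}_s^{(N)}$ (this is Proposition \ref{prop:ConsistencyAssumptions}); that the relevant null sets and the filtration $\mathbb{G}^{\tilde{\mathcal{P}},(N)}$ interact correctly with the aggregation/optional-decomposition step (the footnote in Section \ref{sec:MultipleDefaultModelUncertainty} about $\mathcal{G}^{\mathcal{P}}$ versus $\mathcal{G}^{\tilde{\mathcal{P}}}$ is the delicate point here); and that the $(\tilde P,\mathbb{G}^{\tilde P,(N)})$-semimartingale property of $S$ is compatible with working on the enlarged space. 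I would therefore devote the bulk of the written proof to these bookkeeping verifications and then simply say ``the remaining argument is identical to the proofs of Theorems 3.11 and 3.12 in \cite{bz_2019}'', since reproducing the nondominated optional decomposition in full would add nothing new.
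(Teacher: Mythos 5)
Your proposal takes essentially the same route as the paper: the published proof is literally the one-line remark ``the proof follows by the same arguments as the proof of Theorem 3.11 in \cite{bz_2019},'' and you correctly identify that the two structural inputs that argument needs from $\tilde{\mathcal{E}}^N$ — the essential-supremum representation in \eqref{eq:ConsistencyCondition} together with $\tilde{\mathcal{P}}(t;\tilde P)=\{\tilde P'\in\tilde{\mathcal{P}}:P'=P\text{ on }\mathcal{F}_t\}$, and the assumed tower property — are supplied by Proposition \ref{prop:WellDefinedConsistency} and the hypotheses, with Assumption \ref{sigmaSaturated} playing the same role as in the single-default case. One small misattribution: the reduction $\tilde{\mathcal{E}}_s^N(\tilde R_t-\tilde R_s)=\tilde{\mathcal{E}}_s^N(\tilde R_t)-\tilde R_s$ is a translation-invariance property, not property 2 of Proposition \ref{prop:ConsistencyAssumptions} (which concerns multiplication by $\mathbf 1_A$); it follows instead directly from the $\esssupT$ representation in \eqref{eq:ConsistencyCondition} and the $\mathcal{G}_s^{(N)}$-measurability of $\tilde R_s$, which is indeed how this step is handled in \cite{bz_2019}.
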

\begin{proof}
	The proof follows by the same arguments as the proof of Theorem 3.11 in \cite{bz_2019}.
\end{proof}
For $0 \leq s \leq t \leq T$, we define the following set
\begin{align*}
	\tilde{C}_s^t:= \left \lbrace \tilde{\delta} \in \tilde{\Delta}: \tilde{\mathcal{E}}^N_{s_1}(\tilde{R}_t)+\int_{s_1}^{(\tilde{P}),s_2} \tilde{\delta}_u dS_u \geq \tilde{R}_{s_2} \ \tilde{P} \text{-a.s. for all }\tilde{P} \in \tilde{\mathcal{P}} \text{ for all } s \leq s_1 \leq s_2 \leq t \right \rbrace.
\end{align*}

\begin{prop}
	Under the assumptions of Theorem \ref{superhedging}, for $0 \leq s \leq t \leq T$, we have the following statements:
	\begin{enumerate}
		\item The set $\tilde{C}_0^T$ is not empty.
		\item The robust global superhedging price of $\tilde{R}$ is given by $\tilde{\mathcal{E}}^{N}(\tilde{R}_T)$ and the robust local superhedging price of $\tilde{R}$ on the interval $[s,t]$ is given by $\tilde{\mathcal{E}}_s^{N}(\tilde{R}_t - \tilde{R}_s)$.
		\item Optimal superheding strategies exist. 
	\end{enumerate}
\end{prop}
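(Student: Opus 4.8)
The plan is to reduce everything to the single-default superhedging results from \cite{bz_2019}, exactly as \textbf{Proposition \ref{superhedging}} already does, since the three claims here are the natural ``global'' consequences of the dynamic duality. I would proceed claim by claim, invoking Theorem \ref{superhedging} at each step with the appropriate choice of times.

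\emph{Claim 1 (non-emptiness of $\tilde{C}_0^T$).} First I would observe that the two dualities in Theorem \ref{superhedging}, applied with $s=s_1$, $t=s_2$ along the whole interval, say precisely that for each pair $s_1 \le s_2$ there exists an admissible $\tilde{\delta}\in\tilde{\bigtriangleup}$ with $\tilde{\mathcal{E}}^N_{s_1}(\tilde{R}_{s_2})+\int_{s_1}^{(\tilde{P}),s_2}\tilde{\delta}_u dS_u \ge \tilde{R}_{s_2}$ $\tilde{P}$-a.s.\ for all $\tilde{P}\in\tilde{\mathcal{P}}$. Using the tower property of $\tilde{\mathcal{E}}^N$ for $\tilde{R}$ (assumed in Theorem \ref{superhedging}) one gets $\tilde{\mathcal{E}}^N_{s_1}(\tilde{R}_T)\ge \tilde{\mathcal{E}}^N_{s_1}(\tilde{R}_{s_2})$ for $s_2\le T$ whenever $\tilde R$ is nondecreasing, hence a \emph{single} superhedging strategy for $\tilde R_T$ from time $0$ — whose existence is part of the $\essinf$ being attained in Theorem \ref{superhedging} — already lies in $\tilde{C}_0^T$, because its running wealth dominates $\tilde R_{s_2}$ pathwise at every intermediate $s_2$. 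This is the step where I would be most careful: one must check that the \emph{same} strategy works on every subinterval simultaneously, which follows from the supermartingale property of $\int^{(\tilde P)}\tilde\delta\,dS$ in the definition of $\tilde{\bigtriangleup}$ together with monotonicity of $\tilde R$; this is exactly the argument in the proof of Theorem 3.12 in \cite{bz_2019}, transported verbatim to $\mathbb{G}^{(N)}$.

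\emph{Claim 2 (global and local prices).} Here I would simply read off the two dualities in Theorem \ref{superhedging}: taking $s=0$, $t=T$ gives that $\tilde{\mathcal{E}}^N(\tilde R_T)=\tilde{\mathcal{E}}^N_0(\tilde R_T)$ equals the $\essinf$ of initial capitals $\tilde v$ for which some $\tilde\delta\in\tilde{\bigtriangleup}$ superhedges $\tilde R_T$, which is the definition of the robust global superhedging price; and for general $0\le s\le t\le T$ the second duality identifies $\tilde{\mathcal{E}}^N_s(\tilde R_t-\tilde R_s)$ with the robust local superhedging price on $[s,t]$. No extra work is needed beyond quoting Theorem \ref{superhedging} and the definitions.

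\emph{Claim 3 (existence of optimal strategies).} Finally, optimality follows because in Claim 1 the strategy realizing $\tilde{C}_0^T$ attains the $\essinf$ in the duality of Theorem \ref{superhedging} — i.e.\ it superhedges starting exactly from $\tilde{\mathcal{E}}^N_s(\tilde R_t-\tilde R_s)$ on each $[s,t]$ — so it is an optimal superhedging strategy in the sense of \cite{bz_2019}. I would conclude by remarking that the whole proposition is the multiple-default analogue of Theorem 3.12 in \cite{bz_2019}, and that since Theorem \ref{superhedging} already provides the dynamic duality in the enlarged filtration $\mathbb{G}^{\tilde{\mathcal{P}},(N)}$, the proof of Theorem 3.12 there carries over without change. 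The only genuine obstacle is the bookkeeping in Claim 1 — ensuring a single admissible strategy dominates $\tilde R_{s_2}$ for \emph{all} intermediate times at once — and that is handled by the supermartingale requirement built into $\tilde{\bigtriangleup}$ and the nondecreasing paths of $\tilde R$.

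\begin{proof}
	The proof follows by the same arguments as the proof of Theorem 3.12 in \cite{bz_2019}, using Proposition \ref{superhedging} in place of Theorem 3.11 therein.
\end{proof}
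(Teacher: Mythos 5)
Your proposal is correct and takes essentially the same route as the paper: the paper's own proof is exactly the one-line reduction you end with, namely that the argument of Theorem 3.12 in \cite{bz_2019} carries over verbatim once Proposition \ref{superhedging} (the analogue of Theorem 3.11 there) is in place. Your preliminary discussion of Claims 1--3 is a faithful unpacking of what that transference entails, with the key observations (single strategy from the supermartingale property of $\int^{(\tilde P)}\tilde\delta\,dS$, monotonicity of $\tilde R$, the tower property) correctly identified.
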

\begin{proof}
	The proof follows by the same arguments as the proof of Theorem 3.12 in \cite{bz_2019}.
\end{proof}

\appendix
\red{ 
\section{Sufficient conditions for the tower property}
For simplicity we consider as in Section \ref{section:DynamicProgramming}
 the case of two default times, i.e. $k=2$.\begin{lemma}
	 Let $Y$ be an upper semianalytic function on $\tilde{\Omega}$ which is $\mathcal{G}^{\mathcal{P},(2)}$-measurable and non-negative and $\varphi$ be the associated function in \eqref{eq:FunctionVarphiDefinition1} such that \eqref{eq:FunctionVarphiDefinition2} holds. If $Y$ satisfies the following equalities $P$-a.s. for all $P \in \mathcal{P}$ and for all $0 \leq s \leq t$
\begin{align}
&  \mathbb{E}^{\hat{P}} \left[  \mathcal{E}_t(\textbf{1}_{\lbrace s < u_1 \leq t \rbrace}  \textbf{1}_{\lbrace \tilde{\tau}_2 >t-u_1 \rbrace}  \mathbb{E}^{\hat{P}} [\textbf{1}_{\lbrace \tilde{\tau}_2 >t-u_1  \rbrace}\varphi(u_1,u_1+\tilde{\tau}_2, \cdot )])\right] \nonumber \\
& =  \mathcal{E}_t \left( \mathbb{E}^{\hat{P}}\left[\textbf{1}_{\lbrace s < u_1 \leq t \rbrace}  \textbf{1}_{\lbrace \tilde{\tau}_2 >t-u_1 \rbrace}\mathbb{E}^{\hat{P}} [\textbf{1}_{\lbrace \tilde{\tau}_2 >t-u_1  \rbrace}\varphi(u_1,u_1+\tilde{\tau}_2, \cdot )]\right]\right) \label{eq:WeakDynamicProgrammingAppendix1}	
\end{align}
and
\begin{align}
& \mathbb{E}^{\hat{P}} \left[ \mathcal{E}_t \left(\mathbb{E}^{\hat{P}} \left[\textbf{1}_{\lbrace s < u_1 \leq t \rbrace}\textbf{1}_{\lbrace \tilde{\tau}_2 >t-u_1  \rbrace}\varphi(u_1,u_1+\tilde{\tau}_2, \cdot )\right]\right) \bigg\vert_{u_1=\tau_1} \right]\nonumber \\
	& =\mathcal{E}_t \left(\mathbb{E}^{\hat{P}}\left[ \mathbb{E}^{\hat{P}} [\textbf{1}_{\lbrace s < u_1 \leq t \rbrace} \textbf{1}_{\lbrace \tilde{\tau}_2 >t- u_1 \rbrace}\varphi(u_1, u_1+\tilde{\tau}_2, \cdot) ]\vert_{u_1=\tau_1} \right] \right)\label{eq:WeakDynamicProgrammingAppendix2}		
\end{align}
and
\begin{align}
	\mathbb{E}^{\hat{P}}\left[ \mathcal{E}_t \left( \textbf{1}_{\lbrace s < \tau_1 \rbrace}  \textbf{1}_{\lbrace \tau_2 \leq t \rbrace}  Y\right)\right] =\mathcal{E}_t \left(\mathbb{E}^{\hat{P}}\left[ \textbf{1}_{\lbrace s < \tau_1 \rbrace}  \textbf{1}_{\lbrace \tau_2 \leq t \rbrace}  Y\right]\right) \label{eq:WeakDynamicProgrammingAppendix3}
\end{align}
and
\begin{align}
	&  \mathcal{E}_t\left( \mathbb{E}^{\hat{P}}[\textbf{1}_{\lbrace t < \tau_1 \rbrace}Y]\right)+  \mathcal{E}_t\left(\mathbb{E}^{\hat{P}}[    \textbf{1}_{\lbrace s < \tau_1 \rbrace}   \textbf{1}_{\lbrace \tau_1 \leq t < \tau_2 \rbrace}Y ]\right) +\mathcal{E}_t\left(\mathbb{E}^{\hat{P}}[ \textbf{1}_{\lbrace s < \tau_1 \rbrace}  \textbf{1}_{\lbrace \tau_2 \leq t \rbrace}  Y]\right) \nonumber\\
	&=  \mathcal{E}_t\left( \mathbb{E}^{\hat{P}}[\textbf{1}_{\lbrace t < \tau_1 \rbrace}Y]+\mathbb{E}^{\hat{P}}[    \textbf{1}_{\lbrace s < \tau_1 \rbrace}   \textbf{1}_{\lbrace \tau_1 \leq t < \tau_2 \rbrace}Y ] +\mathbb{E}^{\hat{P}}[ \textbf{1}_{\lbrace s < \tau_1 \rbrace}  \textbf{1}_{\lbrace \tau_2 \leq t \rbrace}  Y]\right)\label{eq:WeakDynamicProgrammingAppendix4}	
\end{align}
and
\begin{align}
	&\mathbb{E}^{\hat{P}}\left[ \mathcal{E}_t(\textbf{1}_{\lbrace \tilde{\tau}_2 > s-u_1\rbrace} \textbf{1}_{\lbrace u_1+\tilde{\tau}_2 \leq t \rbrace} \varphi(u_1,u_1+\tilde{\tau}_2,\cdot)) \right] =\mathcal{E}_t\left(\mathbb{E}^{\hat{P}}\left[ \textbf{1}_{\lbrace \tilde{\tau}_2 > s-u_1\rbrace} \textbf{1}_{\lbrace u_1+\tilde{\tau}_2 \leq t \rbrace} \varphi(u_1,u_1+\tilde{\tau}_2,\cdot) \right]\right) \label{eq:WeakDynamicProgrammingAppendix5}
\end{align}
and 
\begin{align}
	& \mathcal{E}_t \left( \mathbb{E}^{\hat{P}}[\textbf{1}_{\lbrace u_1 \leq   t \rbrace}\textbf{1}_{\lbrace \tilde{\tau}_2 >t-u_1 \rbrace} \varphi(u_1,u_1+\tilde{\tau}_2, \cdot) ]\right)  +\mathcal{E}_t\left(\mathbb{E}^{\hat{P}}\left[ \textbf{1}_{\lbrace \tilde{\tau}_2 > s-u_1\rbrace} \textbf{1}_{\lbrace u_1+\tilde{\tau}_2 \leq t \rbrace} \varphi(u_1,u_1+\tilde{\tau}_2,\cdot) \right]\right) \nonumber \\
	& =\mathcal{E}_t \left( \mathbb{E}^{\hat{P}}[\textbf{1}_{\lbrace u_1 \leq   t \rbrace}\textbf{1}_{\lbrace \tilde{\tau}_2 >t-u_1 \rbrace} \varphi(u_1,u_1+\tilde{\tau}_2, \cdot) ] +\mathbb{E}^{\hat{P}}\left[ \textbf{1}_{\lbrace \tilde{\tau}_2 > s-u_1\rbrace} \textbf{1}_{\lbrace u_1+\tilde{\tau}_2 \leq t \rbrace} \varphi(u_1,u_1+\tilde{\tau}_2,\cdot) \right]\right)	,\label{eq:WeakDynamicProgrammingAppendix6} 
\end{align}
then the strong dynamic programming principle holds.
\end{lemma}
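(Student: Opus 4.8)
The plan is to rerun the proof of Theorem~\ref{thm:weakdynamicprogrammingprinciple} in the case $k=2$ and to verify that, under the stated hypotheses, every inequality occurring there is in fact an equality, so that the chain yielding $\tilde{\mathcal{E}}^2_s(\tilde{\mathcal{E}}^2_t(Y))\ge\tilde{\mathcal{E}}^2_s(Y)$ collapses to $\tilde{\mathcal{E}}^2_s(\tilde{\mathcal{E}}^2_t(Y))=\tilde{\mathcal{E}}^2_s(Y)$, which is the strong dynamic programming principle. I would start from the decomposition \eqref{eq:WeakDynamicRewritten} of $\tilde{\mathcal{E}}^2_s(\tilde{\mathcal{E}}^2_t(Y))$ into three terms supported on the disjoint events $\{s<\tau_1\}$, $\{\tau_1\le s<\tau_2\}$, $\{\tau_2\le s\}$, and treat them separately as in Propositions~\ref{prop:dynamicfirst}, \ref{prop:dynamicsecond} and \ref{prop:dynamicthird}. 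The term on $\{\tau_2\le s\}$ requires nothing new: Proposition~\ref{prop:dynamicfirst} already gives an equality there, its proof using only the tower property \eqref{towerNutz} of $\mathcal{E}_t$ and disjointness of events.

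For the term on $\{s<\tau_1\}$ I would reproduce the argument of Proposition~\ref{prop:dynamicsecond}, i.e.\ the chain of (in)equalities starting from \eqref{eq:thingfromdefinition}. The inequality sign enters only at \eqref{eq:FirstEquality*}, through Lemma~\ref{lem:secondtermsecondpiece}, and then at \eqref{eq:SecondInequality} (via Lemma~\ref{lem:francescayinglin}) and \eqref{eq:ThirdInequality} (via sublinearity of $\mathcal{E}_t$). Now Lemma~\ref{lem:secondtermsecondpiece} is itself proved through the inequalities \eqref{eq:WeakDynamicProgrammingThird7} (Lemma~\ref{lem:francescayinglin}) and \eqref{eq:lemma2} (Lemma~\ref{lem:thirdlemmasecondpiece}), and hypotheses \eqref{eq:WeakDynamicProgrammingAppendix1}--\eqref{eq:WeakDynamicProgrammingAppendix2} are precisely their equality versions, up to pulling the deterministic factor $e^{\int_0^{t-u_1}\tilde{\lambda}^2_v dv}$ out of $\mathcal{E}_t$ and the evaluation $\vert_{u_1=\tau_1}$ out of the outer $\mathbb{E}^{\hat P}$, which the positive homogeneity and monotonicity of the operators permit; hence Lemma~\ref{lem:secondtermsecondpiece} holds with equality for the present $Y$, and \eqref{eq:FirstEquality*} becomes an equality. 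Likewise, \eqref{eq:WeakDynamicProgrammingAppendix3} and \eqref{eq:WeakDynamicProgrammingAppendix4} are the equality versions of \eqref{eq:SecondInequality} and \eqref{eq:ThirdInequality}, so that after applying the monotone operator $\mathcal{E}_s$ those two inequalities become equalities as well. All the remaining steps---\eqref{eq:WeakDynamicProgrammingThird2}, \eqref{eq:WeakDynamicProgrammingThird1}, and the passage to \eqref{eq:FirstEquality}, which uses only Remark~2.4 of \cite{nh_2013} and \eqref{eq:MeasurabilityFixedOmegaHat}---are already equalities, independent of $Y$; hence the first term of \eqref{eq:WeakDynamicRewritten} equals $\textbf{1}_{\{s<\tau_1\}}\mathcal{E}_s(e^{\int_0^s\tilde{\lambda}_v^1 dv}\mathbb{E}^{\hat P}[\textbf{1}_{\{s<\tau_1\}}Y])$.

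For the term on $\{\tau_1\le s<\tau_2\}$ I would reproduce the argument of Proposition~\ref{prop:dynamicthird}, i.e.\ the chain \eqref{eq:initialtermlastprop}--\eqref{eq:WeakDynamicMiddleTerm4**}, whose only inequalities are \eqref{eq:WeakDynamicMiddleTerm3} (Lemma~\ref{lem:francescayinglin}) and \eqref{eq:WeakDynamicMiddleTerm4} (sublinearity of $\mathcal{E}_t$); these become equalities by \eqref{eq:WeakDynamicProgrammingAppendix5} and \eqref{eq:WeakDynamicProgrammingAppendix6}, while the splittings \eqref{eq:WeakDynamicMiddleTerm1}, \eqref{eq:WeakDynamicMiddleTerm2}, the identity \eqref{eq:Indicator2} and the tower property of $\mathcal{E}_t$ are equalities regardless of $Y$. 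Adding the three resulting equalities and invoking Definition~\ref{def:defiSublinearOperatorMulti} together with the disjointness of the three events will then yield $\tilde{\mathcal{E}}^2_s(\tilde{\mathcal{E}}^2_t(Y))=\tilde{\mathcal{E}}^2_s(Y)$ $\tilde P$-a.s.\ for all $\tilde P\in\tilde{\mathcal{P}}$. The hard part will be purely organizational: one must match each of the six hypotheses with the exact occurrence of the inequality it neutralizes---in particular check that \eqref{eq:WeakDynamicProgrammingAppendix1}--\eqref{eq:WeakDynamicProgrammingAppendix2} are what is actually used inside the proof of Lemma~\ref{lem:secondtermsecondpiece}, through the nested applications of Lemmas~\ref{lem:francescayinglin} and \ref{lem:thirdlemmasecondpiece} to $\textbf{1}_{\{\tilde\tau_2>t-u_1\}}Y$---and to confirm that upgrading an inner inequality does not break the downstream equalities, which it does not, since these rely only on measurability and on the tower/sublinearity structure of $\mathcal{E}_t$, not on any specific property of $Y$. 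No analytic input beyond Section~\ref{section:DynamicProgramming} is needed.
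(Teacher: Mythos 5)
Your proposal is correct and follows exactly the same strategy as the paper's proof, which is simply the remark that the six hypotheses are the equality versions of the inequalities occurring in the chain of Propositions~\ref{prop:dynamicfirst}, \ref{prop:dynamicsecond} and \ref{prop:dynamicthird} (and, through Lemma~\ref{lem:secondtermsecondpiece}, in its steps \eqref{eq:WeakDynamicProgrammingThird7} and \eqref{eq:lemma2}). You have spelled out the bookkeeping the paper leaves implicit; the matching of \eqref{eq:WeakDynamicProgrammingAppendix1}--\eqref{eq:WeakDynamicProgrammingAppendix6} to the inequalities they neutralize and the observation that monotonicity of $\mathcal{E}_s$ propagates the upgraded equalities downstream are exactly what is needed.
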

\begin{proof}
This follows immediately by identifying all the inequalities in the proofs in Section \ref{section:DynamicProgramming}.
\end{proof}
\begin{remark}
	By using Yan's commutability theorem in \cite{1985commutability} it is possible to find some sufficient criteria which allow  change the ``order or integration'' between $\mathbb{E}^{\hat{P}}$ and $\mathcal{E}_t$, which is what is needed in \eqref{eq:WeakDynamicProgrammingAppendix1}-\eqref{eq:WeakDynamicProgrammingAppendix3} and in \eqref{eq:WeakDynamicProgrammingAppendix5}, see for example Appendix B in \cite{bz_2019}. Moreover, note that the Yan's commutability criterium requires a condition on the set of priors $\mathcal{P}$, which shows that the strong dynamic principle depends on $Y$ as well as on the set of probability measures $\mathcal{P}$.
\end{remark}}

\bibliography{2022_10_13_Multiple_Default_Times.bib}{}
\bibliographystyle{plain}
\end{document}